\documentclass[11pt]{article}
\usepackage{macros}

\title{Fast Algorithms for Sparse PCA and Robust Sparse Estimation \blfootnote{Authors are listed in alphabetical order.}}

\author{
Giannis Iakovidis\\
University of Wisconsin-Madison\\
{\tt iakovidis@wisc.edu}
\and
Ankit Pensia\\
Carnegie Mellon University\\
{\tt ankitp@cmu.edu}
}

\begin{document}
\maketitle

\begin{abstract}
We study fast algorithms for sparse-PCA certification.  Given a positive
semidefinite matrix $M$, the problem asks either to rule out a large
$k$-sparse quadratic form or to return a high-value (relaxed) witness.  The standard
semidefinite relaxation provides such certificates, but existing
general-purpose solvers require $\Omega(d^4)$ time.  We give a bicriteria
algorithm running in $O(d^2+d k^{O(\log k)})$ time: if some $k$-sparse unit
vector has quadratic form greater than $2$, it returns either an
$O(k^2)$-sparse unit vector or an SDP-feasible matrix of value at least $1$.
For $k\leq\exp(O(\sqrt{\log d}))$, this running time is $O(d^2)$.  We also go
below the quadratic barrier in the sample-access model:  
Given
$n=d^{o(1)}$ samples, our algorithm obtains a related one-sided certificate
in $d^{2 - \Omega(1)}$ time for $k=\polylog(d)$, without forming the empirical
covariance matrix.
As an application, these certificate routines yield the first quadratic and
subquadratic-time algorithms for robust sparse estimation for broad families
of distributions.  
Our sparse-PCA algorithm reduces a high-value sparse
direction to a bounded-radius set in the graph of large correlations and
searches the resulting candidate supports.  The subquadratic implementation
constructs this graph using fast correlation detection.
\end{abstract}
\tableofcontents
\section{Introduction} %
\label{sec:introduction}
 
\subsection{Motivation}

Principal component analysis (PCA) is a fundamental problem in
high-dimensional data analysis and is widely used for dimensionality
reduction.
In its classical form, PCA is defined as follows: Given a positive semidefinite matrix $M \in \R^{d\times d}$, find a unit vector $v \in \R^d$ that (approximately) maximizes the quadratic form $v^\top M v$. When $M$ is the (empirical) covariance matrix of data, such directions are considered  ``interesting'' because they capture the variability in the data.

Many high-dimensional distributions  possess additional structure: the directions of interest are rather \emph{sparse}, meaning that they depend on  only a few coordinates; see, for example, the textbooks ~\cite{EldKut12,HasTW15,vandeGeer16}.
This naturally leads to the problem of \emph{Sparse PCA}, whose goal
is to find a $k$-sparse unit vector that (approximately) maximizes the quadratic form; we say that a vector
$x$ is $k$-sparse if at most $k$ of its coordinates are
nonzero~\cite{ZouHT06,dAsGJL2007}.
Leveraging this sparsity structure can yield substantial savings in sample complexity \cite{johnstone2009consistency,amini2008high} and improve interpretability \cite{ZouHT06}. Sparse PCA has also found applications in genomics, neuroimaging, and large-scale text analysis~\cite{lee2010super,ulfarsson2007sparse,sparsepcatextdata}.
To formalize this objective,
let $M\in\R^{d\times d}$ be a positive semidefinite matrix, and let $k \in [d]$ be a sparsity
parameter.  Define the maximum quadratic form (variance) over $k$-sparse unit
vectors by
\begin{align}
\label{eq:sparse-op-def}
    \|M\|_{\mathrm{op},k}
    &:=
    \max_{v: \|v\|_2 = 1;  \|v\|_0 \leq k} v^\top M v.
\end{align}
Equivalently, lifting $v$ to the rank-one matrix $vv^\top$ gives
\begin{align}
    \|M\|_{\mathrm{op},k}
    &=
    \max_{V\in\cX_{\mathrm{sparse},k}}\langle M,V\rangle, 
\label{eq:sparse-set-def}
\quad     \text{ where  }\quad {\cX_{\mathrm{sparse},k}}
   & {:=
    \left\{vv^\top:
    v\in\R^d,\ \|v\|_2=1,\ \|v\|_0\le k\right\}\,.}
\end{align}
We study this objective both at the population level, where
$M=\Sigma$ is the covariance matrix of the underlying distribution on $\R^d$,
and at the sample level, where $M=\widehat\Sigma$ is the empirical covariance of samples $x_1,\dots,x_n$.
Our focus is the highly sparse regime $k=\polylog(d)$.

Unlike classical PCA, optimizing the sparse quadratic form
in \Cref{eq:sparse-op-def} is computationally hard in both worst-case and
average-case models.
For worst-case positive semidefinite
inputs, approximating the optimum within a factor
$1-\varepsilon_0$ is NP-hard for some constant $\varepsilon_0>0$ \cite{chan2016approximability}.  Even for
Gaussian average-case models, lower bounds are known in the statistical-query
(SQ) and sum-of-squares (SoS) frameworks and through average-case reductions,
yielding statistical--computational gaps~\cite{BerRig13,PotRaj22,HopKPRSS17,BreBre20}. 
These computational barriers motivate tractable
relaxations of $\|M\|_{\mathrm{op},k}$.
The standard semidefinite relaxation
enlarges the sparse rank-one class $\cX_{\sparse,k}$ in \Cref{eq:sparse-set-def} to
\begin{align}
\label{eq:sdp-set-def}
    \cX_{\sdp,k}
    &:=
    \left\{X\in\R^{d\times d}:
    X\succeq0,\ \trace(X)=1,\ \|X\|_1\le k\right\},
    \end{align}
   and  the associated SDP value is
    \begin{align}
    \label{eq:sdp-op-def}
    \|M\|_{\sdp,k}
    &:=
    \max_{X\in\cX_{\sdp,k}}\langle M,X\rangle.
\end{align}
Here and below, $\|X\|_1$ denotes the entrywise $\ell_1$ norm.  Since
$\cX_{\mathrm{sparse},k}\subseteq\cX_{\sdp,k}$, the SDP value upper
bounds $\|M\|_{\mathrm{op},k}$.
Proposed by \cite{dAsGJL2007}, this polynomial-time relaxation
retains the sample-complexity benefits of the sparse formulation and is also
used in robust statistics~\cite{BalDLS17}.  However, existing solvers for the
SDP require $\Omega(d^4)$ time; see the discussion in~\cite{KSTZ26}.

This high computational cost motivates faster algorithms. Partial
progress is known under additional structure on the covariance matrix.
The simplest such structure is the spiked identity model
$\Sigma=I+\theta vv^\top$, where $v$ is $k$-sparse. 
The algebraic identity  $\Sigma_{ii}=1+\theta v_i^2$ suggests diagonal thresholding: estimate the
support of $v$ from the largest empirical marginal variances.  This approach
can fail for general covariance matrices, whose diagonal entries need not
identify a high-value sparse direction.  \cite{KSTZ26} go
beyond the spiked identity model and give a $d^2$-time algorithm when the
population covariance $\Sigma$ has a sparse leading eigenvector.
Although this algorithm extends beyond the spiked identity
model, its sparse-leading-eigenvector assumption does not cover the settings
considered here, for two reasons.
First, contamination can create a dominant dense
direction at either the population or sample level, even when the clean
covariance has a sparse leading eigenvector.  Second, exact sparsity is not
stable under sampling: the empirical covariance of $n= o(d)$ dense samples
generically does not have a sparse leading eigenvector.  These two obstructions arise, respectively, for covariance matrices
encountered in robust filtering and for ordinary high-dimensional empirical
covariances.
In fact, we show that their algorithm
can fail on such inputs (see \Cref{lem:rtpm-counterexample});
consequently, no fast
algorithm is currently known for the general problem.

To avoid imposing additional structure on $M$, we instead relax the
form of the witness.  Fix a sufficiently large universal constant $C_0$ and
define
\begin{align*}
    \cX_{\mathrm{relaxed},k} \quad
    &:=\quad
    \cX_{\mathrm{sparse},C_0k^2}
    \cup\cX_{\sdp,k},\\[0.2cm]
    \|M\|_{\mathrm{relaxed},k}
    \quad&:= \max_{X\in\cX_{\mathrm{relaxed},k}}\langle M,X\rangle
     \quad = \quad\max\left\{
    \|M\|_{\mathrm{op},C_0k^2},
    \|M\|_{\sdp,k}
    \right\}.
    \numberthis
\end{align*}
Thus, an admissible witness is either a rank-one projector onto a
$C_0k^2$-sparse unit vector or a matrix feasible for the standard SDP.

This relaxation preserves two properties needed below.
First,
$\cX_{\mathrm{sparse},k}\subseteq\cX_{\mathrm{relaxed},k}$, so an upper
bound on $\|M\|_{\mathrm{relaxed},k}$ also certifies an upper bound on
$\|M\|_{\mathrm{op},k}$.  
Second, it has comparable sample complexity:
for light-tailed distributions such as Gaussians, $O(k^2\log d)$ i.i.d.\ samples
suffice for the empirical relaxed value to approximate
$\|\Sigma\|_{\mathrm{relaxed},k}$ to constant accuracy.
Thus, in statistical context, the sample complexity for $\|\cdot\|_{\relaxed,k}$ is similar to that for $\|\cdot\|_{\sdp,k}$; see \Cref{sec:sample-complexity}.

We do not seek to approximate $\|M\|_{\mathrm{op},k}$ or
$\|M\|_{\mathrm{relaxed},k}$ on every input.  Instead, we consider the
following one-sided certification task.

\begin{definition}[Sparse-PCA certification problem]
\label{def:certification-problem}
Given $M$, either certify that
$\|M\|_{\op,k}\leq2$, or return a matrix
$A\in\cX_{\mathrm{relaxed},k}$ such that
$\langle M,A\rangle\geq 1$.
\end{definition}

The corresponding decision problem is weaker: it asks only to distinguish
between $\|M\|_{\op,k}>2$ and
$\|M\|_{\mathrm{relaxed},k}<1$, without requiring an explicit witness.
The standard SDP solves both problems, but existing general-purpose solvers
incur the $\Omega(d^4)$ running time discussed above.  Moreover, the faster
methods described earlier require additional structure on $M$ and do not
solve even this weaker decision problem for general inputs.

\begin{question}[Quadratic-time sparse-PCA certification]
\label{q:quadratic}
Can the sparse-PCA certification problem be solved in $O(d^2)$ time?
\end{question}

The stronger certificate output, as opposed to the decision problem, is important beyond sparse PCA.  In iterative
filtering for robust sparse estimation~\cite{DiaKan22-book}, one repeatedly tests whether the
empirical covariance has a large sparse quadratic form.  When it does, a
high-value witness defines scores used to remove outliers.  Certificate
computation is therefore the main computational bottleneck.  For several
distribution families, existing filtering algorithms use the standard SDP
and inherit its $\Omega(d^4)$ running time.  This motivates the following question, variants
of which have been raised in
\cite{Diakonikolas2019,Cheng2021,Pensia24-subquad}.

\begin{question}[Fast robust sparse estimation]
\label{q:fast-robust}
    Are there fast algorithms for robust sparse estimation for general distribution families?
\end{question}

In high dimensions, even $\Theta(d^2)$ time can be
prohibitive in many applications.
Suppose that $M=\widehat\Sigma$ is the empirical covariance of
$n$ samples $x_1,\dots,x_n\in\R^d$.   
This implicit representation of $M$ has
size $O(nd)$, and in sparse-estimation settings one typically has $n\ll d$.
Consequently, the input size is $o(d^2)$, raising the question of whether the
certificate problem can be solved in $o(d^2)$ time in this sample-access
setting.  For example, in the special case of the spiked identity model,
diagonal thresholding runs in $O(nd)$ time, although it does not solve the
general certificate problem.  This motivates our second question.

\begin{question}[Subquadratic certificates from samples]\label{q:subquadratic}
Given direct access to $n=d^{o(1)}$ samples whose empirical covariance
is $M$, can one obtain an analogous one-sided certificate guarantee in time $o(d^2)$?
\end{question}

We answer all three questions affirmatively in the 
regime $k=\polylog(d)$.

\subsection{Our results}
\label{sec:our-results}

We first answer \Cref{q:quadratic} with oracle access to the matrix
entries.  We then apply the resulting certificate to robust sparse mean
estimation before returning to \Cref{q:subquadratic} in the sample-access
model.

\paragraph{Quadratic-time Sparse PCA.}
Our first result concerns \Cref{def:certification-problem} and answers \Cref{q:quadratic}.

\begin{theorem}[Informal; quadratic-time Sparse PCA]
\label{thm:informal-quadratic-sdp}
There is a deterministic algorithm that, given $M\succeq0$ and a sparsity parameter $k$, runs in time $O\!\left(d^2+d k^{O(\log k)}\right)$ and, if $\|M\|_{\op,k}>2$, returns a matrix $X\in\cX_{\mathrm{relaxed},k}$ such that $\langle X,M\rangle>1$.
\end{theorem}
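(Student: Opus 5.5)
Since the witness may be either SDP-feasible or a $C_0k^2$-sparse vector, I would run a two-case argument. Case A covers a "spread" direction, handled by a cheap SDP-feasible witness. Case B covers a "concentrated" direction, handled by a combinatorial search over small supports in a correlation graph. Let $v$ be a $k$-sparse unit vector with $v^\top M v>2$, supported on $S$, $|S|\le k$. Write $D=\mathrm{diag}(M)$, and pick a threshold $\tau$ on the normalized correlations $|M_{ij}|/\sqrt{M_{ii}M_{jj}}$, say $\tau\asymp 1/k$.

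\textbf{Step 1 (diagonal witness).} If some $M_{ii}\ge1$, output $e_ie_i^\top$, which is $1$-sparse; this costs $O(d)$. Otherwise all diagonal entries are below $1$. Expanding $v^\top Mv=\sum_i M_{ii}v_i^2+\sum_{i\ne j}M_{ij}v_iv_j$, the off-diagonal part exceeds $1$. By Cauchy--Schwarz, the pairs with normalized correlation below $\tau$ contribute at most $\tau\bigl(\sum_i \sqrt{M_{ii}}|v_i|\bigr)^2\le\tau k\sum_i M_{ii}v_i^2\cdot(\ldots)$. Tuning $\tau=c/k$ makes this contribution at most a small constant. So most of the mass of $v^\top Mv$ comes from edges of the graph $G$ on $[d]$ whose edges are the large-correlation pairs. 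Building $G$ takes $O(d^2)$ time by a scan.

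\textbf{Step 2 (bounded radius).} Restrict the quadratic form to the edges of $G$ inside $S$ and split $S$ into connected components in $G$. A convexity argument (the form is superadditive over components after reweighting) shows some component $T\subseteq S$ already has $\tilde v^\top M_{TT}\tilde v>1+\Omega(1)$ for the renormalized restriction $\tilde v$. This component has at most $k$ vertices, so its diameter in $G$ is at most $k$. I would sharpen this to radius $O(\log k)$ by peeling: repeatedly take BFS balls around a heavy vertex and stop at the first radius $r$ where the ball's mass fails to double. Geometric growth then forces $r=O(\log k)$, and the loss of mass on the boundary layer is a constant fraction, chargeable to the slack ($2$ versus $1$).

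\textbf{Step 3 (enumeration).} The obstacle is that $G$ may have high degree. If a vertex has degree greater than $C_0k^2$, I would instead produce an SDP witness from the star: the normalized matrix $X=\frac{1}{Z}\,D^{-1/2}$-weighted sum over its neighbourhood. Many pairwise correlations above $1/k$ should give $\langle M,X\rangle\ge1$ with $\|X\|_1\le k$ after appropriate scaling. I expect this step to be the hardest, since making the star witness SDP-feasible with good value requires care about signs and about correlations among neighbours. If it fails, I would fall back to outputting the $C_0k^2$-sparse top eigenvector of $M$ restricted to a low-degree neighbourhood. Once degrees are at most $\Delta=\mathrm{poly}(k)$, the candidate supports are the $k$-vertex connected sets within radius $O(\log k)$ of each vertex. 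There are $d\cdot\Delta^{O(\log k)}=d\,k^{O(\log k)}$ such candidates. For each, I compute a top eigenvector of the principal submatrix in $\mathrm{poly}(k)$ time and return it (it is $k$-sparse, hence in $\cX_{\mathrm{relaxed},k}$) if its value exceeds $1$. The total cost is $O(d^2+dk^{O(\log k)})$, and everything is deterministic.
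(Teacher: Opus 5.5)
Your proposal has a genuine gap in Step 3, in both of its branches.

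\textbf{High-degree case.} The star SDP is unnecessary. The missing idea is a single truncated power step (the paper's \Cref{lem:discover}), and it only works with \emph{raw} thresholds $|M_{ij}|\ge\tau$, not normalized correlations. Switching costs nothing in Step 1: the discarded pairs still contribute at most $\tau\|v\|_1^2\le\tau k$. After Step 1, $M_{aa}\le 1$. If $a$ has a set $N$ of $r$ neighbours, put $u=M_{N,a}/\|M_{N,a}\|_2$. Cauchy--Schwarz in the seminorm of $M$ gives $u^\top Mu\ge (u^\top Me_a)^2/M_{aa}\ge\|M_{N,a}\|_2^2\ge r\tau^2$, which exceeds $1$ once $r>\tau^{-2}$. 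For $\tau\asymp 1/k$ this is an $O(k^2)$-sparse witness, which is exactly why $\mathcal{X}_{\mathrm{sparse},C_0k^2}$ appears in the relaxed class. (The paper does this for every $i$ at once: either $\mathrm{top}_r(Me_i)$ certifies, or its support contains all $\tau$-neighbours of $i$.)

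With normalized thresholds the dichotomy ``high degree $\Rightarrow$ certificate'' is false. Suppose $x_j=\epsilon x_a$ for all $j$ in a huge set $N$. Every normalized correlation is then $1$, yet $M$ restricted to $\{a\}\cup N$ is $M_{aa}ww^\top$ with $w=(1,\epsilon,\ldots,\epsilon)$. So every sparse form and every SDP value there is at most $M_{aa}(1+|N|\epsilon^2)<1$ (e.g.\ $M_{aa}=1/2$, $|N|\epsilon^2\le 1/2$). You get neither a certificate nor a degree bound.

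\textbf{Enumeration.} Your count is wrong. A radius-$O(\log k)$ ball has $\Delta^{O(\log k)}$ vertices, but it contains far more $k$-vertex connected sets: one vertex plus any $k-1$ of its $\Delta=\Theta(k^2)$ neighbours already gives $\binom{\Delta}{k-1}=k^{\Theta(k)}$. So you are back at $k^{O(k)}$. To get $d\,k^{O(\log k)}$ you must take the whole ball $B$ as a single candidate. But then the top eigenvector of $M_{B,B}$ is $k^{O(\log k)}$-sparse and not admissible.

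The paper instead solves $\max\{\langle M_{B,B},X\rangle : X\succeq0,\ \operatorname{tr}X=1,\ \|X\|_1\le k\}$ on each ball. The good set $T$ from Step 2 has $|T|\le k$, so $ww^\top$ for a unit top eigenvector $w$ of $M_{T,T}$ is feasible ($\|ww^\top\|_1=\|w\|_1^2\le k$). Hence the SDP value is at least $\lambda_{\max}(M_{T,T})$, and the zero-padded solution lies in $\mathcal{X}_{\mathrm{sdp},k}$. This SDP step is not optional in your scheme.

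\textbf{Step 2.} Its conclusion is true (it is \Cref{lem:quantitative-bounded-radius-reduction}), but your peeling does not prove it. If the mass is $\sum_{i\in B}v_i^2$ for a witness $v$ that is not an eigenvector, nothing controls the Rayleigh quotient of $v_B$. Moreover, ``fails to double'' lets the boundary layer carry as much mass as the ball, and then the only available bound on the cut cross-term is as large as the whole value.

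A correct version goes as follows. Let $w$ be a unit top eigenvector, with eigenvalue $\Lambda$, of the thresholded matrix $H$ on $S$. Local PSD gives $\lambda_{\min}(H)\ge-\tau k$, so $\|H\|_{\mathrm{op}}=\Lambda$ once $\Lambda\ge\tau k$. For a BFS ball $B_t$ with next layer $L_{t+1}$, the relation $Hw=\Lambda w$ gives $w_{B_t}^\top Hw_{B_t}=\Lambda\|w_{B_t}\|_2^2-w_{B_t}^\top Hw_{L_{t+1}}\ge\Lambda\|w_{B_t}\|_2\bigl(\|w_{B_t}\|_2-\|w_{L_{t+1}}\|_2\bigr)$. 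So you need $\|w_{L_{t+1}}\|_2^2\le\epsilon^2\|w_{B_t}\|_2^2$. Growth by factors $1+\epsilon^2$, starting from $\max_iw_i^2\ge1/k$, yields such a $t=O(\epsilon^{-2}\log k)$. You then pass back to $M$ at an additive cost of $\tau k$.

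The paper's spectral argument is shorter: columns of $H^L$ are supported on radius-$L$ balls and $\|H^L\|_{\mathrm{op}}\ge\Lambda^L$, so some ball has $\|H_{T,T}\|_{\mathrm{op}}\ge\Lambda k^{-1/(2L)}$. It also gives the better radius $O(\epsilon^{-1}\log k)$ for a $(1-\epsilon)$-factor loss.
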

The formal statement appears in
\Cref{cor:parameterized-radius-sparse-pca}.

If the algorithm does not return a relaxed witness of value
greater than $1$, then $\lVert M\rVert_{\mathrm{op},k}\leq 2$.  Thus,
\Cref{thm:informal-quadratic-sdp} solves the certification problem in
\Cref{def:certification-problem} and hence the corresponding decision problem. For $k=\exp(\sqrt{\log d}/C)=d^{o(1)}$, where $C>0$ is
a sufficiently large universal constant, the running time remains $O(d^2)$.
This sparsity range is asymptotically larger than every polylogarithmic function
of $d$, and therefore includes the regime emphasized above.  The result
improves on the stated $\Omega(d^4)$ running time of the standard SDP approach
and answers \Cref{q:quadratic}.
\begin{remark}[The approximation--runtime tradeoff]
\label{rem:informal-tradeoff}
The approximation factor $2$ is not essential: for any fixed
constant multiplicative gap, the algorithm achieves the same asymptotic
running time.  The general tradeoff is as follows.\begin{itemize}
    \item Suppose that $M$ has a $k$-sparse unit vector of value at least $1$. 
For any integer $L\ge1$, the algorithm runs in time
    $O(d^2+d\,k^{O(L)})$ and returns a certificate of value at least
    $k^{-1/(2L)}$; thus, the approximation factor is at most
    $k^{1/(2L)}$.  See \Cref{cor:parameterized-radius-sparse-pca} for the
    formal tradeoff.
    \item The informal theorem above follows by taking $L = \Theta(\log k)$.

    \item Taking $L$ to be a sufficiently large constant yields a
    $d^2\poly(k)$-time algorithm with a $k^{\eta}$ approximation factor for a small constant $\eta$,
    improving on the trivial $\sqrt{k}$ approximation factor.

\end{itemize}

\end{remark}

The routine can instead return a high-value sparse vector,
retaining the classical output form at the cost of allowing a larger
support.
\begin{remark}[Finding a high-value sparse vector]
\label{rem:informal-power-iteration}
For every $L\in\Z_+$, if $M$ has a $k$-sparse unit vector of value at least $1$, then in time $O(d^2+dk^{O(L)})$ our algorithm returns a $k^{O(L)}$-sparse unit vector of value at least $k^{-1/(2L)}$. 
We keep the semidefinite result primary because its $\ell_1$ budget $k$ gives a stronger sample complexity guarantee for downstream applications.
We refer the reader to \Cref{cor:parameterized-radius-sparse-pca} for the formal version of this guarantee.
\end{remark}

\paragraph{Subquadratic time Sparse PCA.}
We now turn to \Cref{q:subquadratic}.
When the covariance matrix is
given implicitly through samples, one need not form all $d^2$ entries.  We
combine our certificate routine with a fast correlation-detection
subroutine~\cite{Valiant15}, following a strategy proposed by
\cite{Pensia24-subquad}.
\begin{theorem}[Informal; subquadratic Sparse PCA]
\label{thm:informal-subquadratic-sdp}
Fix a sufficiently large constant $\Gamma>0$. There is a randomized algorithm that, given direct access to $n=d^{o(1)}$ samples with empirical covariance matrix $M$ and sparsity parameter $k$, runs in time $d^{1.72}k^{O(\log k)}$ and, if $\|M\|_{\op,k}>2$, with high probability returns one of the following:
\begin{enumerate}[label=(\roman*)]
    \item a unit vector $u$ with $\|u\|_0\le k^{O(\Gamma)}$ and $u^\top M u>1$;
    \item a matrix $X\in\cX_{\sdp,k}$ with $\langle M,X\rangle\ge1$.
\end{enumerate}
\end{theorem}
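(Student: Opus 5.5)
The plan is to reuse the quadratic-time certificate routine of \Cref{thm:informal-quadratic-sdp}, but to feed it only the $d^{2-\Omega(1)}$ entries of $M$ that actually matter. The abstract describes that routine as reducing a high-value sparse direction to a bounded-radius set in the graph $G_\tau$ of large correlations, i.e.\ edges $\{i,j\}$ with $|M_{ij}|\ge\tau$, and then searching candidate supports. So the algorithm has two phases. First, build (a superset of) $G_\tau$ from the samples without forming $M$. Second, run the support search on that graph and read off entries of $M$ only inside the candidate supports.

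\textbf{Step 1: what the search needs.} Normalize so that $M_{ii}\le 1$ after handling large diagonal entries. If some $M_{ii}>1$, the vector $e_i$ is already a $1$-sparse witness of type (i); all diagonal entries can be computed in $O(nd)$ time. Now suppose a $k$-sparse unit $v$ has $v^\top Mv>2$. By the structural lemma behind \Cref{cor:parameterized-radius-sparse-pca}, either:
\begin{itemize}
\item a sufficient share of the mass of $v^\top Mv$ sits on entries with $|M_{ij}|\ge\tau$ for a threshold $\tau=k^{-O(1)}$; or
\item the complementary part can be bounded directly by the $\ell_1$ budget, $\sum_{|M_{ij}|<\tau}|v_iv_j||M_{ij}|\le\tau k$.
\end{itemize}
The search explores balls of radius $O(L)$ with $L=\Theta(\log k)$ around each vertex, restricted to high-degree and low-degree structure. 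It outputs either a $k^{O(L)}$-sparse vector found by power iteration on a candidate support, or an SDP-feasible matrix built from the degree structure. The important point is that this phase queries $M$ only on edges of $G_\tau$ and on $k^{O(L)}$-sized blocks, so its cost is $d\,k^{O(\log k)}$ plus the cost of building $G_\tau$.

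\textbf{Step 2: building the graph subquadratically.} We normalize columns, viewing coordinate $i$ as the vector $(x_{1i},\dots,x_{ni})/\sqrt n$, so that $M_{ij}$ is an inner product. We then apply Valiant's correlation-detection algorithm~\cite{Valiant15}, via the reduction of \cite{Pensia24-subquad}. For $n=d^{o(1)}$ and correlation threshold $\tau=k^{-O(1)}=d^{-o(1)}$, it reports all pairs with $|M_{ij}|\ge\tau$ in time $d^{1.72}\poly(1/\tau)$, with high probability. The difficulty is that this running time holds only when there are few edges. If $G_\tau$ has a vertex of degree larger than $k^{\Gamma}$, we do not enumerate its neighborhood; instead we exhibit a witness directly. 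Take such a vertex $i$ and $m=k^{\Gamma}$ of its neighbors $j$. On the star support, the vector $u$ with $u_i=1/\sqrt2$ and $u_j=\mathrm{sign}(M_{ij})/\sqrt{2m}$ has value at least roughly $\sqrt m\,\tau-O(1)$. This exceeds $1$ once $\Gamma$ is large relative to the exponent in $\tau$, and it is a $k^{O(\Gamma)}$-sparse witness of type (i). To make this affordable, we first estimate degrees by sampling or run the detection with a cap on the number of reported pairs per vertex, stopping as soon as one vertex exceeds the cap. After that, every vertex has degree at most $k^{\Gamma}$, and $G_\tau$ has $d\,k^{O(\Gamma)}$ edges.

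\textbf{Step 3: assembly and the main obstacle.} We run the search of Step 1 on the bounded-degree $G_\tau$. Balls of radius $O(\log k)$ then have size $k^{O(\Gamma\log k)}$, and each needed entry $M_{ij}$ costs $O(n)$ to compute. This gives the claimed total time $d^{1.72}k^{O(\log k)}$ (absorbing $\Gamma$ and $n=d^{o(1)}$), with failure probability coming only from correlation detection. I expect the main obstacle to be Step 2: Valiant's guarantee is for approximately normalized, boundedly-correlated inputs with a planted-gap structure. We must show that it still returns a superset of the $\tau$-edges and a subset of the $\tau/2$-edges, in the stated time, for arbitrary empirical matrices that may have many moderate correlations. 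My first attempt would be the bucketing and degree-capping reduction of \cite{Pensia24-subquad}, with the slack between $\tau/2$ and $\tau$ absorbed by the structural lemma.
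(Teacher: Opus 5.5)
\textbf{There is a genuine gap in Step 2, exactly where the hypothesis ``$\Gamma$ sufficiently large'' has to enter.}

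Correlation detection does not report all pairs above a threshold $\tau$ in time $d^{1.72}\mathrm{poly}(1/\tau)$. Its cost is
\[
\Bigl(s\,d^{0.62}+d^{1.62+2.4\log(1/\rho_0)/\log(1/\tau_0)}\Bigr)\mathrm{poly}(\log d,1/\tau_0),
\]
where $\rho_0$ is the detection threshold, $\tau_0$ is a background threshold, and $s$ is the number of pairs above $\tau_0$.

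With your proposed slack ($\rho_0=\tau$, $\tau_0=\tau/2$), the exponent tends to $1.62+2.4>2$ as $\tau\to0$. Even $\tau_0=\tau^2$ gives exponent $2.82$. Subquadratic time requires $\log(1/\rho_0)/\log(1/\tau_0)=O(1/\Gamma)$, so the background threshold must be polynomially smaller than the detection threshold. The sparsity premise then concerns pairs above that tiny threshold.

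Your degree cap at the search threshold $\tau$ says nothing about the possibly $\Theta(d^2)$ pairs with $\tau_0\le|M_{ij}|<\tau$. The structural lemma cannot absorb them either: they are not edges of the search graph, they only slow down the detector. Relatedly, your accounting for the $k^{O(\Gamma)}$ sparsity is off. A star at threshold $k^{-c}$ needs only $m>k^{2c}$ leaves. At the natural search threshold $1/(4k)$ your witness would therefore be $O(k^2)$-sparse, and no large constant would be needed at all.

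\textbf{The missing idea is a two-threshold scheme.}

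Build the search graph at $\rho=1/(4k)$. Then the thresholding loss is $\rho k=1/4$, the heavy-row degree parameter is $O(k^2)$, and radius-$O(\log k)$ balls have size $k^{O(\log k)}$. But run detection with margin $\tau=(100k)^{-\Gamma}$, so that the exponent becomes $1.62+O(1/\Gamma)$.

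Before detection, run a density test at the margin threshold $\tau$. Sample $O((d^2/B)\log(1/\delta))$ random pairs with $B=d^{1.1}$, and scan the rows of the endpoints of the few sampled pairs that lie above $\tau$.
\begin{itemize}
\item If $|E_\tau|\ge B\ge 2dr$, most such edges touch a row $a$ with at least $r=\lceil 4/\tau^2\rceil$ entries of magnitude $\ge\tau$. On $r$ of those entries, the vector $u\propto M_{S,a}$ has value $\ge r\tau^2/M_{aa}>1$ by local Cauchy--Schwarz. This heavy row at the margin threshold is the true source of the $k^{O(\Gamma)}$ sparsity in outcome (i).
\item Otherwise $s<d^{1.1}$, and detection costs $d^{1.1}d^{0.62}+d^{1.62+O(1/\Gamma)}$. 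This is where $d^{1.72}$ comes from.
\end{itemize}

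\textbf{Two further points are unaddressed.}
\begin{itemize}
\item Valiant's detector thresholds normalized correlations $|M_{ij}|/\sqrt{M_{ii}M_{jj}}$. Bounds on raw entries do not control these, because of low-variance coordinates. You need to append $\sqrt{n(D-M_{ii})}\,e_i$ to each centered coordinate vector, so that every normalized correlation equals $|M_{ij}|/D$. This padding can be applied implicitly inside the Gaussian sign projection.
\item The candidate supports must be processed with $\mathrm{SDP}_k$, not power iteration. A $k^{O(\log k)}$-sparse vector does not satisfy outcome (i); vector outputs may only come from the heavy-row checks.
\end{itemize}
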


We refer the reader to \Cref{cor:subquadratic-sdp} for the formal version of this result.

This answers \Cref{q:subquadratic} affirmatively. The $k^{O(\Gamma)}$ sparsity in the direct-vector branch is the price of making the correlation-detection step subquadratic, whereas the semidefinite branch retains the original $\ell_1$ budget $k$.

Sparse-PCA certification is also the main computational primitive in
iterative filtering for robust sparse mean estimation.  We next use our
certificate routines to obtain fast estimators in two distributional settings.

\paragraph{Applications: Robust sparse mean estimation.}
Robust mean estimation asks for an accurate estimate of the mean of a distribution from possibly contaminated samples \cite{DiaKan22-book}. 
We formally define the contamination model below:
\begin{definition}[Contamination Model]
Let $S=(x_1,\ldots,x_n)$ be a dataset.
A dataset $T=(y_1,\ldots,y_n)$ is an $\eps$-corruption of $S$ if it can be obtained by replacing at most $\eps n$ elements of $S$ with arbitrary points, possibly chosen by an adversary after observing $S$.
If $S$ is a set of $n$ i.i.d.\ samples from a distribution $P$, we call such a dataset $T$ an $\eps$-corrupted set of samples from $P$.
\end{definition}
We focus on robust sparse mean estimation.  Given
$\epsilon$-corrupted samples from a distribution $P$ with $k$-sparse mean
$\mu$, the goal is to estimate $\mu$ accurately using few samples.  Naive
methods such as coordinatewise estimation have asymptotic error that scales as
$\sqrt{k}f(\eps)$, where $f$ depends on the distribution family.  In contrast,
the information-theoretic error $f(\epsilon)$ is independent of $k$, although
algorithms attaining it can be computationally expensive.  We therefore seek
three guarantees simultaneously: (i) $\|\widehat{\mu}-\mu\|_2$ independent of
$k$ (or a slowly growing function of $k$), (ii) sample complexity $\poly(k,\log d)$, and (iii) mild runtime dependence
on the ambient dimension $d$.

As noted above, existing polynomial-time algorithms achieving all
three guarantees require a tractable relaxation of the sparse-PCA
certification problem.  For many distribution families, only the SDP
relaxation is known to achieve the desired error and sample complexity in
polynomial time; with existing solvers, these algorithms therefore run in
$\Omega(d^4)$ time.
Our fast sparse-PCA certificate routine from \Cref{thm:informal-quadratic-sdp} overcomes this computational bottleneck for these distribution families.

\begin{theorem}[Informal; robust sparse mean estimation]
\label{thm:informal-robust-mean}
Let $P$ be a distribution on $\R^d$ with $k$-sparse mean $\mu$.
There are two algorithms which, given an $\eps$-corrupted set  of $n$
 samples from $P$, return an estimate $\widehat\mu$ with
probability at least $0.99$:
\begin{enumerate}[label=(\roman*)]
    \item \label{thm:informal-bounded-covariance-mean}

    If the covariance matrix $\Sigma$ satisfies $\|\Sigma\|_{\op}\le1$
    and $\E_P[(x_j-\mu_j)^4]=O(1)$ for every $j\in[d]$, then for
    $n=O(k^{4}\log(d)/\eps)$ the first algorithm runs in time
    $\widetilde O(d^2k^{O(\log k)}/\eps^2)$ and satisfies
    $\|\widehat\mu-\mu\|_2=O(\sqrt{\eps})$.

    \item \label{thm:informal-subgaussian-mean}
   
    If $P$ is subgaussian with covariance matrix $I$, then for
    $n=O(k^2\log(d)/\eps^2)$ the second algorithm runs in time
    $\widetilde O(d^2k^{O(\log k)}/\eps^{O(1)})$ and satisfies
    $\|\widehat\mu-\mu\|_2=\widetilde O(\eps)$.
\end{enumerate}
\end{theorem}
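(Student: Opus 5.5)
The plan is to plug the certificate routine of \Cref{thm:informal-quadratic-sdp} into the standard iterative filtering framework~\cite{DiaKan22-book,BalDLS17} for robust sparse mean estimation, replacing the SDP oracle. We maintain a weighted (or pruned) subset $T'$ of the corrupted samples, starting from $T$. Each round proceeds in four steps.
\begin{enumerate}
\item Compute the empirical mean $\widehat\mu_{T'}$.
\item Compute a centered second-moment matrix $M$. In case~(i) this is $\widehat\Sigma_{T'}/c$. In case~(ii) it is $(\widehat\Sigma_{T'}-I)/(c\,\eps\log(1/\eps))$, after shifting by a multiple of $I$ to make it PSD if needed; since every $X\in\cX_{\sdp,k}$ has trace one, shifting by $I$ changes all values by the same constant.
\item Run the certificate routine at sparsity $k$, with constants rescaled so that the gap ``$2$ vs.\ $1$'' matches the thresholds of the filter.
\item If no witness is returned, then $\|M\|_{\op,k}\le 2$. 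We output the $k$-sparse truncation $h_k(\widehat\mu_{T'})$, keeping the $k$ largest coordinates. If a witness $A\in\cX_{\mathrm{relaxed},k}$ with $\langle M,A\rangle\ge1$ is returned, we assign each point the score $\tau(x)=(x-\widehat\mu_{T'})^\top A(x-\widehat\mu_{T'})$ and downweight points proportionally to their scores.
\end{enumerate}

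Correctness rests on two deterministic stability conditions for the inlier set $S$, to be established via the sample complexity analysis of \Cref{sec:sample-complexity}.

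The first is a \emph{stability} condition over the whole class $\cX_{\mathrm{relaxed},k}$. Removing any $O(\eps)$ fraction of $S$ changes the mean in every direction $A$ by a small amount, that is, the quantity $\langle A,(\widehat\mu-\mu)(\widehat\mu-\mu)^\top\rangle$ changes by at most $O(\eps)$ in case~(i) and $\widetilde O(\eps^2)$ in case~(ii). It also keeps $\langle A,\widehat\Sigma_S\rangle$ within the corresponding bound.

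For $A=uu^\top$ with $u$ being $C_0k^2$-sparse, this reduces to covering the $C_0k^2$-sparse sphere. The cost is $O(k^2\log d)$ in the exponent, which explains $n\propto k^2\log d/\eps^2$ for subgaussians. For bounded covariance with only bounded coordinatewise fourth moments, I would first Huber-truncate or prune coordinates and use a union bound per entry of $\widehat\Sigma$, giving the $k^4\log d/\eps$ rate. Indeed, $\sup_{A}\langle A,\widehat\Sigma-\Sigma\rangle\le\|A\|_1\|\widehat\Sigma-\Sigma\|_\infty$, and $\|A\|_1\le C_0k^2$ for both branches (for the SDP branch $\|A\|_1\le k$). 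This $\ell_\infty$ route also handles $\cX_{\sdp,k}$ in both cases.

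Given stability, the standard filter lemma applies. If $\langle M,A\rangle\ge1$ then the scores of outliers exceed those of inliers in total, so each round removes more outlier than inlier weight. If no witness is returned, $\|M\|_{\op,k}\le2$ bounds the sparse variance of $T'$, which (together with stability) bounds $\sup_{v\ k\text{-sparse}}\langle v,\widehat\mu_{T'}-\mu\rangle$ by $O(\sqrt\eps)$, resp.\ $\widetilde O(\eps)$. Truncation to $k$ coordinates then loses only a constant factor, giving the error in the stated norm.

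For the runtime, each round costs $O(nd)$ to form scores. It also costs $O(d^2+dk^{O(\log k)})$ for the certificate, provided $M$ is passed implicitly. Entries of $\widehat\Sigma_{T'}$ can be formed in $O(d^2 n)$, but that is too slow, so I would maintain $M$ incrementally or note the $\widetilde O$ absorbs $n$. The number of rounds is $O(1/\eps)$ per standard weight-halving arguments, or $\poly(1/\eps)$ in case~(ii), matching the $1/\eps^2$ and $\eps^{-O(1)}$ factors.

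The main obstacle I expect is case~(ii): achieving $\widetilde O(\eps)$ requires that the witness certify not only large variance but deviation of $\widehat\Sigma_{T'}$ from $I$ at scale $\eps\log(1/\eps)$. This requires a certificate for the sign-indefinite matrix $\widehat\Sigma_{T'}-I$, whereas \Cref{thm:informal-quadratic-sdp} needs $M\succeq0$ and a constant-factor gap. I would handle this with the identity shift described above, but a constant multiplicative gap on the shifted matrix does not translate into an $\eps\log(1/\eps)$-accurate gap. I would instead invoke the general tradeoff in \Cref{rem:informal-tradeoff} with finer constants, or adapt the analysis to tolerate constant-factor slack. The fallback is to accept a weaker $\widetilde O(\eps)$ with larger polylog factors by running the filter at geometrically decreasing thresholds.
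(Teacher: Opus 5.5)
Your architecture matches the paper's, and case~(i) is broadly in line with it, but case~(ii) has two genuine gaps. The architecture in question is: filtering with scores $(x-\widehat\mu_{T'})^\top A(x-\widehat\mu_{T'})$ from a relaxed witness, stability of the inliers over $\cX_{\sdp,k}$ and over $O(k^2)$-sparse vectors, and top-$k$ truncation at the end. For case~(i), the paper imports the corresponding step from \cite{DiaKLP22}: a median-of-means reduction to constant contamination, then preprocessing-stability at sparsity $\ell=O(k^2)$, which is where $k^4$ comes from. At constant contamination, passing to subsets is free by positivity, since $\langle Q_{G'}(\mu'),W\rangle\le\frac{|G|}{|G'|}\langle Q_G(\mu'),W\rangle$.

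The first gap in case~(ii) is exactly where you flag it, and your proposed fixes fail. A globally PSD shift is unavailable. For $n<d$ the matrix $\widehat\Sigma_{T'}$ is singular, so making $(\widehat\Sigma_{T'}-I)/(c\eps\log(1/\eps))+\lambda I$ PSD forces $\lambda\ge1/(c\eps\log(1/\eps))$. That just returns a rescaling of $\widehat\Sigma_{T'}$ and makes the factor-$2$ gap vacuous. Refining the tradeoff to approximation factor $1+O(\eps)$ at scale one also fails: it needs $k^{1/(2L)}\le1+\eps$ and $\tau k\lesssim\eps$, i.e.\ $L\gtrsim\log k/\eps$ and $r\gtrsim k^2/\eps^2$, which costs $d\,(k/\eps)^{O(\log k/\eps)}$ time. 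The missing idea is that the routine never needs global PSD-ness. \Cref{lem:discover}, \Cref{cl:disc2} and \Cref{lem:quantitative-bounded-radius-reduction} only use $\Sigma_{S,S}\succeq0$ for $|S|=O(k^2)$. The stability lemma supplies an extra lower bound $\inf_{\|u\|_2=1,\|u\|_0\le\ell}u^\top\operatorname{Cov}[G']u\ge1-O(\eps\log(1/\eps))$. By the variance decomposition, $\widehat\Sigma_{T'}\succeq(1-2\eps)\operatorname{Cov}[G_t]$. Hence $A=\widehat\Sigma_{T'}-(1-\kappa)I$ is $O(k^2)$-PSD for $\kappa=C\eps\log(1/\eps)$. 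Run the routine on $A$ with certificate threshold $b=\kappa+48\Gamma$ and target $O(b\,k^{1/(2L)})$. Every witness (trace one, or a unit vector) then has unshifted value at least $1+48\Gamma$, as the filter requires. A \textsc{No} certifies $\|\widehat\Sigma_{T'}\|_{\op,k}\le1+O(\eps\log(1/\eps)\,k^{1/(2L)})$. So, contrary to your worry, a constant multiplicative gap at scale $\kappa$ is exactly the additive gap you need, once you shift by $(1-\kappa)I$ rather than by $I$. This is how \Cref{cor:identity-subgaussian-sparse-mean-tradeoff} proceeds.

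The second gap is your claim that the $\ell_\infty$ route also handles $\cX_{\sdp,k}$ in case~(ii). Hölder's bound $|\langle W,\cdot\rangle|\le\|W\|_1\|\cdot\|_\infty$ is fine for the full sample, and the paper uses it there. But stability must survive deleting any $10\eps$-fraction $D$. For adversarial $D$, each entry of $\frac1n\sum_{i\in D}Z_iZ_i^\top$ can be of order $\eps\log(1/\eps)$ no matter how large $n$ is. Hölder then yields only $\Gamma=O(k\eps\log(1/\eps))$ and final error $\widetilde O(\eps\sqrt k)$. Positivity does not rescue it here, since it only bounds the deleted mass by $O(1)$. The paper obtains the needed $O(\eps\log(1/\eps))$ bound in three steps:
\begin{itemize}
\item write $c_KZ\stackrel{d}{=}G_1+G_2+G_3$ with each $G_j\sim N(0,I_d)$ (\Cref{fact:three-gaussian-representation});
\item use $Z^\top WZ\le3c_K^{-2}\sum_jG_j^\top WG_j$ for $W\succeq0$;
\item apply Gaussian $\cX_{\sdp,k}$-stability \cite{BalDLS17} to each Gaussian sequence.
\end{itemize}
This is precisely the step the paper notes was mishandled in prior work, so it cannot be absorbed into a generic entrywise concentration argument.
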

See
\Cref{cor:robust-sparse-mean-sdp-tradeoff,cor:identity-subgaussian-sparse-mean-tradeoff}
for the formal versions of the two guarantees.

The error guarantees in the two parts of the preceding theorem are information-theoretically optimal for both subgaussian distributions~\cite{DiaKan22-book} and heavy-tailed distributions (for large enough $k$)~\cite{DiaKLP22}. Moreover, the $k^2$ dependence in the sample complexity for subgaussian distributions is conjectured to be optimal among polynomial-time algorithms~\cite{DiaKS17,BreBre20}.
Thus, in the regime of interest for $k$, the above result gives the first algorithms for broad families of distributions for robust sparse estimation with sparsity-respecting sample complexities and quadratic dependence on the ambient dimension in their running times.

In fact, this is the \emph{first polynomial-time} guarantee for
isotropic subgaussian distributions that achieves $\widetilde{O}(\epsilon)$ error
with $\poly(k,\log d,1/\eps)$ samples.  The existing SDP approach is indeed polynomial-time,
but its correctness requires that the inlier samples from the subgaussian distribution satisfy a deterministic condition (termed stability). 
Establishing this stability requires a uniform bound on a
quadratic empirical process indexed by $\cX_{\sdp,k}$.  
Such a bound was known
for Gaussian data, but not for general subgaussian data. 
{Prior works incorrectly applied {the Hanson--Wright inequality to} general subgaussian distributions, invalidating the claims.}~\cite{CheDKGGS21,SasFuj24-arxiv}
{We prove the required bound
for this index set using results from probability theory
\cite{hua2026talagrand}.
}

The  guarantees in \Cref{thm:informal-robust-mean} are one point on the following more general
approximation--runtime tradeoff.
\begin{remark}[The approximation--runtime tradeoff]
\label{rem:informal-tradeoff-robust}
As in \Cref{rem:informal-tradeoff}, the same parameter $L$ gives the
following robust-estimation guarantees (with the same number of samples as in \Cref{thm:informal-robust-mean}) with running time $\widetilde{O}(d^2k^{O(L)}/\eps^{O(1)})$:
\begin{enumerate}[label=(\roman*)]
    \item error $O(k^{1/(2L)}\sqrt\eps)$ for bounded-covariance mean estimation; 
    \item error $\widetilde O( k^{1/(4L)}\eps)$ for identity-covariance subgaussian mean estimation.
\end{enumerate}
Taking $L=\Theta(\log k)$ recovers the near-optimal error guarantees of \Cref{thm:informal-robust-mean}, while smaller values of $L$ reduce the dependence on $k$ in the running time at the cost of the errors displayed above.
For any $L>1$, the sparsity-dependent factor in the error is better than the naive $\sqrt{k}$ factor achieved by the coordinate-wise estimators, while retaining polynomial-time and near-optimal sample complexity.
\end{remark}

The estimator also applies when the mean $\mu$ is not sparse. In this
case, the output $\widehat{\mu}$ achieves error comparable to that of the
best $k$-sparse approximation to $\mu$, up to an
estimation error; see \Cref{rem:sparse-direction-error}.

\begin{remark}[Robust sparse estimation in subquadratic time]
    {Replacing the quadratic-time sparse-PCA algorithm from \Cref{thm:informal-quadratic-sdp} with the subquadratic-time subroutine from \Cref{thm:informal-subquadratic-sdp}} gives subquadratic variants of the
robust sparse mean estimators in \Cref{thm:informal-robust-mean}, albeit with increased sample complexity;
see \Cref{cor:subquadratic-identity-subgaussian-sparse-mean}.  These variants
make further progress on \Cref{q:fast-robust} and partially answer questions raised
in \cite{Diakonikolas2019,Cheng2021,Pensia24-subquad}.
\end{remark}

Although we do not pursue
these directions here, we expect similar runtime improvements for other
robust sparse-estimation tasks, such as sparse linear regression and sparse functional estimation.

\paragraph{Organization.}
The remainder of the paper is structured as follows. In \Cref{sec: sparse-pca}, we develop our fast sparse-PCA algorithm, establish its quadratic-time and subquadratic correlation-detection implementations, and present the general approximation--runtime tradeoff under both oracle and empirical access models.
In \Cref{sec:mean-estimation}, we reduce robust sparse mean estimation to the sparse-PCA certification and instantiate the reduction for bounded-covariance and identity-covariance subgaussian distributions. Finally, \Cref{app:related-work} presents the related work, \Cref{app:omitted,app:shifted-covariance-detection,app:raw-covariance-detection,app:subgaussian-stability} prove omitted technical details, 
and \Cref{app:rtpm-counterexample} gives a counterexample to the restarted truncated power method.

\paragraph{Notation and preliminaries.}
We write $\Z_+$ for the positive integers and $[d]=\{1,\ldots,d\}$. For quantities $a,b\in\R$, we write $a\gg b$ to mean that $a\ge Cb$ for a sufficiently large absolute constant $C$. For a vector $x\in\R^d$, we use $\|x\|_p$ for its $\ell_p$ norm, $\supp(x)$ for its support, and $\|x\|_0=|\supp(x)|$ for its sparsity. For $s\in[d]$, $\topp_s(x)$ denotes the vector obtained by retaining the $s$ coordinates of $x$ with largest absolute value and setting the remaining coordinates to zero, with ties broken arbitrarily. For $s\in[d]$, we write $\|x\|_{2,s}:=\sup_{\|v\|_2=1,\ \|v\|_0\le s}\langle v,x\rangle=\|\topp_s(x)\|_2$. For a matrix $M$, we write $\|M\|_{\op}$ for its operator norm, $\|M\|_1=\sum_{i,j}|M_{ij}|$ for its entrywise $\ell_1$ norm, and $\langle A,B\rangle=\operatorname{tr}(A^\top B)$ for the Frobenius inner product; we write $M\succeq0$ when $M$ is positive semidefinite and use $I_d$, or simply $I$ when the dimension is clear, for the identity matrix. For a dataset $S \subset \R^d$ and a vector $\mu \in \R^d$, define the empirical second moment of $S$ with respect to $\mu$ as $Q_{S}(\mu)\eqdef\frac1{|S|}\sum_{x\in S}(x-\mu)(x-\mu)^\top$. For a symmetric matrix $A\in\R^{m\times m}$, we write $\operatorname{SDP}_k(A):=\max\{\langle A,X\rangle:X\succeq0,\ \operatorname{tr}(X)=1,\ \|X\|_1\le k\}$; a feasible matrix $X$ is an additive-$\eta$ approximate solution if $\langle A,X\rangle\ge\operatorname{SDP}_k(A)-\eta$.
Our results do not require the matrix to be positive semidefinite in all directions; it suffices that it be positive semidefinite over sparse directions. This motivates the following relaxed definition. 
For $s\in\Z_+$, we say that a symmetric matrix $\Sigma\in\R^{d\times d}$ is $s$-PSD if $\Sigma_{U,U}\succeq0$ for every $U\subseteq[d]$ with $|U|\le s$; in particular, every positive semidefinite matrix is $s$-PSD for every $s$.
We write $x\sim P$ when $x$ is distributed according to $P$, and use $\E_P[x]$ and $\cov_P(x)$ for its mean and covariance. Throughout, a subgaussian distribution means that its centered random vector $x-\E x$ has subgaussian norm $\|x-\E x\|_{\psi_2}:=\sup_{\|u\|_2=1}\|\langle u,x-\E x\rangle\|_{\psi_2}$ bounded by a universal constant. The corresponding guarantees for subgaussian norm bounded by a general parameter $K$ follow by applying the normalized statements to $x/K$ and rescaling the resulting estimates and covariance parameters to the original scale.

\section{Fast algorithm for Sparse PCA}
\label{sec: sparse-pca}
Let $d \in \mathbb{Z}_+$, let $k \in \mathbb{Z}_+$ be a sparsity parameter with $k \ll d$, and let $\Sigma \in \mathbb{R}^{d \times d}$ be a positive semidefinite matrix.
We say that $\Sigma$ has a big $k$-sparse direction if there exists a vector $v \in \R^d$ with $\|v\|_2 = 1$ and $\|v\|_0 \le k$ such that $
v^\top \Sigma v \ge c$,
where $c>0$.
Our goal is, if such a vector exists, find a unit vector $u\in \R^d$ with larger allowed sparsity  $\|u\|_0 \le r(k)$, such that $u^\top \Sigma u > c'$ for some $c' > 0$. 
Note that in the above definition $v$ need not be an eigenvector of $\Sigma$.

In this section, we present an algorithm for sparse PCA with
running time $C_kd^2$, where $C_k$ depends only on $k$.  
We explain the basic premise of our algorithm in \Cref{sec: fpt-algorithm}, which will lead to a $d^2k^{O(k)}$-time algorithm.
In \Cref{sec:small-radius}, we show how to improve the runtime to $d^2 k^{O(\log k)}$, proving \Cref{thm:informal-quadratic-sdp}.

In
\Cref{sec:sub-quadratic}, we give a faster algorithm whose dependence on $d$
has exponent smaller than $2$, proving \Cref{thm:informal-subquadratic-sdp}.
The algorithm is based on the observation that if there exists a vector $v$ attaining a large quadratic form, then there should exist a sparse vector $v'$ that attains a similar quadratic form and satisfies an additional connectivity property. 
This connectivity lets us discover one support coordinate from
another.
After brute-force recovery of $C_k d$ candidate supports, each in $C_kd $ time, we run a constrained eigenvalue method on each support and return the candidate with the largest quadratic form.

\subsection{Graph Connectivity and One-Step Truncated Power Method}
\label{sec: fpt-algorithm}
We first present an algorithm under an additional structural assumption of connectivity (\Cref{def:connected-vector}), which we later show is implied by the existence of a target vector with sufficiently large quadratic form. 

The key ingredient of our algorithm is the following lemma, which shows that for sufficiently large truncation parameter $r$, one truncated power-method step either produces a valid output vector or discovers all coordinates that are sufficiently correlated with the current one. More precisely,  if we aim for a target quadratic form $\beta$, then choosing $r\geq \beta^2/\tau^2$ discovers all coordinates correlated by at least $\tau$ with our initial vector or we can immediately output a vector with quadratic form at least $\beta$. For example, suppose we start from $e_i$ with $i \in \supp(v)$. If neither $e_i$ nor $\topp_r(\Sigma e_i)/\|\topp_r(\Sigma e_i)\|$ has quadratic form larger than $\beta$, then $\topp_r(\Sigma e_i)/\|\topp_r(\Sigma e_i)\|$ contains in its support every index $j$ such that $|\Sigma_{i,j}| \ge \tau$. Consequently, if a sparse target vector $v$ has $i,j \in \supp(v)$ with $|\Sigma_{i,j}| \ge \tau$, then starting from $i$, either the algorithm has already found a valid output vector, or among the $r$ coordinates retained by the truncated step it discovers another element $j$ of $\supp(v)$.
\begin{lemma}[Discovery Lemma]
\label{lem:discover}
Let $\Sigma\in\R^{d\times d}$ be symmetric. Let $x\in\R^d$ be a unit vector and let
\[
    y=\Sigma x,
    \qquad
    z=\topp_r(y),
    \qquad
    u=\frac{z}{\|z\|_2}.
\]
Set $S:=\supp(x)\cup\supp(z)$ and assume that $\Sigma_{S,S}\succeq 0$.
Assume $z\neq 0$.
If a coordinate $a\in[d]$ is not kept by $\topp_r(y)$, then
$$    u^\top\Sigma u
    \ge
    \frac{r y_a^2}{x^\top\Sigma x}\;.$$
\end{lemma}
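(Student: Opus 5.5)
The proof is a Cauchy--Schwarz argument in the inner product induced by $\Sigma_{S,S}$, combined with the elementary fact that every retained coordinate of $z$ is at least as large in magnitude as every discarded one.

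\textbf{Step 1: lower bound on $\|z\|_2^2$.} Since $a$ is not kept by $\topp_r(y)$ while $z\neq 0$, at least $r$ coordinates of $y$ are retained (otherwise all nonzero coordinates would be kept, including $a$, unless $y_a=0$, in which case the claim is trivial). Each retained coordinate $b$ satisfies $|y_b|\ge |y_a|$. Hence
\[
\|z\|_2^2 \;\ge\; r\,y_a^2 .
\]

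\textbf{Step 2: relate $\|z\|_2^2$ to $\langle z,\Sigma x\rangle$.} Because $z$ agrees with $y$ on $\supp(z)$ and vanishes elsewhere,
\[
\|z\|_2^2=\langle z,y\rangle=z^\top\Sigma x .
\]
Both $z$ and $x$ are supported in $S$, so this equals $z_S^\top\Sigma_{S,S}x_S$.

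\textbf{Step 3: Cauchy--Schwarz.} Since $\Sigma_{S,S}\succeq 0$, the form $(p,q)\mapsto p^\top\Sigma_{S,S}q$ is a positive semidefinite bilinear form on $\R^S$. This is the only place the hypothesis $\Sigma_{S,S}\succeq0$ is needed, and it is why $S$ is taken to contain both supports. Cauchy--Schwarz for this form gives
\[
\|z\|_2^4 = \bigl(z^\top\Sigma x\bigr)^2 \le \bigl(z^\top\Sigma z\bigr)\bigl(x^\top\Sigma x\bigr).
\]

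\textbf{Step 4: combine.} Dividing by $\|z\|_2^2>0$ gives
\[
u^\top\Sigma u=\frac{z^\top\Sigma z}{\|z\|_2^2}\ge \frac{\|z\|_2^2}{x^\top\Sigma x}\ge \frac{r\,y_a^2}{x^\top\Sigma x}.
\]
If $x^\top\Sigma x=0$, then Cauchy--Schwarz forces $\|z\|_2^4\le 0$, so $z=0$, contradicting the assumption $z\neq0$. Thus the denominator is positive.

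The only delicate point is the edge case in Step 1, where $a$ might fail to be "not kept" merely because $y$ has fewer than $r$ nonzero coordinates. If $y_a\neq0$ and $a$ is dropped, then $y$ has at least $r$ nonzero coordinates, all retained ones having magnitude at least $|y_a|$, so the bound holds. If $y_a=0$, the claimed inequality reads $u^\top\Sigma u\ge 0$, which holds since $\supp(u)\subseteq S$ and $\Sigma_{S,S}\succeq0$.
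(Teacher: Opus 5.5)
Your proof is correct and follows essentially the same route as the paper: you use the identity $\|z\|_2^2=z^\top\Sigma x$, apply Cauchy--Schwarz in the seminorm induced by $\Sigma_{S,S}$, and get $\|z\|_2^2\ge r y_a^2$ from the fact that every retained coordinate dominates the dropped one. The differences are cosmetic: you rule out $x^\top\Sigma x=0$ directly from Cauchy--Schwarz rather than through $\Sigma_{S,S}x_S=0$, and you handle the $y_a=0$ case explicitly, which the paper leaves implicit.
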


\begin{proof}
Since $u=z/\|z\|_2$, we have
\[
    u^\top\Sigma x
    =u^\top y
    =\frac{z^\top y}{\|z\|_2}
    =\frac{\|z\|_2^2}{\|z\|_2}
    =\|z\|_2.
\]
Both $x$ and $u$ are supported on $S$. Therefore, by Cauchy--Schwarz in the seminorm induced by the positive semidefinite matrix $\Sigma_{S,S}$,
$(u^\top\Sigma x)^2\le (u^\top\Sigma u)(x^\top\Sigma x)$.
We next verify that the denominator is positive. If $x^\top\Sigma x=0$, then $x_S^\top\Sigma_{S,S}x_S=0$. Since $\Sigma_{S,S}\succeq0$, this implies $\Sigma_{S,S}x_S=0$. Because $z$ is supported on $S$ and $x$ is supported on $S$, we get
\[
    \|z\|_2^2
    =z^\top y
    =z^\top\Sigma x
    =z_S^\top\Sigma_{S,S}x_S
    =0,
\]
contradicting $z\neq0$. Thus $x^\top\Sigma x>0$.
Therefore, $u^\top\Sigma u\ge \|z\|_2^2/(x^\top\Sigma x)$.
If coordinate $a$ is dropped, then all $r$ retained coordinates have magnitude at least $|y_a|$. Hence
$\|z\|_2^2\ge r y_a^2$.
This concludes the proof of \Cref{lem:discover}.
\end{proof}
We now define a graph on the coordinates, where edges correspond to pairs of coordinates with nontrivial correlation. We call a vector connected if its support is connected in this graph. Intuitively, the truncated-power discovery lemma above shows that, as long as the algorithm has not already found a vector with quadratic form larger than $\beta$, one truncated step from a coordinate $i$ discovers all coordinates $j$ with $|\Sigma_{ij}| \ge \tau$, provided $r>\beta^2/\tau^2$. Repeating this argument allows the algorithm to explore the support of a target vector, provided that this support is connected in the large-correlation graph.

\begin{definition}[$\tau$-connected support] \label{def:connected-vector}
Let $\Sigma\in\mathbb{R}^{d\times d}$ be symmetric and let $T\subseteq[d]$.
The $\tau$-large-correlation graph on $T$, denoted $G_\tau(T)$, is the graph with vertex set $T$ and an edge $\{i,j\}$, $i\neq j$, whenever $|\Sigma_{ij}|\ge \tau$.
We say that $T$ is $\tau$-connected if $G_\tau(T)$ is connected.
Similarly, we say that a vector $x\in\mathbb{R}^d$ is $\tau$-connected if $\supp(x)$ is $\tau$-connected.
We say that a set $E\subseteq[d]$ has $\tau$-correlation radius at most $L$ if the graph $G_\tau(E)$ has radius at most $L$, that is, if there exists a center $i_\star\in E$ such that every $j\in E$ has graph distance at most $L$ from $i_\star$ in $G_\tau(E)$.
\end{definition}

We are now ready to present our main structural algorithm (\Cref{alg:branching-truncated-power}) and prove its fixed-parameter tractable guarantee.

\begin{algorithm}[h]
    \centering
    \fbox{\parbox{6in}{
        {\bf Input:}
        A symmetric matrix $\Sigma\in\R^{d\times d}$, sparsity parameter $k\in\Z_+$, search-radius parameter $L$, output quadratic form threshold $\beta>0$, correlation threshold $\tau>0$, and truncation parameter $r\in\Z_+$.\\
        {\bf Output:}
        Either a unit vector $u\in\R^d$, or a family of candidate supports $\mathcal{B}$.

        \begin{enumerate}[leftmargin=*]

            \item\label{line:compute-neighborhoods}
            For every coordinate $i\in[d]$:
            \begin{enumerate}[leftmargin=*, nosep]
                \item If $\Sigma_{ii}>\beta$, return $u=e_i$.
                \item Compute
                $
                    z_i\gets \topp_r(\Sigma e_i).
                $
                \item If $z_i\neq 0$, set $u_i\gets z_i/\|z_i\|_2$. If $u_i^\top\Sigma u_i>\beta$, return $u=u_i$.
                \item Set
                $
                    N(i)\eqdef \supp(z_i).
                $
            \end{enumerate}

            \item\label{line:init-candidates}
            Initialize an empty family of candidate supports $\mathcal{B}\gets \emptyset$.

            \item\label{line:branch-from-all-coordinates}
            For every starting coordinate $i_0\in[d]$:
            \begin{enumerate}[leftmargin=*, nosep]
                \item Set
                \[
                    B^{(0)}_{i_0}\gets \{i_0\},
                \]
                and add $B^{(0)}_{i_0}$ to $\mathcal{B}$.

                \item For $t=0,1,\ldots,L-1$, define
                \[
                    B^{(t+1)}_{i_0}
                    \gets
                    B^{(t)}_{i_0}
                    \cup
                    \bigcup_{i\in B^{(t)}_{i_0}} N(i).
                \]
                Add $B^{(t+1)}_{i_0}$ to $\mathcal{B}$.
            \end{enumerate}

            \item\label{line:return-candidates}
            Return $\mathcal{B}$.
        \end{enumerate}
    }}
    \caption{Branching Truncated-Power Support Search.}
    \label{alg:branching-truncated-power}
\end{algorithm}

{Our algorithm performs a brute-force search in $G_{\tau}([d])$
for the support of a vector with large quadratic form, under the assumption
that this support is contained in a set of correlation radius at most $L$.}
{For each $i$, the algorithm first computes the truncated-power
update $u_i=\topp_r(\Sigma e_i)/\|\topp_r(\Sigma e_i)\|$ and checks the
quadratic forms of both $e_i$ and $u_i$.}
If one of these vectors has quadratic form larger than $\beta$, the algorithm returns it immediately. 
Otherwise, the algorithm constructs, from each starting coordinate $i$, the radius-$L$ branching region, of size at most $(r+1)^L$.
In detail, it computes, for every coordinate $i$, the set $N(i)$ of coordinates reached by one truncated-power step from $e_i$. 
It then initiates a branching search from each coordinate $i_0$, repeatedly expanding the current support by adding all neighborhoods $N(i)$ of the coordinates already discovered, for $L$ rounds. 
This produces a family of at most $d(L+1)$ candidate supports, each of size at most $(r+1)^L$.

Next, we prove the structural guarantee of our search method. The main statement is that if there exists a vector $v$ with large quadratic form whose support is contained in a set with correlation radius at most $L$, then the algorithm either returns an $r$-sparse vector with large quadratic form, or returns a family containing a support that contains $\supp(v)$.

\begin{theorem}[Branching support recovery with bounded correlation radius]
\label{thm:bounded-diameter-branching}
Let $\Sigma\in\R^{d\times d}$ be symmetric and $(r+1)$-PSD. Fix $\beta>0$, $\tau>0$, and $r\in\Z_+$ such that
$
    r>\beta^2/\tau^2.
$
Assume that there exist a unit vector $v\in\R^d$ and a set $E\subseteq[d]$ such that $\|v\|_0\le k$, $v^\top\Sigma v\geq \beta$, $\supp(v)\subseteq E$, and $E$ has $\tau$-correlation radius at most $L$. Then \Cref{alg:branching-truncated-power}, run with parameters $\beta,\tau,r$ and search-radius parameter $L$, returns one of the following:
\begin{enumerate}[(i)]
    \item a unit vector $u$ such that $\|u\|_0\le r$ and
    $
        u^\top\Sigma u>\beta;
    $
    \item a family $\mathcal{B}$ of at most $d(L+1)$ candidate supports such that $E\subseteq B$ for some $B\in\mathcal{B}$.
\end{enumerate}
Moreover, every $B\in\mathcal{B}$ satisfies $|B|\le (r+1)^L$, and 
the running time is
$O\!\left(d^2+dr^2+d(r+1)^L\right)$.
\end{theorem}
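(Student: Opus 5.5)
The plan is to condition on whether the algorithm returns during Step~\ref{line:compute-neighborhoods}. If it returns there, the output is either $e_i$ with $\Sigma_{ii}>\beta$, which is $1$-sparse, or $u_i=z_i/\|z_i\|_2$ with $u_i^\top\Sigma u_i>\beta$, which is $r$-sparse because $z_i=\topp_r(\Sigma e_i)$. Either way we are in case (i).

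So assume Step~\ref{line:compute-neighborhoods} never returns. The key claim is this: for every $i\in[d]$ and every $j\neq i$ with $|\Sigma_{ij}|\ge\tau$, we have $j\in N(i)$. To prove it, suppose $j\notin N(i)$ and apply \Cref{lem:discover} with $x=e_i$, $y=\Sigma e_i$ and $a=j$.
\begin{itemize}
\item The support condition holds because $S=\{i\}\cup\supp(z_i)$ has size at most $r+1$, so $\Sigma_{S,S}\succeq0$ by the $(r+1)$-PSD assumption.
\item We need $z_i\neq0$. This holds because $y_j=\Sigma_{ij}\neq0$, and the top-$r$ truncation of a nonzero vector is nonzero.
\item We need $x^\top\Sigma x=\Sigma_{ii}$ to be positive; the lemma's own argument gives $\Sigma_{ii}>0$, and since Step~\ref{line:compute-neighborhoods} did not return, $\Sigma_{ii}\le\beta$.
\end{itemize}
The lemma then gives
\[
u_i^\top\Sigma u_i\ge \frac{r\Sigma_{ij}^2}{\Sigma_{ii}}\ge \frac{r\tau^2}{\beta}>\frac{\beta^2}{\beta}=\beta,
\]
using $r>\beta^2/\tau^2$. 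The algorithm would then have returned $u_i$, a contradiction. Hence every edge of $G_\tau([d])$ incident to $i$ lands in $N(i)$.

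Next, let $i_\star$ be a center of $G_\tau(E)$. By induction on $t$, every $j\in E$ at distance at most $t$ from $i_\star$ in $G_\tau(E)$ lies in $B^{(t)}_{i_\star}$. The base case is $B^{(0)}_{i_\star}=\{i_\star\}$. For the inductive step, $j$ has a neighbor $j'$ at distance $t-1$ with $|\Sigma_{j'j}|\ge\tau$. By induction $j'\in B^{(t-1)}_{i_\star}$, and by the claim $j\in N(j')\subseteq B^{(t)}_{i_\star}$. Taking $t=L$ gives $E\subseteq B^{(L)}_{i_\star}\in\mathcal B$, which is case (ii). We never used $v$ beyond $\supp(v)\subseteq E$. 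The main obstacle is the claim above, specifically checking that the hypotheses of \Cref{lem:discover} hold.

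For the bookkeeping, $|\mathcal B|\le d(L+1)$ because each start contributes $L+1$ sets. Also $|B^{(t+1)}|\le|B^{(t)}|(r+1)$, since each $i$ contributes itself and at most $r$ neighbors, so $|B^{(t)}|\le(r+1)^t\le(r+1)^L$.

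For the running time:
\begin{itemize}
\item Step~\ref{line:compute-neighborhoods} costs $O(d)$ per $i$ to read the column and do top-$r$ selection, plus $O(r^2)$ to evaluate $u_i^\top\Sigma u_i$. This totals $O(d^2+dr^2)$.
\item For the branching, I would implement each expansion as a BFS-style update that processes only the newly added vertices and uses a marker array. Each start then costs time proportional to the sum of $|N(i)|+1$ over the final set, i.e.\ $O((r+1)^{L})$ up to the constant factor $r+1$ (absorbed as $(r+1)^{L}$ bounds the frontier expansions).
\item Writing out all $L+1$ sets explicitly costs $O(L(r+1)^L)$ per start. This can be avoided by storing each $B^{(t)}$ implicitly as a prefix of the BFS order, giving $O(d(r+1)^L)$ overall.
\end{itemize}
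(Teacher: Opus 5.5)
Your proposal is correct and follows essentially the same route as the paper: when Step 1 never returns, you apply the Discovery Lemma with $x=e_i$ and $a=j$, using $\Sigma_{ii}\le\beta$ and $r\tau^2>\beta^2$, to show every $\tau$-neighbor of $i$ lies in $N(i)$; you then induct on graph distance from a radius center $i_\star$ of $E$ and bound $|B^{(t)}|\le(r+1)^t$. Your running-time accounting is slightly muddled but harmless, since only vertices of $B^{(L-1)}$ are ever expanded (giving $O((r+1)\cdot(r+1)^{L-1})$ per start), and even writing out all $L+1$ sets explicitly costs only $\sum_{t\le L}(r+1)^t=O((r+1)^L)$ by the geometric bound, so the implicit-prefix trick is unnecessary.
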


\begin{proof}
Let $C=E$. By assumption, $C$ has $\tau$-correlation radius at most $L$. Let $i_\star\in C$ be a radius center, so every vertex of $C$ is within graph distance at most $L$ from $i_\star$ in $G_\tau(C)$.

If the algorithm returns a vector, then by construction it satisfies the first conclusion. It remains to consider the case where the algorithm reaches Line \ref{line:return-candidates} and returns the family $\mathcal{B}$. In this case, for every $i\in[d]$,
$
    \Sigma_{ii}\le \beta
$
and, whenever $z_i=\topp_r(\Sigma e_i)\neq 0$, the vector $u_i=z_i/\|z_i\|_2$ satisfies
$
    u_i^\top\Sigma u_i\le \beta.
$

Consider the branch initialized at $i_\star$. For $t\ge 0$, let $B^{(t)}=B^{(t)}_{i_\star}$ be the supports constructed by the algorithm. We prove by induction that $B^{(t)}$ contains every vertex of $C$ at graph distance at most $t$ from $i_\star$ in $G_\tau(C)$.

The base case $t=0$ is immediate because $B^{(0)}=\{i_\star\}$.

Now suppose the claim holds at time $t$. Let $j\in C$ be a vertex at graph distance at most $t+1$ from $i_\star$. If $j$ already has distance at most $t$, then $j\in B^{(t)}\subseteq B^{(t+1)}$.

Otherwise, $j$ has distance exactly $t+1$. Then there exists a vertex $i\in C$ at distance $t$ from $i_\star$ such that $\{i,j\}\in E(G_\tau(C))$. By the induction hypothesis, $i\in B^{(t)}$. Also, by definition of $G_\tau(C)$, $|\Sigma_{ij}|\ge \tau$.

We claim that $j\in N(i)$. Suppose, for contradiction, that $j\notin N(i)$. Since $|\Sigma_{ij}|\ge\tau$, the vector $z_i=\topp_r(\Sigma e_i)$ is nonzero. In this instantiation of \Cref{lem:discover}, the set $\supp(e_i)\cup\supp(z_i)$ has size at most $r+1$, so the $(r+1)$-PSD assumption gives the positivity hypothesis of the discovery lemma. Applying \Cref{lem:discover} with $x=e_i$ and $a=j$, we get
$$
    u_i^\top\Sigma u_i
    \ge
    \frac{r\Sigma_{ij}^2}{\Sigma_{ii}}.
$$
Using $\Sigma_{ii}\le \beta$ and $|\Sigma_{ij}|\ge \tau$, this gives
$$
    u_i^\top\Sigma u_i
    \ge
    \frac{r\tau^2}{\beta}
    >\beta,
$$
where the final inequality is exactly $r\tau^2>\beta^2$. This contradicts the fact that the algorithm did not return $u_i$. Therefore $j\in N(i)$. Since $i\in B^{(t)}$, the construction of $B^{(t+1)}$ gives $N(i)\subseteq B^{(t+1)}$, and hence $j\in B^{(t+1)}$. This completes the induction.

Now every vertex of $C$ is at graph distance at most $L$ from the radius center $i_\star$. Therefore $C\subseteq B^{(L)}_{i_\star}$, and the second conclusion follows.

It remains to bound the number and sizes of the candidate supports and the running time. The algorithm constructs $L+1$ supports for each of the $d$ starting coordinates, so $|\mathcal{B}|\le d(L+1)$. By construction, $|B^{(0)}|=1$ and
$$
    |B^{(t+1)}|
    \le
    |B^{(t)}|+\sum_{i\in B^{(t)}} |N(i)|
    \le
    (r+1)|B^{(t)}|.
$$
Hence $|B^{(t)}|\le (r+1)^t$. In particular, every candidate support has size at most $(r+1)^L$.

Finally, the sets $N(i)$ can be computed for all $i\in[d]$ by scanning the columns of $\Sigma$ and keeping the $r$ largest entries in each column. The basis-vector and truncated-power quadratic form checks take an additional $d r^{2}$ time after these supports are formed. Constructing all candidate supports takes $d(r+1)^L$ time. 
Therefore, the total running time is
$O\!\left(d^2+dr^2+d(r+1)^L\right)$.
\end{proof}
The preceding theorem separates the combinatorial support-recovery step from the optimization procedure used on the resulting candidate supports. In the first outcome, the truncated-power search already produces an $r$-sparse unit vector with large quadratic form. In the second outcome, it produces a family of at most $d(L+1)$ supports of size at most $(r+1)^L$, one of which contains the support of the original $k$-sparse witness. This structural conclusion can be converted into different sparse PCA guarantees depending on how the candidate supports are subsequently processed.

We record two such consequences. The first applies the ordinary power method on each candidate support and returns a vector with large quadratic form. The second applies the standard sparse semidefinite relaxation on each candidate support and returns either the direct sparse vector found by the search or a positive semidefinite matrix with bounded entrywise $\ell_1$ norm and large objective value.
The former solves the sparse PCA task considered here by returning
a sparse vector with large quadratic form, although its sparsity may be
substantially larger than the target sparsity.  The latter returns a
certificate with effective sparsity comparable to the target sparsity, which
is better suited to downstream applications such as the robust sparse mean
estimation problem studied in \Cref{sec:mean-estimation}.

\begin{corollary}[Sparse PCA via the constrained power method]
\label{cor:bounded-diameter-constrained-power}
Fix $\delta\in(0,1)$, an output quadratic form threshold $\beta>0$, a power-method slack $\eta>0$, a correlation threshold $\tau>0$, and $r\in\Z_+$ satisfying $r\tau^2>\beta^2$. Let $\Sigma\in\R^{d\times d}$ be symmetric and $(r+1)^L$-PSD. Assume that there exist a unit vector $v\in\R^d$ and a set $E\subseteq[d]$ such that $\|v\|_0\le k$, $v^\top\Sigma v\geq \beta+\eta$, $\supp(v)\subseteq E$, and $E$ has $\tau$-correlation radius at most $L$. Run \Cref{alg:branching-truncated-power} with parameters $\beta,\tau,r$ and search-radius parameter $L$. If it returns a vector, return that vector. Otherwise, for every $B\in\mathcal{B}$ such that $\Sigma_{B,B}\neq 0$, run the random-start ordinary power method on $\Sigma_{B,B}$ long enough to return a vector of value at least $\lambda_{\max}(\Sigma_{B,B})-\eta$, with failure probability $\delta$, pad the resulting vector with zeros outside $B$, and return the vector with the largest quadratic form. Then, with probability at least $1-\delta$, the returned unit vector $u$ satisfies
$$
    u^\top\Sigma u\geq \beta.
$$
Moreover, $\|u\|_0\le \max\{r,(r+1)^L\}$, and the running time is
$O\!\left(
d^2+dr^2+d(r+1)^{O(L)}T_{\rm pm}
\right)$, where $T_{\mathrm{pm}}$ denotes the runtime of the power-method subroutine.
\end{corollary}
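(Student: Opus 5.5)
The plan is to apply \Cref{thm:bounded-diameter-branching} with the same $\beta,\tau,r,L$ and then analyze the post-processing on the candidate supports. First, the hypotheses of that theorem hold. Since $(r+1)^L\ge r+1$, the $(r+1)^L$-PSD assumption implies $(r+1)$-PSD, and $r>\beta^2/\tau^2$ is given. The witness $v$ satisfies $v^\top\Sigma v\ge\beta+\eta\ge\beta$, $\supp(v)\subseteq E$, and $E$ has $\tau$-correlation radius at most $L$. The theorem therefore yields one of two outcomes.

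\emph{Outcome (i).} The algorithm returns a unit vector $u$ with $\|u\|_0\le r$ and $u^\top\Sigma u>\beta$, and we are done deterministically.

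\emph{Outcome (ii).} We obtain $\mathcal{B}$ with $|\mathcal{B}|\le d(L+1)$ and some $B^\star\in\mathcal{B}$ with $E\subseteq B^\star$. Since $\supp(v)\subseteq B^\star$,
\[
\lambda_{\max}(\Sigma_{B^\star,B^\star})\ \ge\ v_{B^\star}^\top\Sigma_{B^\star,B^\star}v_{B^\star}\ =\ v^\top\Sigma v\ \ge\ \beta+\eta>0 .
\]
In particular $\Sigma_{B^\star,B^\star}\neq0$, so the power method is run on this block. Because $|B^\star|\le(r+1)^L$ and $\Sigma$ is $(r+1)^L$-PSD, the block is PSD, which is exactly what the power-method guarantee needs. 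On success it returns a unit $w$ with $w^\top\Sigma_{B^\star,B^\star}w\ge\lambda_{\max}-\eta\ge\beta$. Padding $w$ with zeros gives a unit vector in $\R^d$ with the same quadratic form and sparsity at most $(r+1)^L$.

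The final output maximizes the quadratic form over all candidates, so it has value at least $\beta$. Every candidate is supported within some $B$ with $|B|\le(r+1)^L$, which gives $\|u\|_0\le\max\{r,(r+1)^L\}$.

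The main subtlety is the failure probability, since the power method is run on up to $d(L+1)$ blocks. Only the run on $B^\star$ needs to succeed: failures elsewhere only produce lower-valued candidates that the maximum ignores. So a single per-run failure probability $\delta$ suffices, with no union bound. The maximum is safe provided each returned vector is evaluated by its true quadratic form and padded vectors remain unit. If one wanted simultaneous guarantees one would set the per-run failure probability to $\delta/(d(L+1))$, but that is unnecessary here.

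For the running time, \Cref{thm:bounded-diameter-branching} costs $O(d^2+dr^2+d(r+1)^L)$. For each of the $O(dL)$ candidates:
\begin{itemize}
\item forming $\Sigma_{B,B}$ costs $(r+1)^{2L}$;
\item the power method costs $T_{\rm pm}$;
\item evaluating the quadratic form costs $(r+1)^{2L}$.
\end{itemize}
The total is $O(d^2+dr^2+d(r+1)^{O(L)}T_{\rm pm})$, with the factor $L$ absorbed into $(r+1)^{O(L)}$.
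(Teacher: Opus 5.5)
Your proposal is correct and follows the paper's proof. Both invoke \Cref{thm:bounded-diameter-branching}, use the candidate $B\supseteq\supp(v)$ to get $\lambda_{\max}(\Sigma_{B,B})\ge\beta+\eta$, require the power method to succeed only on that block (so no union bound is needed), and take the sparsity and running time from the theorem's size bounds together with the final maximum. Your additional observations are clarifications the paper leaves implicit: the $(r+1)^L$-PSD hypothesis gives the $(r+1)$-PSD condition the theorem needs (for $L\ge1$), and it makes every candidate block positive semidefinite for the power method.
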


\begin{proof}
If \Cref{alg:branching-truncated-power} returns a vector, then its quadratic form is larger than $\beta$. Otherwise, by \Cref{thm:bounded-diameter-branching}, some $B\in\mathcal{B}$ contains $\supp(v)$, and hence
$$
    \lambda_{\max}(\Sigma_{B,B})\ge v^\top\Sigma v\ge \beta+\eta.
$$
By the guarantee of the power-method subroutine on this support, the returned vector has quadratic form at least $(\beta+\eta)-\eta=\beta$ with probability at least $1-\delta$. Since the procedure returns the vector with the largest quadratic form among all candidate supports, the returned vector also has quadratic form at least $\beta$ with probability at least $1-\delta$. The sparsity and running-time bounds follow from \Cref{thm:bounded-diameter-branching}.
\end{proof}

\begin{corollary}[Sparse PCA via the semidefinite relaxation]
\label{cor:bounded-diameter-sdp}
Fix an output quadratic form threshold $\beta>0$, an SDP additive-approximation slack $\eta>0$, a correlation threshold $\tau>0$, and $r\in\Z_+$ satisfying $r\tau^2>\beta^2$. Let $\Sigma\in\R^{d\times d}$ be symmetric and $(r+1)$-PSD. Assume that there exist a unit vector $v\in\R^d$ and a set $E\subseteq[d]$ such that $\|v\|_0\le k$, $v^\top\Sigma v\ge \beta+\eta$, $\supp(v)\subseteq E$, and $E$ has $\tau$-correlation radius at most $L$. Run \Cref{alg:branching-truncated-power} with parameters $\beta,\tau,r$ and search-radius parameter $L$.
If it returns a vector, return that vector. Otherwise, for every $B\in\mathcal{B}$ compute a feasible additive-$\eta$ approximate solution to $\operatorname{SDP}_k(\Sigma_{B,B})$.
Pad each approximate solution with zeros outside $B\times B$ and return the padded solution with the largest objective value.
Then the procedure returns one of the following:
\begin{enumerate}[(i)]
    \item a unit vector $u$ such that $\|u\|_0\le r$ and $u^\top\Sigma u>\beta$;
    \item a matrix $X\in\cX_{\sdp,k}$ such that
    $
        \langle\Sigma,X\rangle\ge \beta.
    $
\end{enumerate}
Using a polynomial-time SDP solver, the running time is
$O\!\left(
d^2+dr^2+
\frac{d(r+1)^{O(L)}}{\eta^{O(1)}}
\right)$
\end{corollary}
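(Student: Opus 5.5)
The proof mirrors that of \Cref{cor:bounded-diameter-constrained-power}: apply \Cref{thm:bounded-diameter-branching}, then replace the power method by the SDP on each candidate support. We first run \Cref{alg:branching-truncated-power} with parameters $\beta,\tau,r,L$. The hypotheses of \Cref{thm:bounded-diameter-branching} hold: $r\tau^2>\beta^2$, $\Sigma$ is $(r+1)$-PSD, and $v^\top\Sigma v\ge\beta+\eta\ge\beta$. So either the algorithm returns a vector or it returns a family $\mathcal{B}$.

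\emph{Case 1: a vector is returned.} By construction of the algorithm, the vector is either some $e_i$ with $\Sigma_{ii}>\beta$ or some $u_i=\topp_r(\Sigma e_i)/\|\topp_r(\Sigma e_i)\|_2$ with $u_i^\top\Sigma u_i>\beta$. In both cases it is a unit vector with sparsity at most $r$ and quadratic form larger than $\beta$, which is outcome (i).

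\emph{Case 2: the family $\mathcal{B}$ is returned.} By the theorem, some $B\in\mathcal{B}$ contains $E\supseteq\supp(v)$. For this $B$, the matrix $vv^\top$ restricted to $B\times B$ is feasible for $\operatorname{SDP}_k(\Sigma_{B,B})$. Indeed, it is PSD and has trace $\|v\|_2^2=1$. Its entrywise $\ell_1$ norm is $\|v\|_1^2\le\|v\|_0\|v\|_2^2\le k$ by Cauchy--Schwarz. Hence
\[
\operatorname{SDP}_k(\Sigma_{B,B})\ge v^\top\Sigma v\ge\beta+\eta,
\]
and the additive-$\eta$ approximate solution $X_B$ satisfies $\langle\Sigma_{B,B},X_B\rangle\ge\beta$.

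Now pad $X_B$ by zeros to a $d\times d$ matrix $X$. Padding preserves positive semidefiniteness, the trace, and the entrywise $\ell_1$ norm, so $X\in\cX_{\sdp,k}$ and $\langle\Sigma,X\rangle=\langle\Sigma_{B,B},X_B\rangle\ge\beta$. Every padded candidate is feasible in the same way, so the candidate with the largest objective is in $\cX_{\sdp,k}$ and has value at least $\beta$. This is outcome (ii). No PSD assumption on $\Sigma_{B,B}$ is needed here, because the SDP is well-defined for any symmetric matrix.

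\emph{Running time.} \Cref{thm:bounded-diameter-branching} gives $O(d^2+dr^2+d(r+1)^L)$ for the search. There are at most $d(L+1)$ supports, each of size $m\le(r+1)^L$. A polynomial-time solver computes an additive-$\eta$ solution in $\poly(m)/\eta^{O(1)}=(r+1)^{O(L)}/\eta^{O(1)}$ time, where the bounded scale of the entries of $\Sigma_{B,B}$ is folded into the constants. Selecting the best candidate takes $O(m^2)$ per support to evaluate $\langle\Sigma_{B,B},X_B\rangle$. The factor $L+1$ is absorbed into $(r+1)^{O(L)}$, which gives the stated bound.

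The only mildly delicate step is this solver runtime: it requires the solver's accuracy guarantee to be stated additively with polynomial dependence on $1/\eta$. That is assumed by the phrase ``polynomial-time SDP solver'' in the statement.
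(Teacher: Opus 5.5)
Your proposal is correct and follows the paper's proof essentially line for line. You invoke \Cref{thm:bounded-diameter-branching}, use that $vv^\top$ restricted to a candidate $B\supseteq\supp(v)$ is feasible for $\operatorname{SDP}_k(\Sigma_{B,B})$ with value at least $\beta+\eta$ (so the additive-$\eta$ solution has value at least $\beta$), and note that zero-padding preserves both membership in $\cX_{\sdp,k}$ and the objective value. Your extra remarks only make explicit what the paper leaves implicit: that every padded candidate is feasible, so the returned maximizer is too, and that the $L+1$ factor is absorbed into $(r+1)^{O(L)}$.
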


\begin{proof}
The direct-vector case follows from \Cref{thm:bounded-diameter-branching}. Otherwise, some $B\in\mathcal{B}$ contains $\supp(v)$. The matrix $vv^\top$, restricted to $B\times B$, is feasible for the SDP because
$$
    vv^\top\succeq0,
    \qquad
    \operatorname{tr}(vv^\top)=1,
    \qquad
    \|vv^\top\|_1=\|v\|_1^2\le k.
$$
Therefore
$$
    \operatorname{SDP}_k(\Sigma_{B,B})
    \ge
    \langle\Sigma,vv^\top\rangle
    =
    v^\top\Sigma v
    \ge
    \beta+\eta.
$$
Thus, the additive-$\eta$ approximate solution corresponding to this support has objective value at least $\beta$. Padding it with zeros preserves positive semidefiniteness, trace, entrywise $\ell_1$ norm, and objective value. Since the procedure returns the padded solution with the largest objective value, the returned matrix has objective value at least $\beta$. The running-time bound follows from the number and sizes of the candidate supports in \Cref{thm:bounded-diameter-branching}.
\end{proof}

We now remove the bounded-radius assumption. Specifically, we first show that for any sparse vector with large quadratic form, there exists a subset of coordinates that both induces a vector of comparable quadratic form and is connected in the large-correlation graph. Combining this reduction with \Cref{cor:bounded-diameter-constrained-power} gives the worst-case fixed-parameter sparse PCA guarantee below.

\begin{lemma}[Connectivity Lemma]
\label{lem:component-reduction}
Let $\tau>0$, let $v$ be a unit vector with $\supp(v)= S$, with $|S|\leq k$, and suppose
$
    v^\top\Sigma v\ge R.
$

Then there exists a unit vector $v'$ such that
$
    \supp(v')\subseteq \supp(v),
$
$\supp(v')$ is $\tau$-connected, and
$
    (v')^\top\Sigma v'
    \ge
    R-\tau k.
$
\end{lemma}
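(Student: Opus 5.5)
Decompose $S$ into the connected components $C_1,\dots,C_m$ of $G_\tau(S)$ and take $v'$ to be the normalized restriction of $v$ to the best component. Write $v=\sum_{\ell} v_{C_\ell}$, where $v_{C_\ell}$ is the restriction of $v$ to $C_\ell$. Expanding the quadratic form gives
\[
v^\top\Sigma v=\sum_{\ell} v_{C_\ell}^\top\Sigma v_{C_\ell}+\sum_{\ell\neq \ell'} v_{C_\ell}^\top\Sigma v_{C_{\ell'}}.
\]

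The cross terms only involve pairs $i\in C_\ell$, $j\in C_{\ell'}$ with $\ell\neq\ell'$. Such pairs are not edges of $G_\tau(S)$, so $|\Sigma_{ij}|<\tau$. Bound the cross terms by
\[
\sum_{i,j:\,\text{different components}}|\Sigma_{ij}||v_i||v_j|\le \tau\|v\|_1^2\le \tau k,
\]
using $\|v\|_1^2\le |S|\,\|v\|_2^2\le k$. It follows that $\sum_\ell v_{C_\ell}^\top\Sigma v_{C_\ell}\ge R-\tau k$.

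Next, use the weights $w_\ell:=\|v_{C_\ell}\|_2^2$, which satisfy $\sum_\ell w_\ell=1$. We have
\[
\sum_\ell w_\ell\cdot\frac{v_{C_\ell}^\top\Sigma v_{C_\ell}}{w_\ell}\ge R-\tau k,
\]
where the sum is over components with $w_\ell>0$; since $S=\supp(v)$, this is every component. A weighted average of the normalized values is at least $R-\tau k$, so some $\ell$ has $\frac{v_{C_\ell}^\top\Sigma v_{C_\ell}}{w_\ell}\ge R-\tau k$. Set $v'=v_{C_\ell}/\|v_{C_\ell}\|_2$. This is a unit vector, its support is $C_\ell\subseteq S$, and $C_\ell$ is $\tau$-connected by construction.

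The main point needing care is the cross-term bound, which comes down to the strict non-edge inequality plus $\|v\|_1\le\sqrt k$. The argument is otherwise routine and needs no PSD assumption. Some component value could be negative, but the averaging argument handles that without change.
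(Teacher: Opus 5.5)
Your proposal is correct and matches the paper's proof essentially step for step. You decompose $S$ into the components of $G_\tau(S)$, bound the cross-component terms by $\tau\|v\|_1^2\le\tau k$, and then select the best normalized component restriction by averaging with weights $\|v_{C_\ell}\|_2^2$.
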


\begin{proof}
Let $C_1,\ldots,C_m$ be the connected components of the graph $G_\tau(S)$ defined as in \Cref{def:connected-vector}.
For each component $C_p$, let $v_{C_p}$ denote the restriction of $v$ to $C_p$.

We decompose the quadratic form as
\[
    v^\top\Sigma v
    =
    \sum_{p=1}^m v_{C_p}^\top\Sigma v_{C_p}
    +
    2\sum_{p<q} v_{C_p}^\top\Sigma v_{C_q}.
\]
If $p\neq q$, then there is no edge of $G_\tau(S)$ between $C_p$ and $C_q$.
Therefore, $|\Sigma_{ij}|<\tau$ for every $i\in C_p$ and
$j\in C_q$.
Hence the total contribution of the cross-component terms is bounded by
\[
    \left|
    2\sum_{p<q} v_{C_p}^\top\Sigma v_{C_q}
    \right|
    \le
    2\tau\sum_{p<q}\|v_{C_p}\|_1\|v_{C_q}\|_1.
\]
Using $2\sum_{p<q} a_p a_q \le \left(\sum_p a_p\right)^2$
with $a_p=\|v_{C_p}\|_1$, we get
\[
    \left|
    2\sum_{p<q} v_{C_p}^\top\Sigma v_{C_q}
    \right|
    \le
    \tau\left(\sum_p\|v_{C_p}\|_1\right)^2
    =
    \tau\|v\|_1^2.
\]
Since $v$ is supported on at most $k$ coordinates and $\|v\|_2=1$,
$\|v\|_1^2\le k$.
Therefore
\[
    \left|
    2\sum_{p<q} v_{C_p}^\top\Sigma v_{C_q}
    \right|
    \le
    \tau k.
\]
It follows that
\[
    \sum_{p=1}^m v_{C_p}^\top\Sigma v_{C_p}
    \ge
    v^\top\Sigma v-\tau k
    \ge
    R-\tau k.
\]

Now write $a_p=\|v_{C_p}\|_2^2$.  Then $\sum_p a_p=1$. For
every component with $a_p>0$, define
$w_p=v_{C_p}/\|v_{C_p}\|_2$.
Then $v_{C_p}^\top\Sigma v_{C_p}=a_p\, w_p^\top\Sigma w_p$.
Thus $\sum_{p:a_p>0} a_p\, w_p^\top\Sigma w_p\ge R-\tau k$.
Since the left-hand side is a weighted average of the values
$w_p^\top\Sigma w_p$, there must exist some component $C_\ell$ with $a_\ell>0$
such that $w_\ell^\top\Sigma w_\ell\ge R-\tau k$.
Set $v'=w_\ell=v_{C_\ell}/\|v_{C_\ell}\|_2$.
Then $v'$ is a unit vector, $\supp(v')=C_\ell\cap\supp(v)\subseteq\supp(v)$,
and $\supp(v')$ is $\tau$-connected by definition of $C_\ell$ as a connected
component of $G_\tau(S)$. 
\end{proof}

\begin{corollary}[Worst-case fixed-parameter guarantee via the constrained power method]
\label{cor:worst-case-fpt-sparse-pca}
Let $\Sigma\in\R^{d\times d}$ be symmetric and $(r+1)^k$-PSD. Fix $\delta\in(0,1)$, $\beta>0$, $\eta>0$, $\tau>0$, and $r\in\Z_+$ satisfying $r\tau^2>\beta^2$. Assume that there exists a unit vector
$v\in\R^d$ such that $\|v\|_0\le k$ and
$
    v^\top\Sigma v\ge R,
$
where $R-\tau k\ge \beta+\eta$.
Then the procedure of \Cref{cor:bounded-diameter-constrained-power}, run with search-radius parameter $L=k$ and parameters $\beta,\tau,r,\eta$, returns a unit vector $u$ such that with probability $1-\delta$
$$
    u^\top\Sigma u\geq \beta.
$$
Moreover, $\|u\|_0\le \max\{r,(r+1)^k\}$, and the running time is
$O\!\left(
d^2+dr^2+d(r+1)^{O(k)}T_{\rm pm}
\right)$,
where $T_{\mathrm{pm}}$ denotes the runtime of the power-method subroutine.
\end{corollary}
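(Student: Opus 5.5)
The plan is to combine the Connectivity Lemma (\Cref{lem:component-reduction}) with \Cref{cor:bounded-diameter-constrained-power}, taking $L=k$.

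\textbf{Step 1: reduce to a connected witness.} Apply \Cref{lem:component-reduction} to the given $v$ with threshold $\tau$. This gives a unit vector $v'$ with the following properties:
\begin{itemize}
\item $\supp(v')\subseteq\supp(v)$, hence $\|v'\|_0\le k$;
\item $\supp(v')$ is $\tau$-connected;
\item $(v')^\top\Sigma v'\ge R-\tau k\ge\beta+\eta$.
\end{itemize}

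\textbf{Step 2: bound the correlation radius.} Set $E:=\supp(v')$. The graph $G_\tau(E)$ is connected and has at most $k$ vertices. Any vertex can serve as the center, and its graph distance to any other vertex is at most $|E|-1\le k-1\le k$. So $E$ has $\tau$-correlation radius at most $L=k$. The hypotheses of \Cref{cor:bounded-diameter-constrained-power} now hold with witness $v'$, set $E$, and $L=k$. The PSD requirement there is $(r+1)^L$-PSD with $L=k$, which is exactly the assumption $(r+1)^k$-PSD. The remaining conditions, $r\tau^2>\beta^2$ and $\|v'\|_0\le k$, are also satisfied.

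\textbf{Step 3: read off the conclusion.} \Cref{cor:bounded-diameter-constrained-power} then yields, with probability at least $1-\delta$, a unit vector $u$ with $u^\top\Sigma u\ge\beta$. It also gives the sparsity bound $\|u\|_0\le\max\{r,(r+1)^k\}$ and the running time $O(d^2+dr^2+d(r+1)^{O(k)}T_{\rm pm})$, after substituting $L=k$.

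I do not expect a real obstacle. The only points needing care are the radius bound, namely that a connected graph on at most $k$ vertices has radius at most $k-1$, and confirming that the PSD order required by the corollary matches $(r+1)^k$ when $L=k$.
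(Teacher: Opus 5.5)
Your proposal is correct and matches the paper's proof: apply \Cref{lem:component-reduction} to get a $\tau$-connected witness $v'$ of value at least $\beta+\eta$, bound the radius of its at most $k$-vertex support by $k-1\le k$, and invoke \Cref{cor:bounded-diameter-constrained-power} with $E=\supp(v')$ and $L=k$. Your check that the $(r+1)^k$-PSD hypothesis matches the $(r+1)^L$-PSD requirement there is the right detail to confirm.
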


\begin{proof}
Let $S=\supp(v)$. By \Cref{lem:component-reduction}, there exists a unit vector $v'$ such that $\supp(v')\subseteq S$, $\supp(v')$ is $\tau$-connected, and
$$
    (v')^\top\Sigma v'
    \ge
    R-\tau k
    \ge
    \beta+\eta.
$$
Let $C=\supp(v')$. Since $C$ is $\tau$-connected and $|C|\le k$, the graph $G_\tau(C)$ has radius at most $k-1$, and hence at most $k$.

Therefore $v'$ satisfies the assumptions of \Cref{cor:bounded-diameter-constrained-power} with witness set $E=C$ and search-radius parameter $L=k$. Applying \Cref{cor:bounded-diameter-constrained-power} returns a unit vector $u$ satisfying
$
    u^\top\Sigma u\geq \beta.
$
The sparsity and running-time guarantees follow from \Cref{cor:bounded-diameter-constrained-power}.
\end{proof}

\subsection{Finding a small-radius witness}
\label{sec:small-radius}
In this section, we will prove \Cref{thm:informal-quadratic-sdp} by improving \Cref{cor:worst-case-fpt-sparse-pca}.
As proved in \Cref{thm:bounded-diameter-branching} and \Cref{lem:component-reduction}, the dependence on \(k\) in the branching search is controlled by the radius of the connected component that must be explored.

This suggests a more refined structural viewpoint: if one can trim
the original sparse witness to a smaller $\tau$-connected witness of radius
$O(\log k)$ while preserving a large quadratic form, then the branching
search needs only $O(\log k)$ rounds.  The resulting candidate supports have
size $(r+1)^{O(\log k)}$, giving a quasi-polynomial dependence on $k$ when
$r=k^{O(1)}$, rather than the current $k^{O(k)}$ bound.

Thus, the main question for improving the running time is the following:
For a $k$-sparse unit vector $v$, how large must $v^\top\Sigma v$
be to guarantee a sparse unit vector $u$ with quadratic form larger than
$\beta$ whose support has radius $O(\log k)$ in the large-correlation graph?
The next lemma shows that a sufficiently large sparse quadratic form contains a low-radius certificate after losing only a factor of $k^{1/(2L)}$ in value. This is the main structural input behind the improved running-time guarantees. Indeed, if the original vector has value $R$, then for any radius parameter $L$, we can find a subset $T$ contained in the original support whose large-correlation graph has radius at most $L$ and whose top eigenvalue is at least roughly $R/k^{1/(2L)}$, up to the error introduced by thresholding small correlations.

The intuition is to first discard all off-diagonal entries of $\Sigma_{S,S}$ with magnitude smaller than $\tau$. This changes every quadratic form on $S$ by at most $\delta=\tau k$, so the thresholded matrix $H$ still has a large top eigenvalue. Since $H$ is supported on the large-correlation graph, the vector $H^L e_i$ is supported only on the radius-$L$ ball around $i$. If every radius-$L$ ball had small spectral norm, then all columns of $H^L$ would be small, forcing $\|H^L\|_{\mathrm{op}}$ to be small, contradicting the large eigenvalue of $H$. Therefore some radius-$L$ ball must already contain a large quadratic form. Passing from $H$ back to $\Sigma$ loses another additive $\delta$, giving the claimed bound.
\begin{lemma}[Quantitative bounded-radius reduction]
\label{lem:quantitative-bounded-radius-reduction}
Let $\Sigma\in\mathbb{R}^{d\times d}$ be symmetric, and let $v$ be a unit vector with
$\supp(v)=S$ and $|S|=k$. Assume that $\Sigma_{S,S}\succeq 0$. Suppose
$
    v^\top \Sigma v \ge R .
$
Fix $\tau>0$, and let $G_\tau(S)$ be the graph on $S$ with an edge
$\{i,j\}$, $i\neq j$, whenever
$
    |\Sigma_{ij}|\ge \tau .
$
Let $L\ge 1$ be an integer and set
$
    \delta := \tau k .
$
Then there exists a set $T\subseteq S$ such that $G_\tau(T)$ 
has radius at most $L$, and
\[
    \lambda_{\max}(\Sigma_{T,T})
    \ge
    \frac{R-\delta}{k^{1/(2L)}}-\delta .
\]

\end{lemma}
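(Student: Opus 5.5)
Following the intuition sketched before the statement, we first sparsify $\Sigma_{S,S}$ by thresholding. Define $H\in\R^{S\times S}$ by $H_{ii}=\Sigma_{ii}$, $H_{ij}=\Sigma_{ij}$ if $|\Sigma_{ij}|\ge\tau$, and $H_{ij}=0$ otherwise. Then $D:=\Sigma_{S,S}-H$ has all entries of magnitude less than $\tau$, so for any unit $w$ supported in $S$ we have $|w^\top Dw|\le\tau\|w\|_1^2\le\tau k=\delta$. The same bound holds on every principal submatrix indexed by $T\subseteq S$. Consequently $\lambda_{\max}(H)\ge v^\top Hv\ge R-\delta$, and for every $T\subseteq S$,
$$\lambda_{\max}(\Sigma_{T,T})\ge\lambda_{\max}(H_{T,T})-\delta.$$
It therefore suffices to find a radius-$\le L$ set $T$ with $\lambda_{\max}(H_{T,T})\ge(R-\delta)/k^{1/(2L)}$. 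We may assume $R>\delta$; otherwise any singleton works, using $\Sigma_{ii}\ge0$.

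For $i\in S$, let $B_i$ be the radius-$L$ ball around $i$ in $G_\tau(S)$. Since $H$ is supported on the edges of $G_\tau(S)$ plus the diagonal, $H^Le_i$ is supported in $B_i$. The natural set to take is $T=B_i$, which by definition has radius at most $L$ in $G_\tau(T)$ with center $i$. The key point is that distances inside $B_i$ can only be shorter than in $S$ along paths through $B_i$, and every shortest path from $i$ to a vertex of $B_i$ stays inside $B_i$. Suppose, for contradiction, that $\lambda:=\max_i\|H_{B_i,B_i}\|_{\op}<(R-\delta)/k^{1/(2L)}$. We want to bound $\|H^Le_i\|_2$. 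Any walk of length $L$ starting at $i$ stays within distance $L$ of $i$, so all of its vertices lie in $B_i$. Hence
$$H^Le_i=(H_{B_i,B_i})^Le_i,$$
and therefore $\|H^Le_i\|_2\le\|H_{B_i,B_i}\|_{\op}^L\le\lambda^L$.

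Now compare with the spectrum. We have
$$\|H^L\|_F^2=\sum_{i\in S}\|H^Le_i\|_2^2\le k\lambda^{2L},$$
which gives $\|H^L\|_{\op}\le\sqrt k\,\lambda^L$. On the other hand, $H$ is symmetric, so $\|H^L\|_{\op}=\|H\|_{\op}^L\ge\lambda_{\max}(H)^L\ge(R-\delta)^L$. Together these give $\lambda\ge(R-\delta)/k^{1/(2L)}$, a contradiction. So there is some $i$ with $\|H_{B_i,B_i}\|_{\op}\ge(R-\delta)/k^{1/(2L)}$.

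The main obstacle is the final step, because a large operator norm of $H_{B_i,B_i}$ could in principle come from a very negative eigenvalue, whereas we need $\lambda_{\max}$. $H$ need not be PSD, even though $\Sigma_{S,S}$ is. The first fix we would try uses the same thresholding inequality: for any unit $w$ on $B_i$, $w^\top H_{B_i,B_i}w\ge w^\top\Sigma w-\delta\ge-\delta$. Hence $\lambda_{\min}(H_{B_i,B_i})\ge-\delta$, and so
$$\lambda_{\max}(H_{B_i,B_i})\ge\min\{\|H_{B_i,B_i}\|_{\op},\ \cdots\}.$$
Concretely, if $\|H_{B_i,B_i}\|_{\op}>\delta$ then it equals $\lambda_{\max}$, giving the bound. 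If instead $\|H_{B_i,B_i}\|_{\op}\le\delta$, the conclusion $\lambda_{\max}(\Sigma_{T,T})\ge(R-\delta)k^{-1/(2L)}-\delta$ needs separate care. The cleanest fallback is to run the contradiction argument with $H$ replaced by the PSD-shifted matrix $H+\delta I$, which is PSD and has the same sparsity pattern. Its top eigenvalue is at least $R$, and the resulting $\lambda_{\max}(H_{T,T}+\delta I)\ge R/k^{1/(2L)}$ translates into $\lambda_{\max}(\Sigma_{T,T})\ge R k^{-1/(2L)}-2\delta$. We would then adjust the bookkeeping so the additive losses match the stated form $(R-\delta)k^{-1/(2L)}-\delta$.
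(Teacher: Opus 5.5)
Your argument matches the paper's proof step for step: thresholding to $H$, locality of $H^L e_i$ inside the ball $B(i,L)$, the Frobenius-norm pigeonhole over columns, and the bound $\lambda_{\min}(H_{T,T})\ge-\delta$ from $\Sigma_{T,T}\succeq0$. However, you leave the last case open, and as written that is a real (if small) gap. When $\|H_{T,T}\|_{\mathrm{op}}\le\delta$, you say the conclusion ``needs separate care'' and fall back on $H+\delta I$. That fallback is valid, but it only gives $\lambda_{\max}(\Sigma_{T,T})\ge Rk^{-1/(2L)}-2\delta$. This is weaker than the claimed $(R-\delta)k^{-1/(2L)}-\delta$ by exactly $\delta\bigl(1-k^{-1/(2L)}\bigr)\ge0$. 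No adjustment of the bookkeeping turns the first bound into the second, so the proposal does not establish the stated inequality.

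The missing observation is that this case is trivial. Your contradiction argument already gives $\|H_{T,T}\|_{\mathrm{op}}\ge(R-\delta)/k^{1/(2L)}$. So if $\|H_{T,T}\|_{\mathrm{op}}\le\delta$, then $(R-\delta)/k^{1/(2L)}-\delta\le0$. The claimed bound then holds automatically, because $\Sigma_{T,T}$ is a principal submatrix of $\Sigma_{S,S}\succeq0$ and hence $\lambda_{\max}(\Sigma_{T,T})\ge0$. In the complementary case $\|H_{T,T}\|_{\mathrm{op}}>\delta\ge-\lambda_{\min}(H_{T,T})$, the norm is attained by the top eigenvalue, and your first fix finishes the proof. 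This is exactly how the paper closes the argument, and it is the same idea you already used for the case $R\le\delta$.
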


\begin{proof}[Proof of \Cref{lem:quantitative-bounded-radius-reduction}]
We work on the coordinate set $S$. Define the thresholded matrix $H\in\mathbb{R}^{S\times S}$ by
\[
    H_{ij}
    =
    \begin{cases}
        \Sigma_{ij}, & i=j,\\
        \Sigma_{ij}, & i\neq j \text{ and } |\Sigma_{ij}|\ge \tau,\\
        0, & i\neq j \text{ and } |\Sigma_{ij}|<\tau.
    \end{cases}
\]
Thus $H$ is supported on the graph $G_\tau(S)$, together with the diagonal.

Let $E:=\Sigma_{S,S}-H$.
Every nonzero off-diagonal entry of $E$ has magnitude strictly smaller than $\tau$,
and $E$ has zero diagonal. Therefore, by the row-sum bound,
\[
    \|E\|_{\mathrm{op}}
    \le
    \tau k
    =
    \delta .
\]
Equivalently, every vector $x$ supported on $S$ satisfies
$|x^\top E x|\le \delta \|x\|_2^2$.
Since $v$ is supported on $S$ and $\|v\|_2=1$, we get
\[
    v^\top H v
    =
    v^\top \Sigma v - v^\top E v
    \ge
    R-\delta .
\]
Hence $\lambda_{\max}(H)\ge R-\delta$.
If $R\le \delta$, then the claimed lower bound is nonpositive and is trivial since
$T\subseteq S$ implies $\Sigma_{T,T}\succeq 0$ by the assumption $\Sigma_{S,S}\succeq 0$. Thus assume $R>\delta$, and set
$\Lambda:=\lambda_{\max}(H)$.
Then $\Lambda\ge R-\delta>0$.

Because $H$ is supported on $G_\tau(S)$, the vector $H^L e_i$ is supported inside
the radius-$L$ ball around $i$ in $G_\tau(S)$. Denote this ball by
$B(i,L):=\{j\in S:\operatorname{dist}_{G_\tau(S)}(i,j)\le L\}$.
We now show that one such ball has large operator norm. Since $H$ is symmetric,
$\|H^L\|_{\mathrm{op}} \ge \Lambda^L$.
On the other hand,
\[
    \|H^L\|_{\mathrm{op}}
    \le
    \|H^L\|_F
    =
    \left(\sum_{i\in S}\|H^L e_i\|_2^2\right)^{1/2}
    \le
    \sqrt{k}\max_{i\in S}\|H^L e_i\|_2 .
\]
Therefore there exists some $i\in S$ such that
$\|H^L e_i\|_2\ge \Lambda^L/\sqrt{k}$.
Set $T:=B(i,L)$.
Then $G_\tau(T)$ is connected and has radius at most $L$. Moreover, every walk of
length at most $L$ starting at $i$ stays inside $T$, so
$H^L e_i = H_{T,T}^L e_i$.
Consequently,
\[
    \|H_{T,T}^L\|_{\mathrm{op}}
    \ge
    \|H_{T,T}^L e_i\|_2
    =
    \|H^L e_i\|_2
    \ge
    \frac{\Lambda^L}{\sqrt{k}} .
\]
Since 
$
    \|H_{T,T}^L\|_{\mathrm{op}}
    \le
    \|H_{T,T}\|_{\mathrm{op}}^L
$, we have
\[
    \|H_{T,T}\|_{\mathrm{op}}
    \ge
    \frac{\Lambda}{k^{1/(2L)}}
    \ge
    \frac{R-\delta}{k^{1/(2L)}} .
\]
It remains to pass from $H_{T,T}$ back to $\Sigma_{T,T}$. Since
$\Sigma_{T,T}=H_{T,T}+E_{T,T}$ and
$\|E_{T,T}\|_{\mathrm{op}}\le\delta$, we have
\[
    \lambda_{\max}(\Sigma_{T,T})
    \ge
    \lambda_{\max}(H_{T,T})-\delta .
\]
It remains to show that
$\lambda_{\max}(H_{T,T})=\|H_{T,T}\|_{\mathrm{op}}$ in the nontrivial
regime.
If $(R-\delta)/k^{1/(2L)}\le \delta$, then the claimed lower bound is
at most zero, and the result is again trivial because $T\subseteq S$ implies
$\Sigma_{T,T}\succeq 0$. Otherwise, $\|H_{T,T}\|_{\mathrm{op}}>\delta$.
Since $H_{T,T}=\Sigma_{T,T}-E_{T,T}$, for every unit vector $x$ supported on $T$,
the assumption $\Sigma_{S,S}\succeq 0$ and the inclusion $T\subseteq S$ give $x^\top\Sigma_{T,T}x\ge 0$, and hence
\[
    x^\top H_{T,T}x
    = x^\top\Sigma_{T,T}x - x^\top E_{T,T}x
    \ge
    -\delta .
\]
Thus every negative eigenvalue of $H_{T,T}$ has magnitude at most $\delta$.
Therefore the operator norm of $H_{T,T}$ must be attained by a positive eigenvalue, and hence
$\lambda_{\max}(H_{T,T})=\|H_{T,T}\|_{\mathrm{op}}$.
Combining the above inequalities gives
\[
    \lambda_{\max}(\Sigma_{T,T})
    \ge
    \frac{R-\delta}{k^{1/(2L)}}-\delta \;,
\]
which concludes the proof of \Cref{lem:quantitative-bounded-radius-reduction}.
\end{proof}
The bounded-radius reduction and the radius-search procedures from the previous subsection yield the general runtime--target-value tradeoff. If the radius-reduction lower bound exceeds the certificate level $\beta$ by the approximation slack, then the branching search detects the witness using candidate supports of size $(r+1)^{O(L)}$.

\Cref{fig:high-correlation-localization} summarizes how the preceding
structural lemmas are instantiated by the branching algorithm.
\Cref{lem:component-reduction} first replaces the original witness $v$ by a
unit vector $u$ supported on a $\tau$-connected component, while losing at
most $\tau k$ in quadratic form.  Applying
\Cref{lem:quantitative-bounded-radius-reduction} to $u$ and taking $u_L$ to be
a unit leading eigenvector on the resulting radius-$L$ set produces a low radius witness.  Finally,
\Cref{thm:bounded-diameter-branching} starts from every coordinate and grows
the sets $B^{(t)}(i)$, so the branch initialized at a radius center
$i^\star\in\operatorname{supp}(u_L)$ contains
$\operatorname{supp}(u_L)$ after $L$ rounds.

The bounded-degree alternative is essential for controlling the search.  If
a coordinate $i$ has more than $r$ neighbors $j$ satisfying
$|\Sigma_{ij}|\ge\tau$, then, after the basis-vector check, at least one such
neighbor is omitted by $\topp_r(\Sigma e_i)$ and
\Cref{lem:discover} yields the simpler $r$-sparse vector with large quadratic
form.  Otherwise every explored coordinate contributes at most $r$ new
neighbors, and hence
$
    |B^{(t)}(i)|\le (r+1)^t,
$
which is precisely the size bound used by the branching algorithm.

\begin{figure}[t]
\centering
\begin{tikzpicture}[
  x=.78cm,y=.78cm,scale=1.15,
  every node/.style={font=\scriptsize},
  edge/.style={draw=black!32,line width=.55pt},
  vnode/.style={circle,draw=blue!65!black,fill=white,
                minimum size=4mm,inner sep=0pt},
  faded node/.style={circle,draw=black!32,fill=white,
                     minimum size=4mm,inner sep=0pt},
  u edge/.style={draw=orange!85!black,line width=1.35pt},
  unode/.style={circle,draw=orange!85!black,fill=orange!18,
                minimum size=4mm,inner sep=0pt},
  ul edge/.style={draw=green!50!black,line width=1.8pt},
  ulnode/.style={circle,draw=green!50!black,fill=green!15,
                 minimum size=4mm,inner sep=0pt},
  ball/.style={draw=blue!55!black,dashed,line width=.65pt},
  ball label/.style={fill=white,inner sep=1pt,text=blue!55!black},
  panel label/.style={font=\small\bfseries,anchor=west}
]

\coordinate (a1) at (.65,2.85);
\coordinate (a2) at (1.65,3.70);
\coordinate (a3) at (2.85,3.05);
\coordinate (a4) at (3.95,2.05);
\coordinate (a5) at (5.10,2.72);
\coordinate (a6) at (6.22,1.62);
\coordinate (a7) at (7.35,2.48);
\coordinate (a8) at (8.27,1.08);
\coordinate (ab1) at (1.92,1.55);
\coordinate (ab2) at (2.85,4.55);
\coordinate (ab3) at (5.32,4.38);
\coordinate (ab4) at (7.10,.62);
\coordinate (ai1) at (.66,.72);
\coordinate (ai2) at (4.36,4.72);
\coordinate (ai3) at (8.20,4.25);

\coordinate (b1) at (10.65,2.85);
\coordinate (b2) at (11.65,3.70);
\coordinate (b3) at (12.85,3.05);
\coordinate (b4) at (13.95,2.05);
\coordinate (b5) at (15.10,2.72);
\coordinate (b6) at (16.22,1.62);
\coordinate (b7) at (17.35,2.48);
\coordinate (b8) at (18.27,1.08);
\coordinate (bb1) at (11.92,1.55);
\coordinate (bb2) at (12.85,4.55);
\coordinate (bb3) at (15.32,4.38);
\coordinate (bb4) at (17.10,.62);
\coordinate (bi1) at (10.66,.72);
\coordinate (bi2) at (14.36,4.72);
\coordinate (bi3) at (18.20,4.25);

\foreach \x/\y in {a1/a2,a2/a3,a3/a4,a4/a5,a5/a6,a6/a7,a7/a8,
                    a2/ab1,a3/ab2,a5/ab3,a6/ab4}
  \draw[edge] (\x)--(\y);
\foreach \x in {a1,a2,a3,a4,a5,a6,a7,a8,ab1,ab2,ab3,ab4,ai1,ai2,ai3}
  \node[vnode] at (\x) {};

\foreach \x/\y in {a1/a2,a2/a3,a3/a4,a4/a5,a5/a6,a6/a7,a7/a8}
  \draw[u edge] (\x)--(\y);
\foreach \x in {a1,a2,a3,a4,a5,a6,a7,a8}
  \node[unode] at (\x) {};

\foreach \x/\y in {a3/a4,a4/a5,a5/a6}
  \draw[ul edge] (\x)--(\y);
\foreach \x in {a3,a4,a5,a6}
  \node[ulnode] at (\x) {};

\draw[black!14] (9.45,.28)--(9.45,5.35);

\foreach \x/\y in {b1/b2,b2/b3,b3/b4,b4/b5,b5/b6,b6/b7,b7/b8,
                    b2/bb1,b3/bb2,b5/bb3,b6/bb4}
  \draw[edge] (\x)--(\y);
\foreach \x in {b1,b2,b3,b4,b5,b6,b7,b8,bb1,bb2,bb3,bb4,bi1,bi2,bi3}
  \node[faded node] at (\x) {};

\draw[ball] (b4) circle (.42);
\node[ball label,anchor=south east] at ($(b4)+(.63,.48)$) {$B^{(0)}(i^{\star})$};
\draw[ball] ($(b4)!.50!(b5)$) ellipse (1.45 and 1.02);
\node[ball label,anchor=south] at ($(b4)!.50!(b5)+(0,1.02)$) {$B^{(1)}(i^{\star})$};
\draw[ball] ($(b4)!.48!(b5)$) ellipse (2.58 and 1.70);
\node[ball label,anchor=south] at ($(b4)!.48!(b5)+(0,1.70)$)
  {$B^{(L)}(i^{\star})$};

\foreach \x/\y in {b3/b4,b4/b5,b5/b6}
  \draw[ul edge] (\x)--(\y);
\foreach \x in {b3,b4,b5,b6}
  \node[ulnode] at (\x) {};
\node[ulnode] at (b4) {$i^{\star}$};

\node[anchor=south,font=\scriptsize] at (14.55,-0.18)
  {$B^{(0)}(i^{\star})\subseteq B^{(1)}(i^{\star})\subseteq\cdots
    \subseteq B^{(L)}(i^{\star})$};

\end{tikzpicture}\caption{The high-correlation graph $G_\tau(v)$ has vertex set
$\operatorname{supp}(v)$ and contains an edge $ij$ whenever
$|\Sigma_{ij}|\ge\tau$. In the left figure, $\operatorname{supp}(v)$ consists
of all the vertices, $\operatorname{supp}(u)$ consists of the orange and green
vertices, and $\operatorname{supp}(u_L)$ consists only of the green vertices.
The vector $u$ is supported on the high-correlation part of the graph and
satisfies
$u^{\top}\Sigma u\ge v^{\top}\Sigma v-\tau k$; the vector $u_L$ is
supported on a subset of $\operatorname{supp}(u)$ and satisfies
$
u_L^{\top}\Sigma u_L
\ge {u^{\top}\Sigma u}/{k^{1/(2L)}}.
$
Moreover, its support has graph radius at most $L$. In the right figure,
starting from each coordinate $s$, the algorithm constructs the nested sets
$B^{(i)}(s)$. For an appropriate coordinate $i^{\star}\in u_L$, the set
$B^{(L)}(i^{\star})$ contains $\operatorname{supp}(u_L)$.}
\label{fig:high-correlation-localization}
\end{figure}

We state the resulting guarantee simultaneously for the constrained power method and for the semidefinite relaxation; the latter proves \Cref{thm:informal-quadratic-sdp}.

\begin{corollary}[Sparse PCA tradeoff]
\label{cor:parameterized-radius-sparse-pca}
\label{cor:quasipolynomial-sparse-pca}
\label{cor:polynomial-root-k-sparse-pca}
Fix $\delta\in(0,1)$, a certificate threshold $\beta>0$, an approximation slack $\eta>0$, a correlation threshold $\tau>0$, and $r\in\Z_+$ satisfying $r\tau^2>\beta^2$. Let $L\ge 1$ be an integer. Let $\Sigma\in\R^{d\times d}$ be symmetric. Suppose there exists a unit vector $v\in\R^d$ such that $\|v\|_0\le k$ and
\[
    v^\top\Sigma v\ge R,
\]
where
\[
    \frac{R-\tau k}{k^{1/(2L)}}-\tau k
    \ge
    \beta+\eta.
\]
Run \Cref{alg:branching-truncated-power} with search-radius parameter $L$ and parameters $\beta,\tau,r$. Then:
\begin{enumerate}[(i)]
    \item If $\Sigma$ is $\max\{k,(r+1)^L\}$-PSD and the candidate supports are processed using the constrained power method as in \Cref{cor:bounded-diameter-constrained-power} with slack $\eta$, then with probability at least $1-\delta$ the procedure returns a unit vector $u$ with $\|u\|_0\le \max\{r,(r+1)^{O(L)}\}$  satisfying
    $
        u^\top\Sigma u\ge \beta.
    $
    Moreover, the running time is
$O\!\left(
d^2+dr^2+d(r+1)^{O(L)}T_{\rm pm}
\right)$,  
    where $T_{\mathrm{pm}}$ denotes the runtime of the power-method subroutine.

    \item If $\Sigma$ is $\max\{k,r+1\}$-PSD and the candidate supports are processed using the semidefinite relaxation as in \Cref{cor:bounded-diameter-sdp} with additive error $\eta$, then the procedure returns either a unit vector $u$ such that $\|u\|_0\le r$ and $u^\top\Sigma u>\beta$, or a matrix $X\in\cX_{\sdp,k}$ such that
    $
        \langle\Sigma,X\rangle\ge \beta.
    $
    Using a polynomial-time SDP solver, the running time is $O\!\left(d^2+d r^{2}+d(r+1)^{O(L)}/\eta^{O(1)}\right)$.
\end{enumerate}
\end{corollary}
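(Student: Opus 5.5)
The plan is to combine the two structural lemmas to produce a low-radius witness, and then hand it to the bounded-radius corollaries.

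\emph{Step 1: a low-radius witness.} Let $S=\supp(v)$. If $|S|<k$, replace $k$ by $|S|$ in the lemma below; the hypothesis only gets weaker, since $(R-\tau|S|)/|S|^{1/(2L)}-\tau|S|$ is at least the stated quantity. Because $\Sigma$ is $k$-PSD and $|S|\le k$, we have $\Sigma_{S,S}\succeq0$. So \Cref{lem:quantitative-bounded-radius-reduction} applies directly to $v$, with $\delta=\tau k$. It produces a set $T\subseteq S$ such that $G_\tau(T)$ has radius at most $L$ and
\[
\lambda_{\max}(\Sigma_{T,T})\ \ge\ \frac{R-\tau k}{k^{1/(2L)}}-\tau k\ \ge\ \beta+\eta .
\]
Let $v'$ be a unit top eigenvector of $\Sigma_{T,T}$, padded with zeros. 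Then $\supp(v')\subseteq T$, $\|v'\|_0\le|T|\le k$, and $v'^\top\Sigma v'\ge\beta+\eta$. Taking $E=T$ as the witness set, $E$ has $\tau$-correlation radius at most $L$. This is exactly why the corollary needs no separate connectivity step via \Cref{lem:component-reduction}: the lemma's ball $B(i,L)$ is connected and has radius at most $L$ by construction.

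\emph{Step 2: invoking the bounded-radius corollaries.} For part (ii), $\Sigma$ is $(r+1)$-PSD and $r\tau^2>\beta^2$. With $v'$ and $E=T$ playing the roles of $v$ and $E$, \Cref{cor:bounded-diameter-sdp} gives both the dichotomy and the running time. The output is either an $r$-sparse $u$ with $u^\top\Sigma u>\beta$, or a padded SDP solution. That SDP solution is feasible for $\operatorname{SDP}_k$ since $v'v'^\top$ has $\ell_1$ norm at most $|T|\le k$, and it has value at least $\beta$. For part (i), $\Sigma$ is $(r+1)^L$-PSD, which is what \Cref{cor:bounded-diameter-constrained-power} requires (the candidate supports have size at most $(r+1)^L$). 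That corollary yields, with probability $1-\delta$, a vector of value at least $\beta$ with $\|u\|_0\le\max\{r,(r+1)^L\}$ and running time $O(d^2+dr^2+d(r+1)^{O(L)}T_{\rm pm})$. The $(r+1)$-PSD hypothesis needed inside \Cref{thm:bounded-diameter-branching} follows from $(r+1)^L\ge r+1$.

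\emph{Main obstacle: the PSD bookkeeping.} The only delicate point is that each cited result needs its own PSD hypothesis. Lemma~\ref{lem:quantitative-bounded-radius-reduction} needs $\Sigma_{S,S}\succeq0$, supplied by $k$-PSD. The discovery step needs $(r+1)$-PSD. The power-method step needs $\Sigma_{B,B}$ PSD on candidate supports of size up to $(r+1)^L$, so that the top eigenvalue is meaningful and the random-start power method converges to it. The hypotheses $\max\{k,(r+1)^L\}$ and $\max\{k,r+1\}$ are chosen to cover exactly these needs. I will check each invocation against them, and confirm that replacing $k$ by $|S|$ in the lemma only improves the bound because the expression is decreasing in $k$.
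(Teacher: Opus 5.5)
Your proposal is correct and follows the paper's proof essentially step for step. You use the $k$-PSD hypothesis to apply \Cref{lem:quantitative-bounded-radius-reduction}, obtaining a radius-$L$ set $T$ with $\lambda_{\max}(\Sigma_{T,T})\ge\beta+\eta$, take a top eigenvector of $\Sigma_{T,T}$ as the new witness with $E=T$, and pass it to the bounded-radius branching results (\Cref{cor:bounded-diameter-constrained-power,cor:bounded-diameter-sdp}). Your handling of the case $|\supp(v)|<k$ is a small bookkeeping point the paper omits, and you handle it correctly: the hypothesis forces $R>\tau k$, so the bound is monotone in the sparsity on $[1,k]$.
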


\begin{proof}
For either conclusion, the corresponding PSD assumption implies $\Sigma_{\supp(v),\supp(v)}\succeq0$, so \Cref{lem:quantitative-bounded-radius-reduction} applies. Apply \Cref{lem:quantitative-bounded-radius-reduction} with the supplied threshold $\tau$ and radius parameter $L$. Then the additive thresholding loss is $\delta_0=\tau k$, and there exists a set $T\subseteq\supp(v)$ whose large-correlation graph has radius at most $L$ and such that
\[
    \lambda_{\max}(\Sigma_{T,T})
    \ge
    \frac{R-\tau k}{k^{1/(2L)}}-\tau k
    \ge
    \beta+\eta.
\]
Let $w$ be a unit vector supported on $T$ attaining $\lambda_{\max}(\Sigma_{T,T})$. Then $w^\top\Sigma w\ge\beta+\eta$, and $T$ is a set containing $\supp(w)$ with $\tau$-correlation radius at most $L$. Therefore, \Cref{thm:bounded-diameter-branching}, applied with target vector $w$ and witness set $E=T$, shows that the branching procedure either returns an $r$-sparse vector with quadratic form larger than $\beta$, or produces a candidate support $B$ containing $T$.

In the latter case,
\[
    \lambda_{\max}(\Sigma_{B,B})
    \ge
    \lambda_{\max}(\Sigma_{T,T})
    \ge
    \beta+\eta.
\]
The constrained-power conclusion now follows from \Cref{cor:bounded-diameter-constrained-power}. For the SDP conclusion, since $w$ is supported on $T$ and $|T|\le k$, the matrix $ww^\top$ is feasible for the $k$-budget SDP on $B$, because
\[
    ww^\top\succeq0,
    \qquad
    \operatorname{tr}(ww^\top)=1,
    \qquad
    \|ww^\top\|_1=\|w\|_1^2\le k.
\]
Hence the SDP optimum is at least $\beta+\eta$. Therefore, the additive-$\eta$ approximate solution has objective value at least $\beta$. The sparsity and running-time bounds follow from the corresponding radius-$L$ bounds in \Cref{cor:bounded-diameter-constrained-power,cor:bounded-diameter-sdp}.
\end{proof}

This corollary captures the full runtime--target-value tradeoff, up to constant factors in $L$: the condition
\[
    R
    \ge
    \tau k+(\beta+\eta+\tau k)k^{1/(2L)}
\]
suffices to obtain a certificate of value $\beta$ in time
$O\!\left(
d^2+dr^{O(1)}+d(r+1)^{O(L)}
\right)$.  Now assume that $r=\poly(k)$. 
For constant $L$, this gives polynomial dependence on k for root-k target values. Taking $L=C\log k$ for a sufficiently large constant $C$ makes $k^{1/L}$ arbitrarily close to $1$, and therefore recovers the quasi-polynomial $k^{O(\log k)}$ guarantee for a constant approximation to the target quadratic form.

\subsection{Faster algorithm through Correlation Detection}\label{sec:sub-quadratic}
In this section, we consider the case in which $\Sigma$ is the empirical covariance of the dataset $x_1,\dots,x_n$
and show how to run our algorithm in subquadratic time in the dimension, thereby proving \Cref{thm:informal-subquadratic-sdp}.
Since each entry of $\Sigma$ can be computed in $O(n)$ time,
we assume oracle access to the entries of $\Sigma$, incurring a multiplicative $\poly(n)$ factor in the running time.

At a high level, the proposed  algorithm can be divided into two procedures. The first constructs a sparse graph with at most $d r$ edges; if the graph has more than this many edges, then we are done by \Cref{lem:discover}. The second computes the candidate supports from the resulting graph and runs in time linear in the size of that graph. Thus we aim to speed up the first part of the algorithm.

The intuition is as follows.
We first perform a fast randomized density check.  If the graph is
dense, the check immediately returns a sparse vector with large quadratic
form.  Otherwise, we use correlation detection (\Cref{fact:cordetect}) to
construct the graph without computing the full covariance matrix and then run
the branching algorithm on this sparse graph.

First, in the following lemma we prove that we can reduce to the sparse case by randomly querying  the matrix entries.

\begin{algorithm}[h]
\centering
\fbox{\parbox{6in}{
{\bf Input:}
Entry-query access to a symmetric matrix $\Sigma\in\R^{d\times d}$,
a threshold $\tau>0$, a degree parameter $r\in\Z_+$, output quadratic form threshold $\beta>0$, an edge budget $B$,
and failure probability $\delta\in(0,1)$.\\
{\bf Output:}
Either a unit vector $u\in\R^d$ with $u^\top\Sigma u>\beta$, or the declaration
``sparse case.''

    \begin{enumerate}[leftmargin=*]
        \item\label{line:dense-set-samples}
        Set $M \gets C(d^2/B)\log(1/\delta)$ and
        $L \gets C'\log(1/\delta)$, for sufficiently large universal
        constants $C,C'>0$.

        \item\label{line:dense-sample-pairs}
        Sample $M$ unordered pairs $\{i,j\}$ uniformly at random from
        $\binom{[d]}{2}$, with replacement.

        \item\label{line:dense-init-counter}
        Initialize a counter $q\gets 0$.

        \item\label{line:dense-process-samples}
        For each sampled pair $\{i,j\}$:
        \begin{enumerate}[leftmargin=*, nosep]
            \item Query $\Sigma_{ij}$.

            \item If $|\Sigma_{ij}|<\tau$, continue to the next sampled pair.

            \item Otherwise, increment $q\gets q+1$.

            \item For each endpoint $a\in\{i,j\}$, scan row $a$ and compute
            $\Gamma_\tau(a)\gets\{b\neq a:|\Sigma_{ab}|\ge \tau\}$.

            \item If, during one of these scans, a coordinate $a$ satisfies
            $\Sigma_{aa}>\beta$, return $u=e_a$.

            \item If, during one of these scans, a coordinate $a$ satisfies
            $|\Gamma_\tau(a)|\ge r$, choose any subset
            $S\subseteq\Gamma_\tau(a)$ with $|S|=r$. Define $z\in\R^d$ by
            \[
                z_b
                \gets
                \begin{cases}
                \Sigma_{ab}, & b\in S,\\
                0, & b\notin S.
                \end{cases}
            \]
            Return $u\gets z/\|z\|_2$.

            \item If $q\ge L$, terminate the loop.
        \end{enumerate}

        \item\label{line:dense-return-sparse}
        Return ``sparse case.''
    \end{enumerate}
}}
\caption{Dense-Case Random Row Search.}
\label{alg:dense-case-random-row-search}
\end{algorithm}

\begin{lemma}[Density Check]\label{lem:densesearch}
Let $\Sigma\in\R^{d\times d}$ be symmetric and $(r+1)$-PSD.
Fix a threshold $\tau > 0$, an output quadratic form threshold $\beta>0$, and define the large-correlation edge set
$
E_\tau
=
\bigl\{\{i,j\}: i\neq j,\ |\Sigma_{ij}|\ge \tau \bigr\}.
$
Let
$
r\tau^2>\beta^2
$
and let $B$ be an edge threshold satisfying
$
B \ge 2dr.
$
There is a randomized algorithm running in time
\[
O\left(\frac{d^2}{B}\log\frac{1}{\delta}
+
d\log\frac{1}{\delta}\right)
\]
such that with probability at least $1-\delta$: either
$
|E_\tau| \ge B,
$
and the algorithm outputs an
$r$-sparse unit vector $u$ such that
$
u^\top \Sigma u > \beta,
$
or if the algorithm fails to output such a vector 
$
|E_\tau| < B.
$
\end{lemma}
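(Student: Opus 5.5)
We analyze \Cref{alg:dense-case-random-row-search} and prove two things: whatever it outputs is a valid $r$-sparse certificate, and if $|E_\tau|\ge B$ it outputs one with probability at least $1-\delta$. The lemma's guarantee then follows by contraposition: if the algorithm returns ``sparse case'', then with probability at least $1-\delta$ we have $|E_\tau|<B$.

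\textbf{Soundness of the outputs.} If the algorithm returns $e_a$, then $e_a^\top\Sigma e_a=\Sigma_{aa}>\beta$. Otherwise, when it returns $u=z/\|z\|_2$, the scan has certified $\Sigma_{aa}\le\beta$ and $|\Gamma_\tau(a)|\ge r$. Let $y=\Sigma e_a$, restricted off the diagonal entry. The vector $z$ keeps $r$ coordinates with $|y_b|\ge\tau$, so $\|z\|_2^2\ge r\tau^2$. Repeating the proof of \Cref{lem:discover} with $x=e_a$ gives $u^\top\Sigma x=\|z\|_2$. Since $|\{a\}\cup S|\le r+1$, the $(r+1)$-PSD assumption lets us apply Cauchy--Schwarz. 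This yields
\[
u^\top\Sigma u\ge \frac{\|z\|_2^2}{\Sigma_{aa}}\ge \frac{r\tau^2}{\beta}>\beta.
\]
The Cauchy--Schwarz step also shows $\Sigma_{aa}>0$, since $z\ne0$. Note that the proof of \Cref{lem:discover} never used that the retained coordinates are the top ones; it only needed a lower bound on $\|z\|_2^2$. Also $a\notin S$, so $u$ is exactly $r$-sparse.

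\textbf{Success when dense.} Suppose $|E_\tau|\ge B$. Call a vertex \emph{heavy} if its degree in $E_\tau$ is at least $r$. Light vertices span fewer than $dr/2$ edges in total, and $dr/2\le B/4$. Hence at least $3B/4$ edges of $E_\tau$ have a heavy endpoint.

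Each sampled pair is an edge with probability $p\ge B/\binom d2\ge 2B/d^2$. Whenever a sampled pair is an edge, the algorithm scans both endpoints. It therefore returns a vector unless both endpoints are light and neither has $\Sigma_{aa}>\beta$. Conditioned on the sampled pair being an edge, this failure event has probability at most $1/4$.

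The algorithm can therefore fail in only two ways. Either fewer than $L$ edges appear among the $M$ samples before termination, or all of the first $L$ sampled edges (or all those seen) are light--light. Since $Mp\ge 2C\log(1/\delta)$, a Chernoff bound shows that the number of sampled edges is at least $C'\log(1/\delta)$ except with probability $\delta/2$, once $C\gg C'$. Given at least one, and indeed $L$, sampled edges, each is independently light--light with probability at most $1/4$. The probability that all are light--light is then at most $4^{-L}\le\delta/2$. Here the independence comes from sampling pairs i.i.d.\ with replacement and conditioning on the edge indicators. Note that if fewer than $L$ edges are found, the loop still examines every edge it found, so the same bound applies to whatever edges appeared.

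\textbf{Running time.} Querying the $M$ sampled entries costs $O((d^2/B)\log(1/\delta))$. The loop performs at most $L$ edge-triggered iterations, each scanning two rows at cost $O(d)$, for a total of $O(d\log(1/\delta))$. Constructing $z$ costs $O(d)$.

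The main obstacle is the counting step, namely showing that a constant fraction of the edges touch heavy vertices; the condition $B\ge 2dr$ is exactly what makes it work. A secondary subtlety is the conditioning argument: the event ``$q\ge L$'' depends on the samples, so the cleanest route is to bound the failure probability by a union of the two Chernoff-type events above.
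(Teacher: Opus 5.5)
Your proposal is correct and follows the paper's proof. Both rest on the same four steps: the heavy-row certificate via Cauchy--Schwarz on the $(r+1)$-PSD block, the count showing that $B\ge 2dr$ forces at least three quarters of $E_\tau$ to touch a heavy vertex, a Chernoff bound on the number of sampled edges, and a $4^{-L}$ bound on the first $L$ sampled edges all being light--light. Your explicit justification of independence (given the edge indicators, the sampled edges are i.i.d.\ uniform on $E_\tau$) and the sharper edge probability $|E_\tau|/\binom{d}{2}$ are minor refinements of the same argument.
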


\begin{proof} 

We will prove that \Cref{alg:dense-case-random-row-search} satisfies the  guarantees of the statement.
We first show that any row with at least $r$ entries of magnitude at
least $\tau$ yields a vector with large quadratic form.

\begin{claim}[A heavy row gives a sparse certificate]\label{cl:disc2}
Let $\Sigma\in\R^{d\times d}$ be symmetric and $(r+1)$-PSD, let $\tau>0$, let $\beta>0$, and for $a\in[d]$ define
$\Gamma_\tau(a)\eqdef\{b\neq a:|\Sigma_{ab}|\ge \tau\}$.
Suppose $|\Gamma_\tau(a)|\ge r$, where $r\tau^2>\beta^2$.
Then one can construct, in $O(d)$ time, an $r$-sparse unit vector $u$ such that
$u^\top\Sigma u>\beta$.
\end{claim}

\begin{proof}
If $\Sigma_{aa}>\beta$, then $e_a^\top\Sigma e_a>\beta$, so we may output
$u=e_a$. Thus assume $\Sigma_{aa}\le \beta$.
Choose any subset $S\subseteq\Gamma_\tau(a)$ with $|S|=r$, and define
$u=\Sigma_{S,a}/\|\Sigma_{S,a}\|_2$, viewed as a vector supported on $S$.

Since $u$ is supported on $S$ and $|\{a\}\cup S|\le r+1$, the $(r+1)$-PSD assumption gives $\Sigma_{\{a\}\cup S,\{a\}\cup S}\succeq0$.  Thus Cauchy--Schwarz in this local PSD seminorm gives
$(u^\top \Sigma e_a)^2\le (u^\top \Sigma u)(e_a^\top \Sigma e_a)$.
Therefore, $u^\top\Sigma u\ge (u^\top\Sigma e_a)^2/\Sigma_{aa}$.
By the definition of $u$, we have $u^\top\Sigma e_a=\|\Sigma_{S,a}\|_2$.
Since every $b\in S$ satisfies $|\Sigma_{ab}|\ge\tau$, we have
$\|\Sigma_{S,a}\|_2^2\ge r\tau^2$. Hence
\[
    u^\top\Sigma u
    \ge
    \frac{r\tau^2}{\Sigma_{aa}}
    \ge
    \frac{r\tau^2}{\beta}
    >\beta.
\]
This concludes the proof of \Cref{cl:disc2}.
\end{proof}
Thus scanning any row with at least $r$ large-correlation entries lets us
output a valid sparse certificate.

It remains to show that, when $|E_\tau|\ge B$, random sampling finds such a
row with high probability. Let
$
H
=
\{a\in[d]: |\Gamma_\tau(a)|\ge r\}
$
be the set of heavy vertices. Call an edge useful if it is incident to a
vertex in $H$.

The number of edges with no endpoint in $H$ is at most
$
\frac{dr}{2},
$
because every non-heavy vertex has degree strictly less than $r$. Therefore,
if $|E_\tau|\ge B$ and $B\ge 2dr$, then the number of useful edges is at least
\[
|E_\tau|-\frac{dr}{2}
\ge
|E_\tau|-\frac{B}{4}
\ge
\frac34 |E_\tau|.
\]
Consequently, conditioned on a uniformly sampled pair being an edge of
$E_\tau$, it is useful with probability at least $3/4$.

Also, a uniformly sampled unordered pair belongs to $E_\tau$ with probability
$
\frac{|E_\tau|}{d^2}
\ge
\frac{B}{d^2}.
$
Thus, after
$
M=C\frac{d^2}{B}\log\frac{1}{\delta}
$
samples, the number of sampled large-correlation edges is at least
$
L=C'\log\frac{1}{\delta}
$
with probability at least $1-\delta/2$, provided $C$ is sufficiently large
relative to $C'$.

Condition on this event. Among the first $L$ sampled large-correlation edges,
each is useful with conditional probability at least $3/4$. Therefore the
probability that none of them is useful is at most
$
\left(\frac14\right)^L
\le
\frac{\delta}{2},
$
for $C'$ sufficiently large. Hence, with probability at least $1-\delta$,
the algorithm samples a useful edge. Scanning the endpoints of a useful edge
finds a heavy vertex $a\in H$, and by \Cref{cl:disc2}   the
algorithm outputs an $r$-sparse unit vector $u$ satisfying
$
u^\top \Sigma u > \beta.
$

Finally, we bound the running time. The algorithm makes
$
O\left(\frac{d^2}{B}\log\frac{1}{\delta}\right)
$
entry queries during random sampling. It scans at most
$O(\log(1/\delta))$ rows, and each row scan costs $O(d)$ entry queries.
Thus the total runtime is
$O\left(\frac{d^2}{B}\log\frac{1}{\delta}+d\log\frac{1}{\delta}\right)$.
This proves the lemma.
\end{proof}

Therefore, it suffices to handle the sparse case efficiently. 
The next lemma analyzes the case in which the large-correlation
graph is given explicitly.  The graph search either returns a sparse vector
with large quadratic form or a family of candidate supports, one of which
contains the target support.

\begin{algorithm}[h]
\centering
\fbox{\parbox{6in}{
{\bf Input:}
A symmetric matrix $\Sigma\in\R^{d\times d}$, a graph
$G=([d],E)$ whose edges are pairs ${i,j}$ with $|\Sigma_{ij}|\ge \alpha$,
a threshold $\alpha>0$, a certificate threshold $\beta>0$, a degree parameter $r$, and a
search-radius parameter $L$.\\
{\bf Output:}
Either an $r$-sparse unit vector $u\in\R^d$ with $u^\top\Sigma u>\beta$, or a family $\mathcal B$ of candidate supports.

\begin{enumerate}[leftmargin=*]
    \item For every coordinate $i\in[d]$, let $\Gamma(i)$ denote the
    neighborhood of $i$ in $G$.

    \item If there exists $a\in[d]$ with $|\Gamma(a)|\ge r$, choose any
    subset $S\subseteq\Gamma(a)$ with $|S|=r$. If $\Sigma_{aa}>\beta$,
    return $u=e_a$. Otherwise define $z\in\R^d$ by
    \[
        z_b
        \gets
        \begin{cases}
        \Sigma_{ab}, & b\in S,\\
        0, & b\notin S,
        \end{cases}
    \]
    and return $u\gets z/\|z\|_2$.

    \item Initialize an empty family of candidate supports
    $\mathcal B\gets\emptyset$.

    \item For every starting coordinate $i_0\in[d]$:
    \begin{enumerate}[leftmargin=*, nosep]
        \item Set $B^{(0)}_{i_0}\gets\{i_0\}$ and add
        $B^{(0)}_{i_0}$ to $\mathcal B$.

        \item For $t=0,1,\ldots,L-1$, define
        \[
            B^{(t+1)}_{i_0}
            \gets
            B^{(t)}_{i_0}
            \cup
            \bigcup_{i\in B^{(t)}_{i_0}}\Gamma(i).
        \]
        Add $B^{(t+1)}_{i_0}$ to $\mathcal B$.
    \end{enumerate}

    \item Return the family $\mathcal B$.
\end{enumerate}

}}
\caption{Graph Branching Search for Sparse PCA.}
\label{alg:graph-branching-search}
\end{algorithm}

\begin{lemma}[Sparse graph search]\label{lem:sparse-search}
Let $\Sigma\in\R^{d\times d}$ be symmetric and $(r+1)$-PSD.
Fix a strong correlation threshold $\rho>0$, a certificate threshold $\beta>0$, and a degree parameter $r\in\Z_+$ satisfying $r\rho^2>\beta^2$. Suppose we are given the edge
list
$
    E_\rho=\bigl\{\{i,j\}:i\neq j,\ |\Sigma_{ij}|\ge \rho\bigr\}
$
and assume $|E_\rho|\le B$. Let $G_\rho=([d],E_\rho)$.

Assume there exist a unit vector $v\in\R^d$ and a set $E\subseteq[d]$ such that $\|v\|_0\le k$,
$v^\top\Sigma v\ge\beta$, $\supp(v)\subseteq E$, and $E$ has $\rho$-correlation radius at most $L$. Then \Cref{alg:graph-branching-search} returns one of the
following:
\begin{enumerate}[(i)]
    \item an $r$-sparse unit vector $u$ such that $u^\top\Sigma u>\beta$;
    \item a family $\mathcal B$ of at most $d(L+1)$ candidate supports,
    each of size at most $(r+1)^L$, such that
    $E\subseteq B$ for some $B\in\mathcal B$.
\end{enumerate}
The running time is $O(B+d r(r+1)^L)$.
\end{lemma}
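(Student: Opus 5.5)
The proof follows the template of \Cref{thm:bounded-diameter-branching}, with the explicit graph $G_\rho$ playing the role of the truncated-power neighborhoods $N(i)$. The argument splits according to whether some vertex of $G_\rho$ has degree at least $r$.

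\textbf{Heavy vertex.} Suppose Step~2 of \Cref{alg:graph-branching-search} finds $a$ with $|\Gamma(a)|\ge r$. Then every $b\in\Gamma(a)$ satisfies $|\Sigma_{ab}|\ge\rho$. We invoke \Cref{cl:disc2} with $\tau=\rho$; its hypotheses are exactly $r\rho^2>\beta^2$ and $(r+1)$-PSD. So either $\Sigma_{aa}>\beta$ and we return $e_a$, or the normalized vector $\Sigma_{S,a}/\|\Sigma_{S,a}\|_2$ is $r$-sparse with quadratic form larger than $\beta$. Note that $\Sigma_{S,a}\neq0$ because its entries have magnitude at least $\rho>0$. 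This gives conclusion (i).

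\textbf{All degrees below $r$.} Otherwise every vertex has $|\Gamma(i)|\le r-1$, and we show conclusion (ii). Let $i_\star\in E$ be a radius center of $G_\rho(E)$. By induction on $t$, $B^{(t)}_{i_\star}$ contains every vertex of $E$ within distance $t$ of $i_\star$ in $G_\rho(E)$. The inductive step is simpler than in \Cref{thm:bounded-diameter-branching}: if $j$ is at distance $t+1$ with a neighbor $i\in E$ at distance $t$, then $|\Sigma_{ij}|\ge\rho$. Since the input edge list is exactly $E_\rho$, we get $j\in\Gamma(i)\subseteq B^{(t+1)}_{i_\star}$. No appeal to \Cref{lem:discover} is needed here; the only point to check is that $G_\rho(E)$ is an induced subgraph of $G_\rho$, so $E$-paths are $G_\rho$-paths. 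Hence $E\subseteq B^{(L)}_{i_\star}$. The size bound $|B^{(t+1)}|\le |B^{(t)}|+\sum_{i\in B^{(t)}}|\Gamma(i)|\le (r+1)|B^{(t)}|$ gives $|B|\le (r+1)^L$. The family has $d(L+1)$ members. The witness properties of $v$ (value $\ge\beta$, $\|v\|_0\le k$) are not used beyond guaranteeing that $E$ exists.

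\textbf{Running time.} This is the only step requiring some care. Building adjacency lists from the edge list takes $O(d+B)$ time, and finding a heavy vertex takes $O(d+B)$. In the heavy case, constructing $u$ costs $O(r)$ entry queries, or $O(d)$ for the $\Sigma_{aa}$ check.

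In the light case, for each start $i_0$ we compute $B^{(t+1)}$ from $B^{(t)}$. To avoid charging $L$ times for the same vertex, we expand only the frontier $B^{(t)}\setminus B^{(t-1)}$ and use a marker array reset after each start. Each vertex of $B^{(L-1)}_{i_0}$ is expanded once at cost at most $r$, giving $O(r(r+1)^{L-1})$ work per start. Writing down the $L+1$ nested sets (or storing them implicitly as prefixes of one list with recorded breakpoints) costs $O((r+1)^L)$. Over $d$ starts this is $O(dr(r+1)^L)$, and the total is $O(B+dr(r+1)^L)$; the additive $O(d)$ is absorbed since $r\ge1$.

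I expect the main obstacle to be this bookkeeping: representing $\mathcal B$ implicitly so that the sets are not copied $L$ times. Everything else is a direct transcription of the earlier proofs.
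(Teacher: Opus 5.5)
Your proposal is correct and follows the paper's proof essentially step for step. You handle the heavy-row case by \Cref{cl:disc2} at threshold $\rho$, and otherwise run the same induction from a radius center, using that the supplied edge list is exactly $E_\rho$, together with the $(r+1)^t$ size bound. Your frontier and marker-array bookkeeping only makes explicit the paper's one-line accounting that each start explores at most $(r+1)^L$ vertices at cost less than $r$ apiece.
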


\begin{proof}
Building the adjacency lists $\Gamma_\rho(i)$ of $G_\rho$ from the given edge
list costs $O(B)$ time. Next we check whether some row has large-correlation
degree at least $r$, namely $|\Gamma_\rho(a)|\ge r$. If this happens,
\Cref{cl:disc2} applied with threshold $\rho$ gives an $r$-sparse unit vector $u$ with
$u^\top\Sigma u>\beta$, so the algorithm terminates in $O(B+d)$ time. Thus, in
the remaining case, every vertex satisfies $|\Gamma_\rho(i)|<r$.

We now run the branching search using the given graph neighborhoods. For each
starting coordinate $i_0\in[d]$, set $B^{(0)}_{i_0}=\{i_0\}$ and for
$t=0,1,\ldots,L-1$ define
\[
    B^{(t+1)}_{i_0}
    =
    B^{(t)}_{i_0}
    \cup
    \bigcup_{i\in B^{(t)}_{i_0}}\Gamma_\rho(i).
\]
Since every vertex has degree less than $r$, we have
$|B^{(t+1)}_{i_0}|\le (r+1)|B^{(t)}_{i_0}|$, and therefore
$|B^{(t)}_{i_0}|\le (r+1)^t$. In particular every final support has size at
most $s\eqdef (r+1)^L$.

It remains to prove correctness. Let $C=E$. Let $i_\star\in C$ be a radius center of $C$, and consider the branch initialized at $i_\star$. We
claim by induction that $B^{(t)}_{i_\star}$ contains every vertex of $C$ at graph
distance at most $t$ from $i_\star$ inside $G_\rho(C)$. The base case $t=0$ is
immediate. For the induction step, suppose $j\in C$ has distance $t+1$ from
$i_\star$. Then there exists $i\in C$ at distance $t$ from $i_\star$ with
$\{i,j\}\in E_\rho$. By the induction hypothesis, $i\in B^{(t)}_{i_\star}$.
Since the algorithm has not terminated in the heavy-row case, all graph
neighbors of $i$ are included in $\Gamma_\rho(i)$, and in particular
$j\in\Gamma_\rho(i)$. Hence $j\in B^{(t+1)}_{i_\star}$. Because $i_\star$ is a radius-$L$ center of $C$ in $G_\rho(C)$, the above induction implies
$C\subseteq B^{(L)}_{i_\star}$.

The algorithm constructs $L+1$ supports for each of the $d$ starting
coordinates, so $|\mathcal B|\le d(L+1)$. The preceding size bound gives
$|B|\le (r+1)^L$ for every $B\in\mathcal B$, and the induction
shows that one of these supports contains $\supp(v)$.

Finally, we bound the running time. Building the graph costs $O(B)$. For each
of the $d$ starting coordinates, the graph search explores at most
$s=(r+1)^L$ vertices, each with degree less than $r$, so the total
branching cost is $O(dsr)$. Therefore the total running time is
$O(B+d r(r+1)^L)$.
\end{proof}

We use the following modification of fast correlation detection to
construct the  non-trivial correlation graph without computing every entry of the
empirical covariance matrix.
\begin{restatable}[Fast  covariance detection]{lemma}{rawcovariancedetection}
\label{fact:cordetect}
Let $T=\{z_1,\ldots,z_n\}\subseteq\R^d$ have empirical covariance
matrix $\Sigma$.  Let $\delta\in(0,1)$ and let $\rho,\tau>0$ satisfy
$\rho>12\tau$.  Define
$
    D\eqdef\max\left\{2\rho,\max_{i\in[d]}\Sigma_{ii}\right\}.
$
Suppose that there are at most $s$ off-diagonal coordinate pairs
$\{i,j\}$ satisfying $|\Sigma_{ij}|\ge\tau$.  Then there is a randomized
algorithm that, with probability at least $1-\delta$, outputs the pairs
satisfying $|\Sigma_{ij}|\ge\rho$.  Its running time is
\[
\left(
    s d^{0.62}
    +
    d^{1.62+2.4\frac{\log(4D/\rho)}{\log(D/(3\tau))}}
\right)
\operatorname{poly}\left(n,\log d,\frac{D}{\tau}\right)
\log(1/\delta).
\]
\end{restatable}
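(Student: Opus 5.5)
We reduce to Valiant's fast correlation-detection algorithm~\cite{Valiant15}, following \cite{Pensia24-subquad}. The plan has three steps. First, center and normalize the data so that covariances become inner products of unit-scale vectors. Second, apply Valiant's closest-pair / outlier-correlation machinery to find all pairs whose normalized inner product is at least a threshold of order $\rho/D$. Third, verify each candidate pair exactly in $O(n)$ time and discard the pairs with $|\Sigma_{ij}|<\rho$.

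\textbf{Representation.} Form the $n$-dimensional columns $w_i\in\R^n$, $i\in[d]$, where $(w_i)_t=(z_t)_i-\bar z_i$, so that $\Sigma_{ij}=\langle w_i,w_j\rangle/n$. Rescale every column by $1/\sqrt{nD}$. Then $\|w_i\|_2^2\le 1$ and the correlations to detect are those with $|\langle w_i,w_j\rangle|\ge\rho/D$. Since $D\ge 2\rho$, this threshold is at most $1/2$. Pairs with $|\Sigma_{ij}|<\tau$ have normalized inner product below $\tau/D$, i.e.\ they are ``background'' correlations.

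\textbf{Valiant's scheme.} Valiant's algorithm embeds each $w_i$ via a random tensoring/polynomial map, so that the gap between $\rho/D$ and the background level $\tau/D$ is amplified. It then partitions the $d$ vectors into about $d^{1-\gamma}$ groups of size $d^{\gamma}$, sums each group with random signs, and computes the $d^{1-\gamma}\times d^{1-\gamma}$ Gram matrix of aggregated vectors by fast rectangular matrix multiplication. This runs in $d^{1.62}$-type time when $\gamma\approx 0.62$-scale exponents are chosen. Any aggregate pair whose inner product exceeds a threshold is flagged, and brute-force search inside the two groups costs $d^{2\gamma}\cdot\poly(n)$.

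Two sources of cost arise. The first is the genuine large-correlation pairs, of which there are at most $s$. Each flags at most one group pair, and resolving it costs $d^{0.62}\poly(n)$, giving the $s d^{0.62}$ term. The second is noise from the background: after the amplification with degree $q$, the background inner products are at most $(3\tau/D)^q$ while signals are at least $(\rho/(4D))^q$. The condition $\rho>12\tau$ ensures the ratio exceeds one. Balancing the amplification degree against the signal strength yields the exponent $1.62+2.4\log(4D/\rho)/\log(D/(3\tau))$. I would take this accounting largely from \cite{Valiant15}, tracking how the factors $3$ and $4$ arise from the centering and the random-sign sums. Repeating $O(\log(1/\delta))$ times with a union bound gives the failure probability.

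\textbf{Main obstacle.} The hard step is adapting Valiant's analysis, which assumes Boolean or unit vectors with only a few planted correlations, to the present setting. Here, up to $s$ pairs may have moderate correlation in $[\tau,\rho)$ and are neither planted nor background. I would handle this by treating every pair with $|\Sigma_{ij}|\ge\tau$ as ``planted'' for the purpose of the cost accounting, which is why $s$ counts $\tau$-pairs. Only pairs below $\tau$ are then bounded as noise. The variance of the random-sign aggregate sums must be controlled using the bound $\sum_j\langle w_i,w_j\rangle^2$ together with the at most $s$ exceptional pairs. Finally, the exact verification step filters the output to precisely the pairs with $|\Sigma_{ij}|\ge\rho$.
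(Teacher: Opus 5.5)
\textbf{There is a genuine gap at the point where you hand the rescaled columns to Valiant's algorithm.} Valiant's theorem is stated for equal-norm $\pm1$ vectors and thresholds the \emph{normalized} correlation $|\langle y_i,y_j\rangle|/(\|y_i\|_2\|y_j\|_2)$. Real inputs are handled in \cite{Valiant15,Pensia24-subquad} only by first passing to Boolean vectors via the Gaussian-sign map $y_i=\operatorname{sgn}(v_i^\top G)$, and that map sees angles only.

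Your rescaled columns have $\|w_i\|_2^2=\Sigma_{ii}/D$, which varies with $i$. Their normalized correlation is therefore $\Sigma_{ij}/\sqrt{\Sigma_{ii}\Sigma_{jj}}$, not $\Sigma_{ij}/D$. For example, if every coordinate equals the same tiny-variance scalar, then no pair has $|\Sigma_{ij}|\ge\tau$, so $s=0$, yet all $\binom d2$ normalized correlations equal $1$. The ``at most $s$ pairs above the margin'' hypothesis that drives the running time then fails, and renormalizing the columns to unit length does not fix it.

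Saying you will take Valiant's accounting and redo it for raw inner products of vectors of norm at most $1$ is therefore not a bookkeeping adaptation; it is exactly the missing step. The obstacle you single out, pairs with $\tau\le|\Sigma_{ij}|<\rho$, is not an obstacle at all. Valiant's theorem already tolerates $s$ exceptional pairs above the margin threshold.

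\textbf{The missing idea is to equalize norms without changing any off-diagonal inner product.} Append to each centered row $X_i$ a private orthogonal coordinate: $\widetilde X_i=(X_i,a_ie_i)$ with $a_i=\sqrt{n(D-\Sigma_{ii})}$, which is real because $D\ge\Sigma_{ii}$. Then $\|\widetilde X_i\|_2^2=nD$ for every $i$ and $\langle\widetilde X_i,\widetilde X_j\rangle=n\Sigma_{ij}$ for $i\neq j$, so the normalized correlation is exactly $|\Sigma_{ij}|/D$.

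Next apply the Gaussian-sign projection with accuracy $\tau/D$, that is, with $m=O((D/\tau)^2\log(d/\delta))$; this is where the $\poly(D/\tau)$ factor comes from. The padded dimension $n+d$ is harmless: splitting $G$ into blocks $G_1,G_2$ gives $\widetilde XG=XG_1+PG_2$ with $P$ diagonal, which costs $O(ndm+dm)$.

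The constants $3$ and $4$ come from this projection, not from centering or random-sign sums. Pairs with $|\Sigma_{ij}|<\tau$ land at Boolean correlation at most $\tfrac{2}{\pi}\arcsin(\tau/D)+\tau/D\le3\tau/D$. Pairs with $|\Sigma_{ij}|\ge\rho$ land at least at $\tfrac{2}{\pi}\arcsin(\rho/D)-\tau/D\ge\rho/(4D)$. The assumption $\rho>12\tau$ is exactly what makes $\rho/(4D)>3\tau/D$. Invoking the Boolean theorem with $(\rho_0,\tau_0)=(\rho/(4D),3\tau/D)$ then gives the stated exponent, and your final exact verification of candidates is fine as written.
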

This result is a modification of the correlation-detection result
of \cite{Valiant15}, using the real-to-Boolean reduction recorded by
\cite{Pensia24-subquad}.  Those results threshold according to the normalized correlation
$|\Sigma_{ij}|/\sqrt{\Sigma_{ii}\Sigma_{jj}}$, whereas the graph used in this work thresholds according to the
raw covariance entry $|\Sigma_{ij}|$.  We prove the stated raw-covariance version
in \Cref{app:raw-covariance-detection}.

Finally we combine those ingredients in the following \Cref{thm:subquadratic-correlation-detection-pca} and \Cref{alg:cd-branching-search}.

\begin{algorithm}[h]
\centering
\fbox{\parbox{6in}{
{\bf Input:}
A data set $T\subseteq\R^d$ with empirical covariance matrix $\Sigma$,
entry-query access to $\Sigma$, sparsity parameter $k$,
search-radius parameter $L$, certificate threshold $\beta>0$, strong correlation threshold $\rho>0$, margin threshold $\tau>0$, and failure probability $\delta$.\\
{\bf Output:}
Either a sparse unit vector $u\in\R^d$ with $u^\top\Sigma u>\beta$, or a family $\mathcal B$ of candidate supports.

\begin{enumerate}[leftmargin=*]
    \item Query all diagonal entries of $\Sigma$.  If
    $\Sigma_{ii}>\beta$ for some $i\in[d]$, return $e_i$.  Otherwise, set
    $D\gets\max\left\{2\rho,\max_{i\in[d]}\Sigma_{ii}\right\}$.

    \item Set
    $r_{\mathrm{dense}}\gets \lceil 4\beta^2/\tau^2\rceil$ and
    $r_{\mathrm{sparse}}\gets \lceil4\beta^2/\rho^2\rceil$.

    \item Run \Cref{alg:dense-case-random-row-search} with threshold $\tau$, degree parameter $r_{\mathrm{dense}}$, certificate threshold $\beta$, edge threshold $d^{1.1}$, and failure probability $\delta/2$.

    \item If the density check outputs a vector $u$, return $u$.

    \item Run the covariance detection algorithm from
    \Cref{fact:cordetect} on $T$ with parameters $(\delta/2,\rho,\tau,D)$.
    Let $E_\rho$ be the returned list of coordinate pairs satisfying
    $|\Sigma_{ij}|\ge\rho$, and set $G_\rho\gets([d],E_\rho)$.

    \item \label{line:call-sparse-alg} Run \Cref{alg:graph-branching-search} on $(\Sigma,G_\rho,\rho,\beta,r_{\mathrm{sparse}},L)$ and return its output.
\end{enumerate}
}}
\caption{Correlation-Detection Branching Search for Sparse PCA.}
\label{alg:cd-branching-search}
\end{algorithm}

\begin{theorem}[Subquadratic support recovery from correlation detection]
\label{thm:subquadratic-correlation-detection-pca}
Fix $\delta\in(0,1)$, a certificate threshold $\beta>0$, an approximation slack $\eta>0$, a strong correlation threshold $\rho\in(0,1)$, and a margin threshold $\tau\in(0,1)$. Set $r_{\mathrm{dense}}=\left\lceil4\beta^2/\tau^2\right\rceil$, $r_{\mathrm{sparse}}=\left\lceil4\beta^2/\rho^2\right\rceil$, and $\overline{D}=\max\{2\rho,\beta\}$.
Assume $\rho>12\tau$ and $d^{1.1}\ge 2d r_{\mathrm{dense}}$.

Let $T\subseteq\R^d$ be a data set of size $n$ with empirical covariance matrix $\Sigma$. Assume that $\Sigma$ is $\max\{r_{\mathrm{dense}}+1,r_{\mathrm{sparse}}+1\}$-PSD. 

Suppose there exist a unit vector $v\in\R^d$ and a set $E\subseteq[d]$ such that $\|v\|_0\le k$, $v^\top\Sigma v\ge\beta+\eta$, $\supp(v)\subseteq E$, and $E$ has $\rho$-correlation radius at most $L$. Then, with probability at least $1-\delta$, \Cref{alg:cd-branching-search} returns one of the following:
\begin{enumerate}[(i)]
    \item a unit vector $u$ such that $\|u\|_0\le r_{\mathrm{dense}}$ and $u^\top\Sigma u>\beta$;
    \item a family $\mathcal B$ of at most $d(L+1)$ candidate supports, each of size at most $(r_{\mathrm{sparse}}+1)^L$, such that $E\subseteq B$ for some $B\in\mathcal B$.
\end{enumerate}
The running time is
\[
O\left(
\left(
\left(
    d^{1.72}
    +
    d^{1.62+2.4\frac{\log(4\overline{D}/\rho)}
                                  {\log(\overline{D}/(3\tau))}}
\right)
\poly(n,\log d,\overline{D}/\tau)
+
 d r_{\mathrm{sparse}}(r_{\mathrm{sparse}}+1)^L
\right)
\log(1/\delta)
\right).
\]
\end{theorem}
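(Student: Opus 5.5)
The proof follows \Cref{alg:cd-branching-search} step by step, splitting into the event where the density check or the diagonal check returns a vector and the event where we reach \Cref{alg:graph-branching-search}.

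First, the diagonal scan. If some $\Sigma_{ii}>\beta$, the returned $e_i$ is $1$-sparse and has value $>\beta$, so conclusion (i) holds. Otherwise $\max_i\Sigma_{ii}\le\beta$, and therefore $D\le\overline D=\max\{2\rho,\beta\}$.

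Second, the density check. Since $r_{\mathrm{dense}}\tau^2\ge 4\beta^2>\beta^2$, $d^{1.1}\ge 2dr_{\mathrm{dense}}$, and $\Sigma$ is $(r_{\mathrm{dense}}+1)$-PSD, \Cref{lem:densesearch} applies with failure probability $\delta/2$. With probability $1-\delta/2$, either it returns an $r_{\mathrm{dense}}$-sparse $u$ with $u^\top\Sigma u>\beta$, which is outcome (i), or $|E_\tau|<d^{1.1}$. Its cost is $O((d^{0.9}+d)\log(1/\delta))$. In the latter case we invoke \Cref{fact:cordetect} with $s=d^{1.1}$ and parameters $(\delta/2,\rho,\tau,D)$; the hypothesis $\rho>12\tau$ is assumed. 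With probability $1-\delta/2$ it outputs exactly $E_\rho$. Its running time is
\[
\left(d^{1.72}+d^{1.62+2.4\frac{\log(4D/\rho)}{\log(D/(3\tau))}}\right)\poly(n,\log d,D/\tau)\log(1/\delta).
\]
We then replace $D$ by $\overline D$. The main care here is that the exponent $\log(4D/\rho)/\log(D/(3\tau))$ is nondecreasing in $D$ on $D\ge 2\rho>24\tau$. Writing $x=\log D$, the ratio is $(x+a)/(x+b)$ with $a=\log(4/\rho)<b=\log(1/(3\tau))$, because $4/\rho<1/(3\tau)$. Such a ratio is increasing in $x$ wherever $x+b>0$, which holds since $D/(3\tau)>8$. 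Hence replacing $D\le\overline D$ by $\overline D$ only enlarges the bound, and the same holds for the $\poly(D/\tau)$ factor. A union bound puts the total failure probability at most $\delta$.

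Third, the graph search. Since $E_\rho\subseteq E_\tau$, we have $|E_\rho|\le d^{1.1}$. We also have $r_{\mathrm{sparse}}\rho^2>\beta^2$, and $\Sigma$ is $(r_{\mathrm{sparse}}+1)$-PSD. The witness $v$ has value $\ge\beta+\eta\ge\beta$, and $E$ has $\rho$-radius $\le L$. So \Cref{lem:sparse-search} with edge budget $B=d^{1.1}$ yields either an $r_{\mathrm{sparse}}$-sparse vector of value $>\beta$, or the family $\mathcal B$ with the stated count and size bounds containing a superset of $E$. This step costs $O(d^{1.1}+dr_{\mathrm{sparse}}(r_{\mathrm{sparse}}+1)^L)$.

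For outcome (i) in the sparse branch, note that $\rho>12\tau$ gives $r_{\mathrm{sparse}}\le r_{\mathrm{dense}}$, so the sparsity bound $\|u\|_0\le r_{\mathrm{dense}}$ holds uniformly. Summing the three costs and absorbing the $O(d)$ and $d^{1.1}$ terms into $d^{1.72}$ gives the stated running time.
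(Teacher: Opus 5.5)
Your proposal is correct and follows the paper's proof essentially step for step. The steps match: the diagonal scan giving $D\le\overline D$, the density check via \Cref{lem:densesearch}, covariance detection with $s=d^{1.1}$, \Cref{lem:sparse-search} applied to $E_\rho\subseteq E_\tau$, the observation $r_{\mathrm{sparse}}\le r_{\mathrm{dense}}$, and a union bound. Your monotonicity argument, writing the exponent as $(x+a)/(x+b)$ in $x=\log D$ with $a<b$ exactly when $\rho>12\tau$, is just a reparametrization of the paper's derivative computation $\log(\rho/(12\tau))/\bigl(x(\log(x/(3\tau)))^2\bigr)>0$.
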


\begin{proof}
The algorithm first queries all diagonal entries.  If it finds
$\Sigma_{ii}>\beta$, then the $1$-sparse unit vector $e_i$ satisfies
$e_i^\top\Sigma e_i>\beta$ and
$\|e_i\|_0=1\le r_{\mathrm{dense}}$, giving the first conclusion.  Otherwise,
$\max_i\Sigma_{ii}\le\beta$.  Hence the  parameter used in
\Cref{fact:cordetect} satisfies
$D=\max\left\{2\rho,\max_i\Sigma_{ii}\right\}
\le\max\{2\rho,\beta\}=\overline{D}$.

For $\theta\in\{\tau,\rho\}$, let
$E_\theta=\{\{i,j\}:i\neq j,\ |\Sigma_{ij}|\ge\theta\}$.
The algorithm next runs the density check at threshold $\tau$ with edge threshold $d^{1.1}$, degree parameter $r_{\mathrm{dense}}$, and certificate threshold $\beta$. Since $r_{\mathrm{dense}}\tau^2>\beta^2$ and $d^{1.1}\ge2d r_{\mathrm{dense}}$, \Cref{lem:densesearch} implies that if $|E_\tau|\ge d^{1.1}$, then the density check returns an $r_{\mathrm{dense}}$-sparse vector with quadratic form greater than $\beta$ with probability $1-\delta/2$. This satisfies the first conclusion. If the algorithm does not terminate in the density-check stage, we may assume $|E_\tau|<d^{1.1}$ with probability $1-\delta/2$.

Condition on this event. Since $\rho>12\tau$ by assumption, the covariance detection algorithm from \Cref{fact:cordetect} returns all pairs in $E_\rho$ with probability at least $1-\delta/2$, using the sparsity bound $|E_\tau|<d^{1.1}$. Moreover, since $\rho>\tau$, we have $E_\rho\subseteq E_\tau$, and hence $|E_\rho|\le |E_\tau|<d^{1.1}$. Thus the graph passed to the sparse graph search has at most $d^{1.1}$ edges.

We now apply \Cref{lem:sparse-search} with strong threshold $\rho$, certificate threshold $\beta$, and degree parameter $r_{\mathrm{sparse}}$. If the graph search returns a vector, it is $r_{\mathrm{sparse}}$-sparse and has quadratic form larger than $\beta$. Since $\tau<\rho$, we have $r_{\mathrm{sparse}}\le r_{\mathrm{dense}}$, so this again satisfies the first conclusion. Otherwise, \Cref{lem:sparse-search} returns a family $\mathcal B$ satisfying the second conclusion. A union bound over the density-check and correlation-detection events shows that the conclusions hold with probability at least $1-\delta$.

Finally, we bound the runtime. The diagonal scan makes $d$ entry
queries and costs $O(nd)$ time when the entries are computed from the
samples; this is absorbed by the bound below. The density check
(\Cref{alg:dense-case-random-row-search}) costs
$\widetilde O(d^2/d^{1.1}+d)=\widetilde O(d)$.
Correlation detection (\Cref{fact:cordetect}) costs
\[
\left(
    d^{1.1}d^{0.62}
    +
    d^{1.62+2.4\frac{\log(4D/\rho)}{\log(D/(3\tau))}}
\right)\poly(n,\log d,D/\tau)\log(1/\delta).
\]
The exponent function
$\frac{\log(4x/\rho)}{\log(x/(3\tau))}$ is increasing for $x\ge2\rho$,
since its derivative
$\frac{\log(\rho/(12\tau))}{x\bigl(\log(x/(3\tau))\bigr)^2}$ is positive
when $\rho>12\tau$.  Thus $D\le\overline{D}$ allows us to replace $D$ by
$\overline{D}$ in both the exponent and the polynomial factor.
Consequently, the correlation-detection cost is at most
\[
\left(
    d^{1.72}
    +
    d^{1.62+2.4\frac{\log(4\overline{D}/\rho)}
                                  {\log(\overline{D}/(3\tau))}}
\right)
\poly(n,\log d,\overline{D}/\tau)\log(1/\delta).
\]
The final graph search costs $O(d^{1.1}+d r_{\mathrm{sparse}}(r_{\mathrm{sparse}}+1)^L)$. The $d^{1.1}$ term is absorbed by the displayed correlation-detection bound. Combining the three bounds gives the claimed runtime.
\end{proof}

The same construction also applies when the sparse PCA routines
operate on a diagonal shift $A=\Sigma-\lambda I$, when $A$ is PSD over sparse directions.  In that case, covariance
detection is applied to the empirical covariance matrix $\Sigma$, while the
density check, branching search, and optimization routines are applied to
$A$.  This is valid because $A$ and $\Sigma$ have identical off-diagonal
entries.  Such a shift is useful for distinguishing small additive gaps,
provided that $A$ satisfies the required PSD over sparse directions assumption close to the additive threshold.  We state and
prove this shifted variant in \Cref{app:shifted-covariance-detection} and use
it in \Cref{cor:subquadratic-identity-subgaussian-sparse-mean}.

The preceding structural theorem again gives two natural algorithmic consequences.

\begin{corollary}[Subquadratic sparse PCA via the constrained power method]
\label{cor:subquadratic-constrained-power}
Under the assumptions of \Cref{thm:subquadratic-correlation-detection-pca}, and assuming in addition that $\Sigma$ is $\max\{r_{\mathrm{dense}}+1,(r_{\mathrm{sparse}}+1)^L\}$-PSD, run \Cref{alg:cd-branching-search} with failure probability $\delta/2$. If it returns a vector, return that vector. Otherwise, run the constrained power method on every $B\in\mathcal B$ long enough to return a vector of value at least $\lambda_{\max}(\Sigma_{B,B})-\eta$, with failure probability $\delta/2$, and return the vector with the largest quadratic form. Then, with probability at least $1-\delta$, the returned unit vector $u$ satisfies $u^\top\Sigma u\ge\beta$ and
\[
\|u\|_0\le\max\left\{r_{\mathrm{dense}},(r_{\mathrm{sparse}}+1)^L\right\}.
\]
The running time is
\[
O\left(
\left(
    d^{1.72}
    +
    d^{1.62+2.4\frac{\log(4\overline{D}/\rho)}
                                  {\log(\overline{D}/(3\tau))}}
\right)
\poly(n,\log d,\overline{D}/\tau)\log(1/\delta)
+
 d(r_{\mathrm{sparse}}+1)^{O(L)}T_{\mathrm{pm}}
\right),
\]
where $T_{\mathrm{pm}}$ denotes the runtime of the power-method subroutine.
\end{corollary}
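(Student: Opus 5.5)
The proof mirrors that of \Cref{cor:bounded-diameter-constrained-power}, with \Cref{thm:subquadratic-correlation-detection-pca} used in place of \Cref{thm:bounded-diameter-branching}.

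\textbf{Step 1: reduce to good events.} Run \Cref{alg:cd-branching-search} with failure probability $\delta/2$. By \Cref{thm:subquadratic-correlation-detection-pca}, with probability at least $1-\delta/2$ one of two things happens. Either the algorithm outputs a unit vector $u$ with $\|u\|_0\le r_{\mathrm{dense}}$ and $u^\top\Sigma u>\beta$, in which case we return it and are done. Or it outputs a family $\mathcal B$ of at most $d(L+1)$ supports, each of size at most $(r_{\mathrm{sparse}}+1)^L$, and some $B^\star\in\mathcal B$ contains $E\supseteq\supp(v)$.

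\textbf{Step 2: the good support has a large eigenvalue.} In the second case, $v$ is supported in $B^\star$, so
\[
\lambda_{\max}(\Sigma_{B^\star,B^\star})\ge v^\top\Sigma v\ge\beta+\eta.
\]

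\textbf{Step 3: the power method succeeds on the good support.} Here the additional $(r_{\mathrm{sparse}}+1)^L$-PSD assumption is needed. It guarantees $\Sigma_{B,B}\succeq0$ for every candidate $B$, so the power-method subroutine's guarantee applies. Apply that guarantee to $B^\star$ with failure probability $\delta/2$. It returns a unit vector supported on $B^\star$ with value at least $\lambda_{\max}(\Sigma_{B^\star,B^\star})-\eta\ge\beta$.

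There is a subtlety about failure probabilities. The success probability is needed only on $B^\star$, which is a single, fixed support once we condition on the output of Step 1. So no union bound over all of $\mathcal B$ is required. The other supports may produce arbitrary vectors, but their quadratic forms are evaluated exactly. Returning the maximum over all candidates therefore returns value at least $\beta$ whenever $B^\star$ succeeds. A union bound over the two failure events gives overall success probability $1-\delta$.

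\textbf{Step 4: sparsity.} A returned vector is either $r_{\mathrm{dense}}$-sparse (Step 1) or supported in some $B$ of size at most $(r_{\mathrm{sparse}}+1)^L$. Hence
\[
\|u\|_0\le\max\{r_{\mathrm{dense}},(r_{\mathrm{sparse}}+1)^L\}.
\]

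\textbf{Step 5: running time.} The cost has three parts.
\begin{itemize}
\item The first term of the bound is the cost of \Cref{alg:cd-branching-search} from \Cref{thm:subquadratic-correlation-detection-pca}. Its $dr_{\mathrm{sparse}}(r_{\mathrm{sparse}}+1)^L$ part is absorbed into $d(r_{\mathrm{sparse}}+1)^{O(L)}$.
\item Running the power method on the $d(L+1)$ supports costs $d(L+1)T_{\mathrm{pm}}$. Each run on an $m\times m$ block also requires extracting $\Sigma_{B,B}$. This takes $O(m^2)=(r_{\mathrm{sparse}}+1)^{O(L)}$ entry queries, each costing $\poly(n)$, and that factor is folded into $T_{\mathrm{pm}}$.
\item Evaluating each candidate's quadratic form costs $(r_{\mathrm{sparse}}+1)^{O(L)}$.
\end{itemize}
Summing gives the stated bound. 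The main obstacle, such as it is, is the probability bookkeeping in Step 3.
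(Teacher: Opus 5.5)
Your proof is correct and follows the paper's argument essentially step for step: condition on the $1-\delta/2$ success event of \Cref{thm:subquadratic-correlation-detection-pca}, use $\lambda_{\max}(\Sigma_{B,B})\ge v^\top\Sigma v\ge\beta+\eta$ on the covering support, invoke the power-method guarantee only on that support (made valid by the extra $(r_{\mathrm{sparse}}+1)^L$-PSD assumption), and union-bound the two $\delta/2$ failures. Your sparsity and runtime accounting is more explicit than the paper's, which gives neither; one small nit is that the theorem's stated bound multiplies the branching term by $\log(1/\delta)$, but this is harmless because that factor is absorbed by $T_{\mathrm{pm}}$ (and the theorem's proof shows the branching cost does not actually depend on $\delta$).
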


\begin{proof}
With probability at least $1-\delta/2$, \Cref{thm:subquadratic-correlation-detection-pca} either returns a vector with quadratic form larger than $\beta$, or gives a candidate $B\in\mathcal B$ containing $\supp(v)$. In the latter case, $\lambda_{\max}(\Sigma_{B,B})\ge v^\top\Sigma v\ge\beta+\eta$. Conditional on this event, the power-method subroutine on this support returns a vector with quadratic form at least $\beta$ with probability at least $1-\delta/2$. Since the procedure returns the vector with the largest quadratic form among all candidate supports, a union bound over these two events proves the claim.
\end{proof}

The next corollary, under the constant-gap specialization described in \Cref{rem:constant-gap-subquadratic-sdp}, proves \Cref{thm:informal-subquadratic-sdp}.
\begin{corollary}[Subquadratic sparse PCA via the semidefinite relaxation]
\label{cor:subquadratic-sdp}
Under the assumptions of \Cref{thm:subquadratic-correlation-detection-pca}, run \Cref{alg:cd-branching-search}. If it returns a vector, return that vector. Otherwise, compute a feasible additive-$\eta$ approximate solution to $\operatorname{SDP}_k(\Sigma_{B,B})$ for every $B\in\mathcal B$. Pad each approximate solution with zeros outside $B\times B$ and return the padded solution with the largest objective value. Then, with probability at least $1-\delta$, the procedure returns one of the following:
\begin{enumerate}[(i)]
    \item a unit vector $u$ such that $\|u\|_0\le r_{\mathrm{dense}}$ and $u^\top\Sigma u>\beta$;
    \item a matrix $X\in\cX_{\sdp,k}$ such that $\langle\Sigma,X\rangle\ge\beta$.
\end{enumerate}
Using a polynomial-time SDP solver, the running time is
\[
O\left(
\left(
    d^{1.72}
    +
    d^{1.62+2.4\frac{\log(4\overline{D}/\rho)}
                                  {\log(\overline{D}/(3\tau))}}
\right)
\poly(n,\log d,\overline{D}/\tau)\log(1/\delta)
+
 d(r_{\mathrm{sparse}}+1)^{O(L)}/\eta^{O(1)}
\right),
\]
where the constant hidden in the final exponent depends on the SDP solver.
\end{corollary}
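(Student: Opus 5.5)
The proof combines \Cref{thm:subquadratic-correlation-detection-pca} with the SDP post-processing of \Cref{cor:bounded-diameter-sdp}, following the proof of \Cref{cor:subquadratic-constrained-power} but with the power method replaced by an approximate SDP solver.

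\emph{Step 1.} Invoke \Cref{thm:subquadratic-correlation-detection-pca}. On an event of probability at least $1-\delta$, \Cref{alg:cd-branching-search} does one of two things. Either it returns a unit vector $u$ with $\|u\|_0\le r_{\mathrm{dense}}$ and $u^\top\Sigma u>\beta$; this is conclusion (i) and we are done. Or it returns a family $\mathcal B$ of at most $d(L+1)$ supports, each of size at most $(r_{\mathrm{sparse}}+1)^L$, and some $B^\star\in\mathcal B$ contains $E\supseteq\supp(v)$. Unlike the power-method corollary, the SDP stage is deterministic. So no further failure probability is incurred, and this is why the statement keeps $\delta$ rather than splitting it.

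\emph{Step 2.} In the second case, argue exactly as in \Cref{cor:bounded-diameter-sdp}. The matrix $vv^\top$ restricted to $B^\star\times B^\star$ satisfies $vv^\top\succeq0$, $\operatorname{tr}(vv^\top)=1$, and $\|vv^\top\|_1=\|v\|_1^2\le k$, the last because $\|v\|_0\le k$. Hence $\operatorname{SDP}_k(\Sigma_{B^\star,B^\star})\ge v^\top\Sigma v\ge\beta+\eta$, and the additive-$\eta$ solution on $B^\star$ has objective value at least $\beta$. Padding with zeros preserves positive semidefiniteness, trace, entrywise $\ell_1$ norm, and objective value. The padded matrix therefore lies in $\cX_{\sdp,k}$, and the maximizer over $\mathcal B$ has value at least $\beta$, which is conclusion (ii). No PSD assumption on $\Sigma_{B,B}$ is needed here: the SDP is solved for the symmetric matrix $\Sigma_{B,B}$, and feasibility of $vv^\top$ does not depend on the sign of $\Sigma$.

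\emph{Step 3.} The running time is the bound of \Cref{thm:subquadratic-correlation-detection-pca} plus the cost of the SDP stage. That stage solves at most $d(L+1)$ SDPs, each of dimension at most $(r_{\mathrm{sparse}}+1)^L$. A polynomial-time solver with additive accuracy $\eta$ costs $(r_{\mathrm{sparse}}+1)^{O(L)}/\eta^{O(1)}$ per SDP, and the factor $L+1$ is absorbed into the $O(L)$ exponent. Forming each $\Sigma_{B,B}$ costs $\poly(n)(r_{\mathrm{sparse}}+1)^{2L}$ using entry queries, which is also absorbed. The term $dr_{\mathrm{sparse}}(r_{\mathrm{sparse}}+1)^L\log(1/\delta)$ from the theorem is likewise dominated by $d(r_{\mathrm{sparse}}+1)^{O(L)}$, up to the $\log(1/\delta)$ factor already attached to the first term.

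The main obstacle is this bookkeeping of the runtime terms and confirming that the solver's cost fits the stated form, including that its exponent constant is solver-dependent. Mathematically, the argument is a direct composition of the two earlier results.
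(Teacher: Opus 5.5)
Your proposal is correct and follows the paper's proof step for step. The vector branch comes directly from \Cref{thm:subquadratic-correlation-detection-pca}. In the family branch, $vv^\top$ is feasible for $\operatorname{SDP}_k(\Sigma_{B,B})$ on the covering support, so the additive-$\eta$ solution has value at least $\beta$, and padding preserves both membership in $\cX_{\sdp,k}$ and the objective value.

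Your runtime accounting goes beyond the paper's proof, which omits it. One detail: the branching term need not be absorbed with a $\log(1/\delta)$ factor attached. The theorem's proof shows the graph search runs once, in $O(d^{1.1}+d\,r_{\mathrm{sparse}}(r_{\mathrm{sparse}}+1)^L)$ time, and this is what gives exactly the stated bound.
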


\begin{proof}
The direct-vector case follows from \Cref{thm:subquadratic-correlation-detection-pca}. Otherwise, some candidate $B\in\mathcal B$ contains $\supp(v)$, so $vv^\top$ is feasible for $\operatorname{SDP}_k(\Sigma_{B,B})$ and has objective value $v^\top\Sigma v\ge\beta+\eta$. Thus, the additive-$\eta$ approximate solution corresponding to this support has objective value at least $\beta$. Padding it with zeros preserves positive semidefiniteness, trace, entrywise $\ell_1$ norm, and objective value. Since the procedure returns the padded solution with the largest objective value, the returned matrix has objective value at least $\beta$.
\end{proof}

\begin{remark}[The constant-gap specialization]
\label{rem:constant-gap-subquadratic-sdp}
We explain how \Cref{cor:subquadratic-sdp} gives the guarantee in \Cref{thm:informal-subquadratic-sdp}.  Set $\beta=1$, $\eta=1/10$, $\rho=1/(4k)$, $\tau=(100k)^{-\gamma}$, and $L=\max\{1,\lceil C\log k\rceil\}$, where $C>0$ and $\gamma>0$ are sufficiently large universal constants.  Suppose that $\Sigma$ is an empirical covariance matrix satisfying $\|\Sigma\|_{\op,k}>2$.  By \Cref{lem:quantitative-bounded-radius-reduction}, there is a set $T$ whose $\rho$-correlation graph has radius at most $L$ and such that $\lambda_{\max}(\Sigma_{T,T})\ge (2-\rho k)/k^{1/(2L)}-\rho k\ge1+\eta$, where the final inequality holds for sufficiently large $C$.  A unit leading eigenvector of $\Sigma_{T,T}$ therefore supplies the bounded-radius witness required by \Cref{cor:subquadratic-sdp}.  For these parameters, $r_{\mathrm{sparse}}=O(k^2)$, $r_{\mathrm{dense}}=k^{O(\gamma)}$, and $(r_{\mathrm{sparse}}+1)^L=k^{O(\log k)}$.  Assume in addition that $d^{1.1}\ge2d r_{\mathrm{dense}}$, as required by \Cref{thm:subquadratic-correlation-detection-pca}; for fixed $\gamma$ this condition holds for all sufficiently large $d$ in the regime $k=\polylog(d)$.  Moreover, $\overline D=\max\{2\rho,\beta\}=1$, and hence the correlation-detection exponent is $1.62+O(1/\gamma)<2$ for sufficiently large $\gamma$.  Substituting these parameter bounds into \Cref{cor:subquadratic-sdp} gives the constant-gap subquadratic guarantee stated in \Cref{thm:informal-subquadratic-sdp}.
\end{remark}

\subsection{Empirical Access and Sample Complexity}
\label{sec:sample-complexity}
In this section, we discuss a statistical setup. 
We are interested in solving the sparse PCA objective with the covariance matrix $\Sigma$, but instead of observing this matrix directly, 
we only have sample access to a mean-zero subgaussian distribution $P$ with covariance $\Sigma$.
That is, we have access only to $x_1,\ldots,x_n\in\R^d$, which are i.i.d.\ samples from $P$.
Our goal is to show that $\widetilde{O}(k^2)$ samples suffice to find a relaxed witness for $\Sigma$ in the semidefinite relaxation.

Let $x_1,\ldots,x_n\in\R^d$ be i.i.d.\ samples from $P$ and observe that $\Sigma=\E[xx^\top]$. Given these samples, we form the empirical covariance matrix
$
    \widehat\Sigma
    =
    \frac1n\sum_{i=1}^n x_i x_i^\top .
$
The deterministic guarantees above transfer to the empirical
setting through two concentration events.  The input event ensures that the
population witness retains sufficiently large quadratic form for
$\widehat\Sigma$, while the output event transfers the certificate produced
for $\widehat\Sigma$ back to the population covariance $\Sigma$.

\begin{corollary}[Empirical sparse PCA]
\label{cor:empirical-sparse-pca-tradeoff}
\label{thm:empirical-bounded-diameter-sparse-pca}
\label{cor:empirical-quasipolynomial-sparse-pca}
\label{cor:empirical-polynomial-root-k-sparse-pca}
There exists a sufficiently large constant $C>0$ such that the following holds. Let $x_1,\ldots,x_n\in\R^d$ be i.i.d.\ subgaussian random vectors with zero mean and covariance matrix $\Sigma$.
Let
$
    \widehat\Sigma=\frac1n\sum_{i=1}^n x_i x_i^\top.
$
Fix $\eta,\delta\in(0,1)$, a certificate threshold $\beta>0$, a correlation threshold $\tau>0$, and $r\in\Z_+$ satisfying $r\tau^2>\beta^2$. Let $L\ge1$ be an integer. Suppose there exists a unit vector $v\in\R^d$ such that $\|v\|_0\le k$ and
$    v^\top\Sigma v=R$,
    $\frac{R-\tau k}{k^{1/(2L)}}-\tau k
    \ge
    \beta+2\eta$.
Then 
\begin{enumerate}[(i)]
    \item Let
    $
        s_{\mathrm{pm}}=\max\{k,r,(r+1)^{O(L)}\}.
    $
    If the candidate supports are processed using the constrained power method, then for
    $
        n\ge
        C\,s_{\mathrm{pm}}\frac{\log(d/\delta)}{\eta^2},
    $
    with probability at least $1-\delta$ the procedure returns a unit vector $u$ with  $\|u\|_0\le s_{\mathrm{pm}}$ satisfying
    $
        u^\top\Sigma u\ge\beta-\eta.
    $
    Moreover,  
    the running time is
$O\!\left(
nd^2+d(r+1)^{O(L)}T_{\rm pm}
\right)$,
    where $T_{\mathrm{pm}}$ denotes the runtime of the power-method subroutine.

    \item Using the semidefinite relaxation on the candidate supports, if
    $
        n\ge
        C\,\max\{r,k^2\}{\log(d/\delta)}/{\eta^2},
    $
    with probability at least $1-\delta$ the procedure returns either a unit vector $u$ such that $\|u\|_0\le r$ and
    $
        u^\top\Sigma u>\beta-\eta,
    $
    or a matrix $X\in\cX_{\sdp,k}$ such that
    $
        \langle\Sigma,X\rangle\ge\beta-\eta.
    $
    The running time is $O\!\left(nd^2+d(r+1)^{O(L)}/\eta^{O(1)}\right)$, where the constant hidden in the exponent depends on the SDP solver.
\end{enumerate}
\end{corollary}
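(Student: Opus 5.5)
The proof runs the deterministic guarantees of \Cref{cor:parameterized-radius-sparse-pca} on $\widehat\Sigma$ and transfers them between $\Sigma$ and $\widehat\Sigma$ through two concentration events.

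\textbf{Input event.} For the fixed $k$-sparse witness $v$, we need $v^\top\widehat\Sigma v\ge R-\eta$. Since $\langle v,x_i\rangle$ is subgaussian with variance proxy $O(1)$, $\langle v,x_i\rangle^2$ is subexponential. Bernstein's inequality then gives this with probability $1-\delta/4$ once $n\gtrsim\log(1/\delta)/\eta^2$.

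\textbf{Output event.} We need uniform concentration over the relevant output class. For part (i) the class is all $s_{\mathrm{pm}}$-sparse unit vectors. The standard net argument applies: for each support of size $s$, take a $1/4$-net of the unit sphere on that support and union bound over $\binom{d}{s}9^s$ points. This yields $\sup_u|u^\top(\widehat\Sigma-\Sigma)u|\le\eta$ when $n\gtrsim s\log(d/\delta)/\eta^2$. For part (ii) the class has two pieces:
\begin{itemize}
\item $r$-sparse vectors, handled as in part (i);
\item $\cX_{\sdp,k}$.
\end{itemize}
For the SDP piece I would use the deterministic bound $|\langle\widehat\Sigma-\Sigma,X\rangle|\le\|X\|_1\|\widehat\Sigma-\Sigma\|_\infty\le k\|\widehat\Sigma-\Sigma\|_\infty$. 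Entrywise subexponential concentration with a union bound over $d^2$ entries gives $\|\widehat\Sigma-\Sigma\|_\infty\le\eta/k$ when $n\gtrsim k^2\log(d/\delta)/\eta^2$. This accounts for the $\max\{r,k^2\}$ in the sample complexity.

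\textbf{Deterministic step.} Assume $\Sigma$ is a covariance, so $\widehat\Sigma\succeq0$; this supplies every PSD hypothesis. On the input event, $v$ is a witness for $\widehat\Sigma$ with value $R'\ge R-\eta$. The slack $2\eta$ is meant to absorb this loss. In the bound $\frac{R'-\tau k}{k^{1/(2L)}}-\tau k$, the loss enters only as $\eta/k^{1/(2L)}\le\eta$, so the bound is at least $\beta+\eta$. Hence \Cref{cor:parameterized-radius-sparse-pca} applies to $\widehat\Sigma$ with slack $\eta$ and failure probability $\delta/4$. It returns either a sparse vector with $u^\top\widehat\Sigma u\ge\beta$ (or $>\beta$), or an $X\in\cX_{\sdp,k}$ with $\langle\widehat\Sigma,X\rangle\ge\beta$. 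On the output event, each of these loses at most $\eta$ when passing back to $\Sigma$, which gives the stated $\beta-\eta$ guarantees. A union bound over the three failure events completes the correctness argument. The running time is that of \Cref{cor:parameterized-radius-sparse-pca}, plus $O(nd^2)$ to form $\widehat\Sigma$, which dominates the $d^2+dr^2$ terms.

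\textbf{Main obstacle.} The delicate step is the output concentration for sparse vectors in part (i), since the sparsity level $(r+1)^{O(L)}$ must enter only linearly in $n$. The net argument does give $s\log d$ and not $s^2$, because $\log\binom{d}{s}\le s\log d$. It needs care with the subgaussian Bernstein tail at scale $\eta$, where $\eta<1$ ensures the subgaussian regime dominates.
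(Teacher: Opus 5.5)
Your proposal is correct and follows essentially the same route as the paper. You apply \Cref{cor:parameterized-radius-sparse-pca} to $\widehat\Sigma$, whose PSD hypotheses are automatic; the $2\eta$ slack absorbs the input loss because $k^{1/(2L)}\ge1$; and outputs are transferred back to $\Sigma$ by net-based uniform sparse-quadratic concentration and by the entrywise H\"older bound over $\cX_{\sdp,k}$.

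The only cosmetic difference is how you certify $v^\top\widehat\Sigma v\ge R-\eta$. You use a separate fixed-vector Bernstein bound, whereas the paper reads it off the same uniform sparse event, taken at sparsity $s_{\mathrm{pm}}$ or $\max\{k,r\}$, both of which are at least $k$. This saves one event in the union bound.
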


Before we move on to the proof we remark the above corollary consequences for constant approximation factor. 
\begin{remark}[Empirical sparse PCA constant factor]
Let $\beta=1$, $\eta=1/10, \delta=2/3, L=C_0\log(k)$, $\tau=1/(C_0k)$, and $r=C_1k^2$, where $C_0,C_1>0$ are sufficiently large universal constants. 
Then if  $\|\Sigma\|_{\op,k}\ge 2$ we have that 
${(2-\tau k)}/{k^{1/(2L)}}-\tau k
\ge 1+2\eta.
$
Hence the assumptions of \Cref{cor:empirical-sparse-pca-tradeoff} are satisfied. 

As a result, there exists an algorithm that uses at most $O(k^2\log(d))$ samples, $O(d^2k^2\log(d)+dk^{O(\log(k))})$ time and with probability at least $2/3$  returns either an $O(k^2)$-sparse unit vector $u$ with $u^\top\Sigma u>9/10$, or a matrix $X\in\cX_{\sdp,k}$ with $
\langle\Sigma,X\rangle\ge9/10.$

Thus, in the constant-gap sparse PCA setting, $\widetilde O(k^2)$ samples suffice to obtain a constant-value relaxed certificate for the population covariance. A similar result holds for returning a sparse vector but with higher dependence on $k$ and larger output sparsity. 
\end{remark}

\begin{proof}[Proof of \Cref{cor:empirical-sparse-pca-tradeoff}]
For the constrained-power conclusion, by
\Cref{fact:sparse-quadratic-concentration}, with the stated sample size, with
probability at least $1-\delta/2$,
\[
    \sup_{\|x\|_2=1,\ \|x\|_0\le s_{\mathrm{pm}}}
    \left|x^\top(\widehat\Sigma-\Sigma)x\right|
    \le\eta.
\]
Since $s_{\mathrm{pm}}\geq k$ for the  $k$-sparse  $v$, the concentration event gives
$v^\top\widehat\Sigma v\ge R-\eta$.
Therefore, because $k^{1/(2L)}\ge1$,
\[
    \frac{v^\top\widehat\Sigma v-\tau k}{k^{1/(2L)}}-\tau k
    \ge
    \frac{R-\tau k}{k^{1/(2L)}}-\tau k-\eta
    \ge
    \beta+\eta.
\]
Since $\widehat\Sigma\succeq0$, the  PSD assumptions required
by \Cref{cor:parameterized-radius-sparse-pca} are automatic.  Conditional on
the concentration event, \Cref{cor:parameterized-radius-sparse-pca}, applied
to $\widehat\Sigma$ with certificate threshold $\beta$, approximation slack
$\eta$, threshold $\tau$, truncation parameter $r$, and failure probability
$\delta/2$, returns an $s$-sparse unit vector $u$ satisfying
$u^\top\widehat\Sigma u\ge\beta$ with probability at least $1-\delta/2$.
On the same concentration event, $u^\top\Sigma u\ge\beta-\eta$.  A union
bound over the concentration event and the constrained-power event gives
overall success probability at least $1-\delta$.

For the semidefinite conclusion, apply
\Cref{fact:sparse-quadratic-concentration} at sparsity $\max\{k,r\}$ with
failure probability $\delta/2$, and apply
\Cref{fact:l1-matrix-concentration} with $K=k$ and failure probability
$\delta/2$.  By a union bound, both events hold with probability at least
$1-\delta$ under the stated sample bound. In particular, the input vector $v$
satisfies the same lower bound for $\widehat\Sigma$ as above, so
\Cref{cor:parameterized-radius-sparse-pca} applies to $\widehat\Sigma$ with its
 PSD assumptions again automatic from $\widehat\Sigma\succeq0$. If it
returns the direct vector $u$, sparse concentration gives
$u^\top\Sigma u\ge u^\top\widehat\Sigma u-\eta>\beta-\eta$.
Otherwise it returns a matrix $X$ with $\|X\|_1\le k$ and $\langle\widehat\Sigma,X\rangle\ge\beta$. By \Cref{fact:l1-matrix-concentration},
$\langle\Sigma,X\rangle\ge\langle\widehat\Sigma,X\rangle-\eta\ge\beta-\eta$.
Forming $\widehat\Sigma$ takes $nd^2$ time. The remaining conclusions are inherited from \Cref{cor:parameterized-radius-sparse-pca}.
\end{proof}

\section{Robust Sparse Mean Estimation}\label{sec:mean-estimation}
In this section, we apply the sparse PCA routines from the previous section to the problem of robust sparse mean estimation.
In \Cref{sec:reduction-sparse-pca}, we show that the sparse PCA subroutine fits naturally in the existing filtering-based framework for robust estimation.
We then prove concrete 
guarantees for heavy-tailed distributions in \Cref{sec:bdd-cov} and for isotropic subgaussian distributions in \Cref{sec:identity-subgaussian}.

\subsection{Reduction of Robust Mean Estimation to Sparse PCA}
\label{sec:reduction-sparse-pca}
We first reduce robust mean estimation in the sparse norm to a
sparse PCA certificate routine.
We will show that the proposed sparse PCA routine fits into the filtering-based framework as long as the datasets satisfy appropriate stability conditions.

Recall the sparse norm $\|\cdot\|_{2,k}$ from the notation
paragraph, the maximum sparse quadratic-form  $\|\cdot\|_{\op,k}$ from
\Cref{eq:sparse-op-def}, and the SDP feasible set $\cX_{\sdp,s}$ from
\Cref{eq:sdp-set-def}.
Also recall from the notation paragraph that $Q_S(\mu)$ denotes the empirical second moment of $S$ with respect to $\mu$.

The following stability condition provides a unified way to analyze both the SDP relaxation (the parameter $q\geq k$) and the $\ell$-sparse relaxation (the parameter $\ell \geq k$); we
will use only one at a time.
\begin{definition}[Stability]\label{def:stability}Let $G\subseteq\R^d$ be a dataset, let $\mu\in\R^d$ be a candidate mean, and let $k,q,\ell\in\N$ be sparsity parameters with $\ell\ge k$.
Let $\eps\in(0,1/2)$ be the contamination rate, and let $\gamma,\Gamma\ge0$ be the mean- and covariance-stability parameters.
We say that $G$ is $(\eps,\gamma,\Gamma,k,q,\ell)$-stable with respect to $\mu$ if, for every $G'\subseteq G$ with $|G'|\ge(1-10\eps)|G|$, we have that
\begin{enumerate}
    \item (Closeness in mean) $    \|\mu_{G'}-\mu\|_{2,k}\le\gamma,
$
\item (Closeness for the SDP relaxation) 
$    \sup_{W\in\cX_{\sdp,q}}
    \left|\left\langle Q_{G'}(\mu)-I,W\right\rangle\right|
    \le \Gamma,
$
\item (Closeness for the vector relaxation) $  \sup_{\|u\|_2=1,\ \|u\|_0\le\ell}
    \left|u^\top\bigl(Q_{G'}(\mu)-I\bigr)u\right|
    \le \Gamma$.
\end{enumerate}

\end{definition}

The next definition captures what we need from the sparse PCA subroutine, which has exactly the same bicriteria form discussed in the introduction.
\begin{definition}[Sparse PCA certificate routine]
\label{def:sparse-pca-certificate-routine}
Fix parameters $k,q,\ell,s\in\Z_+$ and thresholds $R>\beta>0$.  We say that
$\mathcal A_{\mathrm{cert}}$ is a $(k,q,\ell,s,R,\beta)$-sparse PCA certificate
routine with runtime $T_{\mathrm{cert}}$ if the following holds.  For every
 multiset $T\subseteq\R^d$ of size $m$, let $\Sigma\succeq0$ denote its
empirical covariance.  For every failure probability
$\rho\in(0,1)$, one call to $\mathcal A_{\mathrm{cert}}$ on $T$ runs in time
$T_{\mathrm{cert}}(m,d,k,q,\ell,s,\rho)$ and, with probability at least
$1-\rho$, returns either \textsc{No}, in which case $\|\Sigma\|_{\op,k}<R$, or one of the following certificates:
\begin{enumerate}[leftmargin=*,nosep]
        \item a matrix certificate $W\in\cX_{\sdp,q}$, supported on at most
        $s$ coordinates, with $\langle \Sigma,W\rangle>\beta$; or
        \item a vector certificate $u\in\R^d$ with $\|u\|_2=1$,
        $\|u\|_0\le\ell$, and $u^\top \Sigma u>\beta$.
\end{enumerate}
\end{definition}

We now show that if the (uncontaminated) dataset satisfies the stability condition above, and we have access to a sparse PCA certification subroutine,
then the filtering-based algorithms robustly estimate the mean in the $\|\cdot\|_{2,k}$ norm.
Here, we will use the standard filtering-based framework~\cite{DiaKan22-book}.
In this framework, we iteratively remove points that are deemed to be outliers, and a point is deemed an outlier if it has a large score.
These scores are defined using the certificates from the sparse PCA subroutine above.
\paragraph{Filtering-based algorithms.}

Given a multiset $T$, its empirical mean $m$, and either a matrix certificate $W\in\cX_{\sdp,q}$ or a unit $\ell$-sparse vector certificate $u$, first form the raw quadratic scores
\[
    g(x)=
    \begin{cases}
        (x-m)^\top W(x-m), & \text{for a matrix certificate},\\[1mm]
        \langle u,x-m\rangle^2, & \text{for a vector certificate}.
    \end{cases}
\]
Let $L$ consist of the $\lfloor2\eps|T|\rfloor$ largest raw scores and define
the cutoff scores $\tau(x)\eqdef g(x)\Ind\{x\in L\}$.
Only these top scores participate in the filter.  If $\tau_{\max}\eqdef\max_{x\in T}\tau(x)>0$, delete every $x\in T$ independently with probability $\tau(x)/\tau_{\max}$.  This is the universal-filter cutoff of \cite[Proposition~2.19]{DiaKan22-book}, followed by the randomized filtering rule formalized in \Cref{lem:randomized-score-filter}.  A maximum cutoff-score point is deleted with probability one, so every nonterminal filtering round removes at least one sample.
The next statement records the technical statement that will suffice for our applications.

\begin{algorithm}[H]
    \centering
    \fbox{\parbox{6in}{
        {\bf Input:}
        An $\eps$-corrupted multiset $T\subseteq\R^d$, target sparsity $k$,
        a detection threshold $R$, a covariance-stability error $\Gamma$,
        a matrix certificate sparsity parameter $q$, a vector certificate
        sparsity parameter $\ell$, a matrix-certificate support bound $s$,
        a certificate failure parameter $\delta_{\rm cert}\in(0,1)$, and
        a sparse PCA certificate routine $\mathcal A_{\mathrm{cert}}$.\\
        {\bf Output:}
        An estimate $\widehat\mu\in\R^d$.

        \begin{enumerate}[leftmargin=*]
            \item Initialize $T^{(0)}\gets T$ and $t\gets0$.

            \item Compute the empirical mean $m^{(t)}$. Let
            $M^{(t)}$ denote the empirical covariance of $T^{(t)}$ around
            $m^{(t)}$, used only to specify the guarantee of the certificate
            routine; the algorithm does not explicitly form $M^{(t)}$.

            \item Set
            $
                \beta_{\Gamma}
                \eqdef
                1+48\Gamma .
            $
            Run $\mathcal A_{\mathrm{cert}}(T^{(t)},k,q,\ell,R,\beta_{\Gamma},s,
            \delta_{\rm cert}/n)$, where $n=|T|$, on the current set $T^{(t)}$.
            The routine returns either \label{line:run}
            \begin{enumerate}[leftmargin=*,nosep]
                \item \textsc{No}, certifying $\|M^{(t)}\|_{\op,k}<R$; or
                \item a matrix certificate $W\in\cX_{\sdp,q}$, supported
                on at most $s$ coordinates, with
                $\langle M^{(t)},W\rangle>\beta_{\Gamma}$; or
                \item a vector certificate $u\in\R^d$ with $\|u\|_2=1$,
                $\|u\|_0\le\ell$, and
                $(u^\top M^{(t)}u)>\beta_{\Gamma}$.
            \end{enumerate}

            \item If the output is \textsc{No}, return
            $\widehat\mu\gets m^{(t)}$.

            \item Otherwise, compute the raw scores from the returned certificate:
            \[
                g^{(t)}(x)=
                \begin{cases}
                    (x-m^{(t)})^\top W(x-m^{(t)}),
                    & \text{if the certificate is } W,\\[2mm]
                    \langle u,x-m^{(t)}\rangle^2,
                    & \text{if the certificate is } u,
                \end{cases}
                \qquad x\in T^{(t)}.
            \]
            Let $L_t$ contain the $\lfloor2\eps|T^{(t)}|\rfloor$ points with largest raw scores, set
            $\tau^{(t)}(x)=g^{(t)}(x)\Ind\{x\in L_t\}$,
            and let $\tau_{\max}^{(t)}=\max_x\tau^{(t)}(x)$.  In this branch $\tau_{\max}^{(t)}>0$.  Independently for every
            $x\in T^{(t)}$, delete $x$ with probability
            $\tau^{(t)}(x)/\tau_{\max}^{(t)}$, and let $T^{(t+1)}$ be the
            multiset of undeleted samples.

            \item Set $t\gets t+1$ and repeat.
        \end{enumerate}
    }}
    \caption{Filtering from a sparse PCA certificate.}
    \label{alg:robust-sparse-mean}
\end{algorithm}

\begin{proposition}[Sparse-norm mean estimation from stability and certificates]
\label{thm:robust-sparse-mean-from-pca}
Let $T, n\eqdef |T|$ be an $\eps$-corruption of a multiset $G$ with reference mean $\mu$, where $\eps\le1/18$.  Suppose that $G$ is $(\eps,\gamma,\Gamma,k,q,\ell)$-stable with respect to $\mu$ and that $\Gamma\ge\eps$.  Define
$
    \beta_\Gamma\eqdef1+48\Gamma.
$
Fix $\delta\in(0,1)$ and $R>\beta_\Gamma$.  
Suppose 
that $\mathcal A_{\mathrm{cert}}$ is a $(k,q,\ell,s,R,\beta_\Gamma)$-sparse PCA certificate routine in the sense of \Cref{def:sparse-pca-certificate-routine}.  Then there exists an algorithm that, given $T$ and access to $\mathcal A_{\mathrm{cert}}$, with probability at least $1-\delta$ returns an estimate $\widehat\mu$ satisfying
\[
    \|\widehat\mu-\mu\|_{2,k}
    \le
    O\!\left(\gamma+\sqrt{\eps\bigl(R-1+\Gamma\bigr)}\right).
\]
Moreover, the running time is
\[
    O\!\left(
        nT_{\mathrm{cert}}\!\left(n,d,k,q,\ell,s,\frac1{24n}\right)
            \log(1/\delta)
    +n^2(d+s^2+\ell)\log(1/\delta)
            +d\log^2(1/\delta)
    \right).
\]
\end{proposition}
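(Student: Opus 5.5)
The plan is to follow the standard filtering analysis of \cite{DiaKan22-book}, run through \Cref{alg:robust-sparse-mean}, with the sparse PCA routine used only as a black box. Two things need to be shown. First, while the current set $T^{(t)}$ still retains most of $G$, every certificate with value above $\beta_\Gamma$ defines scores whose mass on outliers dominates their mass on inliers. Second, when the routine returns \textsc{No}, the empirical mean is accurate.

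\emph{The invariant.} Maintain that $|T^{(t)}\cap G|\ge(1-10\eps)|G|$ and that the expected number of inliers removed is at most the expected number of outliers removed. This is the sub-martingale argument underlying \Cref{lem:randomized-score-filter}.

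Fix a round in which the routine returns a certificate (matrix $W$ or vector $u$; write $P$ for either $W$ or $uu^\top$). Let $G'=T^{(t)}\cap G$ and $B=T^{(t)}\setminus G$. Decompose
\[
\langle M^{(t)},P\rangle=\tfrac{|G'|}{|T^{(t)}|}\langle Q_{G'}(m^{(t)}),P\rangle+\tfrac{|B|}{|T^{(t)}|}\langle Q_{B}(m^{(t)}),P\rangle .
\]
Expand $Q_{G'}(m)=Q_{G'}(\mu)+(\mu_{G'}-m)(\mu_{G'}-m)^\top$. Stability, item 2 or 3 according to the certificate type, bounds $\langle Q_{G'}(\mu),P\rangle\le 1+\Gamma$. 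The rank-one term is controlled by $\|\mu_{G'}-m\|$ measured in the dual norm of the certificate class, i.e.\ $\sup_P\sqrt{\langle vv^\top,P\rangle}$, which is at most $\|\cdot\|_{2,k}$-type quantities up to the $\ell_1$-budget. I would bound it with the usual estimate $\|m-\mu_{G'}\|\lesssim \gamma+\sqrt{\eps\,\langle M^{(t)},P\rangle}$, obtained by writing $m$ as a mixture of $\mu_{G'}$ and $\mu_B$ and applying Cauchy--Schwarz to the $B$ part.

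Since $\langle M^{(t)},P\rangle>1+48\Gamma$ and $\Gamma\ge\eps$, the inlier contribution is at most $1+O(\Gamma)$. Hence the outliers carry at least a constant fraction of the excess $\langle M^{(t)},P\rangle-1$. After the top-$2\eps$ cutoff, the inliers inside $L_t$ contribute at most $O(\Gamma+\eps\cdot\text{excess})$, by stability applied to $G'$ minus its top points, which uses the $10\eps$ slack. This yields $\sum_{x\in G'}\tau(x)\le\sum_{x\in B}\tau(x)$, and the martingale argument then keeps the invariant with constant probability.

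\emph{Termination and the main obstacle.} Each nonterminal round deletes at least one point, so there are at most $n$ rounds. A union bound over the routine's failures at level $\delta_{\rm cert}/n$ (with $\delta_{\rm cert}=1/24$), together with a stopped-martingale bound, gives success with constant probability. Boosting to $1-\delta$ is done by running $O(\log(1/\delta))$ independent copies and selecting a candidate whose $\|\cdot\|_{2,k}$-distance to at least half of the others is small. Selecting requires pairwise comparisons of $\topp_k$ differences, which accounts for the $d\log^2(1/\delta)$ term.

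On \textsc{No}, we have $\|M^{(t)}\|_{\op,k}<R$. For any $k$-sparse unit $v$, apply the same mixture argument along $vv^\top$, using stability item 3 with $\ell\ge k$ to control $G'$, to get $\langle v,m-\mu\rangle\le\gamma+O(\sqrt{\eps(R-1+\Gamma)})$. The runtime is $n$ rounds, each costing one routine call plus score computation of $O(n(d+s^2+\ell))$, times the repetitions.

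The step I expect to be hardest is the inlier-score bound under the top-$2\eps$ cutoff for matrix certificates. There the mean-shift term must be controlled in the $\cX_{\sdp,q}$-dual norm, while stability item 1 only bounds $\|\cdot\|_{2,k}$. My first attempt is to use $\langle vv^\top,W\rangle\le\|W\|_1\|v\|_\infty^2$ together with the Cauchy--Schwarz bound through $\langle Q_B,W\rangle$, so that the mean error never needs to be measured in a norm stronger than the one the certificate itself controls.
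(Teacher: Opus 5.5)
Your overall architecture matches the paper's proof: a cutoff score-mass inequality feeding the randomized filtering lemma, a union bound over at most $n$ certificate calls at level $\delta_{\rm cert}/n$, a stopping argument for the \textsc{No} round, and boosting by $O(\log(1/\delta))$ repetitions with a pairwise $\|\cdot\|_{2,k}$ selection. However, there is a genuine gap at the step you flag as hardest, and your proposed fix does not close it.

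Write $N=|T^{(t)}|$, $\alpha=|B|/N$, $V=\langle M^{(t)},P\rangle$ and $H_t=G'\setminus L_t$. To bound the inlier cutoff mass you use $g(x)\le 2(x-\mu)^\top P(x-\mu)+2(m-\mu)^\top P(m-\mu)$. So you must control both $m-\mu_{G'}$ and $\mu_{G'}-\mu$ in the $P$-seminorm. Your Cauchy--Schwarz through the outliers handles only the first: $(m-\mu_{G'})^\top P(m-\mu_{G'})\le\frac{\alpha}{1-\alpha}V$, with no $\gamma$ involved. For the second, combining $\langle vv^\top,W\rangle\le\|W\|_1\|v\|_\infty^2$ with item~1 gives only $q\gamma^2$. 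The inlier cutoff mass then picks up a term of order $\eps q\gamma^2N$, which is not $O(\Gamma N)$ under the hypotheses. In the heavy-tailed application, $\gamma,\Gamma=\Theta(1)$, $q=k$ and $\eps$ is constant, so the bound is off by a factor of $k$. More fundamentally, the statement imposes no relation between $q$ and $k$, so no comparison of the $W$-seminorm with $\|\cdot\|_{2,k}$ can be uniform.

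The missing idea is that mean stability is not needed during filtering at all. Since $P\succeq0$, Jensen gives $(\mu_{G'}-\mu)^\top P(\mu_{G'}-\mu)\le\langle Q_{G'}(\mu),P\rangle\le1+\Gamma$. Hence $(m-\mu)^\top P(m-\mu)\le2(1+\Gamma)+\frac{2\alpha}{1-\alpha}V$. This is only $O(1)$, but it is paid only by the at most $|L_t|\le2\eps N$ inliers inside the cutoff set. Their total contribution is therefore $O(\eps N+\eps^2VN)$, and $\eps N\le\Gamma N$; this is exactly where the hypothesis $\Gamma\ge\eps$ enters. The remaining ingredients are:
\begin{itemize}
\item the two-sided stability bound on $G'$ and $H_t$, giving $\sum_{x\in G'\cap L_t}(x-\mu)^\top P(x-\mu)\le|G'|(1+\Gamma)-|H_t|(1-\Gamma)\le4\Gamma N$;
\item the outlier lower bound $\sum_{x\in B}g(x)\ge(1-\alpha)N(V-1-\Gamma)$;
\item $\sum_{x\in L_t}g(x)\ge\sum_{x\in B}g(x)$, since $|B|\le|L_t|$.
\end{itemize}
With these, the threshold $V>1+48\Gamma$ yields $\sum_B\tau\ge\sum_{G'}\tau$; this is the paper's cutoff score-mass lemma. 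Item~1 is used only at the stopping step, together with the lower half of item~3.

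A smaller issue concerns your invariant $|T^{(t)}\cap G|\ge(1-10\eps)|G|$. It is too weak, because $H_t$ could then have fewer than $(1-10\eps)|G|$ points, and stability would no longer apply to it. You need what the martingale argument actually delivers: at most $3|B_0|$ inliers are removed. This gives $|G'|\ge(1-4\eps)|G|$ and $|H_t|\ge(1-6\eps)|G|$.
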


\begin{proof}
We first prove that one run of \Cref{alg:robust-sparse-mean} with $\delta_{\rm cert}=1/24$ succeeds with probability at least $5/8$. 
Note that the algorithm makes at most $n$ certificate calls since one point is deleted with probability one. 
Let $\mathcal E_{\rm cert}$ be the event that no certificate call fails, by the union bound since we use $\delta_{\rm cert}/n$ failure probability at each call (Line \ref{line:run}) the union bound gives $\Pr[\mathcal E_{\rm cert}]\ge1-1/24$.

Write the initial sample multiset as $T=G_0\cup B_0$, where $G_0\subseteq G$, $|G_0|\ge(1-\eps)|G|$, and $|B_0|\le\eps|G|$.  At round $t$, write $T^{(t)}=G_t\cup B_t$.  Suppose that before this round at most $3|B_0|$ points of $G_0$ have been removed.  Then $|G_t|\ge(1-4\eps)|G|$ and $\alpha_t\eqdef |B_t|/|T^{(t)}|\le\eps/(1-4\eps)\le2\eps$.
In a nonterminal round define
\[
    \widetilde W=
    \begin{cases}
        W, & \text{for a matrix certificate},\\
        uu^\top, & \text{for a vector certificate}.
    \end{cases}
\]
In both cases $\widetilde W\succeq0$, $\operatorname{tr}(\widetilde W)=1$,
and $\langle M^{(t)},\widetilde W\rangle>1+48\Gamma$.
Let $L_t$ be the cutoff set used by the algorithm and put $H_t=G_t\setminus L_t$.  Because $|T^{(t)}|\le|T|=|G|$ and $|L_t|\le2\eps|T^{(t)}|$,
$|H_t|\ge(1-6\eps)|G|\ge(1-10\eps)|G|$.
Stability therefore applies to both $G_t$ and $H_t$.  For either choice of $\widetilde W$ it gives
\[
    \left|
    \frac1{|K|}\sum_{x\in K}(x-\mu)^\top\widetilde W(x-\mu)-1
    \right|\le\Gamma,
    \qquad K\in\{G_t,H_t\}.
\]

The set $B_t$ must contain at least one point in this round.  Indeed, if $B_t=\emptyset$, centering at the empirical mean can only decrease a positive-semidefinite quadratic second moment, so stability for $G_t$ would give $\langle M^{(t)},\widetilde W\rangle\le1+\Gamma$, a contradiction.  Consequently $|B_t|\ge1$, and $|B_t|\le2\eps|T^{(t)}|$ implies that $L_t$ is nonempty.  The cutoff score-mass inequality, \Cref{lem:score-mass-inequality}, now yields $\sum_{x\in B_t}\tau^{(t)}(x)\ge\sum_{x\in G_t}\tau^{(t)}(x)$.
Thus the hypothesis of the black-box randomized filtering lemma, \Cref{lem:randomized-score-filter}, holds in every relevant nonterminal round of the stopped process.  With probability at least $2/3$, at most $3|B_0|$ points from $G_0$ are removed throughout this process.  On $\mathcal E_{\rm cert}$, the stopped process coincides with the original run.  A union bound therefore shows that, with probability at least $2/3-1/24=5/8$, every certificate call is successful and at most $3|B_0|$ good points are removed.

On this event the algorithm terminates with a nonempty set $T'=G'\cup B'$ satisfying $|G'|\ge(1-4\eps)|G|$ and $\alpha\eqdef|B'|/|T'|\le2\eps$.  Its terminal \textsc{No} certificate gives $\|M_{T'}\|_{\op,k}<R$.  Applying \Cref{lem:stopping-implies-accuracy}, including the lower half of rank-one stability for $G'$, gives $\|\widehat\mu-\mu\|_{2,k}=O\!\left(\gamma+\sqrt{\eps\bigl(R-1+\Gamma\bigr)}\right)$.

The empirical mean costs $O(md)$ on a current set of size $m$.  Matrix-certificate scores cost $O(ms^2)$, vector-certificate scores cost $O(m\ell)$, and the largest $\lfloor2\eps m\rfloor$ scores can be selected in $O(m)$ time.  Summing over at most $n$ rounds gives
\[
    O\!\left(
        nT_{\mathrm{cert}}\!\left(n,d,k,q,\ell,s,\frac1{24n}\right)
        +n^2(d+s^2+\ell)
    \right).
\]
To obtain failure probability $\delta$, run \Cref{alg:robust-sparse-mean} independently $m=\lceil C'\log(1/\delta)\rceil$ times, for a sufficiently large constant $C'>0$, on the original input $T$.  Denote the resulting candidates as $\widehat\mu_1,\ldots,\widehat\mu_m$, and set $\Delta=C\left(\gamma+\sqrt{\eps(R-1+\Gamma)}\right)$,
where $C$ is a sufficiently large universal constant so that each candidate is within $\Delta$ of $\mu$ in $\|\cdot\|_{2,k}$ with probability at least $5/8$. By Hoeffding's inequality with probability at least $1-\delta$ we have that at least  $7m/12$ out of the $m$ candidates are $\Delta$ close to $\mu$ in $\|\cdot\|_{2,k}$.

Scan the candidates and return any $\widehat\mu_i$ such that at least $7m/12$ $\widehat\mu_j$ are within $2\Delta$  in $\|\cdot\|_{2,k}$ of it.
Then for some $\widehat\mu_j$ that $\|\widehat\mu_j-\mu\|_{2,k}\leq \Delta$ we have that $\|\widehat\mu_i-\widehat\mu_j\|_{2,k}\leq 2\Delta$. 
 By the triangle inequality we have that $\|\widehat\mu_i-\mu\|_{2,k}\leq3\Delta$,  with probability at least $1-\delta$.
Each sparse-norm can be computed in $O(d)$ time by selecting the $k$ largest squared coordinates of the difference.  Thus the candidate scan costs $O(dm^2)=O(d\log^2(1/\delta))$. Which concludes the proof of \Cref{thm:robust-sparse-mean-from-pca}.
\end{proof}

\begin{remark}[Sparse-direction error and approximate sparsity]
\label{rem:sparse-direction-error}
The filtering guarantees imply that, even when $\mu$ is not sparse, the
estimator is accurate along every $k$-sparse direction.

Moreover, suppose that
$
    \|\widetilde\mu-\mu\|_{2,k}\le \delta,
$
and define
$
    \widehat\mu:=\topp_k(\widetilde\mu),
    \mu_k:=\topp_k(\mu).
$
By \Cref{lem: ksparsenorm},
\[
\begin{aligned}
    \|\widehat\mu-\mu_k\|_2
    &\le \sqrt{6}\,
        \|\widetilde\mu-\mu_k\|_{2,k} \\
    &\le \sqrt{6}\,\delta
        +\sqrt{6}\,\|\mu-\mu_k\|_2.
\end{aligned}
\]
Consequently,
\[
    \|\widehat\mu-\mu\|_2
    \le
    \sqrt{6}\,\delta
    +(1+\sqrt{6})\|\mu-\mu_k\|_2.
\]
Thus $\widehat\mu$ is close to the best $k$-sparse approximation of
$\mu$, up to the sparse-direction estimation error. In particular, if
$\mu$ is $k$-sparse, then
$
    \|\widehat\mu-\mu\|_2
    \le \sqrt{6}\,
        \|\widetilde\mu-\mu\|_{2,k}.
$
\end{remark}
\subsection{Heavy-Tailed Distributions}
\label{sec:bdd-cov}

In this section, we prove \Cref{thm:informal-robust-mean} for heavy-tailed distributions.
As mentioned in the previous subsection,
we simply need to show that the samples satisfy the relevant stability condition from \Cref{def:stability}.

First, we make a slight simplification by restricting our attention to constant contamination of $\eps=0.01$, which is without loss of generality by the standard median-of-means pre-processing.
\begin{fact}[Median-of-Means Pre-Processing
{\cite[Fact~2.2]{DiaKLP22}}]
\label{fact:mom-preprocessing}
Suppose that there exists an efficient algorithm with the following
guarantee. Given $\sigma \in \mathbb{R}_{+}$ and a $0.01$-corrupted
set of
$
n \gg k^{4}\log d+\log(1/\tau)
$
samples from a distribution $D$ with mean $\mu$ and covariance
$\Sigma$ satisfying
$
\|\Sigma\|_{\op}\leq \sigma^2
$
and
$
\E_{x\sim D}\left[(x_j-\mu_j)^4\right]
    =O(\sigma^4),
    \text{for every }j\in[d],
$
the algorithm returns an estimate $\widehat{\mu}$ such that
$
\|\widehat{\mu}-\mu\|_{2,k}=O(\sigma)
$
with probability at least $1-\tau$.

Then there exists an efficient algorithm with the following guarantee.
Given $\epsilon\in(0,0.01]$ and an $\epsilon$-corrupted set of
$
n\gg
\frac{k^{4}\log d+\log(1/\tau)}{\epsilon}
$
samples from a distribution with mean $\mu$ and covariance $\Sigma$
satisfying
$
\|\Sigma\|_{\op}\leq 1
$
and
$
\E_{x\sim D}\left[(x_j-\mu_j)^4\right]
    =O(1), \text{for every }j\in[d],
$
the algorithm returns a mean estimate $\widehat{\mu}$ satisfying
$
\|\widehat{\mu}-\mu\|_{2,k}=O(\sqrt{\epsilon})
$
with probability at least $1-\tau$.
\end{fact}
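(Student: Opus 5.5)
This is the standard median-of-means reduction, and I would prove it by bucketing. Given an $\epsilon$-corrupted set of $n$ samples, set $g:=\lceil c\,\epsilon n\rceil$ for a suitably small constant $c$ chosen so that each bucket receives about $1/\epsilon$ samples. I would partition the samples uniformly at random into $g$ buckets of size $m=\lfloor n/g\rfloor\asymp 1/\epsilon$, and replace each bucket by its average $z_1,\dots,z_g$.

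Each uncorrupted bucket average is the mean of $m$ i.i.d.\ samples from $D$. It therefore has mean $\mu$ and covariance $\Sigma/m$, so $\|\Sigma/m\|_{\op}\le \epsilon/c'$; this sets $\sigma^2\asymp\epsilon$. I will let $D'$ denote the distribution of the average of $m$ independent samples from $D$.

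Next I would check the fourth-moment condition for $D'$. Write $y_i=x_{i,j}-\mu_j$, which are i.i.d.\ and centered. Expanding gives
\[
\E\Bigl[\Bigl(\frac1m\sum_i y_i\Bigr)^4\Bigr]
=\frac{m\,\E y^4+3m(m-1)(\E y^2)^2}{m^4}
=O(1/m^2)=O(\sigma^4).
\]
So $D'$ satisfies the hypotheses of the assumed algorithm with $\sigma^2=\Theta(\epsilon)$.

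I then need to control corruption. Since the corrupted points number at most $\epsilon n$, they can land in at most $\epsilon n\le g/(c)\cdot\epsilon^2\ldots$ buckets. More carefully: there are at most $\epsilon n$ corrupted points and $g=c\epsilon n$ buckets. For the corrupted-bucket fraction to be at most $0.01$, I choose $g=C\epsilon n$ with $C$ large, so that at most $\epsilon n\le g/C\le 0.01g$ buckets contain a corrupted point. This holds deterministically, whatever the partition and whatever adversary.

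The bucket averages $z_1,\dots,z_g$ then form a $0.01$-corrupted set of $g$ i.i.d.\ samples from $D'$. The clean bucket averages are i.i.d.\ from $D'$, because the partition is independent of the clean data; the adversary only alters the buckets it touches. The one subtlety, which I expect to be the main obstacle to state carefully, is that the adversary sees $S$ first. To handle this I would couple to a clean dataset: the uncorrupted version has i.i.d.\ bucket means, and the corrupted one differs in at most $0.01g$ of them, which is exactly the contamination model.

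Requiring $g\gg k^4\log d+\log(1/\tau)$ translates to $n\gg (k^4\log d+\log(1/\tau))/\epsilon$. Applying the assumed algorithm then yields $\|\widehat\mu-\mu\|_{2,k}=O(\sigma)=O(\sqrt{\epsilon})$ with probability $1-\tau$. The running time is the bucketing cost $O(nd)$ plus the cost of the assumed algorithm on $g\le n$ points, so the overall algorithm is efficient.
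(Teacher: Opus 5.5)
Your proposal is correct and follows the same route as the paper's sketch. In both, groups of size $\Theta(1/\epsilon)$ are averaged into $\Theta(\epsilon n)$ aggregate samples, each corrupted point spoils at most one aggregate, and the aggregates have covariance $O(\epsilon)I$ and coordinatewise fourth moments $O(\epsilon^2)$; the base algorithm then applies with $\sigma=\Theta(\sqrt{\epsilon})$ once $\Theta(\epsilon n)\gg k^4\log d+\log(1/\tau)$.

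Two small cleanups. First, drop the initial ``$c$ small'' choice and fix $g=\lceil 100\epsilon n\rceil$; you need $100\le C\le 1/\epsilon$, which $\epsilon\le 0.01$ allows. Second, justify the i.i.d.\ claim only through your coupling to the full set of uncorrupted bucket means, not by saying the clean buckets themselves are i.i.d.\ from $D'$. That sentence is false when the adversary's choice of which points to replace depends on $S$.
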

The above is the immediate $k^4$-sample variant of the median-of-means preprocessing argument in \cite[Fact~2.2]{DiaKLP22}.
Indeed, the preprocessing forms $\Theta(\eps n)$ aggregate samples by averaging groups of size $\Theta(1/\eps)$.  Each corrupted original sample affects at most one aggregate sample, so only a constant fraction of the aggregate samples are corrupted.  The uncorrupted aggregate distribution has covariance $O(\eps)I$ and coordinatewise fourth moments $O(\eps^2)$, so the constant-corruption algorithm applies with $\sigma=\Theta(\sqrt\eps)$.  Requiring $\Theta(\eps n)\gg k^4\log d+\log(1/\tau)$ gives the displayed sample bound.
Thus, in the remainder of this sub-section, we consider only the constant contamination regime.
For this regime, the relevant stability  guarantee was proved in \cite{DiaKLP22}.

\begin{fact}[Stability after preprocessing
{\cite[Theorem~4.1]{DiaKLP22}}]
\label{thm:stability-after-preprocessing}
Let $S$ be a set of $n$ independent samples from a distribution
$P$ on $\mathbb{R}^d$, and let $T$ be a $0.01$-corruption of $S$.
Fix an integer $\ell\ge1$.
Suppose that $P$ has mean $\mu$ and covariance $\Sigma$ satisfying
$
\|\Sigma\|_{\op}\leq 1
$
and that
$
\E_{x\sim P}\left[(x_i-\mu_i)^4\right]
    =O(1),
  \text{for every }i\in[d].
$
Suppose moreover that $
n=\Omega\left(\ell^2\log d+\log(1/\tau)\right)$.
Then, 
there exists an $O(nd)$-time algorithm that transforms
$T$ into a multiset $\widetilde{T}\subseteq\mathbb{R}^d$ of the same
cardinality. With probability at least $1-\tau$ over the draw of $S$,
simultaneously for every $0.01$-corruption $T$ of $S$, there exist a
subset
$
\widetilde{S}\subseteq\widetilde{T},
|\widetilde{S}|\geq 0.95n,
$
and a vector $\mu'\in\mathbb{R}^d$ such that, simultaneously for every $\widetilde S'\subseteq\widetilde S$ with $|\widetilde S'|\geq0.99|\widetilde S|$,
\[
\|\mu_{\widetilde S'}-\mu'\|_{2,\ell}\le O(1),
\qquad
\sup_{W\in\cX_{\sdp,\ell}}
\left|\left\langle Q_{\widetilde S'}(\mu')-I,W\right\rangle\right|
\le O(1),
\]
and
$
\|\mu'-\mu\|_\infty
    \leq O\left(\frac{1}{\sqrt{\ell}}\right).
$
\end{fact}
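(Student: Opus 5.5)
The plan is to re-derive this from the construction in \cite{DiaKLP22}, not to treat it purely as a black box. The transformation $T\mapsto\widetilde T$ has two steps. First, compute a robust coordinatewise location estimate $\widehat\mu_i$ for each $i\in[d]$: the coordinatewise median, or a median-of-means over $O(1)$ groups, which costs $O(nd)$ total. Second, clip every coordinate of every sample, replacing $x_i$ by $\widehat\mu_i+\operatorname{clip}(x_i-\widehat\mu_i,[-t,t])$ with truncation level $t=\Theta(\sqrt{\ell})$. The reference vector $\mu'$ is the mean of the clipped population distribution, that is, the population analogue of this map applied with the (random but fixed after conditioning) centers $\widehat\mu_i$. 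The set $\widetilde S$ consists of the clipped images of the uncorrupted samples, which are at least $0.99n\ge0.95n$ of them.

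\emph{Centers and bias.} Since $\Var(x_i)\le1$, a standard median argument with a union bound over $d$ coordinates gives $|\widehat\mu_i-\mu_i|=O(1)$ for all $i$ simultaneously, with probability $1-\tau/3$, once $n\gg\log(d/\tau)$. Here $0.01$-corruption shifts an empirical quantile only slightly. For the bias, write $y=x_i-\widehat\mu_i$. Then the clipping error satisfies $\bigl|\E[y-\operatorname{clip}(y)]\bigr|\le\E\bigl[|y|\Ind\{|y|>t\}\bigr]\le\E[y^2]/t=O(1/\sqrt\ell)$, which gives $\|\mu'-\mu\|_\infty=O(1/\sqrt\ell)$. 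The same computation shows that the clipped population covariance $\Sigma'$ satisfies $\Sigma'\preceq\Sigma+O(1/\ell)\cdot\mathbf 1\mathbf 1^\top$ entrywise-perturbed. More usefully, $\langle\Sigma',W\rangle\le\langle\Sigma,W\rangle+O(1)$ for every $W\in\cX_{\sdp,\ell}$: pair $\|W\|_1\le\ell$ against entrywise perturbations of size $O(1/\ell)$, which come from the fourth-moment bound, since $\E[|y_iy_j|\Ind\{|y_i|>t\}]\le\sqrt{\E y_i^4}\sqrt{\Pr[|y_i|>t]}\lesssim 1/t^2$.

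\emph{Covariance stability.} For the full set $\widetilde S$, I will bound $\bigl|\langle Q_{\widetilde S}(\mu')-\E Q,W\rangle\bigr|\le\|W\|_1\cdot\|Q_{\widetilde S}(\mu')-\E Q\|_{\max}$, where $\|\cdot\|_{\max}$ is the entrywise max. Each entry is an average of products $y_iy_j$ that are bounded by $O(t^2)=O(\ell)$ and have variance $O(1)$ by the fourth-moment assumption. Bernstein's inequality with a union bound over $d^2$ entries gives deviation $O\bigl(\sqrt{\log(d/\tau)/n}+\ell\log(d/\tau)/n\bigr)=O(1/\ell)$ once $n\gg\ell^2\log d+\log(1/\tau)$. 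Combined with $\|\Sigma\|_{\op}\le1$ and the bias bound, this yields $\langle Q_{\widetilde S}(\mu'),W\rangle\le1+O(1)$.

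\emph{Passing to subsets.} For subsets $\widetilde S'\subseteq\widetilde S$ with $|\widetilde S'|\ge0.99|\widetilde S|$, the upper bound survives because every summand $(x-\mu')^\top W(x-\mu')$ is nonnegative. Removing points and renormalizing therefore inflates the average by at most a factor $1/0.99$. The lower bound $\langle Q_{\widetilde S'}(\mu')-I,W\rangle\ge-1$ is automatic from $W\succeq0$ and $\operatorname{tr}W=1$. For the mean, fix an $\ell$-sparse unit $v$ and write $\mu_{\widetilde S'}-\mu'$ as the full-set deviation plus a correction from the removed $1\%$. The full-set term is $O(1)$ by the same entrywise Bernstein argument applied to coordinates, together with $\|v\|_1\le\sqrt\ell$ and per-coordinate deviation $O(1/\sqrt\ell)$. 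The removed part is, by Cauchy--Schwarz, at most $O\bigl(\sqrt{0.01\cdot v^\top Q_{\widetilde S}(\mu')v}\bigr)=O(1)$, since $vv^\top\in\cX_{\sdp,\ell}$.

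I expect the main obstacle to be the "simultaneously for every $0.01$-corruption" quantifier. The centers $\widehat\mu_i$ depend on the adversary, so $\mu'$ and the clipped good samples are not fixed in advance. I would handle this by running all the concentration statements uniformly over a net of candidate centers $c\in\prod_i[\mu_i-C,\mu_i+C]$. The concentration is entrywise and only pairs of coordinates matter, so this needs a one-dimensional net in each coordinate pair, costing only polylogarithmic factors, which the $\log d$ term absorbs. The clipping map is Lipschitz in the center, so the bounds extend from the net to all admissible centers.
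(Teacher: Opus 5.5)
The paper does not reprove this fact; it imports it from \cite[Theorem~4.1]{DiaKLP22}. Your reconstruction has a genuine gap at the quantitative core of the statement: the additive $\log(1/\tau)$ in the sample size.

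Each entry of $Q_{\widetilde S}(\mu')$ averages terms of range $O(t^2)=O(\ell)$ and variance $O(1)$. Bernstein plus a union bound therefore gives deviation $O\bigl(\sqrt{\log(d/\tau)/n}+\ell\log(d/\tau)/n\bigr)$, which is $O(1/\ell)$ only when $n\ge C\ell^2\log(d/\tau)$. Your claim that it is $O(1/\ell)$ once $n\gg\ell^2\log d+\log(1/\tau)$ is false. You have proved only the weaker statement with $n=\Omega(\ell^2\log(d/\tau))$, which would also weaken the downstream sample bound in \Cref{lem:bounded-covariance-from-sdp-pca}.

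Sharper concentration cannot fix this, because with $\widetilde S$ equal to all clipped clean samples the conclusion itself fails at the stated sample size. Let $x=\sqrt\ell\,\Ind_{[\ell]}$ with probability $p=\ell^{-2}$ and $x=0$ otherwise. Then $\|\Sigma\|_{\op}\le1$, the coordinatewise fourth moments are $O(1)$, and the medians are $0$. Take $v=\Ind_{[\ell]}/\sqrt\ell$, so $vv^\top\in\cX_{\sdp,\ell}$. Clipping at $\Theta(\sqrt\ell)$ leaves $\langle v,\tilde x-\mu'\rangle^2=\Theta(\ell^2)$ on the spike. Hence $v^\top Q_{\widetilde S}(\mu')v=\Theta(\ell^2N/n)$ with $N\sim\mathrm{Bin}(n,p)$, and this exceeds any fixed constant with probability at least $\exp(-O(n/\ell^2))$. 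When $n\asymp\log(1/\tau)\gg\ell^2\log d$, that probability is $\tau^{O(1/\ell^2)}\gg\tau$, already for the trivial corruption $T=S$.

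The missing idea is to use the slack built into the statement: $\widetilde S$ need only contain $0.95n$ points, so you may discard the few clean points with large sparse projections. The route is as follows.
\begin{enumerate}
\item Apply minimax duality between fractional weights that remove a small constant fraction of points and $W\in\cX_{\sdp,\ell}$. The objective is bilinear over convex compact sets.
\item Round the optimal weights to a subset. Existence of such an $\widetilde S$ then reduces to a uniform bound on the truncated process $\sup_W\frac1n\sum_i\min\{y_i^\top Wy_i,C\}$, where $y_i=\tilde x_i-\mu'$.
\item The summands are bounded by a constant, so bounded-differences (or Talagrand) concentration costs only $O(\sqrt{\log(1/\tau)/n})$.
\item Symmetrization and contraction bound the expected uniform deviation from the population value, which is at most $\langle\Sigma',W\rangle=O(1)$, by $\ell\,\E\max_{a,b}\bigl|\frac1n\sum_i\sigma_iy_{i,a}y_{i,b}\bigr|=O\bigl(\ell\sqrt{\log d/n}+\ell^2\log d/n\bigr)$. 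This is what decouples $\ell^2$ from $\log(1/\tau)$.
\item The mean bound then follows directly from $\langle v,\mu_{\widetilde S'}-\mu'\rangle^2\le v^\top Q_{\widetilde S'}(\mu')v$, with no separate concentration argument.
\end{enumerate}
The rest of your outline is sound and carries over: the median centers, the $O(1/\sqrt\ell)$ clipping bias, the $O(1/\ell)$ entrywise population perturbation paired with $\|W\|_1\le\ell$, passage to subsets by nonnegativity, and the net over centers for the uniform-in-corruption quantifier.
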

Plugging in the above stability condition in the framework from \Cref{sec:reduction-sparse-pca}, we obtain the following tradeoff:
\begin{lemma}[Bounded-covariance sparse mean estimation from certificates]
\label{lem:bounded-covariance-from-sdp-pca}
There exists a sufficiently large universal constant $\beta>0$ such that the following holds.
Let $\ell\ge k$ and $s\ge1$.  Suppose that, for some $R>\beta$,
$\mathcal A_{\mathrm{cert}}$ is a $(k,k,\ell,s,R,\beta)$-sparse PCA
certificate routine in the sense of \Cref{def:sparse-pca-certificate-routine},
with runtime $T_{\mathrm{cert}}(m,d,k,k,\ell,s,\rho)$ on a set of size $m$ and
failure probability $\rho$.

Let $\eps\in(0,0.01]$ and $\delta\in(0,1)$, and let $T$ be an $\eps$-corruption of $n$ independent
samples from a distribution $P$ with mean $\mu$ and covariance $\Sigma$
satisfying $\|\Sigma\|_{\op}\le1$ and $\E_{x\sim P}\left[(x_j-\mu_j)^4\right]=O(1)$ for every $j\in[d]$.
If $n\gg(\ell^2\log d+\log(1/\delta))/\eps$,
then there exists an algorithm returning an estimator $\widehat\mu$ satisfying
$
    \|\widehat\mu-\mu\|_{2,k}
    =O\!\left(\sqrt{R\eps}\right)
$
with probability at least $1-\delta$.  
The runtime of the algorithm is
\[
    O\!\left(
        \eps nT_{\mathrm{cert}}\!\left(O(\eps n),d,k,k,\ell,s,\frac1{24n}\right)
            \log(1/\delta)
        +nd+\eps^2n^2(d+s^2+\ell)\log(1/\delta)
            +d\log^2(1/\delta)
    \right).
\]
\end{lemma}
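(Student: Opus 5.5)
The plan is to reduce to the constant-contamination regime, obtain stability from the preprocessing of \Cref{thm:stability-after-preprocessing}, and feed that stability into \Cref{thm:robust-sparse-mean-from-pca}. First apply the median-of-means reduction of \Cref{fact:mom-preprocessing}. Bucket the $n$ samples into $m=\Theta(\eps n)$ groups of size $\Theta(1/\eps)$ and average each group; this costs $O(nd)$ time. The resulting aggregate dataset is a $0.01$-corruption of $m$ i.i.d.\ samples from a distribution with covariance $\sigma^2 I$-bounded, where $\sigma^2=\Theta(\eps)$, and with coordinatewise fourth moments $O(\sigma^4)$. The hypothesis $n\gg(\ell^2\log d+\log(1/\delta))/\eps$ gives $m\gg \ell^2\log d+\log(1/\delta)$. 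After rescaling by $1/\sigma$, it therefore suffices to prove the claim with $\eps$ replaced by $0.01$, with target error $O(\sqrt R)$, and then multiply by $\sigma=\Theta(\sqrt\eps)$.

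In the constant-corruption regime, apply \Cref{thm:stability-after-preprocessing} with sparsity parameter $\ell$ (note $\ell\ge k$) and failure probability $\delta/2$. This yields, in $O(md)$ time, a transformed multiset $\widetilde T$, a good subset $\widetilde S$ with $|\widetilde S|\ge 0.95m$, and a center $\mu'$. The resulting bounds give the stability of \Cref{def:stability} with parameters $(\eps_0,\gamma,\Gamma,k,k,\ell)$, where $\eps_0$ is a small constant and $\gamma,\Gamma=O(1)$, with respect to $\mu'$:
\begin{itemize}
\item mean closeness holds because $\|\cdot\|_{2,k}\le\|\cdot\|_{2,\ell}$;
\item the SDP condition at $q=k$ follows from the one at $\ell$, since $\cX_{\sdp,k}\subseteq\cX_{\sdp,\ell}$;
\item the vector condition at sparsity $\ell$ follows because $uu^\top\in\cX_{\sdp,\ell}$ whenever $u$ is an $\ell$-sparse unit vector.
\end{itemize}
The subset-size requirement $|G'|\ge(1-10\eps_0)|G|$ of \Cref{def:stability} must be matched with the $0.99|\widetilde S|$ guarantee. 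We also need to view $\widetilde T$ as an $\eps_0$-corruption of $\widetilde S$ padded to full size. Both are arranged by taking $\eps_0$ a suitable constant, say $0.06$ after adjusting for $|\widetilde S|\ge0.95m$. Here the constant $\eps_0\le 1/18$ is required in \Cref{thm:robust-sparse-mean-from-pca}, and some care with constants is needed. We may also enlarge $\Gamma$ so that $\Gamma\ge\eps_0$.

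Set $\beta$ to be a sufficiently large constant, at least $\beta_\Gamma=1+48\Gamma$. A $(k,k,\ell,s,R,\beta)$-routine is also a $(k,k,\ell,s,R,\beta_\Gamma)$-routine, because a certificate of value greater than $\beta$ exceeds $\beta_\Gamma$. Invoking \Cref{thm:robust-sparse-mean-from-pca} with failure probability $\delta/2$ gives
\[
\|\widehat\mu-\mu'\|_{2,k}=O\bigl(\gamma+\sqrt{R-1+\Gamma}\bigr)=O(\sqrt R),
\]
using $R>\beta\ge1$ and $\gamma,\Gamma=O(1)$. To return to $\mu$, bound $\|\mu'-\mu\|_{2,k}\le\sqrt k\|\mu'-\mu\|_\infty=O(\sqrt{k/\ell})=O(1)$, which holds since $\ell\ge k$. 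The triangle inequality then gives $O(\sqrt R)$ in normalized units, and hence $O(\sqrt{R\eps})$ after rescaling. A union bound over the two failure events gives success probability at least $1-\delta$.

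For the runtime, the preprocessing costs $O(nd)$. The call to \Cref{thm:robust-sparse-mean-from-pca} on a set of size $m=O(\eps n)$ costs
\[
O\bigl(mT_{\mathrm{cert}}(m,\dots,1/(24m))\log(1/\delta)+m^2(d+s^2+\ell)\log(1/\delta)+d\log^2(1/\delta)\bigr).
\]
This matches the stated bound, using that $T_{\mathrm{cert}}$ is monotone in the failure parameter ($1/(24n)\le1/(24m)$). One caveat: the certificate routine sees the empirical covariance of the rescaled data, so the rescaling must be applied to the data before calling $\mathcal A_{\mathrm{cert}}$. The main obstacle is the bookkeeping that translates \Cref{thm:stability-after-preprocessing} (subsets of $\widetilde S$, a center $\mu'$, and a $0.95$ fraction) into the exact form of \Cref{def:stability} and the corruption model of \Cref{thm:robust-sparse-mean-from-pca}.
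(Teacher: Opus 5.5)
\textbf{Verdict: there is a genuine gap.} Most of your outline matches the paper's proof: the median-of-means reduction, stability at sparsity $\ell$, the inclusions $\cX_{\sdp,k}\subseteq\cX_{\sdp,\ell}$ and $uu^\top\in\cX_{\sdp,\ell}$, the bound $\|\mu'-\mu\|_{2,k}=O(\sqrt{k/\ell})$, and the runtime accounting. The gap is the step you set aside as ``bookkeeping'', and the fix you suggest, choosing $\eps_0$ appropriately, cannot work.

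\textbf{Why the constants cannot be reconciled.} The preprocessing only guarantees $|\widetilde S|\ge 0.95m$. So viewing $\widetilde T$ as an $\eps_0$-corruption of a full-size reference set $G$ forces $\eps_0\ge 0.05$. Then \Cref{def:stability} requires control over every $G'\subseteq G$ with $|G'|\ge(1-10\eps_0)|G|$, which allows deleting half of $G$. The proof of \Cref{thm:robust-sparse-mean-from-pca} really uses this: it applies stability to sets $H_t$ of size only about $(1-6\eps_0)|G|=0.7|G|$. By contrast, \Cref{thm:stability-after-preprocessing} covers only subsets of $\widetilde S$ of size at least $0.99|\widetilde S|$. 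It also says nothing about subsets containing the padding points. No value of $\eps_0$ satisfies both $\eps_0\ge 0.05$ and $10\eps_0\lesssim 0.01$. Separately, your suggested $\eps_0=0.06$ already violates the requirement $\eps_0\le 1/18\approx0.0556$ in \Cref{thm:robust-sparse-mean-from-pca}.

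\textbf{The missing idea: positivity suffices for $O(1)$ stability.} Only constant-level stability is needed here, and nonnegativity of the quadratic forms provides it.
\begin{itemize}
\item Pad $\widetilde S$ with $m-|\widetilde S|$ copies of $\mu'$ and take $\eps_0=0.05$.
\item The padding points contribute nothing to $Q_G(\mu')$. Hence for every $W\in\cX_{\sdp,\ell}$,
$\langle Q_G(\mu'),W\rangle=\tfrac{|\widetilde S|}{m}\langle Q_{\widetilde S}(\mu'),W\rangle\le 1+\Gamma_0$.
\item Now take any $G'\subseteq G$ with $|G'|\ge|G|/2$. Each summand $(x-\mu')^\top W(x-\mu')$ is nonnegative, so
$0\le\langle Q_{G'}(\mu'),W\rangle\le 2\langle Q_G(\mu'),W\rangle\le 2(1+\Gamma_0)$.
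This gives $|\langle Q_{G'}(\mu')-I,W\rangle|\le 1+2\Gamma_0$.
\item Taking $W=uu^\top$ gives the vector condition at sparsity $\ell$.
\item Cauchy--Schwarz, $\langle u,\mu_{G'}-\mu'\rangle^2\le u^\top Q_{G'}(\mu')u$, gives mean stability with $\gamma_1=\sqrt{2(1+\Gamma_0)}$.
\end{itemize}
This establishes $(0.05,\gamma_1,1+2\Gamma_0,k,k,\ell)$-stability. With it, the rest of your argument goes through with $\beta=1+48(1+2\Gamma_0)$.
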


\begin{proof}
We construct the algorithm by applying \Cref{fact:mom-preprocessing} to the constant-corruption routine obtained from \Cref{thm:stability-after-preprocessing} and \Cref{thm:robust-sparse-mean-from-pca}.  By \Cref{fact:mom-preprocessing}, it  suffices first to consider corruption level $\eps=0.01$.  By
\Cref{thm:stability-after-preprocessing}, after preprocessing, with
probability at least $1-\delta/2$, the transformed sample set $\widetilde T$
contains a subset
$
\widetilde{S}\subseteq\widetilde{T},
|\widetilde{S}|\geq 0.95n,
$
and a vector $\mu'\in\mathbb{R}^d$ such that $\|\mu'-\mu\|_\infty
    \leq O\left({1}/{\sqrt{\ell}}\right)$, and constants $\gamma_0,\Gamma_0=O(1)$ such that, simultaneously for every $\widetilde S'\subseteq\widetilde S$ with $|\widetilde S'|\geq0.99|\widetilde S|$,
\[
\|\mu_{\widetilde S'}-\mu'\|_{2,\ell}\le\gamma_0,
\qquad
\sup_{W\in\cX_{\sdp,\ell}}
\left|\left\langle Q_{\widetilde S'}(\mu')-I,W\right\rangle\right|
\le\Gamma_0.
\]
Now we show that the subset $\widetilde  S$ can be thought of as an $\eps'\eqdef 0.05$ corruption of $\widetilde T$ to apply \Cref{thm:robust-sparse-mean-from-pca}.
 For analysis purposes, form a reference multiset $G$ by adjoining $n-|\widetilde S|$ copies of $\mu'$ to $\widetilde S$.  Then $|G|=n$ and by replacing the added points it is an $\eps'$ corruption of $\widetilde T$. 
 We verify the stability condition of \Cref{def:stability} for $G$. Mainly, for every $W\in\cX_{\sdp,\ell}$ we have
\[
    \langle Q_G(\mu'),W\rangle
    =\frac{|\widetilde S|}{n}
      \langle Q_{\widetilde S}(\mu'),W\rangle
    \le1+\Gamma_0.
\]
For every $G'\subseteq G$ with
$|G'|\ge(1-10\eps')|G|=n/2$, positive semidefiniteness gives
\[
    0\le\langle Q_{G'}(\mu'),W\rangle
    \le\frac{n}{|G'|}\langle Q_G(\mu'),W\rangle
    \le2(1+\Gamma_0).
\]
Consequently, with $\Gamma_1\eqdef 1+2\Gamma_0$,
\[
    \sup_{W\in\cX_{\sdp,\ell}}
    \left|\left\langle Q_{G'}(\mu')-I,W\right\rangle\right|
    \le\Gamma_1.
\]
If $u$ is an $\ell$-sparse unit vector, then $uu^\top\in\cX_{\sdp,\ell}$.  Cauchy--Schwarz therefore yields
\[
\begin{aligned}
    |\langle u,\mu_{G'}-\mu'\rangle|^2
    &\le u^\top Q_{G'}(\mu')u
    \le2(1+\Gamma_0),\\
    \|\mu_{G'}-\mu'\|_{2,\ell}
    &\le\gamma_1\eqdef\sqrt{2(1+\Gamma_0)}.
\end{aligned}
\]
Since $k\le\ell$ and $\cX_{\sdp,k}\subseteq\cX_{\sdp,\ell}$, we have that $G$ is $(\eps',\gamma_1,\Gamma_1,k,k,\ell)$-stable with respect to $\mu'$.  

Since $\eps'<1/18$, set $\beta\eqdef1+48\Gamma_1$ and apply the routine of \Cref{thm:robust-sparse-mean-from-pca} to $\widetilde T$ with the given $(k,k,\ell,s,R,\beta)$-sparse PCA certificate routine and failure probability $\delta/2$.  Conditional on the preprocessing stability event, the routine returns an estimator $\widehat\mu$ such that, with probability at least $1-\delta/2$,
\[
    \|\widehat\mu-\mu'\|_{2,k}
    =O\!\left(\gamma_1+\sqrt{\eps'(R-1+\Gamma_1)}\right)
    =O(\sqrt R).
\]
Moreover,
\[
    \|\mu'-\mu\|_{2,k}
    \le
    \sqrt{k}\,\|\mu'-\mu\|_\infty
    =O\!\left(\sqrt{\frac{k}{\ell}}\right)
    =O(1),
\]
so $\|\widehat\mu-\mu\|_{2,k}=O(\sqrt R)$ for constant corruption.  A union bound over the preprocessing failure and the conditional estimation failure gives overall success probability at least $1-\delta$.

Finally, this is the constant-corruption algorithm required by the median-of-means preprocessing argument, with the constant error multiplied by $\sqrt R$.  Using aggregate sample size $\Theta(\ell^2\log d+\log(1/\delta))$ gives the claimed $O(\sqrt{R\eps})$ guarantee and sample complexity $(\ell^2\log d+\log(1/\delta))/\eps$.
The reduction (\Cref{fact:mom-preprocessing}) forms one collection of $O(\eps n)$ aggregate samples by averaging groups of size $\Theta(1/\eps)$ and rescales them before applying the constant-corruption routine.
Forming the averages (\Cref{fact:mom-preprocessing}) and preprocessing costs (\Cref{thm:stability-after-preprocessing}) $O(nd)$.
Moreover, the application of \Cref{thm:robust-sparse-mean-from-pca} uses a sample set of $O(\eps n)$ points.
Summing all costs proves the final running time.

\end{proof}
In particular, we obtain the following quadratic-time algorithm.
\begin{corollary}[Quadratic-Time Heavy-Tailed Robust Sparse Mean Estimation]
\label{cor:robust-sparse-mean-sdp-tradeoff}
For every $L\ge1$, there exists an algorithm with the following guarantee.
There is a sufficiently large universal constant $C$ such that the following holds.
Let $\eps\in(0,0.01]$, let $\delta\in(0,1)$, and let $T$ be an $\eps$-corruption of $n$ independent
samples from a distribution $P$ on $\R^d$ with mean $\mu$ and covariance
$\Sigma$ satisfying
$\|\Sigma\|_{\op}\le1$ and $
    \E_{x\sim P}\left[(x_j-\mu_j)^4\right]=O(1)
    \quad\text{for every }j\in[d]$.
If
$
    n\gg
    \frac{k^4\log d+\log(1/\delta)}{\eps},
$
then the algorithm returns $\widehat\mu$ such that
\[
    \|\widehat\mu-\mu\|_{2,k}
    =O\!\left(k^{1/(2L)}\sqrt\eps\right)
\]
with probability at least $1-\delta$, running in time
$O(d^2k^{O(L)}\log^2(d)\log^3(1/\delta)/\eps^2)$.
\end{corollary}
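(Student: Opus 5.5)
The plan is to instantiate \Cref{lem:bounded-covariance-from-sdp-pca} with the semidefinite branch of \Cref{cor:parameterized-radius-sparse-pca} as the certificate routine. I will take $\ell = C k^2$, matrix sparsity $q=k$, and $\beta$ the universal constant from \Cref{lem:bounded-covariance-from-sdp-pca}. I will also choose $R \asymp \beta k^{1/(2L)}$, large enough that the radius-reduction condition of \Cref{cor:parameterized-radius-sparse-pca} holds.

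\textbf{Parameters.} Take $\tau = 1/(C_0 k)$, $\eta = 1/10$, and $r = \lceil 4\beta^2/\tau^2\rceil = O(k^2)$, so that $r\tau^2>\beta^2$. With these choices $\tau k = 1/C_0$. If $R \ge \tau k + (\beta+\eta+\tau k)k^{1/(2L)}$, which holds for $R = 3\beta k^{1/(2L)}$, then
\[
\frac{R-\tau k}{k^{1/(2L)}}-\tau k \ge \beta+\eta .
\]

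\textbf{The certificate routine.} Given a multiset of size $m$, form its empirical covariance $\Sigma\succeq0$ in $O(md^2)$ time; since $\Sigma\succeq0$, all local-PSD hypotheses are automatic. Run the SDP branch of \Cref{cor:parameterized-radius-sparse-pca}; this branch is deterministic, so the failure probability is trivial. If it returns an $r$-sparse vector with value $>\beta$, output it as a vector certificate with $\ell=r=O(k^2)$. If it returns $X\in\cX_{\sdp,k}$ with $\langle\Sigma,X\rangle\ge\beta$, output it as a matrix certificate supported on one candidate support, so $s\le (r+1)^L=k^{O(L)}$. The strict inequality in \Cref{def:sparse-pca-certificate-routine} can be restored by running with threshold slightly above $\beta$. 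If the routine returns nothing, the contrapositive of \Cref{cor:parameterized-radius-sparse-pca} gives $\|\Sigma\|_{\op,k}<R$, so we output \textsc{No}. The routine is therefore a $(k,k,\ell,s,R,\beta)$-certificate routine with
\[
T_{\rm cert}=O\!\left(md^2 + dk^{O(L)}\right).
\]

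\textbf{Conclusion of the argument.} Plug this routine into \Cref{lem:bounded-covariance-from-sdp-pca}. The sample requirement $\ell^2\log d/\eps$ becomes $k^4\log d/\eps$, matching the hypothesis. The error bound is
\[
O\!\left(\sqrt{R\eps}\right)=O\!\left(k^{1/(4L)}\sqrt\eps\right)\le O\!\left(k^{1/(2L)}\sqrt\eps\right).
\]
For the runtime, substitute $m=O(\eps n)$, $n\approx k^4\log d\log(1/\delta)/\eps$, and $s^2=k^{O(L)}$ into the bound of \Cref{lem:bounded-covariance-from-sdp-pca}. The dominant term is
\[
\eps n\cdot \eps n\, d^2\log(1/\delta)=O\!\left(k^8\log^2 d\,\log^3(1/\delta)\, d^2\right).
\]
The $1/\eps^2$ factor in the claimed bound comes from the $\eps^2n^2(d+s^2+\ell)$ term and from the $n$ in $nd$. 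All $\poly(k)$ factors are absorbed into $k^{O(L)}$.

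\textbf{Main obstacle.} The delicate step is the runtime accounting. Forming the empirical covariance naively costs $md^2$ per round, and this is multiplied by up to $m$ rounds. I must check that the resulting $(\eps n)^2 d^2$ fits inside $d^2k^{O(L)}\log^2 d\,\log^3(1/\delta)/\eps^2$; it does, since $\eps n = \poly(k)\log d\log(1/\delta)$. The $\log^3(1/\delta)$ arises as two factors of $\log(1/\delta)$ from $n^2$ and one from the repetition in \Cref{thm:robust-sparse-mean-from-pca}. The remaining check is that the constant $\beta$ is compatible with the absolute constants $C_0$ and $C$ used in choosing $\tau$ and $\ell$.
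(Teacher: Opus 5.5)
Your proposal is correct and follows essentially the same route as the paper: instantiate the SDP branch of \Cref{cor:parameterized-radius-sparse-pca} with $\tau\asymp 1/k$, $r=\Theta(k^2)$, $\eta=1/10$ and $R\asymp\beta k^{1/(2L)}$ to obtain a $(k,k,O(k^2),k^{O(L)},R,\beta)$ certificate routine, then plug it into \Cref{lem:bounded-covariance-from-sdp-pca} with $\ell=O(k^2)$. You are slightly more careful than the paper about the strict inequality and the explicit $O(md^2)$ cost of forming the covariance; one remark is off, though, since the $\eps^2n^2(\cdot)$ term is $(\eps n)^2(\cdot)$ and carries no $1/\eps$, so your runtime is simply at most the claimed bound rather than producing the $1/\eps^2$ factor.
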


\begin{proof}
It remains to construct a sparse-PCA routine satisfying the
assumptions of \Cref{lem:bounded-covariance-from-sdp-pca}.
Let $C'>0$ be a sufficiently large constant. 
We use the algorithm from \Cref{cor:parameterized-radius-sparse-pca} with search-radius parameter $L$, target quadratic form parameter $R=C'\beta k^{{1}/{(2L)}}$, output quadratic-form parameter $\beta$, where $\beta$ is a sufficiently large constant, approximation error $\eta=0.1$, threshold parameter $\tau=1/(C'k)$ and $r=\lceil (C')^3k^2\rceil$.  
For suitable constants $\beta$ and $C'$, these parameters satisfy
the assumptions of \Cref{cor:parameterized-radius-sparse-pca}; in particular,
$(R-\tau k)/k^{1/(2L)}-\tau k\geq\beta+\eta$ and
$r\tau^2\geq\beta^2$.
The PSD assumptions in the SDP conclusion of \Cref{cor:parameterized-radius-sparse-pca} are automatic for the current empirical covariance $M\succeq0$.

Therefore, on input $M$, the sparse-PCA routine either certifies that
$\|M\|_{\op,k}<R$ or returns one of the following two certificates: either a matrix certificate
$W\in\cX_{\sdp,k}$ supported on $k^{O(L)}$ coordinates with
$\langle W,M\rangle\geq \beta$, or a unit vector certificate $u$ with
$\|u\|_0\le Ck^2$ and $u^\top M u\geq \beta$.
Thus this is a $(k,k,Ck^2,k^{O(L)},R,\beta)$ sparse PCA certificate routine in the sense of \Cref{def:sparse-pca-certificate-routine}.

Now since $\beta$ is sufficiently large we can apply \Cref{lem:bounded-covariance-from-sdp-pca} with $\ell=Ck^2$ and get a mean estimation algorithm that achieves error $O\!\left(k^{1/(2L)}\sqrt\eps\right)$. 
Since $\ell^2=C^2k^4$, the sample complexity required by \Cref{lem:bounded-covariance-from-sdp-pca} is $n=O\left((k^4\log(d)+\log(1/\delta))/\eps\right)$.  Substituting this sample size into the runtime bound of that lemma, together with \Cref{cor:parameterized-radius-sparse-pca}, gives $d^2k^{O(L)}\log^2(d)\log^3(1/\delta)/\eps^2$; the additional polynomial factors in $k$ are absorbed by $k^{O(L)}$ because $L\ge1$.  
\end{proof}
Replacing the quadratic-time sparse PCA variant with the subquadratic-time algorithm,
we obtain the following result.
\begin{corollary}[Subquadratic-Time Heavy-Tailed Robust Sparse Mean Estimation]
\label{cor:subquadratic-robust-sparse-mean-tradeoff}
For every integer $L\ge1$, and for every sufficiently large universal constant
$\gamma>0$, there exists an algorithm with the following guarantee.
Let $\eps\in(0,0.01]$, let $\delta\in(0,1)$, and let $T$ be an
$\eps$-corruption of $n$ independent samples from a distribution $P$ on
$\R^d$ with mean $\mu$ and covariance $\Sigma$ satisfying
    $\|\Sigma\|_{\op}\le1$ and $
    \E_{x\sim P}\left[(x_j-\mu_j)^4\right]=O(1)
    \quad\text{for every }j\in[d]$.
If
    $n\gg
    \frac{k^{O(\gamma)}\log d+\log(1/\delta)}{\eps}$,
then the algorithm returns $\widehat\mu$ such that
\[
    \|\widehat\mu-\mu\|_{2,k}
    =O\!\left(k^{1/(2L)}\sqrt\eps\right)
\]
with probability at least $1-\delta$.  Its running time is
\[
O\!\left(
    \left(
        d^{1.72}
        +
        d^{1.62+O(1/\gamma)}
    \right)
    k^{O(L+\gamma)}
    \operatorname{polylog}(d,1/\delta)
    \operatorname{poly}(1/\eps)
\right)\!.
\]
\end{corollary}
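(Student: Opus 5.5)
The plan mirrors the proof of \Cref{cor:robust-sparse-mean-sdp-tradeoff}. The only change is the certificate routine: we build it from \Cref{cor:subquadratic-sdp} (via \Cref{thm:subquadratic-correlation-detection-pca}) instead of \Cref{cor:parameterized-radius-sparse-pca}, and then feed it to \Cref{lem:bounded-covariance-from-sdp-pca}. The current filtered set $T^{(t)}$ is available as raw samples. Its empirical covariance $M$ about $m^{(t)}$ is therefore an empirical covariance of the centered points $x-m^{(t)}$, which is exactly the sample-access model of \Cref{fact:cordetect}. Since $M\succeq 0$, every $s$-PSD hypothesis holds automatically.

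\textbf{Step 1: parameters.} Let $\beta$ be the large constant of \Cref{lem:bounded-covariance-from-sdp-pca} and set $R=C'\beta k^{1/(2L)}$. Take $\eta=0.1$, $\rho=1/(C'k)$ and $\tau=(C'k)^{-\gamma}$, so that $\rho>12\tau$. This gives $r_{\mathrm{sparse}}=O(k^2)$ and $r_{\mathrm{dense}}=k^{O(\gamma)}$.

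\textbf{Step 2: a bounded-radius witness.} Suppose $\|M\|_{\op,k}\ge R$. Apply \Cref{lem:quantitative-bounded-radius-reduction} at threshold $\rho$ and radius $L$. Since $\rho k=1/C'$, it yields a set $T$ of $\rho$-radius at most $L$ with
\[
\lambda_{\max}(M_{T,T})\ge (R-\rho k)/k^{1/(2L)}-\rho k\ge\beta+\eta .
\]
A leading eigenvector of $M_{T,T}$ is then the witness required by \Cref{cor:subquadratic-sdp}.

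\textbf{Step 3: the certificate routine.} \Cref{cor:subquadratic-sdp} returns either an $r_{\mathrm{dense}}$-sparse vector of value $>\beta$, or some $W\in\cX_{\sdp,k}$ supported on $(r_{\mathrm{sparse}}+1)^L=k^{O(L)}$ coordinates with value $\ge\beta$. Adjusting $\beta$ by a constant turns "$\ge$" into "$>$". Hence the routine is a $(k,k,\ell,s,R,\beta)$ certificate routine with $\ell=k^{O(\gamma)}$ and $s=k^{O(L)}$, and if it returns nothing we output \textsc{No}. The hypothesis $d^{1.1}\ge 2dr_{\mathrm{dense}}$ needs $d^{0.1}\ge k^{O(\gamma)}$. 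If that fails, $k$ is polynomial in $d$, and we fall back to the quadratic routine, whose cost $d^2k^{O(L)}$ is then dominated by the stated $d^{1.62}k^{O(L+\gamma)}$ bound.

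\textbf{Step 4: plug in.} \Cref{lem:bounded-covariance-from-sdp-pca} with $\ell=k^{O(\gamma)}$ gives error $O(\sqrt{R\eps})=O(k^{1/(4L)}\sqrt\eps)\le O(k^{1/(2L)}\sqrt\eps)$, using sample size $(\ell^2\log d+\log(1/\delta))/\eps=(k^{O(\gamma)}\log d+\log(1/\delta))/\eps$.

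\textbf{Main obstacle: runtime accounting.} First, $\overline D=\max\{2\rho,\beta\}=\beta=O(1)$. The correlation-detection exponent is then $1.62+2.4\log(4\beta C'k)/\log(\beta(C'k)^\gamma/3)$, which is $1.62+O(1/\gamma)$ uniformly in $k$. Second, the factor $\poly(m,\log d,\overline D/\tau)$ must be bounded. The filtered set has size $m=O(\eps n)=k^{O(\gamma)}\polylog(d,1/\delta)$ and $\overline D/\tau=k^{O(\gamma)}$, so this factor is $k^{O(\gamma)}\polylog$. Third, the SDP stage costs $d\,k^{O(L)}$. Finally, multiply by the $O(\eps n)\log(1/\delta)$ certificate calls and add the filtering cost $\eps^2n^2(d+s^2+\ell)$. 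Every resulting $n$-dependence is polynomial in $1/\eps$ times $\polylog$ times $k^{O(\gamma)}$, which gives the claimed bound.
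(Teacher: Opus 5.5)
Your proposal is correct and follows essentially the same route as the paper. It uses the same parameter choices up to constants ($\rho=\Theta(1/k)$, $\tau=\Theta(k)^{-\gamma}$, $R=\Theta(\beta k^{1/(2L)})$), the bounded-radius reduction to supply the witness for the subquadratic SDP corollary, the quadratic-time fallback when $d^{1.1}<2dr_{\mathrm{dense}}$, and the plug-in to the bounded-covariance lemma with $\ell=k^{O(\gamma)}$. Two of your remarks go slightly beyond the paper's text: the error is in fact $O(k^{1/(4L)}\sqrt{\eps})$, and raising the threshold slightly converts the routine's ``$\ge\beta$'' into the strict ``$>\beta$'' required by the certificate-routine definition.
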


\begin{proof}
We verify the sparse PCA certificate routine required by \Cref{lem:bounded-covariance-from-sdp-pca}.  Fix a sufficiently large constant certificate threshold $\beta$, let $\eta=1/10$, and choose a sufficiently large universal constant $C>0$.  Set
\[
    R=C\beta k^{1/(2L)},\qquad
    \rho=\frac{1}{4k},
    \qquad
    \tau=(100k)^{-\gamma}.
\]
For these choices,
\[
    r_{\mathrm{sparse}}
    =\left\lceil \frac{4\beta^2}{\rho^2}\right\rceil
    \le C k^2,
    \qquad
    r_{\mathrm{dense}}
    =\left\lceil \frac{4\beta^2}{\tau^2}\right\rceil
    \le k^{C\gamma},
\]
after increasing $C$ if necessary.  Also, for sufficiently large $\gamma$ we have $\rho>12\tau$.
Since $\beta$ is a sufficiently large fixed constant and
$\rho=1/(4k)$, the parameter $\overline{D}$ in
\Cref{thm:subquadratic-correlation-detection-pca} satisfies
$\overline{D}=\max\{2\rho,\beta\}=\beta=O(1)$.

We now verify the gap condition.  Suppose a current empirical covariance matrix $M\succeq0$ satisfies $\|M\|_{\op,k}\ge R$, and let $v$ be a $k$-sparse unit vector such that $v^\top Mv\ge R$.  Applying \Cref{lem:quantitative-bounded-radius-reduction} with threshold $\rho$ and radius parameter $L$ gives a set $T_0\subseteq\supp(v)$ whose $\rho$-large-correlation graph has radius at most $L$ and
\[ 
    \lambda_{\max}(M_{T_0,T_0})
    \ge
    \frac{R-\rho k}{k^{1/(2L)}}-\rho k
    \ge \beta+\eta, 
\]
where the final inequality holds by choosing $C$ sufficiently large. Let $w_0$ be a unit vector supported on $T_0$ attaining $\lambda_{\max}(M_{T_0,T_0})$. Then $w_0^\top M w_0\ge\beta+\eta$, and $T_0$ is a witness set containing $\supp(w_0)$ with $\rho$-correlation radius at most $L$. Since $M\succeq0$, the local PSD assumptions in \Cref{cor:subquadratic-sdp} are automatic.

Suppose first that $d^{1.1}\ge2d r_{\mathrm{dense}}$. Run the subquadratic sparse PCA routine of \Cref{cor:subquadratic-sdp} with certificate threshold $\beta$, additive slack $\eta$, strong correlation threshold $\rho$, margin threshold $\tau$, and search-radius parameter $L$.  By the preceding paragraph, whenever $\|M\|_{\op,k}\ge R$, the assumptions of \Cref{cor:subquadratic-sdp} are satisfied.  Hence the routine either certifies that $\|M\|_{\op,k}<R$, or returns one of the following two certificates:
\begin{enumerate}[(i)]
    \item a matrix certificate $W\in\cX_{\sdp,k}$ supported on $k^{O(L)}$ coordinates and satisfying $\langle M,W\rangle\ge \beta$;
    \item a unit vector certificate $u$ with $\|u\|_0\le r_{\mathrm{dense}}\le k^{C\gamma}$ and $u^\top M u\ge \beta$.
\end{enumerate}
If instead $d^{1.1}<2d r_{\mathrm{dense}}$, run the quadratic
sparse PCA routine of \Cref{cor:parameterized-radius-sparse-pca} on $M$, with
correlation threshold $\rho$, degree parameter $r_{\mathrm{sparse}}$,
certificate threshold $\beta$, slack $\eta$, and search-radius parameter $L$.
The same gap and PSD arguments give the same two certificate types, with vector
sparsity at most $r_{\mathrm{sparse}}\le r_{\mathrm{dense}}$. Moreover,
\[
    d^2k^{O(L)}
    \le d^{1.72}(2r_{\mathrm{dense}})^{2.8}k^{O(L)}
    \le d^{1.72}k^{O(L+\gamma)},
\]
so this fallback is absorbed by the claimed runtime.
Thus this is a $(k,k,k^{C\gamma},k^{O(L)},R,\beta)$ sparse PCA certificate routine in the sense of \Cref{def:sparse-pca-certificate-routine}.

Applying \Cref{lem:bounded-covariance-from-sdp-pca} with $\ell=k^{C\gamma}$ gives
\[
    \|\widehat\mu-\mu\|_{2,k}
    =O\!\left(\sqrt{R\eps}\right)
    \le O\!\left(k^{1/(2L)}\sqrt\eps\right),  
\]
where the last inequality follows from the displayed choice of $R$ and $k\ge1$.  The sample complexity from \Cref{lem:bounded-covariance-from-sdp-pca} is
\[
    n\gg
    \frac{k^{2C\gamma}\log d+\log(1/\delta)}{\eps}
    =
    \frac{k^{O(\gamma)}\log d+\log(1/\delta)}{\eps}.
\]

For a certificate call on a set of size $m$, substituting $\rho=1/(4k)$ and $\tau=(100k)^{-\gamma}$ into \Cref{cor:subquadratic-sdp} gives runtime
\[
O\!\left(
    \left(
        d^{1.72}
        +
        d^{1.62+O(1/\gamma)}
    \right)
    k^{O(L+\gamma)}
    \operatorname{poly}(m,\log d)
    \log n
\right)\!.
\]
By \Cref{lem:bounded-covariance-from-sdp-pca}, every certificate call has $m=O(\eps n)$, and uses failure probability $1/(24n)$.  There are $O(\eps n\log(1/\delta))$ certificate calls overall.  The remaining filtering, score-computation, and candidate-selection costs are

\[
O\!\left(
    \eps^2n^2\left(d+k^{O(L)}+k^{O(\gamma)}\right)\log(1/\delta)
    +d\log^2(1/\delta)
\right)\!.
\]

Using the target sample size
\[
n=O\!\left(
    \frac{k^{O(\gamma)}\log d+\log(1/\delta)}{\eps}
\right)\!,
\]
all of these contributions, including preprocessing, are absorbed by
\[
O\!\left(
    \left(
        d^{1.72}
        +
        d^{1.62+O(1/\gamma)}
    \right)
    k^{O(L+\gamma)}
    \operatorname{polylog}(d,1/\delta)
    \operatorname{poly}(1/\eps)
\right)\!.
\]
This proves the stated runtime.
\end{proof}

Below, we explain why the preceding argument does not achieve the optimal dependence on $k$ in the sample complexity.

\begin{remark}[Sample complexity of sparse mean estimation for heavy-tailed distributions]
We note that \cite{DiaKLP22} achieved a $k^2$ dependence on the sparsity parameter in the sample complexity using a more computationally intensive algorithm, whereas our argument yields a $k^4$ dependence.

Indeed, our sparse-PCA certificate routine may return a $k^2$-sparse unit vector $u$. To use this certificate in the filtering argument, we must control $\left\langle Q_{\widetilde S'}(\mu')-I,uu^\top\right\rangle$. Although $uu^\top\in\mathcal X_{\sdp,k^2}$, it need not belong to $\mathcal X_{\sdp,k}$. Thus, SDP stability proved only at sparsity level $k$ does not cover this certificate outcome. However, when the certificate is a vector, sample-complexity bounds often scale linearly with the sparsity of that vector.

Unfortunately, the present stability-preprocessing argument does not suffice in this case. Coordinatewise truncation, which is used in \cite{DiaKLP22}, is nonlinear and need not preserve covariance uniformly over directions of larger sparsity; the truncation example in \cite[Appendix~A.4]{DiaKLP22} illustrates this phenomenon. Therefore, the preprocessing used in \cite{DiaKLP22} guarantees covariance control in $\mathcal X_{\sdp,k}$ at the selected sparsity scale, but does not automatically provide the corresponding control in $\mathcal X_{\sdp,k^2}$.

We therefore invoke their stability result at sparsity level $O(k^2)$, which requires $\widetilde O(k^4)$ samples. Recovering the $\widetilde O(k^2)$ bound would require either a different preprocessing procedure or a certificate routine whose vector output has only $O(k)$ sparsity.
\end{remark}

\subsection{Identity-Covariance Subgaussian Distributions}

\label{sec:identity-subgaussian}
In this section, we prove \Cref{thm:informal-robust-mean} for identity-covariance subgaussian distributions.
The key technical difference from the case of heavy-tailed distributions (and moving beyond the $\sqrt{\eps}$ error)
is that 
we need to filter until
the empirical covariance has sparse eigenvalue at most $1+\widetilde O(\eps)$ (as opposed to at most $O(1)$).
Thus, the sparse PCA primitive must detect a $k$-sparse direction of quadratic form
$1+\widetilde\Omega(\eps)$ and return an efficiently usable certificate of quadratic form
$1+\widetilde{\Omega}(\eps)$.

This is an additive approximation problem around the identity scale. However,
\Cref{lem:quantitative-bounded-radius-reduction} does not by itself give the needed additive
approximation guarantee when applied directly to the unshifted empirical covariance. Indeed, a
constant-factor loss in a quantity of order $1+\eps$ may destroy the additive excess above $1$.
The subgaussianity assumption gives the needed additional
control: the covariance of the remaining clean points is close to the
identity in every relevant sparse direction, rather than merely bounded from
above. 
We therefore apply the sparse PCA search to the shifted matrix $A=\Sigma-(1-\kappa)I$, where $\Sigma$ is the empirical covariance around the current empirical mean and $\kappa$ is the target additive covariance scale. After this shift, a direction of variance
$1+\widetilde\Omega(\eps) $ becomes a direction of shifted value $\widetilde\Omega(\eps) + \kappa$, while the
clean shifted covariance level is only $O(\kappa)$. The required approximation is therefore a 
multiplicative  constant at scale $\kappa = \widetilde\Theta(\eps)$, rather than additive at scale one.

The shift is used only to compute the sparse PCA certificate; the
filtering algorithm uses the same score definition as before.  Therefore,
\Cref{thm:robust-sparse-mean-from-pca} still applies.

We start with the following lemma stating that the samples from an isotropic subgaussian distribution satisfy the relevant stability condition:
\begin{restatable}[Identity-covariance subgaussian stability]{lemma}{subgaussianstability}
\label{lem:subgaussian-stability}

Let $G$ contain $n$ independent samples from a mean-$\mu$, identity-covariance subgaussian distribution on $\R^d$.  Fix $k\le\ell$, $\eps\in(0,0.01]$, and $\delta\in(0,1)$.  If
$
    n\ge C{((k^2+\ell)\log d+k^2\log(1/\delta))}/{\eps^2}
$
for a sufficiently large universal constant $C$, then with probability at least $1-\delta$, $G$ is $(\eps,O(\eps\sqrt{\log(1/\eps)}),O(\eps\log(1/\eps)),k,k,\ell)$-stable with respect to $\mu$.
On the same high-probability event, every $G'\subseteq G$ with $|G'|\ge(1-10\eps)|G|$ also satisfies
\[
    \inf_{\|u\|_2=1,\ \|u\|_0\le\ell}
    u^\top\operatorname{Cov}[G']u
    \ge1-O(\eps\log(1/\eps)).
\]
\end{restatable}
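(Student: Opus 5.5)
We follow the standard reduction of stability to a handful of uniform concentration statements over $\mathcal G=\{G'\subseteq G:|G'|\ge(1-10\eps)|G|\}$, one for each index class: $k$-sparse unit vectors (mean), $\cX_{\sdp,k}$ (matrix relaxation), and $\ell$-sparse unit vectors (vector relaxation). Write $y_i=x_i-\mu$. For any $G'$ with $|G\setminus G'|\le 10\eps n$ and any PSD $W$ with unit trace,
\[
\Bigl\langle Q_{G'}(\mu)-I,W\Bigr\rangle=\frac{n}{|G'|}\Bigl\langle Q_G(\mu)-I,W\Bigr\rangle-\frac{1}{|G'|}\sum_{i\in G\setminus G'}\bigl(y_i^\top W y_i-1\bigr).
\]
So it suffices to bound two quantities uniformly over the index set: (a) the full-sample deviation $\sup_W|\langle Q_G(\mu)-I,W\rangle|\lesssim\eps\log(1/\eps)$, and (b) the tail sum $\sup_W\max_{|J|\le10\eps n}\frac1n\sum_{i\in J}y_i^\top Wy_i\lesssim\eps\log(1/\eps)$. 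The mean bound follows the same way. For a $k$-sparse unit $u$, write $\mu_{G'}-\mu$ as a full-sample average minus a removed-set average. The full average is $O(\eps)$ by subgaussian concentration and a union bound over $\binom dk$ supports, which costs $k\log d/\eps^2$ samples. The removed average is at most $\frac{1}{n}\sum_{J}|\langle u,y_i\rangle|$, which is $O(\eps\sqrt{\log(1/\eps)})$ by the same tail argument applied to the rank-one $W=uu^\top$.

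For the vector class ($\ell$-sparse $u$), I would argue with nets. Take a $1/4$-net of each $\ell$-dimensional coordinate sphere, so $\exp(O(\ell\log d))$ directions in total. For a fixed direction, $\langle u,y\rangle$ is a scalar subgaussian with unit variance, so Bernstein gives (a) at accuracy $\eps$ from $n\gtrsim\ell\log d/\eps^2$. The standard scalar argument gives (b): the sum of the top $10\eps n$ values of $\langle u,y_i\rangle^2$ is at most $O(\eps\log(1/\eps))n$, with failure probability $\exp(-c\eps n\log(1/\eps))$, small enough to survive the union bound. The usual net-to-sphere argument for quadratic forms over sparse supports then finishes this class. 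The lower bound on $\operatorname{Cov}[G']$ follows from the same event: subtract the rank-one mean correction $\langle u,\mu_{G'}-\mu\rangle^2=O(\eps^2\log(1/\eps))$.

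The main obstacle is the SDP class $\cX_{\sdp,k}$, where the net argument fails. Every $W$ there can be written as a convex combination of matrices of the form $\sum_{a,b}$ of $k$-sparse rank-one pieces, and the $\ell_1$ bound $\|W\|_1\le k$ forces the union-bound cost to scale like $k^2\log d$. That is exactly where the $k^2$ in the sample size must come from. The step that needs care is that $\langle yy^\top-I,W\rangle$ is a quadratic chaos. For Gaussian data, Hanson--Wright yields both (a) and (b) at this scale. For general subgaussian data it does not apply, since the coordinates need not be independent; this is the error the paper attributes to prior work. My first attempt is to bound $\sup_{W\in\cX_{\sdp,k}}|\langle Q_G(\mu)-I,W\rangle|$ through the Talagrand-type generic chaining bound for quadratic empirical processes from \cite{hua2026talagrand}. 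The index set $\{W^{1/2}u\}$ is contained in $\sqrt{k}$ times the convex hull of sparse vectors, so its $\gamma_2$ functional is $O(\sqrt{k\log d})$ and its diameter is $O(1)$. This gives deviation $O\bigl(\sqrt{k^2\log(d/\delta)/n}+k^2\log(d/\delta)/n\bigr)$, which is at most $\eps$ under the stated $n$.

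For the tail term (b), I would use PSD-ness to write $y^\top Wy=\|W^{1/2}y\|^2$ and bound the top-$10\eps n$ sum via the same chaining bound on a truncated process. Alternatively, one can reduce (b) to (a) applied to subsets. Then, using $\langle Q_J,W\rangle\le\langle Q_G,W\rangle n/|J|$ together with a Bernstein-type bound on the count $|\{i:y_i^\top Wy_i>t\}|$, the choice $t\asymp\log(1/\eps)$ gives the claimed $O(\eps\log(1/\eps))$. If this reduction fails, the fallback is to pay an extra $\log(1/\eps)$ factor in the sample size. Collecting the three classes and union bounding the failure probabilities yields the stated stability parameters.
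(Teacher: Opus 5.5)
Your treatment of the sparse-vector classes matches the paper: nets over sparse supports, a union bound over deletion sets, the complement identity, and the lower bound via $\operatorname{Cov}[G']=Q_{G'}(\mu)-bb^\top$ with $b=\mu_{G'}-\mu$. One small fix: that lower bound needs the mean estimate over $\ell$-sparse directions, not $k$-sparse ones. Your mean argument gives this verbatim at sparsity $\ell$, since $n\ge C\ell\log d/\eps^2$.

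The problems are in the $\cX_{\sdp,k}$ class, which is the real content of the lemma. Your full-sample step (a) is not valid as written. Since $y^\top Wy=\sup_{\|u\|_2\le 1}\langle W^{1/2}u,y\rangle^2$ has a sample-dependent maximizer, $\frac1n\sum_i y_i^\top Wy_i-\operatorname{tr}W$ is not a quadratic empirical process indexed by $T=\{W^{1/2}u\}$. If it were, $\gamma_2(T)=O(\sqrt{k\log d})$ would give a deviation of order $\sqrt{k\log d/n}$. That is false in general: for Gaussian data, $W=\widehat\Sigma_{S,S}/\operatorname{tr}(\widehat\Sigma_{S,S})$ with $|S|$ of order $k\sqrt n$ is feasible and has value of order $k/\sqrt n$. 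It also does not produce the $k^2$ rate you then state. The paper's repair needs no chaining. Entrywise Bernstein with a union bound over the $d^2$ pairs gives $\|\widehat\Sigma-I\|_\infty\le C\sqrt{\log(d/\delta)/n}$ for $\widehat\Sigma=\frac1n\sum_iy_iy_i^\top$. H\"older with $\|W\|_1\le k$ then gives $Ck\sqrt{\log(d/\delta)/n}\le\eps$.

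The genuine gap is the deleted-mass term (b) uniformly over $W\in\cX_{\sdp,k}$, and none of your suggestions closes it.
\begin{itemize}
\item The inequality $\frac1n\sum_{i\in J}y_i^\top Wy_i\le\langle Q_G(\mu),W\rangle$ only gives $O(1)$.
\item Running the entrywise/H\"older argument over all deletion sets costs $\exp(O(\eps n\log(1/\eps)))$ in the union bound. Each entry of $\frac1n\sum_{i\in J}y_iy_i^\top$ is then controlled only at scale $\eps\log(1/\eps)$, and H\"older yields $k\eps\log(1/\eps)$, a factor $k$ too large.
\item Applying (a) to subsets by any other means is exactly the statement being proved.
\item Counting $\{i:y_i^\top Wy_i>t\}$ uniformly in $W$ is the same uniformity problem in another form.
\item An extra $\log(1/\eps)$ in $n$ does nothing about the factor $k$.
\end{itemize}

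The missing idea is a one-sided Gaussian domination. By the three-Gaussian representation of \cite{hua2026talagrand}, you may realize $Z_i=x_i-\mu$ as $c_K^{-1}(G_{1,i}+G_{2,i}+G_{3,i})$. Here, for each $j$, the sequence $(G_{j,i})_{i\le n}$ is i.i.d.\ $N(0,I_d)$; no independence across $j$ is needed. Since $W\succeq0$, $Z_i^\top WZ_i\le 3c_K^{-2}\sum_{j=1}^3G_{j,i}^\top WG_{j,i}$. The deleted term is nonnegative and needs only an upper bound, so it suffices to control it for Gaussian samples. There, the known subset stability over $\cX_{\sdp,k}$ from \cite{BalDLS17}, applied to $H=[n]$ and $H=[n]\setminus D$, gives $\frac1n\sum_{i\in D}g_i^\top Wg_i\le C\eps\log(1/\eps)$. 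This holds simultaneously for all $W\in\cX_{\sdp,k}$ and all $|D|\le 10\eps n$. Combined with your complement identity and the entrywise full-sample bound, this proves the $\cX_{\sdp,k}$ condition.
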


We refer the reader to \Cref{app:subgaussian-stability} for the proof.

Combining this stability guarantee with the sparse PCA certification reduction in \Cref{sec:reduction-sparse-pca},
we obtain the following algorithm that runs in quadratic time:
\begin{corollary}[Quadratic-Time Identity-Covariance Subgaussian Robust Sparse Mean Estimation]
\label{cor:identity-subgaussian-sparse-mean-tradeoff}
For every integer $L\ge1$, there is an algorithm with the following guarantee.  Let $T$ be an $\eps$-corruption of $n$ independent samples from a mean-$\mu$, identity-covariance subgaussian distribution on $\R^d$, where $\eps\in(0,0.01]$ and $\delta\in(0,1)$.  If $n\gg k^2\left(\log d+\log(1/\delta)\right)/\eps^2$, then with probability at least $1-\delta$ the algorithm returns $\widehat\mu$ satisfying $\|\widehat\mu-\mu\|_{2,k}\le O\!\left(\eps\sqrt{\log(1/\eps)}\,k^{1/(4L)}\right)$. Its running time is $d^2k^{O(L)}\polylog(d,1/\delta)\poly(1/\eps)$.
\end{corollary}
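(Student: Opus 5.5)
The plan is to instantiate \Cref{thm:robust-sparse-mean-from-pca} with the stability guarantee of \Cref{lem:subgaussian-stability} and a certificate routine built from \Cref{cor:parameterized-radius-sparse-pca}(ii), applied to the shifted matrix $A=M-(1-\kappa)I$. Here $M$ is the current empirical covariance about the empirical mean. We take $\ell=Ck^2$ and $\Gamma=C_1\eps\log(1/\eps)$ from \Cref{lem:subgaussian-stability}. With $n\gg k^2(\log d+\log(1/\delta))/\eps^2$, the inliers are then $(\eps,O(\eps\sqrt{\log(1/\eps)}),\Gamma,k,k,\ell)$-stable with respect to $\mu$, and every large subset has sparse covariance at least $1-O(\Gamma)$ along $\ell$-sparse directions. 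Since $\ell=O(k^2)$, this sample size matches the requirement $(k^2+\ell)\log d/\eps^2$.

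The certificate routine runs as follows. Set $\kappa=C_2\Gamma$ and $\beta_\Gamma=1+48\Gamma$, and let the routine look for value $\beta'=\beta_\Gamma-1+\kappa=\Theta(\kappa)$ for $A$. Set the detection level to $R=1+C_3\Gamma k^{1/(2L)}$, so that $\|M\|_{\op,k}\ge R$ gives a $k$-sparse $v$ with $v^\top Av\ge\kappa+C_3\Gamma k^{1/(2L)}$. Choose $\tau=c\kappa/k$ and $r=\lceil 4\beta'^2/\tau^2\rceil=O(k^2)$. Then the condition $(R_A-\tau k)/k^{1/(2L)}-\tau k\ge\beta'+\eta$, with $\eta=\kappa/10$, holds once $C_3$ is large relative to $C_2$.

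Both outputs are certificates for $M$. A vector output satisfies $u^\top Mu=u^\top Au+1-\kappa>\beta_\Gamma$. An SDP output $X$ has trace one, so $\langle M,X\rangle=\langle A,X\rangle+1-\kappa\ge\beta_\Gamma$. The approximate-solution slack can be absorbed by targeting $\beta'$ slightly higher.

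Running \Cref{thm:robust-sparse-mean-from-pca} with this $R$ then gives error
\[
O\!\left(\gamma+\sqrt{\eps(R-1+\Gamma)}\right)=O\!\left(\eps\sqrt{\log(1/\eps)}+\sqrt{\eps\cdot\eps\log(1/\eps)\,k^{1/(2L)}}\right)=O\!\left(\eps\sqrt{\log(1/\eps)}\,k^{1/(4L)}\right),
\]
which is the claimed bound. The $k^{1/(4L)}$ arises as the square root of the multiplicative loss $k^{1/(2L)}$.

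The main obstacle is the PSD hypothesis that \Cref{cor:parameterized-radius-sparse-pca}(ii) needs for $A$: $A$ must be $\max\{k,r+1\}$-PSD, i.e.\ PSD on $O(k^2)$-sparse principal submatrices, and the shift removes the automatic PSD property of $M$. I would derive it from the lower-bound half of \Cref{lem:subgaussian-stability}. Throughout the filtering process, at most $3|B_0|$ good points are removed, so the surviving good set $G_t$ has size at least $(1-4\eps)|G|$, and by the lemma $u^\top\operatorname{Cov}[G_t]u\ge1-O(\Gamma)$ for $\ell$-sparse unit $u$. The empirical covariance of $T^{(t)}=G_t\cup B_t$ dominates $(1-\alpha_t)\operatorname{Cov}[G_t]$ in the PSD order, by the law of total covariance for the mixture. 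This gives $u^\top Mu\ge(1-2\eps)(1-O(\Gamma))\ge1-\kappa$ once $C_2$ is large, so $A$ is $\ell$-PSD with $\ell\ge r+1$. The property holds on the good event of the filtering analysis, which is also where the routine's guarantee is needed, so the stopped-process argument in the proof of \Cref{thm:robust-sparse-mean-from-pca} still applies.

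For the running time, the routine works with oracle access to entries of $A$, each of which is an entry of $M$ shifted on the diagonal. Each call therefore costs $O(nd^2+dk^{O(L)}/\kappa^{O(1)})$. There are at most $n$ filtering rounds and $O(\log(1/\delta))$ repetitions, and with $n=\widetilde O(k^2/\eps^2)$ the total is $d^2k^{O(L)}\polylog(d,1/\delta)\poly(1/\eps)$.
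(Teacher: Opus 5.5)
Your proposal is correct and follows essentially the same route as the paper. Both shift to $A=M-(1-\kappa)I$ with $\kappa=\Theta(\Gamma)$, run the radius-$L$ routine of \Cref{cor:parameterized-radius-sparse-pca} with $\tau=\Theta(\kappa/k)$ and $O(k^2)$ truncation, transfer certificates back to $M$ via trace one, and feed the resulting routine with $R-1=\Theta(\Gamma k^{1/(2L)})$ into \Cref{thm:robust-sparse-mean-from-pca}. Your two extra steps, deriving the local-PSD condition from $M^{(t)}\succeq(1-\alpha_t)\operatorname{Cov}[G_t]$ on the good filtering event and raising the target slightly to get the strict matrix inequality, spell out details the paper's proof states only briefly.
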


\begin{proof}
Fix a sufficiently large universal constant $C_1$. Apply \Cref{lem:subgaussian-stability} at the given value of $\eps$ with $\ell=C_2k^2$ and failure probability $\delta/2$, where $C_2$ is a sufficiently large universal constant, and let $\gamma=O\!\left(\eps\sqrt{\log(1/\eps)}\right)$ and $\Gamma=O\!\left(\eps\log(1/\eps)\right)$ be the mean- and covariance-stability bounds supplied by the lemma. Enlarge $\Gamma$ if necessary so that $\Gamma\ge\eps$, and set $\xi\eqdef\Gamma$. Choose $\kappa=C_1\xi$ and set
\[
    b\eqdef\kappa+48\xi,
    \qquad
    \eta\eqdef b/10,
    \qquad
    r\eqdef C_3(b+\eta) k^{1/(2L)},
    \qquad
    R\eqdef1-\kappa+r,
    \qquad
    \tau\eqdef C_4\frac{\xi}{k}.
\]
For $C_1,C_2,C_3$ sufficiently large and $C_4$ sufficiently small,
\[
    \frac{r-\tau k}{k^{1/(2L)}}-\tau k\ge b+\eta,
    \qquad
    C_2k^2\tau^2>b^2.
\]
At each filtering iteration, let $\Sigma$ denote the empirical covariance around the current empirical mean and define $A\eqdef \Sigma-(1-\kappa)I$. Run the radius-$L$ sparse PCA routine of \Cref{cor:parameterized-radius-sparse-pca} on $A$ with certificate threshold $b$, approximation slack $\eta$, correlation threshold $\tau$, and sparsity parameter $C_2k^2$. The lower stability bound in \Cref{lem:subgaussian-stability} gives the required $C_2k^2$-PSD condition for $A$. Thus, if $\|\Sigma\|_{\op,k}\ge R$, then $\|A\|_{\op,k}\ge r$, and \Cref{cor:parameterized-radius-sparse-pca} returns either $W\in\cX_{\sdp,k}$ or a $C_2k^2$-sparse unit vector $u$ with shifted value at least $b$.  In either case its unshifted value is at least
$1-\kappa+b=1+48\xi$.
Conversely, a \textsc{No} answer certifies $\|\Sigma\|_{\op,k}<R$.  Hence \Cref{cor:parameterized-radius-sparse-pca} implements an ordinary $(k,k,C_2k^2,k^{O(L)},R,1+48\xi)$ certificate routine.

The reduction in \Cref{thm:robust-sparse-mean-from-pca} assumes precisely such a sparse PCA certificate routine, so we apply it with mean-stability parameter $\gamma$ and covariance-stability parameter $\Gamma$ and failure probability $\delta/2$.  Conditional on the stability event, with probability at least $1-\delta/2$, it gives
\[
\begin{aligned}
    \|\widehat\mu-\mu\|_{2,k}
    &\le O\!\left(\gamma+\sqrt{\eps(R-1+\Gamma)}\right)\\
    &\le O\!\left(\eps\sqrt{\log(1/\eps)}\,k^{1/(4L)}\right).
\end{aligned}
\]  A union bound gives overall success probability at least $1-\delta$.  The sample bound follows from the stability lemma, and the runtime follows from the radius-$L$ certificate routine and \Cref{thm:robust-sparse-mean-from-pca}.  
\end{proof}
Similarly, we obtain the following sub-quadratic time variant.\begin{corollary}[Subquadratic-Time Identity-Covariance Subgaussian Robust Sparse Mean Estimation]
\label{cor:subquadratic-identity-subgaussian-sparse-mean}
For every integer $L\ge1$ and any constant $a$ greater than a sufficiently large universal constant, there is an algorithm with the following guarantee. Let $T$ be an $\eps$-corruption of $n$ independent samples from a mean-$\mu$, identity-covariance subgaussian distribution on $\R^d$, where $1\le k\le d$, $\eps\in(0,0.01]$, and $\delta\in(0,1)$. If
\[
    n\gg
    \frac{(k/\eps)^{O(a)}\log d+k^2\log(1/\delta)}{\eps^2},
\]
then, with probability at least $1-\delta$, the algorithm returns $\widehat\mu$ with
$
    \|\widehat\mu-\mu\|_{2,k}
    \le O\!\left(\eps\sqrt{\log(1/\eps)}\,k^{1/(4L)}\right).
$
Its running time is
\[
O\!\left(
    \left(
        d^{1.72}
        +
        d^{1.62+O(1/a)}
    \right)
    (k/\eps)^{O(a)}k^{O(L)}
    \polylog(d,1/\delta)\poly(1/\eps)
\right).
\]
\end{corollary}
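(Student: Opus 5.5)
The plan mirrors the proof of \Cref{cor:identity-subgaussian-sparse-mean-tradeoff}, with the quadratic-time certificate routine replaced by the shifted variant of \Cref{cor:subquadratic-sdp}. That variant runs correlation detection on the unshifted empirical covariance $\Sigma$, and runs the density check, graph search and SDP on $A=\Sigma-(1-\kappa)I$; this is legitimate because $A$ and $\Sigma$ have identical off-diagonal entries (\Cref{app:shifted-covariance-detection}). First I fix the stability parameters. I apply \Cref{lem:subgaussian-stability} with failure probability $\delta/2$, giving $\gamma=O(\eps\sqrt{\log(1/\eps)})$ and $\xi=\Gamma=O(\eps\log(1/\eps))\ge\eps$. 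The vector sparsity $\ell$ is taken equal to the dense-branch sparsity $r_{\mathrm{dense}}$. As before, set $\kappa=C_1\xi$, $b=\kappa+48\xi$, $\eta=b/10$, $r=C_3(b+\eta)k^{1/(2L)}$ and $R=1-\kappa+r$.

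The correlation thresholds must now be scaled to $\xi$. I take the strong threshold $\rho=c\,\xi/k$, so that the loss $\rho k$ in \Cref{lem:quantitative-bounded-radius-reduction} is a small fraction of $b$. The radius-reduction inequality $(r-\rho k)/k^{1/(2L)}-\rho k\ge b+\eta$ then holds exactly as in the quadratic case. I take the margin threshold $\tau=\rho\,(k/\eps)^{-a}$, which guarantees $\rho>12\tau$. With these choices $r_{\mathrm{sparse}}=\lceil 4b^2/\rho^2\rceil=O(k^2)$ and $r_{\mathrm{dense}}=\lceil4b^2/\tau^2\rceil\le (k/\eps)^{O(a)}$.

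Setting $\ell=(k/\eps)^{O(a)}$ is what produces the sample bound $n\gg((k/\eps)^{O(a)}\log d+k^2\log(1/\delta))/\eps^2$, through the $(k^2+\ell)\log d$ term of \Cref{lem:subgaussian-stability}. The same lemma, through its lower bound $u^\top\operatorname{Cov}[G']u\ge1-O(\eps\log(1/\eps))$ for $\ell$-sparse $u$, also provides the $\max\{r_{\mathrm{dense}}+1,r_{\mathrm{sparse}}+1\}$-PSD property of $A$ once $C_1$ is large. Strictly, this lower bound holds for the clean part of the data, and stays valid after the contaminated points are added because they only contribute PSD terms. Gap transfer then proceeds as before. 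If $\|\Sigma\|_{\op,k}\ge R$, the radius reduction yields a witness of radius $L$ for $A$ with value at least $b+\eta$. A returned certificate has shifted value at least $b$, hence unshifted value at least $1+48\xi$. This gives a $(k,k,\ell,k^{O(L)},R,1+48\xi)$ certificate routine, and \Cref{thm:robust-sparse-mean-from-pca} delivers the error bound $O(\gamma+\sqrt{\eps(R-1+\Gamma)})=O(\eps\sqrt{\log(1/\eps)}\,k^{1/(4L)})$.

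The main obstacle is the runtime exponent of correlation detection, which is $1.62+2.4\log(4\overline D/\rho)/\log(\overline D/(3\tau))$. After the diagonal check, $\overline D=O(1)$, because $\Sigma_{ii}\le 1+O(\xi)$ on the clean part; the detection itself is run on $\Sigma$, not $A$. Since $\rho=\Theta(\eps\log(1/\eps)/k)$, the numerator is $O(\log(k/\eps))$ and the denominator is $\Omega(a\log(k/\eps))$, so the exponent is $1.62+O(1/a)$; this is the reason $\tau$ has to decay polynomially in $k/\eps$ rather than only in $k$. The polynomial factor $\poly(n,\log d,\overline D/\tau)$ becomes $(k/\eps)^{O(a)}\polylog(d)\poly(1/\eps)$. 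Two remaining points need care. First, when $d^{1.1}<2dr_{\mathrm{dense}}$, I fall back to the quadratic routine; as in \Cref{cor:subquadratic-robust-sparse-mean-tradeoff}, $d^2\le d^{1.72}r_{\mathrm{dense}}^{O(1)}$ is then absorbed. Second, the $n$ rounds of filtering, each with failure probability $1/(24n)$, together with the $\log(1/\delta)$ repetitions, contribute only the displayed polylogarithmic and $\poly(1/\eps)$ factors.
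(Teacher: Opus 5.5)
Your proposal is correct and follows the paper's proof essentially step for step: the shift $A=\Sigma-(1-\kappa)I$, $\rho=\Theta(\xi/k)$, a margin threshold $\tau$ polynomially small in $k/\eps$, $\ell=(k/\eps)^{O(a)}$ in \Cref{lem:subgaussian-stability}, the shifted correlation-detection corollary of \Cref{app:shifted-covariance-detection}, the quadratic fallback when $d^{1.1}<2dr_{\mathrm{dense}}$, and then \Cref{thm:robust-sparse-mean-from-pca}; the only difference is the cosmetic choice $\tau=\rho(k/\eps)^{-a}$ instead of the paper's $(\xi/(100k))^{a}$. Two small touch-ups: take $\ell\ge r_{\mathrm{dense}}+1$ rather than $\ell=r_{\mathrm{dense}}$, since the shifted theorem needs $A$ to be $(r_{\mathrm{dense}}+1)$-PSD; and $\overline D=O(1)$ holds because the diagonal check on $A$ forces $\Sigma_{ii}\le b+\lambda=1+48\xi$ for every coordinate of the current contaminated set, not because of a property of the clean points.
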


\begin{proof}
Apply \Cref{lem:subgaussian-stability} {with failure probability $\delta/2$ and}
$
    \ell=(k/\eps)^{O(a)}
$
and let $\gamma=O\!\left(\eps\sqrt{\log(1/\eps)}\right)$ and
$\Gamma=O\!\left(\eps\log(1/\eps)\right)$ be the mean- and
covariance-stability bounds supplied by the lemma. Enlarge $\Gamma$ if
necessary so that $\Gamma\ge\eps$, and set $\xi\eqdef\Gamma$.
Choose the hidden
constant so that $\ell$ is at least $k$, $r_{\mathrm{dense}}+1$, and
$r_{\mathrm{sparse}}+1$ for the parameters below.  Set
\[
    \kappa=C_0\xi,
    \qquad
    \lambda=1-\kappa,
    \qquad
    b=\kappa+48\xi,
    \qquad
    \eta=b/10,
    \qquad
    r=C_1(b+\eta)k^{1/(2L)},
    \qquad
    R=1-\kappa+r,
\]
and take
\[
    \rho=\frac{\xi}{4k},
    \qquad
    \tau=\left(\frac{\xi}{100k}\right)^a.
\]
For the shifted covariance-detection routine, the parameter
$\overline D$ in \Cref{thm:shifted-subquadratic-correlation-detection-pca}
satisfies
\[
    \overline D
    =\max\{2\rho,b+\lambda\}
    =\max\left\{\frac{\xi}{2k},1+48\xi\right\}
    =O(1).
\]
For sufficiently large $a,C_1$, we have $\rho>12\tau$ and
\[
    \frac{r-\rho k}{k^{1/(2L)}}-\rho k\ge b+\eta.
\]
Moreover, the degree parameters of
\Cref{cor:shifted-subquadratic-sdp} satisfy
\[
    r_{\mathrm{sparse}}
    =O\!\left(\frac{b^2}{\rho^2}\right)=O(k^2),
    \qquad
    r_{\mathrm{dense}}
    =O\!\left(\frac{b^2}{\tau^2}\right)
    =(k/\xi)^{O(a)}.
\]
At each filtering iteration, let $\Sigma$ denote the empirical covariance
around the current empirical mean and define
$A\eqdef \Sigma-\lambda I=\Sigma-(1-\kappa)I$. Choose $C_0$ sufficiently large. The lower
stability bound in \Cref{lem:subgaussian-stability} then implies the local-PSD
condition for $A$ at every sparsity level required above.

Suppose first that
$d^{1.1}\ge2d r_{\mathrm{dense}}$.   Whenever $\|\Sigma\|_{\op,k}\ge R$, we have
$\|A\|_{\op,k}\ge r$.  The quantitative bounded-radius reduction and
\Cref{cor:shifted-subquadratic-sdp}, with certificate threshold $b$ and shift
$\lambda$, therefore return either a matrix certificate
$W\in\cX_{\sdp,k}$ supported on $k^{O(L)}$ coordinates or an
$\ell$-sparse unit vector certificate $u$, in both cases with
$A$-value at least $b$.  By the trace-one and unit-norm identities in that
corollary, the corresponding value for $\Sigma$ is at least
$\lambda+b=1-\kappa+b=1+48\xi$.

If instead $d^{1.1}<2d r_{\mathrm{dense}}$, run the quadratic
sparse PCA routine of \Cref{cor:parameterized-radius-sparse-pca} directly on
$A$, with correlation threshold $\rho$, degree parameter
$r_{\mathrm{sparse}}$, certificate threshold $b$, and slack $\eta$.  It
produces the same two types of certificates in time
$d^2k^{O(L)}\poly(1/\xi)$.  Moreover,
\[
    d^2k^{O(L)}\poly(1/\xi)
    \le d^{1.72}(2r_{\mathrm{dense}})^{2.8}
       k^{O(L)}\poly(1/\xi)
    =d^{1.72}(k/\xi)^{O(a)}k^{O(L)}
    \le d^{1.72}(k/\eps)^{O(a)}k^{O(L)},
\]
so this fallback is absorbed by the claimed runtime.

In either case, a \textsc{No} answer certifies
$\|\Sigma\|_{\op,k}<R$.  Thus the procedure implements the certificate routine
required by \Cref{thm:robust-sparse-mean-from-pca}.  We apply that theorem
with mean-stability parameter $\gamma$ and covariance-stability parameter
$\Gamma$ and failure probability $\delta/2$.  Conditional on the stability event, with probability at least $1-\delta/2$, this gives
\[
    \|\widehat\mu-\mu\|_{2,k}
    \le O\!\left(\gamma+\sqrt{\eps(R-1+\Gamma)}\right)
    \le O\!\left(\eps\sqrt{\log(1/\eps)}\,k^{1/(4L)}\right).
\]
A union bound over stability and the conditional estimation guarantee gives overall success probability at least $1-\delta$.  The sample bound follows from \Cref{lem:subgaussian-stability}.
In the subquadratic case, $\overline D=O(1)$ and
\[
    \frac{\log(4\overline D/\rho)}
         {\log(\overline D/(3\tau))}
    =O(1/a).
\]
Also, $\overline D/\tau=(k/\xi)^{O(a)}\le(k/\eps)^{O(a)}$.  Substituting these bounds into
\Cref{cor:shifted-subquadratic-sdp}, and including the filtering and
score-computation costs from \Cref{thm:robust-sparse-mean-from-pca} gives the stated runtime.
\end{proof}

\section{Conclusion}

We introduced a relaxation of sparse PCA that augments the standard SDP relaxation by allowing the algorithm to return a $f(k)$-sparse vector, where $f(k)\gg k$. This additional flexibility yields quadratic-time algorithms in the highly sparse regime while preserving the sample-complexity guarantees for well-behaved families of distributions. More broadly, our results suggest that allowing a controlled increase in output sparsity may lead to faster algorithms for other sparse-estimation problems. The main open question is whether one can obtain an algorithm with running time $d^2\cdot\poly(k)$.

\paragraph{AI disclosure} 

The authors used GPT-5.5 as an interactive tool during parts of the development of this work and GPT-5.6 to improve its exposition.
The project began with the goal of designing an efficient robust sparse mean estimation algorithm for general bounded-covariance distributions. In an initial GPT-5.5-assisted line of investigation, the authors attempted to construct a counterexample to the Restarted Truncated Power Method (RTPM) by showing that its iterates could escape the support of the target vector. Although these attempts were unsuccessful, they led the authors to discover that certain elements of the target support are captured by individual RTPM steps (\Cref{lem:discover}) and eventually the $d^2 k^{O(k)}$ runtime algorithm and $k^{O(k)} \log d$ sample complexity. In subsequent attempts to improve this result, also assisted by GPT-5.5, the authors derived \Cref{lem:quantitative-bounded-radius-reduction}, the $d^2 k^{O(\log k)}$ runtime algorithm and $k^{O(\log k)} \log d$ sample complexity.

The remaining main results were developed independently by the authors. These include the reduction from sparse mean estimation to relaxed sparse PCA (\Cref{thm:robust-sparse-mean-from-pca}), the subquadratic algorithm (\Cref{thm:informal-subquadratic-sdp}), the observation that the SDP yields better sample complexity which led to the notion of $\|\cdot\|_{\mathrm{relaxed},k}$, and the robust mean estimation applications (\Cref{thm:informal-robust-mean}). The authors independently verified all mathematical arguments, results, citations, and conclusions and take full responsibility for the content of the manuscript.

\printbibliography
\appendix
\section*{Supplementary Material}
The supplementary material is organized as follows.
\Cref{app:related-work} reviews the related literature,
\Cref{app:omitted} collects the facts and lemmas used in the main analysis,
\Cref{app:raw-covariance-detection} proves the covariance-detection
lemma, \Cref{app:shifted-covariance-detection} develops its shifted sparse PCA
variant, \Cref{app:subgaussian-stability} proves the subgaussian stability
lemma,
and \Cref{app:rtpm-counterexample} contains the proof of our counterexample
for the restarted truncated power method \cite{KSTZ26}.

\section{Related work}\label{app:related-work}

\paragraph{Sparse PCA.}
The term sparse PCA encompasses several related but distinct tasks:
worst-case maximization of a quadratic form over sparse unit vectors,
estimation of a sparse leading eigenspace from samples, and detection of a
sparse spike. 
For general positive semidefinite inputs, the
cardinality-constrained objective is NP-hard and admits the standard
$\ell_1$-constrained SDP relaxation
\cite{dAsGJL2007,chan2016approximability}.  
A separate statistical literature assumes independent observations
from a population with a sparse leading vector or principal subspace.
Various methods such as diagonal screening, semidefinite relaxations, covariance thresholding, and
truncated-power methods have been analyzed for support or eigenspace recovery
\cite{johnstone2009consistency,amini2008high,CaiMW14,DM16,YZ13, Novikov23}. 
These statistical results estimate a distinguished
population component from samples and are therefore distinct from worst-case
approximation of a supplied matrix.
Still, these problems also exhibit statistical--computational gaps
\cite{BerRig13,BreBHLS21}.
Overall, these results concern exact-sparsity optimization
or model-based recovery, whereas we study a fixed-matrix bicriteria gap
problem.

A direct iterative comparison is \cite{KSTZ26}, whose restarted
truncated-power method assumes independent subgaussian samples, a sparse
population leading eigenvector separated by an eigengap, and fresh sample
blocks across iterations.  We assume neither a distinguished sparse leading
eigenvector nor an eigengap; instead, the input is a fixed matrix satisfying
the stated local positive-semidefiniteness conditions and admitting a
high-value sparse direction.  Our counterexample concerns the deterministic
fixed-matrix analogue that reuses the same matrix at every iteration, and
therefore does not contradict the statistical guarantee of \cite{KSTZ26}.

A close structural comparison is the block-diagonalization
framework of \cite{PZZ25}.  Their method thresholds and permutes the matrix,
then invokes a sparse-PCA solver on each connected block; the runtime gain
depends on the largest block being small, and the guarantee incurs additive
error.  We also threshold the matrix, but our argument needs only a
bounded-radius component induced by the unknown sparse witness.  This yields
a worst-case, one-sided multiplicative-gap certificate through either an
enlarged-support vector or an SDP witness.

\paragraph{Robust sparse estimation.}
High-dimensional robust estimation has seen substantial algorithmic progress since the first polynomial-time estimators~\cite{DiaKKLMS16-focs,LaiRV16}. 
Our work is part of a research thread that seeks to reduce the computational
overhead of robustness.  This program has produced faster algorithms for
robust mean estimation
\cite{CheDG19,DonHL19,DepLec22,LeiLVZ20,DiaKKLT22-cluster,DiaKPP22-streaming,DiaKPP23-huber-optimal},
covariance estimation \cite{CheDGW19}, principal component analysis
\cite{JamLT20,DiaKPP23-pca,JamKLPPT24}, list-decoding
\cite{CheMY20,DiaKKLT22-cluster}, and linear regression
\cite{CheATJFB20,DiaKPP23-huber-optimal}.  These algorithms address dense models and thus
require $\Omega(d)$ samples.

Focusing on robust sparse estimation, \cite{BalDLS17} gave the first computationally-efficient framework for robust estimation of sparse functionals. Subsequent work developed filtering, convex-programming, and nonconvex approaches for robust sparse mean estimation \cite{DiaKKPS19,CheDKGGS21,ZhuJS20}.
Other works have relaxed distributional assumptions~\cite{DiaKKPP22-colt,DiaKLP22,DiaHPT25}
or improved error under weaker contamination models~\cite{DiaKKPP24-icml}.

Broadly speaking, the existing results on robust sparse mean estimation can be categorized as follows:
(i) SDP-based algorithms for broad distribution families (e.g., heavy-tailed distributions with unknown covariance) that run in $\Omega(d^4)$ time~\cite{DiaKLP22} and (ii) faster algorithms for highly structured distributions (e.g., identity-covariance Gaussians) that run in quadratic time~\cite{DiaKKPS19} or in subquadratic time~\cite{Pensia24-subquad}. 
The algorithms in (ii) exploit a Gaussian-specific
property (to be precise, bounded variance of quadratic forms) that need not hold for  general identity-covariance subgaussian
distributions; see \cite[Section 5]{Pensia24-subquad}.  
Our work narrows this divide by extending  quadratic
and subquadratic algorithms beyond this highly-structured  setting.

\section{Omitted Facts}\label{app:omitted}

Here we state the standard convergence guarantee for the ordinary power method, which we use to test the largest eigenvalues of the candidate support matrices.
\begin{lemma}[Random-start ordinary power method]
\label{lem:power-method-threshold}
Let $\Sigma\in\mathbb{R}^{d\times d}$ be positive semidefinite, and let
$
0<\lambda_2<\lambda_1$.
Assume that
$
\lambda_{\max}(\Sigma)\geq \lambda_1.
$
Let $g\sim\mathcal{N}(0,I_d)$, define
$
v_0=\frac{g}{\|g\|_2},
$
and run $q$ steps of the normalized power method:
\[
v_t=\frac{\Sigma v_{t-1}}{\|\Sigma v_{t-1}\|_2},
\qquad t=1,\ldots,q.
\]
Fix $\delta\in(0,1)$. If
\[
q
\geq
\max\left\{
0,\,
\left\lceil
\frac{
\log\left(
\frac{16d\lambda_2}
{\pi\delta^3(\lambda_1-\lambda_2)}
\right)
}{
2\log(\lambda_1/\lambda_2)
}
\right\rceil
\right\},
\]
then
\[
\mathbb{P}\left(
v_q^\top\Sigma v_q\geq\lambda_2
\right)
\geq 1-\delta.
\]

\end{lemma}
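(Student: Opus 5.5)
This is the standard random-start analysis of the power method. We reduce to the top eigenvector's weight and then use an anti-concentration bound on the Gaussian start.

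\textbf{Setup.} Diagonalize $\Sigma=\sum_{i}\mu_i w_iw_i^\top$ with $\mu_1=\lambda_{\max}(\Sigma)\ge\lambda_1$. Split the eigen-indices into the ``good'' set $\mathcal G=\{i:\mu_i\ge\lambda'\}$ and the ``bad'' set $\mathcal G^c=\{i:\mu_i<\lambda'\}$, where $\lambda'\in(\lambda_2,\lambda_1)$ is chosen below. Writing $v_0$ in this basis with coefficients $a_i=\langle v_0,w_i\rangle$, the iterate satisfies
\[
v_q\propto\sum_i\mu_i^q a_i w_i .
\]
Hence, with $A=\sum_{i\in\mathcal G}\mu_i^{2q}a_i^2$ and $B=\sum_{i\notin\mathcal G}\mu_i^{2q}a_i^2$,
\[
v_q^\top\Sigma v_q\ \ge\ \lambda'\cdot\frac{A}{A+B}.
\]

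\textbf{Reducing to a bound on $B/A$.} It suffices that $A/(A+B)\ge\lambda_2/\lambda'$, i.e.
\[
B/A\le(\lambda'-\lambda_2)/\lambda_2 .
\]
We bound the two sides separately:
\[
A\ge\lambda_1^{2q}a_1^2,\qquad B\le(\lambda')^{2q}\|v_0\|^2=(\lambda')^{2q}.
\]
So we need
\[
a_1^2\ \ge\ \frac{\lambda_2}{\lambda'-\lambda_2}\,(\lambda'/\lambda_1)^{2q}.
\]

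\textbf{Probabilistic step.} Bound $a_1^2=g_1^2/\|g\|^2$ from below with probability $1-\delta$.
\begin{itemize}
\item Gaussian anti-concentration gives $\Pr(|g_1|\le t)\le t\sqrt{2/\pi}$. Taking $t$ of order $\delta$ bounds $g_1^2$ below by order $\delta^2$ with failure probability at most $\delta/2$.
\item A Markov bound gives $\|g\|^2\le 2d/\delta$ with failure probability at most $\delta/2$.
\end{itemize}
Together these give $a_1^2\gtrsim\pi\delta^3/d$ up to the exact constants. This produces the $\pi\delta^3/(16d)$ shape in the hypothesis, after optimizing the split of $\delta$.

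\textbf{Choosing $\lambda'$ and $q$.} The natural choice is $\lambda'=\sqrt{\lambda_1\lambda_2}$, or simply $\lambda'=(\lambda_1+\lambda_2)/2$. With $\lambda'=\sqrt{\lambda_1\lambda_2}$ we have $(\lambda'/\lambda_1)^{2q}=(\lambda_2/\lambda_1)^{q}$, whereas the stated condition has $2\log(\lambda_1/\lambda_2)$ in the denominator. That is, it needs the rate $(\lambda_2/\lambda_1)^{2q}$, which is the rate obtained with $\lambda'=\lambda_2$ itself. Since $B$'s eigenvalues are strictly below $\lambda'$, the cleaner route is the following:
\begin{itemize}
\item Let the good set be $\{\mu_i\ge\lambda_1\}$ and bound the intermediate band $[\lambda_2,\lambda_1)$ separately.
\item Directly write $v_q^\top\Sigma v_q-\lambda_2\propto\sum_i(\mu_i-\lambda_2)\mu_i^{2q}a_i^2$.
\item Terms with $\mu_i\ge\lambda_2$ are nonnegative, so keep only $i=1$, which contributes at least $(\lambda_1-\lambda_2)\lambda_1^{2q}a_1^2$.
\item Terms with $\mu_i<\lambda_2$ are negative, with magnitude at most $\max_{0\le\mu<\lambda_2}(\lambda_2-\mu)\mu^{2q}\le\lambda_2\cdot\lambda_2^{2q}$.
\end{itemize}
Summing over the bad indices with weights $a_i^2\le1$, the requirement becomes
\[
a_1^2\ \ge\ \frac{\lambda_2}{\lambda_1-\lambda_2}\,(\lambda_2/\lambda_1)^{2q}.
\]
Taking logarithms, this yields exactly the stated threshold on $q$ once combined with $a_1^2\ge\pi\delta^3/(16d)$.

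\textbf{Main obstacle.} The only delicate point is reproducing the exact constant $16/(\pi\delta^3)$ from the anti-concentration and norm bounds. I would try $\Pr(g_1^2<\pi\delta^2/8)\le\delta/2$ together with $\Pr(\|g\|^2>2d/\delta)\le\delta/2$. The resulting product $\pi\delta^3/(16d)$ matches the hypothesis. The case $q=0$ is covered automatically when the logarithm is nonpositive.
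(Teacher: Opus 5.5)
Your final route is correct and is essentially the paper's proof, which differs only in using unnormalized Gaussian coefficients instead of $a_i=\langle v_0,w_i\rangle$. You expand $v_q^\top\Sigma v_q-\lambda_2\propto\sum_i(\mu_i-\lambda_2)\mu_i^{2q}a_i^2$, keep only the top term, bound the negative terms by $\lambda_2^{2q+1}$, and use the events $\{g_1^2\ge\pi\delta^2/8\}$ and $\{\|g\|_2^2\le 2d/\delta\}$ with $g_1=\langle g,w_1\rangle$. You can drop the earlier detour through an intermediate threshold $\lambda'$.
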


\begin{proof}
Let
$
\mu_1\geq\mu_2\geq\cdots\geq\mu_d\geq0
$
be the eigenvalues of $\Sigma$, with corresponding orthonormal
eigenvectors $u_1,\ldots,u_d$. Thus
\[
\Sigma=\sum_{i=1}^d\mu_i u_i u_i^\top,
\qquad
\mu_1=\lambda_{\max}(\Sigma)\geq\lambda_1.
\]
Write the Gaussian initialization in the eigenbasis of $\Sigma$:
$
g=\sum_{i=1}^d \alpha_i u_i.
$
By rotational invariance of the standard Gaussian distribution,
$
\alpha_1,\ldots,\alpha_d
$
are independent standard Gaussian random variables.

The intermediate normalizations in the power method do not change the
direction of the iterate. Hence, almost surely,
$
v_q
=
\frac{\Sigma^q g}{\|\Sigma^q g\|_2}.
$
Indeed, since $\mu_1>0$ and $\alpha_1\neq0$ almost surely,
$\Sigma^qg\neq0$ almost surely.

Define the Rayleigh quotient
$
R_q:=v_q^\top\Sigma v_q.
$
Using the eigenbasis expansion,
\[
R_q
=
\frac{
\sum_{i=1}^d \mu_i^{2q+1}\alpha_i^2
}{
\sum_{i=1}^d \mu_i^{2q}\alpha_i^2
}.
\]
Let
$
D_q:=\sum_{i=1}^d\mu_i^{2q}\alpha_i^2.
$
Then $D_q>0$ almost surely, and
\begin{align*}
D_q(R_q-\lambda_2)
&=
\sum_{i=1}^d
(\mu_i-\lambda_2)\mu_i^{2q}\alpha_i^2.
\end{align*}

We lower-bound the positive contribution from the top eigendirection and
upper-bound all negative contributions. Since
$\mu_1\geq\lambda_1>\lambda_2$,
$(\mu_1-\lambda_2)\mu_1^{2q}\alpha_1^2\geq
(\lambda_1-\lambda_2)\lambda_1^{2q}\alpha_1^2$.
Moreover, if $\mu_i<\lambda_2$, then
\[
(\lambda_2-\mu_i)\mu_i^{2q}
\leq
\lambda_2\lambda_2^{2q}
=
\lambda_2^{2q+1}.
\]
Discarding all other nonnegative terms, we obtain
\begin{align}
D_q(R_q-\lambda_2)
&\geq
(\lambda_1-\lambda_2)\lambda_1^{2q}\alpha_1^2
-
\sum_{\{i:\mu_i<\lambda_2\}}
(\lambda_2-\mu_i)\mu_i^{2q}\alpha_i^2
\nonumber\\
&\geq
(\lambda_1-\lambda_2)\lambda_1^{2q}\alpha_1^2
-
\lambda_2^{2q+1}\sum_{i=1}^d\alpha_i^2
\nonumber\\
&=
(\lambda_1-\lambda_2)\lambda_1^{2q}\alpha_1^2
-
\lambda_2^{2q+1}\|g\|_2^2.
\label{eq:rayleigh-threshold-lower-bound}
\end{align}
Consider the event%
\[
\mathcal{E}
:=
\left\{
\alpha_1^2\geq\frac{\pi\delta^2}{8}
\right\}
\cap
\left\{
\|g\|_2^2\leq\frac{2d}{\delta}
\right\}.
\]
We first show that
$
\mathbb{P}(\mathcal{E})\geq1-\delta.
$
Let $Z\sim\mathcal{N}(0,1)$. Since the standard Gaussian density is at
most $1/\sqrt{2\pi}$, for every $a>0$,
$
\mathbb{P}(|Z|<a)
\leq
\frac{2a}{\sqrt{2\pi}}.
$
Taking
$
a=\delta\sqrt{\frac{\pi}{8}},
$
we obtain
\[
\mathbb{P}\left(
\alpha_1^2<\frac{\pi\delta^2}{8}
\right)
=
\mathbb{P}\left(
|\alpha_1|<\delta\sqrt{\frac{\pi}{8}}
\right)
\leq\frac{\delta}{2}.
\]

Also, $
\mathbb{E}\|g\|_2^2=d.
$
Therefore, by Markov's inequality,%
\[
\mathbb{P}\left(
\|g\|_2^2>\frac{2d}{\delta}
\right)
\leq
\frac{\delta}{2}.
\]
A union bound now gives
$
\mathbb{P}(\mathcal{E})
\geq1-\delta.
$
On the event $\mathcal{E}$, inequality
\eqref{eq:rayleigh-threshold-lower-bound} gives
\begin{align*}
D_q(R_q-\lambda_2)
&\geq
\frac{\pi\delta^2}{8}
(\lambda_1-\lambda_2)\lambda_1^{2q}
-
\frac{2d}{\delta}\lambda_2^{2q+1}.
\end{align*}
The right-hand side is nonnegative whenever
\[
\frac{\pi\delta^2}{8}
(\lambda_1-\lambda_2)\lambda_1^{2q}
\geq
\frac{2d}{\delta}\lambda_2^{2q+1}.
\]
Equivalently,
\[
\left(\frac{\lambda_1}{\lambda_2}\right)^{2q}
\geq
\frac{16d\lambda_2}
{\pi\delta^3(\lambda_1-\lambda_2)}.
\]
The assumed lower bound on $q$ guarantees this inequality. Consequently, on
$\mathcal{E}$, $D_q(R_q-\lambda_2)\geq0$. Since $D_q>0$ almost surely,
it follows that $R_q=v_q^\top\Sigma v_q\geq\lambda_2$. Finally,
$\mathbb{P}(\mathcal{E})\geq1-\delta$, and hence
$\mathbb{P}(v_q^\top\Sigma v_q\geq\lambda_2)\geq1-\delta$.
\end{proof}

\begin{corollary}[Complexity for a fixed relative threshold]
\label{cor:power-method-fixed-relative-threshold}
Under the assumptions of Lemma~\ref{lem:power-method-threshold}, suppose
that $\lambda_2\leq(1-\gamma)\lambda_1$ for some
$\gamma\in(0,1)$. Then it suffices to take
$q=O\!\left(\frac{\log(d/(\delta\gamma))}{\gamma}\right)$.  In
particular, when $\gamma$ and $\delta$ are fixed constants,
$q=O(\log d)$.
If $\Sigma$ is stored as a dense matrix, each power iteration costs
$O(d^2)$ arithmetic operations, so the total running time is $O(d^2\log d)$.
\end{corollary}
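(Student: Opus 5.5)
The plan is to plug the relative gap $\lambda_2\le(1-\gamma)\lambda_1$ into the iteration bound of Lemma~\ref{lem:power-method-threshold} and check that the resulting $q$ is $O(\log(d/(\delta\gamma))/\gamma)$. The lemma already gives a sufficient $q$ of the form $\lceil \log(N)/(2\log(\lambda_1/\lambda_2))\rceil$, where $N=16d\lambda_2/(\pi\delta^3(\lambda_1-\lambda_2))$. So I only need an upper bound on the numerator $\log N$ and a lower bound on the denominator $\log(\lambda_1/\lambda_2)$, both in terms of $\gamma$.

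For the denominator, $\lambda_1/\lambda_2\ge 1/(1-\gamma)$, so $\log(\lambda_1/\lambda_2)\ge -\log(1-\gamma)\ge\gamma$. For the numerator, $\lambda_1-\lambda_2\ge\gamma\lambda_1\ge\gamma\lambda_2$, hence $\lambda_2/(\lambda_1-\lambda_2)\le 1/\gamma$. This gives $N\le 16d/(\pi\delta^3\gamma)$ and therefore $\log N\le O(\log(d/(\delta\gamma)))$, using $\log(1/\delta^3)=3\log(1/\delta)$ and absorbing constants. If $N\le 1$, the lemma permits $q=0$. Combining the two bounds, any $q\ge C\log(d/(\delta\gamma))/\gamma$ with $C$ a large enough constant satisfies the lemma's hypothesis, including after the ceiling. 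The conclusion $v_q^\top\Sigma v_q\ge\lambda_2$ with probability $1-\delta$ then follows directly from the lemma.

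When $\gamma$ and $\delta$ are constants, this specializes to $q=O(\log d)$. For the running time, each normalized iteration is one dense matrix–vector product, costing $O(d^2)$, plus $O(d)$ for the norm and division. Sampling $g$ and the final Rayleigh-quotient evaluation add $O(d^2)$. The total is therefore $O(q\,d^2)=O(d^2\log d)$.

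There is no real obstacle here. The only point needing care is the direction of the inequalities, namely that $-\log(1-\gamma)\ge\gamma$ and that the ratio $\lambda_2/(\lambda_1-\lambda_2)$ is bounded by $1/\gamma$ rather than by something worse. Handling the ceiling and the $\max\{0,\cdot\}$ is routine.
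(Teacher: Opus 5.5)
Your proposal is correct and matches the paper's proof: both lower-bound $\log(\lambda_1/\lambda_2)\ge\log(1/(1-\gamma))\ge\gamma$ and upper-bound $\lambda_2/(\lambda_1-\lambda_2)\le 1/\gamma$ (the paper writes $(1-\gamma)/\gamma\le 1/\gamma$), then substitute into the iteration bound of Lemma~\ref{lem:power-method-threshold}. The running time then follows from $O(d^2)$ per dense matrix--vector product over $q=O(\log d)$ iterations.
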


\begin{proof}
Since $\lambda_2/\lambda_1\leq1-\gamma$,
$\log(\lambda_1/\lambda_2)\geq\log(1/(1-\gamma))\geq\gamma$.
Also,
$\lambda_2/(\lambda_1-\lambda_2)
\leq(1-\gamma)/\gamma\leq1/\gamma$.
Substituting these inequalities into the iteration bound in
Lemma~\ref{lem:power-method-threshold} proves the claim.
\end{proof}
We next prove a standard concentration result for the empirical covariance matrix of subgaussian random vectors; see, e.g., \cite{Vershynin18}.

\begin{fact}[Concentration of the empirical covariance matrix of subgaussian random  vectors]
\label{fact:sparse-quadratic-concentration}
There exists a universal constant $C>0$ such that the following holds. Let
$x_1,\ldots,x_n\in\R^d$ be i.i.d.\ samples from a mean-zero distribution with
covariance matrix $\Sigma$, and let $
    \widehat\Sigma=\frac1n\sum_{i=1}^n x_i x_i^\top.
$
Let the subgaussian constant
$
    K
    =
    \max\left\{1,\sup_{\|u\|_2=1}
    \|\langle u,x_1\rangle\|_{\psi_2}\right\}.
$
Fix $1\le s\le d$ and $\eta,\delta\in(0,1)$. If
\[
    n
    \ge
    C K^4\frac{s\log(ed/s)+\log(2/\delta)}{\eta^2},
\]
then, with probability at least $1-\delta$,
\[
    \sup_{\|u\|_2=1,\ \|u\|_0\le s}
    \left|u^\top(\widehat\Sigma-\Sigma)u\right|
    \le\eta.
\]
\end{fact}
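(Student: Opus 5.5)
The plan is the standard three-step argument: fix a sparse support, control the quadratic process on that support with an $\epsilon$-net, then union bound over supports. First reduce to a single support. For a fixed $S\subseteq[d]$ with $|S|=s$, the supremum over unit vectors supported on $S$ equals $\|(\widehat\Sigma-\Sigma)_{S,S}\|_{\op}$. So it suffices to bound this operator norm for each $S$ and then take a union bound over the $\binom{d}{s}\le(ed/s)^s$ supports.

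For a fixed $S$, take a $1/4$-net $\mathcal N_S$ of the unit sphere in $\R^S$, with $|\mathcal N_S|\le 9^s$. By the usual net argument for symmetric matrices,
\[
\|(\widehat\Sigma-\Sigma)_{S,S}\|_{\op}\le 2\max_{u\in\mathcal N_S}|u^\top(\widehat\Sigma-\Sigma)u|.
\]
This leaves a scalar concentration problem for each fixed unit $u$. Here
\[
u^\top(\widehat\Sigma-\Sigma)u=\frac1n\sum_{i=1}^n\bigl(\langle u,x_i\rangle^2-\E\langle u,x_i\rangle^2\bigr).
\]
Each summand is a centered sub-exponential variable. Its $\psi_1$ norm is at most $C\|\langle u,x_i\rangle\|_{\psi_2}^2\le CK^2$, using that the square of a subgaussian variable is sub-exponential and that centering costs only a constant.

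Bernstein's inequality for sums of independent sub-exponential variables then gives, for $\eta\le 1\le K^2$ (so we are in the quadratic regime of Bernstein),
\[
\Pr\bigl[|u^\top(\widehat\Sigma-\Sigma)u|>\eta/2\bigr]\le 2\exp\bigl(-cn\eta^2/K^4\bigr).
\]

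Next, union bound over all nets and supports. The total number of test vectors is at most $(ed/s)^s9^s$, so the failure probability is at most
\[
2\exp\bigl(s\log(ed/s)+s\log 9-cn\eta^2/K^4\bigr).
\]
This is at most $\delta$ once $n\ge CK^4\bigl(s\log(ed/s)+\log(2/\delta)\bigr)/\eta^2$, since $s\log 9\lesssim s\log(ed/s)$ for $s\le d$. On the complement event, every net point deviates by at most $\eta/2$, and the net inequality yields the claimed uniform bound $\eta$.

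I expect no step to be a genuine obstacle, since this is textbook material (Vershynin, Thm.~4.6.1 style). The step needing the most care is the Bernstein application. One has to track that $\eta<1\le K^2$, so that the minimum in Bernstein's exponent is the quadratic term $\eta^2/K^4$ and not the linear term. One also has to fix the net constant so that the factor $2$ from the net argument is absorbed into the $\eta/2$ target.
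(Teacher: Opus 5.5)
Your proposal is correct and follows essentially the same route as the paper's proof: Bernstein for the centered sub-exponential variables $\langle u,x_i\rangle^2-\E\langle u,x_i\rangle^2$ in the quadratic regime ($\eta<1\le K^2$), a $1/4$-net of size $9^s$ per support with the factor-$2$ net inequality, and a union bound over the $(ed/s)^s$ supports. One small point the paper states and you left implicit: vectors with fewer than $s$ nonzeros are handled by padding their support to size exactly $s$.
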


\begin{proof}
First fix a unit vector $u$ and define
$
    Z_i=\langle u,x_i\rangle$,
    and 
   $ Y_i=Z_i^2-\E Z_i^2$.
By the definition of $K$, $\|Z_i\|_{\psi_2}\le K$.  The square of a
subgaussian random variable is subexponential, and centering changes its
$\psi_1$ norm by at most an absolute factor.  Therefore,
$
    \|Y_i\|_{\psi_1}\le C_0K^2
$
for a universal constant $C_0$.  Moreover,
\[
    \frac1n\sum_{i=1}^nY_i
    =u^\top(\widehat\Sigma-\Sigma)u.
\]
Bernstein's inequality for centered subexponential random variables therefore
gives, for every $t>0$,
\[
    \Pr\!\left(
        \left|u^\top(\widehat\Sigma-\Sigma)u\right|>t
    \right)
    \le
    2\exp\!\left(
        -c n\min\left\{\frac{t^2}{K^4},\frac{t}{K^2}\right\}
    \right).
\]
Since $t\in(0,1)$ and $K\ge1$, this implies
\[
    \Pr\!\left(
        \left|u^\top(\widehat\Sigma-\Sigma)u\right|>t
    \right)
    \le
    2\exp\!\left(-c n\frac{t^2}{K^4}\right).
\]

Next fix a support $S\subseteq[d]$ with $|S|=s$.  The unit vectors supported
on $S$ form a copy of the Euclidean sphere $\mathbb S^{s-1}$.  It has a
$1/4$-net $\mathcal N_S$ satisfying %
\[
    |\mathcal N_S|
    \le
    \left(1+\frac{2}{1/4}\right)^s
    =9^s.
\]
There are at most
$
    \binom ds
    \le
    \left(\frac{ed}{s}\right)^s
$
such supports.  (Vectors supported on fewer than $s$ coordinates are covered
by padding their support to one of size $s$.)  Applying the fixed-direction
bound with $t=\eta/2$ and taking a union bound over every point of every net
shows that the probability of any violation is at most
\[
    2\left(\frac{9ed}{s}\right)^s
    \exp\!\left(-c n\frac{\eta^2}{4K^4}\right).
\]
The stated lower bound on $n$ makes this quantity at most $\delta$, after
adjusting the universal constant $C$.

It remains to pass from each net to every vector on the corresponding sphere.
Put $A=\widehat\Sigma-\Sigma$.  For a fixed support $S$, let
\[
    M_S
    =
    \sup_{\substack{\|u\|_2=1\\\supp(u)\subseteq S}}
    |u^\top Au|
    =\|A_{S,S}\|_{\mathrm{op}}.
\]
The standard quadratic-form net estimate gives, for any $\varepsilon$-net
$\mathcal N_S$ with $\varepsilon<1/2$,
\[
    M_S
    \le
    \frac{1}{1-2\varepsilon}
    \max_{v\in\mathcal N_S}|v^\top Av|.
\]
Indeed, for a unit vector $u$ and a net point $v$ with
$\|u-v\|_2\le\varepsilon$, symmetry of $A_{S,S}$ gives
\[
    |u^\top Au|
    \le
    |v^\top Av|+2\varepsilon M_S;
\]
taking the supremum over $u$ and rearranging proves the estimate.  With
$\varepsilon=1/4$, it follows that
\[
    M_S
    \le
    2\max_{v\in\mathcal N_S}|v^\top Av|.
\]
Thus, if every net point satisfies $|v^\top Av|\le\eta/2$, then every unit
vector supported on $S$ satisfies $|u^\top Au|\le\eta$.  Since the union bound
covered all supports, the conclusion holds simultaneously for every
$s$-sparse unit vector.
\end{proof}

We also use the following elementary consequence of entrywise covariance concentration.

\begin{fact}[Concentration for bounded-$\ell_1$ matrices]
\label{fact:l1-matrix-concentration}
There exists a sufficiently large constant $C>0$ such that the following holds. Let $x_1,\ldots,x_n\in\R^d$ be i.i.d.\ samples from a mean-zero subgaussian distribution with covariance matrix $\Sigma$, and let $\widehat\Sigma=\frac1n\sum_{i=1}^n x_i x_i^\top$. Fix $K\ge1$ and $\eta,\delta\in(0,1)$. If
\[
    n\ge C K^2\frac{\log(d/\delta)}{\eta^2},
\]
then, with probability at least $1-\delta$, simultaneously for every matrix $X\in\R^{d\times d}$ satisfying $\|X\|_1\le K$,
\[
    \left|\langle\widehat\Sigma-\Sigma,X\rangle\right|\le\eta.
\]
\end{fact}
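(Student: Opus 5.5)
The statement is Fact~\ref{fact:l1-matrix-concentration}. I plan to reduce it to entrywise concentration of $\widehat\Sigma-\Sigma$ and then use $\ell_1$--$\ell_\infty$ duality. For any $X$ with $\|X\|_1\le K$,
\[
\left|\langle\widehat\Sigma-\Sigma,X\rangle\right|\le \|X\|_1\max_{i,j}\left|(\widehat\Sigma-\Sigma)_{ij}\right|\le K\max_{i,j}\left|(\widehat\Sigma-\Sigma)_{ij}\right|.
\]
So it suffices to show that, with probability at least $1-\delta$, every entry satisfies $|(\widehat\Sigma-\Sigma)_{ij}|\le\eta/K$. Because this reduction is deterministic, the bound then holds simultaneously for all such $X$.

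For a fixed pair $(i,j)$, I will write
\[
x_{\ell,i}x_{\ell,j}=\tfrac14\left[(x_{\ell,i}+x_{\ell,j})^2-(x_{\ell,i}-x_{\ell,j})^2\right].
\]
Each of $\langle e_i\pm e_j,x_\ell\rangle$ is subgaussian with $\psi_2$-norm $O(1)$, since $\|e_i\pm e_j\|_2=\sqrt2$ and the subgaussian norm is a universal constant. Hence $x_{\ell,i}x_{\ell,j}-\Sigma_{ij}$ is a centered subexponential variable with $\psi_1$-norm $O(1)$. Alternatively, $Y_\ell=x_{\ell,i}x_{\ell,j}-\Sigma_{ij}$ is a product of subgaussians, which gives the same bound directly. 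Bernstein's inequality, exactly as in the proof of Fact~\ref{fact:sparse-quadratic-concentration}, then gives
\[
\Pr\left(\left|(\widehat\Sigma-\Sigma)_{ij}\right|>t\right)\le 2\exp\left(-cn\min\{t^2,t\}\right).
\]

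I will take $t=\eta/K\le1$, so the minimum equals $t^2=\eta^2/K^2$. A union bound over the at most $d^2$ pairs bounds the total failure probability by $2d^2\exp(-cn\eta^2/K^2)$. This is at most $\delta$ once $n\ge CK^2\log(d/\delta)/\eta^2$ with $C$ large, because $\log(2d^2/\delta)=O(\log(d/\delta))$.

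The only point requiring care is verifying that $t\le1$, so that the subexponential tail sits in its subgaussian regime. This holds because $\eta<1\le K$. The rest is routine: the symmetrization identity for the product, and absorbing the $\psi_2$ constants into $C$.
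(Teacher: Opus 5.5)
Your proposal is correct and follows the paper's route: entrywise Bernstein concentration of $\widehat\Sigma-\Sigma$ at level $\eta/K$, a union bound over the $d^2$ entries, and then H\"older's inequality $|\langle\widehat\Sigma-\Sigma,X\rangle|\le\|\widehat\Sigma-\Sigma\|_\infty\|X\|_1$, with the polarization and $t\le1$ details spelled out that the paper leaves as ``standard.'' (The absorption $\log(2d^2/\delta)=O(\log(d/\delta))$ implicitly uses $d\ge2$; this degenerate caveat applies equally to the paper's proof.)
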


\begin{proof}
Standard entrywise covariance concentration and a union bound give
\[
    \|\widehat\Sigma-\Sigma\|_\infty\le\frac{\eta}{K}
\]
with probability at least $1-\delta$ under the stated sample bound. On this event, H\"older's inequality yields, for every $X$ with $\|X\|_1\le K$,
\[
    \left|\langle\widehat\Sigma-\Sigma,X\rangle\right|
    \le
    \|\widehat\Sigma-\Sigma\|_\infty\|X\|_1
    \le\eta.
\]
\end{proof}

We will also need the following well-known fact about sparse norms.

\begin{fact}[Lemma 3.24 \cite{DiaKan22-book}]\label{lem: ksparsenorm}
For any $x,y \in \mathbb{R}^d$ with $y$ $k$-sparse, we have
\[
\|\topp_k(x)-y\|_2
\le
\sqrt{6}\,\|x-y\|_{2,k}.
\]
\end{fact}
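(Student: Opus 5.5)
Write $S=\supp(y)$ with $|S|\le k$, let $T=\supp(\topp_k(x))$ be the set of the $k$ largest-magnitude coordinates of $x$, and put $\Delta=\|x-y\|_{2,k}$. The plan is to split the error $\topp_k(x)-y$ over three disjoint coordinate sets, $T\cap S$, $T\setminus S$ and $S\setminus T$. We bound each piece by the sparse norm of $x-y$ restricted to a set of size at most $k$.

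We use one basic fact throughout. For any $U\subseteq[d]$ with $|U|\le k$, we have $\|(x-y)_U\|_2\le\Delta$, because $\|\cdot\|_{2,k}$ is the largest $\ell_2$ norm over $k$ coordinates.

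\emph{Step 1: coordinates in $T$.} On $T$, the vector $\topp_k(x)-y$ agrees with $x-y$. Since $|T|\le k$,
\[
\|(\topp_k(x)-y)_T\|_2^2=\|(x-y)_{T\cap S}\|_2^2+\|(x-y)_{T\setminus S}\|_2^2=\|(x-y)_T\|_2^2 .
\]
This is at most $\Delta^2$. For later use we also record the two pieces separately: $a^2=\|(x-y)_{T\cap S}\|_2^2$ and $b^2=\|x_{T\setminus S}\|_2^2$, where the second uses $y=0$ off $S$.

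\emph{Step 2: coordinates in $S\setminus T$.} This is the key step, and the main obstacle. On this set, $\topp_k(x)=0$, so the error equals $y_{S\setminus T}$. We write $y=x-(x-y)$, which gives
\[
\|y_{S\setminus T}\|_2\le\|x_{S\setminus T}\|_2+\|(x-y)_{S\setminus T}\|_2 .
\]
The second term is at most $\Delta$, since $|S\setminus T|\le k$.

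For the first term we use the selection rule. We have $|S\setminus T|=|S|-|S\cap T|\le k-|S\cap T|$ and $|T\setminus S|=|T|-|S\cap T|$, so $|S\setminus T|\le|T\setminus S|$ whenever $|T|=k$. If $|T|<k$, then $x$ has fewer than $k$ nonzero coordinates, so $x_{S\setminus T}=0$ and the bound is trivial. Every coordinate outside $T$ has magnitude at most every coordinate inside $T$. Hence $\|x_{S\setminus T}\|_2^2\le\|x_{T\setminus S}\|_2^2=b^2$.

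This swap is valid only because $S\setminus T$ is no larger than $T\setminus S$, and the counting above is what guarantees that. We then get $\|y_{S\setminus T}\|_2\le b+\Delta$.

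\emph{Step 3: combine.} Adding the three pieces gives
\[
\|\topp_k(x)-y\|_2^2\le a^2+b^2+(b+\Delta)^2 .
\]
We have $a^2+b^2\le\Delta^2$ from Step 1, and $(b+\Delta)^2\le 2b^2+2\Delta^2\le 4\Delta^2$ using $b\le\Delta$. The total is therefore at most $\Delta^2+4\Delta^2=5\Delta^2\le6\Delta^2$. Taking square roots gives $\|\topp_k(x)-y\|_2\le\sqrt6\,\|x-y\|_{2,k}$.
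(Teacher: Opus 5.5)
Your proof is correct. The paper gives no proof of this fact and simply imports it from the cited book. Your argument is the standard one: split the error over $T\cap S$, $T\setminus S$ and $S\setminus T$ (it vanishes off $S\cup T$, which you should state explicitly), then use the exchange inequality $\|x_{S\setminus T}\|_2\le\|x_{T\setminus S}\|_2$, which follows from $|S\setminus T|\le|T\setminus S|$. The constant $\sqrt{6}$ matches bounding the three pieces separately by $\Delta^2$, $\Delta^2$ and $4\Delta^2$. Your joint use of $a^2+b^2\le\Delta^2$ sharpens this to $\sqrt{5}$, which of course still gives the stated bound.
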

We now state several standard facts concerning filtering-based mean estimation algorithms. Similar lemmas are proved in Section 2 of \cite{DiaKan22-book}.

\begin{lemma}[Cutoff score mass inequality]
\label{lem:score-mass-inequality}
Let $T=G\cup B$, let $N=|T|$, and let $0<\alpha\eqdef|B|/N\le2\eps$, where $\eps\le1/18$.  Let $m$ be the mean of $T$, let $W\succeq0$ satisfy $\operatorname{tr}(W)=1$, and define the raw scores $g(x)\eqdef(x-m)^\top W(x-m)$.
Let $L$ consist of the $\lfloor2\eps N\rfloor$ largest raw scores, and define $\tau(x)=g(x)\mathbf1\{x\in L\}$.  Put $H=G\setminus L$.  Suppose that $\Gamma\ge\eps$ and that the 
two sets $K\in\{G,H\}$ satisfy
\[
    \left|
    \frac1{|K|}\sum_{x\in K}(x-\mu)^\top W(x-\mu)-1
    \right|\le\Gamma.
\]
If, for the empirical covariance $M_T$ around $m$,
$
    \langle M_T,W\rangle>1+48\Gamma,
$
then
\[
    \sum_{x\in B}\tau(x)
    \ge
    \sum_{x\in G}\tau(x).
\]
\end{lemma}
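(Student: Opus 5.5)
The plan is to compare the total cutoff score mass of $B$ and of $G$ through the seminorm $\|y\|_W:=(y^\top W y)^{1/2}$. The proof goes in three steps: an upper bound on the good mass inside $L$ in terms of $\Gamma$ and $\Delta:=\|m-\mu\|_W^2$; a lower bound on the bad mass inside $L$ forced by $\langle M_T,W\rangle>1+48\Gamma$; and a bound on $\Delta$ so that the first quantity is at most the second. Throughout, I use $W\succeq0$ (so $g\ge 0$ and $(a+b)^\top W(a+b)\le 2a^\top Wa+2b^\top Wb$) and the inclusion $H=G\setminus L\subseteq G$.

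\emph{Step 1: good mass in $L$.} Set $G_L:=G\cap L$, so that $\sum_{x\in G}\tau(x)=\sum_{x\in G_L}g(x)$. The stability hypotheses for $K=G$ and $K=H$ give
\[
\sum_{x\in G_L}(x-\mu)^\top W(x-\mu)\le |G|(1+\Gamma)-|H|(1-\Gamma)\le |G|\bigl(2\Gamma+|G_L|/|G|\bigr).
\]
Since $|G_L|\le|L|\le 2\eps N\le 3\eps|G|$ and $\eps\le\Gamma$, the right-hand side is $O(\Gamma|G|)$. Writing $x-m=(x-\mu)+(\mu-m)$ then yields
\[
\sum_{x\in G}\tau(x)\le O(\Gamma|G|)+6\eps|G|\,\Delta .
\]
The same two-set comparison gives a mean-resilience bound $\|\mu_G-\mu\|_W^2,\ \|\mu_H-\mu\|_W^2=O(\Gamma)$. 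The idea is to apply Cauchy--Schwarz to the removed part $G_L$: $|G_L|\,\|\mu_{G_L}-\mu\|_W^2\le\sum_{G_L}(x-\mu)^\top W(x-\mu)=O(\Gamma|G|)$, and to write $\mu_G-\mu$ as a $|G_L|/|G|$-weighted pull of $\mu_{G_L}-\mu$ relative to $\mu_H-\mu$. If bounding $\mu_H$ itself is circular, I would fall back on a constant-fraction random-split version of the same estimate.

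\emph{Step 2: bad mass in $L$.} Decompose $N\langle M_T,W\rangle=\sum_{x\in G}g(x)+\sum_{x\in B}g(x)$. For the good part, $\sum_{x\in G}g(x)=|G|\,Q_G(m)\cdot W$, and
\[
Q_G(m)\cdot W=Q_G(\mu)\cdot W+2\langle \mu_G-\mu,\mu-m\rangle_W+\Delta\le 1+\Gamma+O(\sqrt{\Gamma\Delta})+\Delta .
\]
Hence
\[
\sum_{x\in B}g(x)\ge N(1+48\Gamma)-|G|\bigl(1+\Gamma+O(\sqrt{\Gamma\Delta})+\Delta\bigr)\ge N\bigl(47\Gamma-O(\sqrt{\Gamma\Delta})-\Delta\bigr).
\]
To pass from raw to cutoff scores, use $|B|\le 2\eps N$, so $|L|=\lfloor 2\eps N\rfloor\ge|B|$. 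Any bad point outside $L$ has score at most the minimum score in $L$. Since $|L\setminus B|\ge|B\setminus L|$, the mass of $B\setminus L$ is at most the mass of $G_L$. Therefore $\sum_{x\in B}\tau(x)\ge\sum_{x\in B}g(x)-\sum_{x\in G}\tau(x)$, and it suffices to show $\sum_{x\in B}g(x)\ge 2\sum_{x\in G}\tau(x)$.

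\emph{Step 3: controlling $\Delta$ (the expected obstacle).} I would use $m-\mu_G=\alpha(\mu_B-\mu_G)$ together with Cauchy--Schwarz on $B$:
\[
\sum_{x\in B}g(x)\ge|B|\,\|\mu_B-m\|_W^2=\alpha N(1-\alpha)^2\|\mu_B-\mu_G\|_W^2\approx N\,\|m-\mu_G\|_W^2/\alpha .
\]
Combined with $\|\mu_G-\mu\|_W^2=O(\Gamma)$, this gives $\sum_{x\in B}g(x)\gtrsim N(\Delta-O(\Gamma))/(2\eps)$. Now split into two cases.
\begin{itemize}
\item If $\Delta\le c\Gamma$ for a small constant $c$, Step 2 gives $\sum_{x\in B}g(x)\ge 40\Gamma N$, while Step 1 gives $\sum_{x\in G}\tau(x)=O(\Gamma N)$ with constants small enough when the $48$ is used.
\item If $\Delta> c\Gamma$, the Cauchy--Schwarz bound gives $\sum_{x\in B}g(x)\gtrsim N\Delta/\eps\gg 6\eps|G|\Delta+O(\Gamma N)$, using $\eps\le 1/18$.
\end{itemize}
In both cases $\sum_{x\in B}g(x)\ge2\sum_{x\in G}\tau(x)$, which proves the lemma. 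The delicate point is bookkeeping the absolute constants, namely the $48$ and the $1/18$, so that both cases close simultaneously. I would handle this by fixing $c$ first and then checking the two inequalities numerically.
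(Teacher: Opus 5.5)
\textbf{Gap: the mean-resilience bound is false.} Your argument rests on the claim in Step~1 that the two stability hypotheses give $\|\mu_G-\mu\|_W^2=O(\Gamma)$. The lemma only controls the $W$-second moment about $\mu$ on the two sets $G$ and $H$, and that says nothing about where $\mu_G$ lies.

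For a counterexample, take $W=e_1e_1^\top$ and put every good point at $\mu+e_1$. Every subset of $G$ then has $W$-second moment exactly $1$ about $\mu$, so the hypotheses hold with $\Gamma=\eps$, yet $\|\mu_G-\mu\|_W^2=1$. Your pull argument is circular in $\|\mu_H-\mu\|_W$, as you suspected. The random-split fallback is also unavailable, because stability is assumed for $G$ and $H$ only.

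Both later steps depend on this bound. Place the bad points at $\mu+(1-R)e_1$ with $\alpha R=2$, so that $\langle M_T,W\rangle=2(1-\alpha)R\ge32$. Then $m=\mu-e_1$, $\Delta=1$, and $\langle Q_G(m),W\rangle=4$, while your Step~2 bound reads $2+O(\sqrt\Gamma)$. The reason is that the cross term $2\langle\mu_G-\mu,\mu-m\rangle_W$ equals $2\sqrt\Delta$, not $O(\sqrt{\Gamma\Delta})$. Likewise, the Step~3 estimate $\|m-\mu_G\|_W^2\gtrsim\Delta-O(\Gamma)$ can fail whenever $\Delta\lesssim1$, for instance when $m$ is close to $\mu_G=\mu+e_1$. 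With only the valid Jensen bound $\|\mu_G-\mu\|_W^2\le1+\Gamma$, neither branch of your case split covers the range $\Gamma^2\ll\Delta\ll1$. Even granting resilience, your threshold $c$ must be small for Case~1 but larger than the squared resilience constant for Case~2, and you have not reconciled these.

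\textbf{How to repair it.} Compare $\mu_G$ with $\mu$ only where the comparison is multiplied by $|G\cap L|\le2\eps N\le2\Gamma N$.
\begin{itemize}
\item In Step~2, expand around $\mu_G$ instead of $\mu$: $\sum_{x\in G}g(x)=|G|\bigl(C_G+\|m-\mu_G\|_W^2\bigr)$, where $C_G$ is the $W$-variance of $G$ about its own mean. Since the mean minimizes the second moment, $C_G\le\langle Q_G(\mu),W\rangle\le1+\Gamma$, and no cross term appears.
\item In Step~1, the crude bound $\Delta\le2(1+\Gamma)+2\|m-\mu_G\|_W^2$ suffices. Its $O(1)$ part contributes only $O(\eps N)=O(\Gamma N)$.
\item Finally, the between-group variance gives $\|m-\mu_G\|_W^2=\alpha^2\|\mu_B-\mu_G\|_W^2\le\frac{\alpha}{1-\alpha}\langle M_T,W\rangle$. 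Your Cauchy--Schwarz bound on $\sum_{x\in B}g(x)$ would serve equally well. Either way, both occurrences of $\|m-\mu_G\|_W^2$ are an $O(\eps)$ fraction of $N\langle M_T,W\rangle$ and are absorbed.
\end{itemize}
This is the paper's route. It obtains $\sum_{x\in B}g(x)\ge(1-\alpha)N(\langle M_T,W\rangle-1-\Gamma)$ and $\sum_{x\in G}\tau(x)\le17\Gamma N+18\eps^2\langle M_T,W\rangle N$. The constants $48$ and $1/18$ then close with no case split.

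Your raw-to-cutoff reduction, $\sum_{x\in B}\tau(x)\ge\sum_{x\in B}g(x)-\sum_{x\in G}\tau(x)$, is correct and matches the paper.
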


\begin{proof}
We first upper-bound the cutoff score mass contributed by $G$ and then lower-bound the cutoff score mass contributed by $B$.
Write $V=\langle M_T,W\rangle$, $a=|G\cap L|$, and
\[
    S_\mu\eqdef\sum_{x\in G\cap L}(x-\mu)^\top W(x-\mu).
\]
Subtracting the stability bounds for $G$ and $H=G\setminus L$ gives
\[
\begin{aligned}
    S_\mu
    &\le |G|(1+\Gamma)-|H|(1-\Gamma)\\
    &\le a+2\Gamma N
    \le4\Gamma N,
\end{aligned}
\]
where we used $a\le|L|\le2\eps N$ and $\Gamma\ge\eps$.

Let $m_G,m_B$ be the two empirical means, put $z=m-\mu$, and set $d=z^\top Wz$.  The variance decomposition gives
\[
    (m-m_G)^\top W(m-m_G)
    \le\frac{\alpha}{1-\alpha}V.
\]
Jensen's inequality and stability for $G$ therefore imply
\[
    d
    \le2(m_G-\mu)^\top W(m_G-\mu)
       +2(m-m_G)^\top W(m-m_G)
    \le2(1+\Gamma)+\frac{2\alpha}{1-\alpha}V.
\]
Since $g(x)\le2(x-\mu)^\top W(x-\mu)+2d$, the total cutoff score of the points in $G$, denoted by $C$, satisfies

\[
\begin{aligned}
    C
    &\eqdef\sum_{x\in G}\tau(x)\\
    &\le2S_\mu+2ad\\
    &\le(16+8\eps)\Gamma N
       +\frac{16\eps^2}{1-2\eps}VN\\
    &\le17\Gamma N+18\eps^2VN.
\end{aligned}
\]
The second inequality uses $a\le2\eps N$, $\alpha\le2\eps, \Gamma\ge\eps$ and the final inequality follows from $\eps\le1/18$.

We now lower-bound the cutoff score mass contributed by $B$.  If
\[
    C_G\eqdef\frac1{|G|}\sum_{x\in G}(x-m_G)^\top W(x-m_G),
    \qquad
    \Delta\eqdef(m_B-m_G)^\top W(m_B-m_G),
\]
then $C_G\le1+\Gamma$ and
\[
\begin{aligned}
    \sum_{x\in G}g(x)
    &=|G|\bigl(C_G+\alpha^2\Delta\bigr)\\
    &\le(1-\alpha)N(1+\Gamma)+\alpha NV.
\end{aligned}
\]
Consequently,
\[
    \sum_{x\in B}g(x)
    \ge(1-\alpha)N\bigl(V-1-\Gamma\bigr).
\]
Because $|B|\le2\eps N$ and $|B|$ is integral, $|B|\le\lfloor2\eps N\rfloor=|L|$.  The definition of $L$ therefore gives
\[
    \sum_{x\in T}\tau(x)
    =\sum_{x\in L}g(x)
    \ge\sum_{x\in B}g(x).
\]
For $V>1+48\Gamma$, $\alpha\le2\eps$, $\eps\le1/18$, and $\Gamma\ge\eps$, we have $1-2\eps-36\eps^2>0$.  Hence
\[
\begin{aligned}
    &(1-\alpha)(V-1-\Gamma)-34\Gamma-36\eps^2V\\
    &\quad\ge(1-2\eps-36\eps^2)V
       -(1-2\eps)(1+\Gamma)-34\Gamma\\
    &\quad>(13-94\eps-1728\eps^2)\Gamma-36\eps^2\\
    &\quad\ge(13-130\eps-1728\eps^2)\Gamma\\
    &\quad\ge\frac49\Gamma>0.
\end{aligned}
\]
Therefore, $\sum_{x\in B}\tau(x)\ge C=\sum_{x\in G}\tau(x)$, as claimed.
\end{proof}

\begin{lemma}[Stopping implies sparse-norm accuracy]
\label{lem:stopping-implies-accuracy}
Let $T=G\cup B$, let $m=\mu_T$, and let $M_T$ be the empirical covariance around $m$.  Put $\alpha=|B|/|T|\le2\eps$, where $\eps\le1/18$ and $\Gamma\ge\eps$.  Suppose that $\|\mu_G-\mu\|_{2,k}\le\gamma$ and, for every unit $k$-sparse vector $v$, $v^\top Q_G(\mu)v\ge1-\Gamma$.
Suppose the stopping condition $\|M_T\|_{\op,k}<R$ holds.  Then
\[
    \|m-\mu\|_{2,k}
    =O\!\left(\gamma+\sqrt{\eps\bigl(R-1+\Gamma\bigr)}\right).
\]
\end{lemma}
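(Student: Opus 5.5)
The plan is to fix a unit $k$-sparse direction $v$ and bound $\langle v, m-\mu\rangle$. Since $\|m-\mu\|_{2,k}=\sup_v\langle v,m-\mu\rangle$, this suffices. Write $m=(1-\alpha)m_G+\alpha m_B$, so that
\[
\langle v,m-\mu\rangle=\langle v,m_G-\mu\rangle+\alpha\langle v,m_B-m_G\rangle .
\]
The first term is at most $\gamma$ by hypothesis, so everything reduces to controlling $\alpha\langle v,m_B-m_G\rangle$ using the stopping condition.

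For the second term, expand the directional variance of $T$ around $m$ by the standard mixture decomposition:
\[
v^\top M_T v=(1-\alpha)\,v^\top \operatorname{Cov}_G v+\alpha\, v^\top\operatorname{Cov}_B v+\alpha(1-\alpha)\langle v,m_B-m_G\rangle^2 .
\]
Dropping the nonnegative $B$-covariance term and using $v^\top M_Tv<R$ gives
\[
\alpha(1-\alpha)\langle v,m_B-m_G\rangle^2< R-(1-\alpha)\,v^\top\operatorname{Cov}_Gv .
\]
Next lower-bound $v^\top\operatorname{Cov}_Gv$. We have $\operatorname{Cov}_G=Q_G(\mu)-(m_G-\mu)(m_G-\mu)^\top$, so the lower stability hypothesis yields
\[
v^\top\operatorname{Cov}_Gv\ge 1-\Gamma-\langle v,m_G-\mu\rangle^2\ge 1-\Gamma-\gamma^2 .
\]
Plugging in gives
\[
\alpha(1-\alpha)\langle v,m_B-m_G\rangle^2\le R-(1-\alpha)(1-\Gamma-\gamma^2)\le (R-1)+\Gamma+\gamma^2+\alpha .
\]

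Multiplying by $\alpha/(1-\alpha)$ and taking square roots,
\[
\alpha|\langle v,m_B-m_G\rangle|\lesssim\sqrt{\alpha\,(R-1+\Gamma+\gamma^2+\alpha)} .
\]
Now use $\alpha\le 2\eps\le 2\Gamma$ to absorb the $\alpha$ term into $\Gamma$. The $\gamma^2$ term contributes $\sqrt{\alpha}\gamma\le\gamma$, and the rest is $O(\sqrt{\eps(R-1+\Gamma)})$. Adding the $\gamma$ from the first term and taking the supremum over $v$ gives the claim.

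The one point needing care is that $R-1+\Gamma$ could a priori be negative, making the square root meaningless. I expect this is the main obstacle, and it is handled by the same inequality: since the left-hand side is nonnegative, $R\ge(1-\alpha)(1-\Gamma-\gamma^2)$, so the combined quantity $R-1+\Gamma+\gamma^2+\alpha$ is nonnegative. The final bound is then stated with that combined quantity and simplified as above.
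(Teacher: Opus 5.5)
Your proposal follows the paper's proof essentially step for step. Both use the split $\langle v,m-\mu\rangle=\langle v,m_G-\mu\rangle+\langle v,m-m_G\rangle$ and the same mixture variance decomposition; the paper's cross term $\frac{1-\alpha}{\alpha}\langle v,m-m_G\rangle^2$ is exactly your $\alpha(1-\alpha)\langle v,m_B-m_G\rangle^2$. Both also use the same bound $v^\top\operatorname{Cov}_G v\ge 1-\Gamma-\gamma^2$ from $\operatorname{Cov}_G=Q_G(\mu)-(m_G-\mu)(m_G-\mu)^\top$, and then rearrange.

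Your last paragraph, however, does not fix the issue you raise. Nonnegativity of $R-1+\Gamma+\gamma^2+\alpha$ does not justify absorbing the $\alpha$ term. That step needs $\eps=O(R-1+\Gamma)$, which holds for instance when $R\ge1$, since then $R-1+\Gamma\ge\Gamma\ge\eps$. This is true wherever the lemma is used, because there $R>1+48\Gamma$, and the paper's proof assumes it tacitly as well.

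Without such a condition the statement is false. Take $\gamma=0$, $\mu_G=\mu$, $\operatorname{Cov}_G=(1-\Gamma)I$, and let $B$ be a point mass at $\mu+ce_1$ with $c^2=(1-\Gamma)/(1-\alpha)$. Then $\|M_T\|_{\mathrm{op},k}=1-\Gamma$, so every $R>1-\Gamma$ satisfies the stopping condition. Yet $\|m-\mu\|_{2,k}=\alpha c\approx\alpha$, which can be as large as about $2\eps$, while the claimed bound tends to $0$ as $R\downarrow 1-\Gamma$. The fix is to add the hypothesis $R\ge1$ to the lemma.
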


\begin{proof}
The case $\alpha=0$ follows from the mean-stability assumption.  Otherwise write $m_G=\mu_G$ and let $v$ be a unit $k$-sparse vector attaining $\|m-\mu\|_{2,k}$.  Mean stability gives
\[
    |\langle v,m-m_G\rangle|
    \ge\|m-\mu\|_{2,k}-\gamma.
\]
Moreover,
\[
    \operatorname{Var}_{G}(\langle v,x\rangle)
    =v^\top Q_{G}(\mu)v-\langle v,m_G-\mu\rangle^2
    \ge1-\Gamma-\gamma^2.
\]
Variance decomposition therefore yields
\[
\begin{aligned}
    R
    &>v^\top M_Tv\\
    &\ge(1-\alpha)(1-\Gamma-\gamma^2)
       +\frac{1-\alpha}{\alpha}\langle v,m-m_G\rangle^2.
\end{aligned}
\]
Rearranging and using $\alpha\le2\eps$, $\eps\le1/18$, and $\Gamma\ge\eps$ gives
\[
    \|m-\mu\|_{2,k}
    \le\gamma+
    \sqrt{\frac{\alpha}{1-\alpha}
    \left(R-(1-\alpha)(1-\Gamma-\gamma^2)\right)}
    =O\!\left(\gamma+\sqrt{\eps(R-1+\Gamma)}\right).
\]
\end{proof}

\begin{lemma}[Randomized score filtering (see e.g. \cite{DiaKan22-book} Theorem 2.17)]
\label{lem:randomized-score-filter}
Consider an adaptive filtering process initialized at $T^{(0)}=G_0\cup B_0$.  At each nonterminal round $t$, the filtering process has a current decomposition $T^{(t)}=G_t\cup B_t$, chooses scores $\tau_t:T^{(t)}\to\R_{\ge0}$ with $\tau_{t,\max}\eqdef\max_{x\in T^{(t)}}\tau_t(x)>0$, and independently deletes each $x\in T^{(t)}$ with probability $\tau_t(x)/\tau_{t,\max}$.  Assume that whenever the filtering process has not stopped and at most $3|B_0|$ samples from $G_0$ have been removed before round $t$, the scores satisfy
\[
    \sum_{x\in B_t}\tau_t(x)
    \ge
    \sum_{x\in G_t}\tau_t(x).
\]
Then, with probability at least $2/3$, at most $3|B_0|$ samples from $G_0$ are removed over the entire filtering process.  Moreover, because every nonterminal filtering round deletes a maximum-score sample with probability one, the filtering process has at most $|T^{(0)}|$ filtering rounds.
\end{lemma}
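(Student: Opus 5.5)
We would prove the lemma with a nonnegative supermartingale argument, the standard one behind randomized filtering in \cite{DiaKan22-book}. Define the potential
\[
    \Phi_t \eqdef |G_0\setminus G_t| + |B_t|,
\]
that is, the number of good samples removed so far plus the number of bad samples still present. Then $\Phi_0=|B_0|$ and $\Phi_t\ge0$ always. The good-removal count $|G_0\setminus G_t|$ is at most $\Phi_t$.

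First we compute the one-step drift. Condition on the history up to round $t$, in a nonterminal round where at most $3|B_0|$ good samples have been removed. The expected number of good samples deleted in this round is $\sum_{x\in G_t}\tau_t(x)/\tau_{t,\max}$. The expected number of bad samples deleted is $\sum_{x\in B_t}\tau_t(x)/\tau_{t,\max}$. By the assumed score inequality the first is at most the second. Since $\Phi_{t+1}-\Phi_t$ equals (good deleted) minus (bad deleted), we get $\E[\Phi_{t+1}-\Phi_t\mid\mathcal F_t]\le0$.

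To use the hypothesis only where it is available, we stop the process. Let $\sigma$ be the first round at which either the filter terminates or more than $3|B_0|$ good samples have been removed. The stopped process $\Phi_{t\wedge\sigma}$ is then a nonnegative supermartingale, because the drift bound holds at every round strictly before $\sigma$. By the second claim of the lemma the horizon is finite: each nonterminal round deletes a maximum-score sample with probability one, so there are at most $|T^{(0)}|$ rounds.

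Ville's maximal inequality (equivalently, optional stopping plus Markov) now gives
\[
    \Pr\Bigl[\max_t \Phi_{t\wedge\sigma} > 3|B_0|\Bigr]\le \frac{\Phi_0}{3|B_0|}=\frac13 .
\]
On the complementary event, $\sigma$ is never triggered by excess good removals. Hence the stopped process coincides with the original one, and the good-removal count never exceeds $\Phi\le3|B_0|$.

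The degenerate case $B_0=\emptyset$ is handled separately. Then $\Phi_0=0$, and a nonnegative supermartingale started at $0$ stays at $0$ almost surely, so no good samples are removed at all.

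The main point of care is the stopping construction. The score inequality is only guaranteed while at most $3|B_0|$ good points have been removed, so the supermartingale property must be asserted for the stopped process rather than for the raw one. With that in place, the remaining steps are routine.
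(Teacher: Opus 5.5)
Your proof is correct and is essentially the standard argument behind the cited \cite{DiaKan22-book} Theorem~2.17, which the paper invokes rather than reproves. The potential (good samples removed plus bad samples remaining), stopped at the first time the score hypothesis may fail, is a nonnegative supermartingale, and the maximal inequality gives the $1/3$ failure bound. The one detail worth making explicit is that $\mathcal F_t$ should include any randomness used in round $t$ before the deletion coins (e.g., the certificate call), so that termination, the scores, and $\{\sigma>t\}$ are all $\mathcal F_t$-measurable.
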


\section{Fast Covariance Detection}
\label{app:raw-covariance-detection}
In this section, we prove the following lemma by reducing it to previously established results on correlation detection.

\rawcovariancedetection*

We first explain the reduction.  Correlation detection is
formulated as the task of  finding large  normalized inner products among vectors (see \Cref{fact:boolean-correlation-detection}), while the preceding lemma asks for
the pairs of coordinates whose  empirical covariance is large.  
First by looking at each coordinate of the centered data points as a vector we have that we need  the large unnormalized inner products of those vectors.
In order to do that we equalize the norms of
the coordinate vectors by padding each of them in a distinct orthogonal
direction.  This increases their dimension from $n$ to $n+d$, so applying the
 correlation detection algorithm directly to the padded vectors would introduce
an unacceptable dependence on $d$ through the vector dimension. However,  the standard approach used to get correlation detection for real valued vectors from correlation detection from boolean vectors (see \cite[Appendix A.4]{Pensia24-subquad})  avoids this
problem by applying a gaussian projection to the vectors making their dimension effectively $\log(d)$. Also because the padding matrix is diagonal, its projection can be
computed implicitly in nearly linear time in $d$.

We record the two ingredients used in the proof.  The first is the
Boolean correlation-detection theorem of \cite{Valiant15}, in the notation
needed here.

\begin{fact}[Boolean correlation detection {\cite[Theorem~2.1]{Valiant15}}]
\label{fact:boolean-correlation-detection}
Let $y_1,\ldots,y_d\in\{-1,1\}^m$, and let
$1\ge\rho_0>\tau_0>0$.  Suppose that there are at most $s$ pairs
$\{i,j\}$ such that
\[
    \frac{|\langle y_i,y_j\rangle|}
         {\|y_i\|_2\|y_j\|_2}>\tau_0.
\]
There is a randomized algorithm that, with probability $1-o(1)$, outputs
all pairs satisfying
\[
    \frac{|\langle y_i,y_j\rangle|}
         {\|y_i\|_2\|y_j\|_2}\ge\rho_0.
\]
Its running time is
\[
    \left(
        s m d^{0.62}
        +d^{1.62+2.4\frac{\log(1/\rho_0)}{\log(1/\tau_0)}}
    \right)
    \operatorname{poly}(\log d,1/\tau_0).
\]
The success probability can be amplified to $1-\delta$ with an additional
$O(\log(1/\delta))$ factor.
\end{fact}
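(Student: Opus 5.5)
This statement is not proved from scratch here. It is the Boolean correlation-detection theorem of \cite[Theorem~2.1]{Valiant15}, restated in our notation. The plan is therefore to check that the translation is faithful and to justify separately the one addition, the amplification of the success probability.

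First I will match parameters. In Valiant's formulation, $d$ vectors in $\{-1,1\}^m$ are given, and there are at most $s$ ``weakly correlated'' pairs with normalized inner product above a threshold $\tau_0$. His algorithm outputs all pairs whose normalized inner product is at least $\rho_0$. Since $\|y_i\|_2=\sqrt m$ for Boolean vectors, the normalized and unnormalized formulations agree up to the factor $m$. I will recall the mechanism only to confirm that the running time has the displayed form:
\begin{enumerate}
\item Partition the vectors into groups.
\item Form random signed aggregates of each group, possibly after a tensoring/expansion step that amplifies the gap between $\rho_0$ and $\tau_0$.
\item Multiply the aggregated matrices with fast rectangular matrix multiplication. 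This yields the term $d^{1.62+2.4\log(1/\rho_0)/\log(1/\tau_0)}$, where the exponent reflects the expansion degree needed to separate $\rho_0$ from $\tau_0$.
\item For each flagged pair of groups, brute-force over its members. Each of the at most $s$ weak pairs can spuriously flag a group pair, and each check costs $O(m\,d^{0.62})$, which gives the $s\,m\,d^{0.62}$ term.
\end{enumerate}
The $\operatorname{poly}(\log d,1/\tau_0)$ factor absorbs the dimension reduction and repetition inside that construction. The step I expect to need the most care is this parameter bookkeeping, in particular that Valiant's exponents are stated for the regime $1\ge\rho_0>\tau_0>0$ that we use. I would treat it as a citation check, not a new argument.

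Second, I will prove the amplification claim directly. Run the algorithm independently $t=O(\log(1/\delta))$ times and collect all output pairs. Verify each reported pair $\{i,j\}$ by computing $\langle y_i,y_j\rangle$ exactly in $O(m)$ time, and keep it only if its normalized inner product is at least $\rho_0$. This verification guarantees there are no false positives. A true pair is missed only if every run fails, which happens with probability $o(1)^t\le\delta$ for a suitable constant in $t$. The total number of reported pairs is bounded by the running time of each run, so verification costs at most an extra $O(m)$ factor per reported pair, and this is absorbed into the stated bound. The overall cost is therefore the single-run cost times $O(\log(1/\delta))$, as claimed.
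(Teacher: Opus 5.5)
Your proposal matches the paper's treatment: the paper likewise takes this statement directly from \cite[Theorem~2.1]{Valiant15} without proof, including the amplification remark, and your repetition argument for the amplification is sound.

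One bookkeeping slip concerns the verification step. Bounding the number of reported pairs by a run's running time $T$ only gives a verification cost of $O(mT)$. That cost is not absorbed into the stated bound when $m$ exceeds $\operatorname{poly}(\log d,1/\tau_0)$. There are two easy fixes:
\begin{itemize}
\item Drop the verification entirely, since the statement only requires that all pairs with correlation at least $\rho_0$ be output.
\item Use that each run outputs only exactly-checked pairs with correlation at least $\rho_0>\tau_0$. There are at most $s$ such pairs, so re-verification costs $O(sm)$ per run and is absorbed by the $s\,m\,d^{0.62}$ term.
\end{itemize}
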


The second ingredient is the Gaussian-sign projection used by
\cite[Lemma~A.4]{Pensia24-subquad}, following
\cite[Lemma~4.1]{Valiant15}.

\begin{fact}[Gaussian-sign correlation projection]
\label{fact:gaussian-sign-correlation-projection}
Let $v_1,\ldots,v_d\in\R^N$ be nonzero vectors, let
$\gamma,\delta\in(0,1)$, and let $G\in\R^{N\times m}$ have independent
standard Gaussian entries.  For $m\ge C\gamma^{-2}\log(d/\delta)$, define $y_i=\operatorname{sgn}(v_i^\top G)\in\{-1,1\}^m$.
Then, with probability at least $1-\delta$, simultaneously for every
$i\ne j$,
\[
\left|
    \frac{\langle y_i,y_j\rangle}{m}
    -\frac{2}{\pi}\arcsin\left(
        \frac{\langle v_i,v_j\rangle}
             {\|v_i\|_2\|v_j\|_2}
    \right)
\right|
\le\gamma.
\]
\end{fact}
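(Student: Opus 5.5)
The plan is a direct second-moment-free argument: compute the expectation of each coordinate product exactly, then concentrate with Hoeffding and take a union bound over pairs.

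First I fix a pair $i\neq j$ and look at a single column $g\in\R^N$ of $G$, which is a standard Gaussian vector. Since $v_i\neq0$, the scalar $\langle v_i,g\rangle$ is a nondegenerate Gaussian, so it vanishes with probability zero and $\operatorname{sgn}(\langle v_i,g\rangle)\in\{-1,1\}$ almost surely. The key step is the classical identity (Grothendieck's identity, as in the Goemans--Williamson hyperplane-rounding analysis). By rotational invariance, only the projection of $g$ onto $\operatorname{span}(v_i,v_j)$ matters, and this projection is a standard two-dimensional Gaussian whose direction is uniform on the circle. The signs of $\langle v_i,g\rangle$ and $\langle v_j,g\rangle$ differ exactly when the random line $g^\perp$ separates $v_i$ and $v_j$, which happens with probability $\theta_{ij}/\pi$, where $\theta_{ij}=\arccos\bigl(\langle v_i,v_j\rangle/(\|v_i\|_2\|v_j\|_2)\bigr)$. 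Hence
\[
\E\bigl[\operatorname{sgn}(\langle v_i,g\rangle)\operatorname{sgn}(\langle v_j,g\rangle)\bigr]=1-\frac{2\theta_{ij}}{\pi}=\frac{2}{\pi}\arcsin\left(\frac{\langle v_i,v_j\rangle}{\|v_i\|_2\|v_j\|_2}\right),
\]
using $\arcsin(x)=\pi/2-\arccos(x)$. The degenerate case $v_j$ parallel to $v_i$ is immediate, with value $\pm1$ matching $\pm\tfrac{2}{\pi}\cdot\tfrac{\pi}{2}$.

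Next, the $m$ columns of $G$ are independent. Therefore $\langle y_i,y_j\rangle/m$ is the average of $m$ i.i.d.\ $\{-1,1\}$-valued random variables whose mean is the displayed arcsine quantity. Hoeffding's inequality bounds the probability of a deviation larger than $\gamma$ by $2\exp(-m\gamma^2/2)$. A union bound over the fewer than $d^2/2$ pairs gives total failure probability at most $d^2\exp(-m\gamma^2/2)$. This is at most $\delta$ once $m\ge 2\gamma^{-2}\log(d^2/\delta)$, which the hypothesis $m\ge C\gamma^{-2}\log(d/\delta)$ ensures for a suitable universal $C$, since $\log(d^2/\delta)\le2\log(d/\delta)$.

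The only step requiring care is the exact expectation identity, specifically the reduction to the two-dimensional uniform-angle picture. I would handle it by writing $g$ in an orthonormal basis of $\operatorname{span}(v_i,v_j)$ extended to $\R^N$ and invoking rotational invariance of the standard Gaussian. Everything else is routine concentration.
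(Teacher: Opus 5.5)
Your proof is correct. Grothendieck's identity gives the exact per-column mean $\frac{2}{\pi}\arcsin\left(\frac{\langle v_i,v_j\rangle}{\|v_i\|_2\|v_j\|_2}\right)$, and Hoeffding's inequality plus a union bound over the $\binom{d}{2}$ pairs give the claim with, for instance, $C=4$. The paper does not prove this statement itself; it imports it from \cite[Lemma~4.1]{Valiant15} and \cite[Lemma~A.4]{Pensia24-subquad}, and those sources use this same hyperplane-rounding-plus-concentration argument.
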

Now we are ready to prove \Cref{fact:cordetect}.
\begin{proof}[Proof of \Cref{fact:cordetect}]
Let $\mu_T$ be the empirical mean of $T$, and let
$X\in\R^{d\times n}$ be the centered data matrix whose $t$-th column is
$z_t-\mu_T$.  Thus,
$
    \Sigma=\frac1nXX^\top.
$
Write $X_i$ for the $i$-th row of $X$.  For each $i\in[d]$, set
$
    a_i\eqdef\sqrt{n(D-\Sigma_{ii})},
$
which is real because $D\ge \Sigma_{ii}$, and let
$P\eqdef\operatorname{diag}(a_1,\ldots,a_d)$.  Consider the padded matrix
\[
    \widetilde X\eqdef[\,X\mid P\,]\in\R^{d\times(n+d)}.
\]
Its $i$-th row is $\widetilde X_i=(X_i,a_i e_i)$.  Consequently,
\[
    \|\widetilde X_i\|_2^2
    =\|X_i\|_2^2+a_i^2
    =n\Sigma_{ii}+n(D-\Sigma_{ii})
    =nD,
\]
while, for $i\ne j$, the padding directions are orthogonal and hence
\[
    \langle\widetilde X_i,\widetilde X_j\rangle
    =\langle X_i,X_j\rangle
    =n\Sigma_{ij}.
\]
It follows that the normalized correlation between the padded rows is
exactly
\[
    \frac{|\langle\widetilde X_i,\widetilde X_j\rangle|}
         {\|\widetilde X_i\|_2\|\widetilde X_j\|_2}
    =\frac{|\Sigma_{ij}|}{D}.
\]
Set $\widehat\rho\eqdef\rho/D$ and
$\widehat\tau\eqdef\tau/D$.
By the definition of $D$, we have $\widehat\rho\le1/2$, and the assumption
$\rho>12\tau$ gives $\widehat\rho>12\widehat\tau$.  Apply
\Cref{fact:gaussian-sign-correlation-projection} to the rows of
$\widetilde X$ with $\gamma=\widehat\tau$ and failure probability
$\delta/2$.  Thus we may take
$m=O\!\left(\frac{D^2}{\tau^2}\log(d/\delta)\right)$.
For every pair satisfying $|\Sigma_{ij}|<\tau$, the corresponding normalized
correlation has magnitude less than $\widehat\tau$.  Using
$|\arcsin(x)|\le2|x|$ for $|x|\le1$, its projected Boolean correlation is
at most
\[
    \frac{4}{\pi}\widehat\tau+\widehat\tau
    \le3\widehat\tau.
\]
Therefore, all but at most $s$ Boolean pairs have correlation at most
$3\widehat\tau$.  On the other hand, if $|\Sigma_{ij}|\ge\rho$, then its
projected Boolean correlation is at least
\[
    \frac{2}{\pi}\arcsin(\widehat\rho)-\widehat\tau
    \ge\frac{\widehat\rho}{2}-\widehat\tau
    \ge\frac{\widehat\rho}{4}.
\]
The last inequality follows from
$\widehat\rho>12\widehat\tau$.  We may therefore invoke
\Cref{fact:boolean-correlation-detection} with strong threshold
$\widehat\rho/4$ and margin threshold $3\widehat\tau$, since
$\widehat\rho/4>3\widehat\tau$.  After standard amplification, it finds
all pairs satisfying $|\Sigma_{ij}|\ge\rho$ with failure probability at most
$\delta/2$.  We first verify the Boolean correlation of every reported
candidate and discard candidates below $\widehat\rho/4$.  We then evaluate
$\Sigma_{ij}$ for each remaining pair and discard any pair with
$|\Sigma_{ij}|<\rho$.  On the projection event, every remaining pair has
$|\Sigma_{ij}|\ge\tau$, so there are at most $s$ such covariance evaluations and
their cost is absorbed by the claimed running time.

It remains to verify that the padding does not create a quadratic-time
preprocessing step.  The Gaussian matrix used above has
$n+d$ rows; split it as
\[
    G=\begin{bmatrix}G_1\\G_2\end{bmatrix},
    \qquad
    G_1\in\R^{n\times m},
    \quad
    G_2\in\R^{d\times m}.
\]
Then $\widetilde XG=XG_1+PG_2$.
The first term can be computed in $O(ndm)$ time.  Since $P$ is diagonal,
the second term is obtained by scaling the $i$-th row of $G_2$ by $a_i$
and costs only $O(dm)$ time.  In particular, neither $P$ nor
$\widetilde X$ needs to be materialized as a dense matrix.

Finally, substituting
\[
    \rho_0=\frac{\widehat\rho}{4}=\frac{\rho}{4D},
    \qquad
    \tau_0=3\widehat\tau=\frac{3\tau}{D}
\]
into \Cref{fact:boolean-correlation-detection} gives the exponent
\[
    1.62+2.4\frac{\log(4D/\rho)}{\log(D/(3\tau))}.
\]
The projection dimension $m$, the $O(ndm+dm)$ preprocessing time, and
standard success amplification are absorbed by
\[
    \operatorname{poly}\left(n,\log d,\frac{D}{\tau}\right)
    \log(1/\delta).
\]
Taking a union bound over the projection and Boolean-detection events proves
the lemma.
\end{proof}

\section{Subquadratic Sparse PCA for Shifted Matrices}
\label{app:shifted-covariance-detection}

We now extend algorithm in \Cref{sec:sub-quadratic}  to a diagonally shifted
matrix.  Since covariance-detection (\Cref{fact:cordetect})  must be applied to an empirical
covariance matrix. We use covariance-detection on the original matrix, whereas the remaining sparse PCA routines  require only the
appropriate sparse PSD assumptions.  We therefore use the empirical covariance
matrix to construct the off-diagonal graph and use the shifted matrix for all the remaining computations.

\begin{algorithm}[h]
\centering
\fbox{\parbox{6in}{
{\bf Input:}
A data set $T\subseteq\R^d$ with empirical covariance matrix
$\Sigma$, a shift $\lambda\ge0$, search-radius parameter $L$, certificate
threshold $\beta>0$, strong correlation threshold $\rho>0$, margin threshold
$\tau>0$, and failure probability $\delta$.\\
{\bf Output:}
Either a sparse unit vector $u\in\R^d$ with
$u^\top(\Sigma-\lambda I)u>\beta$, or a family $\mathcal B$ of candidate
supports.

\begin{enumerate}[leftmargin=*]
    \item Set $A\gets\Sigma-\lambda I$.  Query all diagonal
    entries of $\Sigma$.  If $A_{ii}=\Sigma_{ii}-\lambda>\beta$ for some
    $i\in[d]$, return $e_i$.  Otherwise, set
    $D\gets\max\{2\rho,\max_{i\in[d]}\Sigma_{ii}\}$.

    \item Set
    $r_{\mathrm{dense}}\gets\lceil4\beta^2/\tau^2\rceil$ and
    $r_{\mathrm{sparse}}\gets\lceil4\beta^2/\rho^2\rceil$.

    \item Run \Cref{alg:dense-case-random-row-search} on $A$ with
    threshold $\tau$, degree parameter $r_{\mathrm{dense}}$, certificate
    threshold $\beta$, edge threshold $d^{1.1}$, and failure probability
    $\delta/2$.  If it outputs a vector, return that vector.

    \item Run the covariance-detection algorithm from
    \Cref{fact:cordetect} on $T$ with parameters
    $(\delta/2,\rho,\tau,D)$.  Let $E_\rho$ be its returned list and set
    $G_\rho\gets([d],E_\rho)$.

    \item Run \Cref{alg:graph-branching-search} on
    $(A,G_\rho,\rho,\beta,r_{\mathrm{sparse}},L)$ and return its output.
\end{enumerate}
}}
\caption{Shifted Correlation-Detection Branching Search.}
\label{alg:shifted-cd-branching-search}
\end{algorithm}

\begin{theorem}[Shifted subquadratic support recovery]
\label{thm:shifted-subquadratic-correlation-detection-pca}
Fix $\delta\in(0,1)$, $\lambda\ge0$, a certificate threshold
$\beta>0$, an approximation slack $\eta>0$, a strong correlation threshold
$\rho\in(0,1)$, and a margin threshold $\tau\in(0,1)$.  Set
\[
    r_{\mathrm{dense}}
    =\left\lceil\frac{4\beta^2}{\tau^2}\right\rceil,
    \qquad
    r_{\mathrm{sparse}}
    =\left\lceil\frac{4\beta^2}{\rho^2}\right\rceil,
    \qquad
    \overline D=\max\{2\rho,\beta+\lambda\}.
\]
Assume that $\rho>12\tau$ and
$d^{1.1}\ge2d r_{\mathrm{dense}}$.  Let $T\subseteq\R^d$ be a data set of
size $n$ with empirical covariance matrix $\Sigma$, and define
$A=\Sigma-\lambda I$.  Assume that $A$ is
$\max\{r_{\mathrm{dense}}+1,r_{\mathrm{sparse}}+1\}$-PSD.

Suppose there exist a unit vector $v\in\R^d$ and a set $E\subseteq[d]$ such
that $\|v\|_0\le k$, $v^\top A v\ge\beta+\eta$,
$\supp(v)\subseteq E$, and $E$ has $\rho$-correlation radius at most $L$
with respect to $A$.  Then, with probability at least $1-\delta$,
\Cref{alg:shifted-cd-branching-search} returns one of the following:
\begin{enumerate}[(i)]
    \item a unit vector $u$ such that
    $\|u\|_0\le r_{\mathrm{dense}}$ and $u^\top A u>\beta$;
    \item a family $\mathcal B$ of at most $d(L+1)$ candidate supports,
    each of size at most $(r_{\mathrm{sparse}}+1)^L$, such that
    $E\subseteq B$ for some $B\in\mathcal B$.
\end{enumerate}
Its running time is
\[
O\left(
\left(
\left(
    d^{1.72}
    +
    d^{1.62+2.4\frac{\log(4\overline D/\rho)}
                            {\log(\overline D/(3\tau))}}
\right)
\poly(n,\log d,\overline D/\tau)
+d r_{\mathrm{sparse}}(r_{\mathrm{sparse}}+1)^L
\right)
\log(1/\delta)
\right).
\]
\end{theorem}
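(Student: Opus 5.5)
The proof will mirror that of \Cref{thm:subquadratic-correlation-detection-pca}, using one observation throughout: $A=\Sigma-\lambda I$ and $\Sigma$ have identical off-diagonal entries. Hence the large-correlation graphs $G_\theta$ and edge sets $E_\theta=\{\{i,j\}:i\neq j,|A_{ij}|\ge\theta\}$ are the same whether computed from $A$ or from $\Sigma$. The correlation radius of $E$ with respect to $A$ therefore equals its radius with respect to $\Sigma$.

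First, the diagonal scan. If some $A_{ii}>\beta$, then $e_i$ is a $1$-sparse vector with $e_i^\top Ae_i>\beta$, and we are in conclusion (i). Otherwise $\Sigma_{ii}\le\beta+\lambda$ for all $i$, so $D=\max\{2\rho,\max_i\Sigma_{ii}\}\le\overline D$.

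Next, the density check on $A$. Since $r_{\mathrm{dense}}\tau^2\ge4\beta^2>\beta^2$, $d^{1.1}\ge2dr_{\mathrm{dense}}$, and $A$ is $(r_{\mathrm{dense}}+1)$-PSD, \Cref{lem:densesearch} applies verbatim to the symmetric matrix $A$. Its proof only uses \Cref{cl:disc2}, which needs local PSD-ness and $A_{aa}\le\beta$. So with probability $1-\delta/2$ the check either returns an $r_{\mathrm{dense}}$-sparse $u$ with $u^\top Au>\beta$, or certifies $|E_\tau|<d^{1.1}$.

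On the latter event, we invoke \Cref{fact:cordetect} on the data $T$. This is where $\Sigma$, not $A$, is needed, since the lemma is stated for an empirical covariance. We use $\rho>12\tau$, the sparsity bound $s=d^{1.1}$ on $\Sigma$'s $\tau$-edges (which equal $A$'s), and the parameter $D$ above. It returns exactly $E_\rho$ with probability $1-\delta/2$, and $|E_\rho|\le|E_\tau|<d^{1.1}$.

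We then apply \Cref{lem:sparse-search} to $A$ with threshold $\rho$, certificate threshold $\beta$ and degree $r_{\mathrm{sparse}}$. The hypotheses hold: $r_{\mathrm{sparse}}\rho^2>\beta^2$, $A$ is $(r_{\mathrm{sparse}}+1)$-PSD, $v^\top Av\ge\beta$, and $E$ has $\rho$-radius at most $L$ in $G_\rho$. The heavy-row branch again uses \Cref{cl:disc2} on $A$. The lemma yields either an $r_{\mathrm{sparse}}$-sparse vector, which is also $r_{\mathrm{dense}}$-sparse because $\tau<\rho$, or the family $\mathcal B$ with the stated count and size bounds and some $B\supseteq E$. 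A union bound over the two randomized steps gives success probability $1-\delta$.

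For the running time, we add the costs of the steps: $O(nd)$ for the diagonal, $\widetilde O(d)$ for the density check, the \Cref{fact:cordetect} bound with $s=d^{1.1}$ (which gives the $d^{1.72}$ term), and $O(d^{1.1}+dr_{\mathrm{sparse}}(r_{\mathrm{sparse}}+1)^L)$ for the search.

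The one step requiring care, and the main obstacle, is replacing $D$ by $\overline D$ in both the exponent and the polynomial factor. I would reuse the monotonicity argument from the unshifted proof: for $x\ge2\rho$, the function $\log(4x/\rho)/\log(x/(3\tau))$ has derivative proportional to $\log(\rho/(12\tau))>0$, so it is increasing. Since $D\le\overline D$, the bound with $\overline D$ dominates the bound with $D$. Likewise $\poly(D/\tau)\le\poly(\overline D/\tau)$.

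A subtle point is that $D$ must be defined from $\Sigma$'s diagonal, which may be as large as $\beta+\lambda$. This is exactly why $\overline D$ includes the shift $\lambda$.
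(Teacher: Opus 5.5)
Your proposal is correct and follows the paper's proof essentially step for step. The shared steps are: the observation $E_\theta^A=E_\theta^\Sigma$ (the shift only changes the diagonal); the diagonal scan giving $D\le\overline D$; the density check and sparse graph search run on $A$ while covariance detection is run on the data $T$, i.e.\ on $\Sigma$; a union bound over the two randomized stages; and the monotonicity argument for replacing $D$ by $\overline D$ in the runtime. Your extra remarks (an $r_{\mathrm{sparse}}$-sparse output is also $r_{\mathrm{dense}}$-sparse since $\tau<\rho$, and $v^\top Av\ge\beta+\eta\ge\beta$ meets the sparse-search hypothesis) are details the paper leaves implicit or inherits from the unshifted proof.
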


\begin{proof}
Write
$
    E_\theta^A
    =\{\{i,j\}:i\ne j,\ |A_{ij}|\ge\theta\},
    E_\theta^\Sigma
    =\{\{i,j\}:i\ne j,\ |\Sigma_{ij}|\ge\theta\}.
$
Because the shift changes only the diagonal,
$E_\theta^A=E_\theta^\Sigma$ for every $\theta>0$.

The algorithm first scans the diagonal.  If $A_{ii}>\beta$, then $e_i$ is
a $1$-sparse unit vector with $e_i^\top A e_i>\beta$.  Otherwise,
$A_{ii}\le\beta$ for every $i$, and hence
$\Sigma_{ii}=A_{ii}+\lambda\le\beta+\lambda$.  Consequently,
the parameter used by covariance detection satisfies
$D=\max\{2\rho,\max_i\Sigma_{ii}\}
\le\max\{2\rho,\beta+\lambda\}=\overline D$.

Next run the density check from \Cref{lem:densesearch} on $A$ at threshold $\tau$.  Its hypotheses hold
because $A$ is $(r_{\mathrm{dense}}+1)$-PSD,
$r_{\mathrm{dense}}\tau^2>\beta^2$, and
$d^{1.1}\ge2d r_{\mathrm{dense}}$.  Thus, with probability at least
$1-\delta/2$, it either returns an $r_{\mathrm{dense}}$-sparse vector of
$A$-quadratic form greater than $\beta$, or establishes
$|E_\tau^A|<d^{1.1}$.  Condition on the latter event.  Since
$E_\tau^A=E_\tau^\Sigma$, the premise of \Cref{fact:cordetect} holds for
$\Sigma$ with sparsity bound $d^{1.1}$.  Covariance detection therefore
returns every pair in $E_\rho^\Sigma=E_\rho^A$ with probability at least
$1-\delta/2$.

The returned graph is exactly the graph required by
\Cref{lem:sparse-search} for the matrix $A$.  Applying that lemma with degree
parameter $r_{\mathrm{sparse}}$ gives either an
$r_{\mathrm{sparse}}$-sparse vector of $A$-quadratic form greater than
$\beta$, or the claimed family of candidate supports.  A union bound gives
the claimed success probability.

For the runtime, the diagonal scan, density check, covariance detection, and
branching search have the same costs as in the proof of
\Cref{thm:subquadratic-correlation-detection-pca}.  The detector is applied
to $\Sigma$ with parameter $D$.  Since $D\le\overline D$, the monotonicity
calculation in that proof allows us to replace $D$ by $\overline D$ in both
the exponent and the polynomial factor.  Combining the resulting detector
bound with the branching cost proves the displayed runtime.
\end{proof}

\begin{corollary}[Shifted subquadratic sparse PCA via the semidefinite relaxation]
\label{cor:shifted-subquadratic-sdp}
Under the assumptions of
\Cref{thm:shifted-subquadratic-correlation-detection-pca}, run
\Cref{alg:shifted-cd-branching-search}.  If it returns a vector, return that
vector.  Otherwise, compute a feasible additive-$\eta$ approximate solution
to $\operatorname{SDP}_k(A_{B,B})$ for every $B\in\mathcal B$, where
$A=\Sigma-\lambda I$, pad each
solution with zeros outside $B\times B$, and return the padded solution with
the largest objective value.  Then, with probability at least $1-\delta$,
the procedure returns one of the following:
\begin{enumerate}[(i)]
    \item a unit vector $u$ such that
    $\|u\|_0\le r_{\mathrm{dense}}$, $u^\top A u>\beta$, and hence
    $u^\top\Sigma u>\beta+\lambda$;
    \item a matrix $X\in\cX_{\sdp,k}$ supported on at most
    $(r_{\mathrm{sparse}}+1)^L$ coordinates such that
    $\langle A,X\rangle\ge\beta$, and hence
    $\langle\Sigma,X\rangle\ge\beta+\lambda$.
\end{enumerate}
Using a polynomial-time SDP solver, its running time is
\[
O\left(
\left(
    d^{1.72}
    +
    d^{1.62+2.4\frac{\log(4\overline D/\rho)}
                            {\log(\overline D/(3\tau))}}
\right)
\poly(n,\log d,\overline D/\tau)\log(1/\delta)
+d(r_{\mathrm{sparse}}+1)^{O(L)}/\eta^{O(1)}
\right),
\]
where the constant hidden in the final exponent depends on the SDP solver.
\end{corollary}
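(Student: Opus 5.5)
This corollary is the shifted counterpart of \Cref{cor:subquadratic-sdp}. The plan is to run the same argument with $A=\Sigma-\lambda I$ in place of $\Sigma$, using \Cref{thm:shifted-subquadratic-correlation-detection-pca} as the structural input. Then translate $A$-values into $\Sigma$-values through the trace-one and unit-norm identities.

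First, invoke \Cref{thm:shifted-subquadratic-correlation-detection-pca}. With probability at least $1-\delta$, \Cref{alg:shifted-cd-branching-search} returns one of two outputs.
\begin{enumerate}[(i)]
\item A unit vector $u$ with $\|u\|_0\le r_{\mathrm{dense}}$ and $u^\top A u>\beta$. Since $\|u\|_2=1$, we have $u^\top\Sigma u=u^\top Au+\lambda>\beta+\lambda$, which is conclusion (i).
\item A family $\mathcal B$ containing some $B\supseteq E\supseteq\supp(v)$.
\end{enumerate}

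In case (ii), argue exactly as in \Cref{cor:bounded-diameter-sdp}. The matrix $vv^\top$, restricted to $B\times B$, is feasible for $\operatorname{SDP}_k(A_{B,B})$: it is PSD, has trace one, and $\|vv^\top\|_1=\|v\|_1^2\le k$. Its objective value is $v^\top Av\ge\beta+\eta$. Hence the additive-$\eta$ approximate solution computed on that $B$ has $A$-value at least $\beta$.

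Zero-padding outside $B\times B$ preserves PSDness, trace, entrywise $\ell_1$ norm, and objective value. The padded matrix therefore lies in $\cX_{\sdp,k}$ and is supported on $|B|\le(r_{\mathrm{sparse}}+1)^L$ coordinates. Returning the padded solution of largest $A$-objective keeps the value at least $\beta$. Since $\operatorname{tr}(X)=1$, we get $\langle\Sigma,X\rangle=\langle A,X\rangle+\lambda\operatorname{tr}(X)\ge\beta+\lambda$. The support bound holds for every candidate, so it holds for the returned one.

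There is one point to be careful about. Each approximate SDP solution is merely feasible; we never need PSDness of $A_{B,B}$ itself, only feasibility of $vv^\top$. No further local-PSD assumption beyond those of the theorem is therefore required.

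For the running time, add the cost of \Cref{thm:shifted-subquadratic-correlation-detection-pca} to the cost of solving at most $d(L+1)$ SDPs. Each SDP has dimension at most $(r_{\mathrm{sparse}}+1)^L$ and is solved to additive accuracy $\eta$ by a polynomial-time solver, costing $(r_{\mathrm{sparse}}+1)^{O(L)}/\eta^{O(1)}$. The factor $L+1$ is absorbed into the $O(L)$ exponent. The term $dr_{\mathrm{sparse}}(r_{\mathrm{sparse}}+1)^L\log(1/\delta)$ from the theorem is likewise absorbed into $d(r_{\mathrm{sparse}}+1)^{O(L)}$, up to the $\log(1/\delta)$ factor that multiplies the detection term in the stated bound. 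I expect no real obstacle here; the only mild bookkeeping is tracking which $\log(1/\delta)$ factors multiply which terms.
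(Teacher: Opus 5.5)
Your proposal is correct and follows essentially the same route as the paper: invoke \Cref{thm:shifted-subquadratic-correlation-detection-pca}, use feasibility of $vv^\top$ for $\operatorname{SDP}_k(A_{B,B})$ on the candidate $B\supseteq\supp(v)$ together with zero-padding, and convert $A$-values to $\Sigma$-values via $\|u\|_2=1$ and $\operatorname{tr}(X)=1$. Your runtime accounting is more explicit than the paper's, which does not discuss it, and the leftover $\log(1/\delta)$ factor on the branching term disappears if you use the deterministic graph-search cost $O(d^{1.1}+dr_{\mathrm{sparse}}(r_{\mathrm{sparse}}+1)^L)$ from \Cref{lem:sparse-search} rather than the theorem's aggregate bound.
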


\begin{proof}
The direct-vector conclusion follows from
\Cref{thm:shifted-subquadratic-correlation-detection-pca}.  Otherwise, some
$B\in\mathcal B$ contains $\supp(v)$.  The matrix $vv^\top$ is feasible for
$\operatorname{SDP}_k(A_{B,B})$ and has objective value at least
$\beta+\eta$, so the corresponding additive-$\eta$ approximate solution has
$A$-objective value at least $\beta$.  Padding preserves feasibility,
objective value, and support size.  Finally, every returned vector has unit
norm and every returned matrix has trace one, so
$u^\top\Sigma u=u^\top A u+\lambda$ and
$\langle\Sigma,X\rangle=\langle A,X\rangle+\lambda$.
This proves the two unshifted-value conclusions and the corollary.
\end{proof}

\section{Proof of the Identity-Covariance Subgaussian Stability Lemma}
\label{app:subgaussian-stability}
Here we prove that samples from an identity-covariance subgaussian
distribution satisfy the stability conditions in \Cref{def:stability}.
For the mean and sparse-direction covariance conditions, we adapt the
standard dense-direction stability argument and take a union bound over
sparse supports. In fact, the same argument yields both mean and covariance
stability. It remains to establish stability over the class of positive
semidefinite matrices with bounded entrywise $\ell_1$ norm. Such a stability
result is known for standard Gaussian samples \cite{BalDLS17}. To transfer it
to general subgaussian distributions, we use the following representation of
a subgaussian vector as a sum of three standard Gaussian vectors. 
\begin{fact}[Three-Gaussian representation of a subgaussian vector \cite{hua2026talagrand}]
\label{fact:three-gaussian-representation}
Fix $K<\infty$. There exists a constant $c_K>0$, depending only on $K$, with the following property. For every dimension $d$ and every centered random vector $Z\in\R^d$ satisfying
\(
    \sup_{\|v\|_2=1}\|\langle v,Z\rangle\|_{\psi_2}\le K,
\)
there are random vectors $G_1,G_2,G_3$, such that each $G_j$ is marginally distributed as $N(0,I_d)$ and
\[
    c_K Z\stackrel{d}=G_1+G_2+G_3.
\]
\end{fact}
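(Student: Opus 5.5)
This statement is quoted from \cite{hua2026talagrand}, and in the paper we would simply invoke it. If we had to reprove it, the plan is a convex-duality argument that turns the representation question into a family of functional inequalities, and then proves those inequalities from the subgaussian hypothesis.

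\textbf{Step 1: the target set is convex and closed.} Let $\mathcal S_3$ be the set of laws of $G_1+G_2+G_3$, taken over all couplings of three $N(0,I_d)$ marginals. This set is convex. Indeed, a mixture of two couplings is again a coupling with the same Gaussian marginals, and the law of the sum mixes accordingly. The set is also weakly compact. Tightness holds because $\|G_1+G_2+G_3\|_2\le\sum_j\|G_j\|_2$, and couplings with fixed marginals form a weakly compact set. The claim is therefore that the law of $c_KZ$ lies in $\mathcal S_3$.

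\textbf{Step 2: reduce to test functions.} By Hahn--Banach, it suffices to check, for every bounded continuous $f$,
\[
\E f(c_KZ)\le\sup_{\pi}\E_\pi f(G_1+G_2+G_3).
\]
Multimarginal Kantorovich duality rewrites the right-hand side as
\[
\inf\Bigl\{\textstyle\sum_{j}\E\,\varphi_j(G):\ \varphi_1(x_1)+\varphi_2(x_2)+\varphi_3(x_3)\ge f(x_1+x_2+x_3)\ \ \forall x_1,x_2,x_3\Bigr\}.
\]
This is an infimal-convolution (sup-convolution) bound. So the goal becomes a dimension-free inequality: for every admissible triple $(\varphi_1,\varphi_2,\varphi_3)$, show $\E f(c_KZ)\le\sum_j\E\varphi_j(G)$.

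\textbf{Step 3: prove the inequality.} First reduce to $\varphi_1=\varphi_2=\varphi_3=\varphi$ by symmetrizing, replacing each $\varphi_j$ by the average of the three. Then $f(x)\le\inf_{x_1+x_2+x_3=x}\sum_j\varphi(x_j)=:\varphi^{\square3}(x)$, so it is enough to show
\[
\E\,\varphi^{\square 3}(c_KZ)\le 3\,\E\,\varphi(G).
\]
The natural tool is a chaining / majorizing-measure comparison. Choosing $x_1$ from a generic-chaining approximation of $x$, and splitting the remainder between $x_2$ and $x_3$, should let one bound $\varphi^{\square3}$ along subgaussian increments. Talagrand's comparison between subgaussian and Gaussian processes would then transfer the bound to $\E\varphi(G)$. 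The role of three summands rather than one is to absorb two kinds of loss: the constant $c_K$ in the tail comparison, and the multiscale error of the chaining decomposition.

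\textbf{Main obstacle.} Step 3 is the hard part. It has to hold uniformly in $d$ and for arbitrary (non-convex) test functions, and this is exactly the content of the Talagrand-type result in \cite{hua2026talagrand}. I would first try the case where $f$ is the indicator of a convex set, using Gaussian concentration together with the chaining bound. I would then pass to general $f$ by writing it through its level sets, although it is unclear that this extension goes through. Steps 1 and 2 are routine.
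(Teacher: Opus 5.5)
Your opening sentence is what the paper does. \Cref{fact:three-gaussian-representation} is quoted from \cite{hua2026talagrand} without proof and used only as a black box in \Cref{cl:subgaussian-Xk-stability}. That claim needs only two things: each sequence $(G_{j,i})_{i\le n}$ is i.i.d.\ $N(0,I_d)$, and the pointwise bound $Z_i^\top WZ_i\le 3c_K^{-2}\sum_j G_{j,i}^\top WG_{j,i}$ holds for $W\succeq0$.

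Your reproof sketch, however, has a genuine gap, and the gap is the whole statement. Steps 1 and 2 are sound: the set of sum-laws is convex and weakly compact, and separation, Kellerer's multimarginal duality and the cyclic symmetrization are all valid. But they only give an equivalent reformulation. The law of $c_KZ$ is a sum-law if and only if $\E\,\varphi^{(3)}(c_KZ)\le 3\,\E\,\varphi(G)$ for every test function $\varphi$, where $\varphi^{(3)}(x)\eqdef\inf_{x_1+x_2+x_3=x}\sum_{j}\varphi(x_j)$. Step 3 gives no argument for this family of inequalities, and the comparison theorem you invoke cannot supply one by itself. Majorizing-measure comparison controls functionals of the form $x\mapsto\sup_{t\in T}\langle t,x\rangle$. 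For such sublinear $\varphi$ one has $\varphi^{(3)}=\varphi$, so the required inequality collapses to the classical bound $\E\sup_{t\in T}\langle t,c_KZ\rangle\le3\,\E\sup_{t\in T}\langle t,G\rangle$. That bound is already known and is far weaker than the representation. For non-convex $\varphi$ there is no linear process whose supremum one could chain.

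Your fallback route does not close the gap either. Take the $G_j$ jointly Gaussian with $\operatorname{Cov}(G_i,G_j)=-\frac12 I_d$ for $i\ne j$; then $G_1+G_2+G_3\equiv0$. Hence for $f=\Ind_A$ with $A$ convex and $0\in A$ the test is vacuous, so the convex-set case carries little of the difficulty. More seriously, level sets cannot be glued. Write $S=G_1+G_2+G_3$ under a coupling $\pi$. For $f\ge0$ one has $\sup_\pi\E_\pi f(S)\le\int_0^\infty\sup_\pi\Pr_\pi(f(S)>s)\,ds$. Bounds proved level by level therefore give only $\E f(c_KZ)\le\int_0^\infty\sup_\pi\Pr_\pi(f(S)>s)\,ds$, which is weaker than the target because the optimal coupling changes with $s$. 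Moreover, the level sets of a general $f$ are not convex.

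What Step 3 requires is exactly the dimension-free theorem of \cite{hua2026talagrand}. Your sketch reduces the fact to an equivalent form of itself rather than proving it.
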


\subgaussianstability*

\begin{proof}
Write $G=\{X_1,\ldots,X_n\}$, put $Z_i=X_i-\mu$, and set
\[
    \alpha=10\eps,
    \qquad
    t=\log(12/\delta).
\]
Throughout the proof, constants may depend on the universal subgaussian
constant. Since the distribution is isotropic, for every unit vector $v$,
$\langle v,Z_i\rangle$ has mean zero, variance one, and uniformly bounded
$\psi_2$ norm.

For $1\le s\le d$, let
\[
    \mathcal U_s
    =\{v\in\R^d:\|v\|_2=1,\ \|v\|_0\le s\},
    \qquad
    a_s=s\log(ed/s)+t.
\]
For each support of size $s$, fix a $1/4$-net of its unit sphere, and let
$\mathcal N_s$ be the union of these nets. Then
\[
    |\mathcal N_s|
    \le 9^s\binom ds
    \le \exp\!\left(C s\log(ed/s)\right).
\]
The standard net argument loses only a universal constant when passing from
$\mathcal N_s$ to $\mathcal U_s$, both for linear forms and for symmetric
quadratic forms.

We first prove the sparse mean and rank-one covariance conclusions directly.
In both arguments, a deletion set is fixed before concentration is applied;
a union bound over all deterministic deletion sets then makes the resulting
estimate simultaneous over every data-dependent choice of a large subset.

\begin{claim}[Sparse mean stability]
\label{cl:subgaussian-mean-stability}
Simultaneously for every $G'\subseteq G$ with
$|G'|\ge(1-\alpha)n$,
\[
    \|\mu_{G'}-\mu\|_{2,\ell}
    = O(\eps\sqrt{\log(1/\eps)}).
\]
\end{claim}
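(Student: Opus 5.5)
We reduce the claim to a deviation bound for fixed directions and then union-bound over a net and over deletion sets. For $v\in\mathcal U_\ell$ and $G'\subseteq G$ with removed set $E=G\setminus G'$, $|E|\le\alpha n$, write
\[
    \langle v,\mu_{G'}-\mu\rangle
    =\frac{1}{|G'|}\Bigl(\sum_{i\in G}\langle v,Z_i\rangle-\sum_{i\in E}\langle v,Z_i\rangle\Bigr).
\]
Since $|G'|\ge(1-\alpha)n\ge n/2$, it suffices to show two bounds. The first is $\bigl|\frac1n\sum_{i\in G}\langle v,Z_i\rangle\bigr|=O(\eps)$ uniformly over $v\in\mathcal U_\ell$. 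The second is $\frac1n\sum_{i\in E}|\langle v,Z_i\rangle|=O(\eps\sqrt{\log(1/\eps)})$ uniformly over $v$ and over all $E$ with $|E|\le\alpha n$.

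For the first term, fix $v\in\mathcal N_\ell$. Hoeffding's inequality for subgaussian sums gives deviation $O\bigl(\sqrt{(\ell\log(ed/\ell)+t)/n}\bigr)$ with failure probability $e^{-C(\ell\log(ed/\ell)+t)}$. A union bound over $\mathcal N_\ell$, using $|\mathcal N_\ell|\le e^{C\ell\log(ed/\ell)}$, makes this hold for every net point. The linear net argument then extends it to all of $\mathcal U_\ell$ at constant cost. The sample bound $n\gtrsim \ell\log d/\eps^2$ makes this $O(\eps)$. Here $n\ge C\,\ell\log d/\eps^2$ follows from the assumed $n\ge C((k^2+\ell)\log d+\dots)/\eps^2$.

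For the second term we fix $v$ and a deterministic set $E$ of size exactly $m=\lfloor\alpha n\rfloor$; enlarging $E$ only increases the sum of absolute values. The variables $|\langle v,Z_i\rangle|$ are independent with $\psi_2$ norm $O(1)$. We bound
\[
    \sum_{i\in E}|\langle v,Z_i\rangle|\le\sqrt{m\sum_{i\in E}\langle v,Z_i\rangle^2},
\]
and control $\sum_{i\in E}\langle v,Z_i\rangle^2$, a sum of $m$ independent subexponential variables of mean $1$. Bernstein's inequality gives, with failure probability $e^{-u}$,
\[
    \sum_{i\in E}\langle v,Z_i\rangle^2\le m+C(\sqrt{mu}+u).
\]
We take $u=C'(m\log(e/\alpha)+\ell\log(ed/\ell)+t)$. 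Then a union bound over the $\binom{n}{m}\le e^{m\log(e/\alpha)}$ choices of $E$ and over $\mathcal N_\ell$ is affordable. This yields
\[
    \sum_{i\in E}\langle v,Z_i\rangle^2\lesssim m\log(1/\eps)+\ell\log(ed/\ell)+t\lesssim m\log(1/\eps),
\]
where the last step uses $n\eps\gtrsim(\ell\log d+t)/\eps\ge \ell\log d+t$ from the sample bound. Hence $\frac1n\sum_{i\in E}|\langle v,Z_i\rangle|\lesssim\frac{m}{n}\sqrt{\log(1/\eps)}=O(\eps\sqrt{\log(1/\eps)})$ on net points.

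The main obstacle is passing from net points to all of $\mathcal U_\ell$ for the second term, since the map $v\mapsto\sum_{i\in E}|\langle v,Z_i\rangle|$ is only a seminorm. We handle it by working with $\sup_{v}\bigl(\sum_{i\in E}\langle v,Z_i\rangle^2\bigr)^{1/2}$ instead. This is the operator norm of the restricted matrix $(Z_i)_{i\in E}$ on each support, which is a norm in $v$. The standard argument $\|w\|\le\|w_0\|+\frac14\sup\|\cdot\|$ then gives the bound for all $v$ with a factor-$4/3$ loss. Combining both terms gives $\|\mu_{G'}-\mu\|_{2,\ell}=O(\eps\sqrt{\log(1/\eps)})$ simultaneously for all admissible $G'$, with total failure probability at most $\delta/6$ by the choice $t=\log(12/\delta)$.
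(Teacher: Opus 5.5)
Your proof is correct. The full-sample term and the union bound over deletion sets and the net match the paper, but you control the deleted-sample term by a different route.

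The paper never passes to absolute values. For a fixed deletion set $D$ and net point $v$, it applies the subgaussian (Hoeffding) tail directly to the signed sum $\frac1n\sum_{i\in D}\langle v,Z_i\rangle$, whose variance proxy is $|D|/n^2\le\alpha/n$. It then union-bounds over all $D$ with $|D|\le\alpha n$ and over $\mathcal N_\ell$, obtaining $C\bigl(\alpha\sqrt{\log(e/\alpha)}+\sqrt{\alpha a_\ell/n}\bigr)$. Because this quantity is linear in $v$, the net extension is immediate.

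You instead dominate the signed sum by $\frac1n\sum_{i\in E}|\langle v,Z_i\rangle|$ and reduce it via Cauchy--Schwarz to the deleted second moment $\sum_{i\in E}\langle v,Z_i\rangle^2$. You then bound that second moment with Bernstein's inequality.

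What your route buys is monotonicity: only sets with $|E|=m$ matter. It is the classical argument that second-moment control on small subsets implies mean stability. In fact, your deleted-sample bound is essentially the estimate \eqref{eq:deleted-quadratic-concentration} that the paper proves anyway for \Cref{cl:subgaussian-rank-one-stability}. You could have deduced the claim from that estimate with one Cauchy--Schwarz step. The paper's route is lighter, since it needs only linear subgaussian concentration. Both routes give $O(\eps\sqrt{\log(1/\eps)})$ at the stated sample size.

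One small remark: the step you call the main obstacle is not one. The $1/4$-net argument works for any seminorm, including $v\mapsto\sum_{i\in E}|\langle v,Z_i\rangle|$ itself. Your detour through the restricted operator norm is valid but unnecessary.
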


\begin{proof}
Fix a deterministic set $D\subseteq[n]$ with $|D|\le\alpha n$ and a fixed
$v\in\mathcal N_\ell$. Since the indices in $D$ are fixed, the random
variables $\{\langle v,Z_i\rangle:i\in D\}$ are independent, centered, and
uniformly subgaussian. Hence, for every $r>0$,
\[
    \Pr\!\left(
        \left|\frac1n\sum_{i\in D}\langle v,Z_i\rangle\right|>r
    \right)
    \le
    2\exp\!\left(-c\frac{n^2r^2}{|D|}\right)
    \le
    2\exp\!\left(-c\frac{nr^2}{\alpha}\right),
\]
where the empty-set case is trivial. Moreover,
\[
    \sum_{j=0}^{\lfloor\alpha n\rfloor}\binom nj
    \le
    \exp\!\left(C\alpha n\log(e/\alpha)\right).
\]
Taking a union bound over $v\in\mathcal N_\ell$ and all such $D$, and then
using the net argument, shows that with probability at least $1-\delta/6$,
\begin{equation}
    \sup_{\substack{D\subseteq[n]\\|D|\le\alpha n}}
    \sup_{v\in\mathcal U_\ell}
    \left|\frac1n\sum_{i\in D}\langle v,Z_i\rangle\right|
    \le
    C\left(
        \alpha\sqrt{\log(e/\alpha)}
        +\sqrt{\frac{\alpha a_\ell}{n}}
    \right).
    \label{eq:deleted-linear-concentration}
\end{equation}
For the full sample, applying the same fixed-direction subgaussian
inequality without the subset union bound gives, with probability at least
$1-\delta/6$,
\begin{equation}
    \sup_{v\in\mathcal U_\ell}
    \left|\frac1n\sum_{i=1}^n\langle v,Z_i\rangle\right|
    \le C\sqrt{\frac{a_\ell}{n}}.
    \label{eq:full-linear-concentration}
\end{equation}

Now fix an arbitrary $G'\subseteq G$ with $m=|G'|\ge(1-\alpha)n$ and write
$D=[n]\setminus G'$. Since $n/m\le(1-\alpha)^{-1}\le2$, for every
$v\in\mathcal U_\ell$,
\[
    \frac1m\sum_{i\in G'}\langle v,Z_i\rangle
    =\frac nm\left(
        \frac1n\sum_{i=1}^n\langle v,Z_i\rangle
        -\frac1n\sum_{i\in D}\langle v,Z_i\rangle
    \right).
\]
Combining \Cref{eq:deleted-linear-concentration,eq:full-linear-concentration}
gives
\[
    \|\mu_{G'}-\mu\|_{2,\ell}
    \le
    C\left(
        \sqrt{\frac{a_\ell}{n}}
        +\alpha\sqrt{\log(e/\alpha)}
        +\sqrt{\frac{\alpha a_\ell}{n}}
    \right).
\]
The sample-size assumption implies $a_\ell/n\le C\eps^2$. Since
$\alpha=10\eps$, the right-hand side is at most
$C\eps\sqrt{\log(e/\eps)}=O(\eps\sqrt{\log(1/\eps)})$.
\end{proof}

\begin{claim}[Rank-one $\ell$-sparse covariance stability]
\label{cl:subgaussian-rank-one-stability}
Simultaneously for every $G'\subseteq G$ with
$|G'|\ge(1-\alpha)n$,
\[
    \sup_{u\in\mathcal U_\ell}
    \left|u^\top\bigl(Q_{G'}(\mu)-I\bigr)u\right| =O(\eps\log(1/\eps)).
\]
\end{claim}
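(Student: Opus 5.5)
We mirror the proof of \Cref{cl:subgaussian-mean-stability}, replacing linear forms by centered quadratic forms. For $v\in\mathcal U_\ell$ put $Y_i(v)=\langle v,Z_i\rangle^2-1$. These are centered and uniformly subexponential, since the square of a subgaussian variable is subexponential. For a fixed $G'$ with $m=|G'|\ge(1-\alpha)n$ and $D=[n]\setminus G'$ we write
\[
    u^\top\bigl(Q_{G'}(\mu)-I\bigr)u
    =\frac nm\left(\frac1n\sum_{i=1}^n Y_i(u)-\frac1n\sum_{i\in D}Y_i(u)\right)
\]
and bound the full-sample term and the deleted term separately, uniformly over $u\in\mathcal U_\ell$. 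Since $n/m\le2$, it suffices to show that each term is $O(\eps\log(1/\eps))$.

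For the full-sample term, Bernstein's inequality for subexponential variables at a fixed $v\in\mathcal N_\ell$, together with a union bound over $\mathcal N_\ell$ (of size $\exp(C\ell\log(ed/\ell))$), gives with probability at least $1-\delta/6$ the bound $C(\sqrt{a_\ell/n}+a_\ell/n)$. This is essentially \Cref{fact:sparse-quadratic-concentration}. By the sample-size assumption $a_\ell/n\le C\eps^2$, so the term is $O(\eps)$. We then pass from the net to $\mathcal U_\ell$ by the quadratic-form net estimate $M_S\le 2\max_{v\in\mathcal N_S}|v^\top Av|$ used in \Cref{fact:sparse-quadratic-concentration}, applied on each support with $A=Q_{G'}(\mu)-I$.

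The deleted term is the main obstacle. The variables $Y_i$ are only subexponential, and the $\log(1/\eps)$ (rather than $\sqrt{\log}$) rate must come out correctly. Because $Y_i\ge-1$, we have $-\frac1n\sum_{i\in D}Y_i\le|D|/n\le\alpha$, so only the upper tail of $\frac1n\sum_{i\in D}\langle v,Z_i\rangle^2$ needs control. Fix $D$ with $|D|\le\alpha n$ and $v\in\mathcal N_\ell$. Bernstein's inequality for the sum of $|D|$ independent subexponential variables gives
\[
    \Pr\!\left(\sum_{i\in D}\langle v,Z_i\rangle^2>|D|+ns\right)
    \le\exp\!\left(-c\min\left\{\frac{n^2s^2}{|D|},\,ns\right\}\right).
\]
We union bound over at most $\exp(C\alpha n\log(e/\alpha))$ sets $D$ and over $\mathcal N_\ell$, requiring the exponent to beat $C\alpha n\log(e/\alpha)+C a_\ell+t$. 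The linear branch $ns$ forces $s\asymp\alpha\log(e/\alpha)+a_\ell/n$. The quadratic branch forces $s\asymp\alpha\sqrt{\log(e/\alpha)}+\sqrt{\alpha a_\ell/n}$, which is dominated. Hence, with probability at least $1-\delta/6$, simultaneously over all $D$ and over the net,
\[
    \frac1n\sum_{i\in D}\langle v,Z_i\rangle^2\le C\left(\alpha\log(e/\alpha)+\frac{a_\ell}{n}+\sqrt{\frac{\alpha a_\ell}{n}}\right)=O(\eps\log(1/\eps)).
\]
We extend from the net to all of $\mathcal U_\ell$ again via the net estimate on each support, now applied to the PSD matrix $\frac1n\sum_{i\in D}Z_iZ_i^\top$ restricted to that support. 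Combining the two terms, with the $n/m\le2$ factor and $\alpha=10\eps$, yields the claim on an event of probability at least $1-\delta/3$.
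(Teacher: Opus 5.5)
Your proposal is correct and matches the paper's proof step for step: the same complement identity with the $n/m\le 2$ factor, Bernstein plus a union bound over the sparse net for the full sample, Bernstein plus a union bound over the net and all deletion sets of size at most $\alpha n$ for the deleted term (giving the same $\alpha\log(e/\alpha)+\sqrt{\alpha a_\ell/n}+a_\ell/n$ rate), and the quadratic-form net estimate to pass to $\mathcal U_\ell$. The only cosmetic difference is that you use $Y_i\ge -1$ to reduce the deleted term to a one-sided upper-tail bound, whereas the paper applies two-sided Bernstein to $\frac1n\sum_{i\in D}Y_i(u)$ directly.
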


\begin{proof}
For $u\in\mathcal U_\ell$, define $Y_i(u)\eqdef\langle u,Z_i\rangle^2-1$.
For every fixed $u$, the variables $Y_i(u)$ are independent, centered, and
have uniformly bounded $\psi_1$ norm. Fix a deterministic
$D\subseteq[n]$ with $|D|\le\alpha n$ and $u\in\mathcal N_\ell$.
Bernstein's inequality gives, for every $r>0$,
\[
    \Pr\!\left(
        \left|\frac1n\sum_{i\in D}Y_i(u)\right|>r
    \right)
    \le
    2\exp\!\left(
        -c n\min\left\{\frac{r^2}{\alpha},r\right\}
    \right).
\]
Taking a union bound over $u\in\mathcal N_\ell$ and all deletion
sets of size at most $\alpha n$ and then applying the quadratic-form net
argument shows that, with probability at least $1-\delta/6$,
\begin{equation}
    \sup_{\substack{D\subseteq[n]\\|D|\le\alpha n}}
    \sup_{u\in\mathcal U_\ell}
    \left|\frac1n\sum_{i\in D}
        \bigl(\langle u,Z_i\rangle^2-1\bigr)\right|
    \le
    C\left(
        \alpha\log(e/\alpha)
        +\sqrt{\frac{\alpha a_\ell}{n}}
        +\frac{a_\ell}{n}
    \right).
    \label{eq:deleted-quadratic-concentration}
\end{equation}
For the full sample, Bernstein's inequality and a union bound only over the
net give, with probability at least $1-\delta/6$,
\begin{equation}
    \sup_{u\in\mathcal U_\ell}
    \left|\frac1n\sum_{i=1}^n
        \bigl(\langle u,Z_i\rangle^2-1\bigr)\right|
    \le C\left(
        \sqrt{\frac{a_\ell}{n}}+\frac{a_\ell}{n}
    \right).
    \label{eq:full-quadratic-concentration}
\end{equation}

Fix an arbitrary $G'\subseteq G$ with $m=|G'|\ge(1-\alpha)n$ and put
$D=[n]\setminus G'$. The centered complement identity gives, for every
$u\in\mathcal U_\ell$,
\[
    u^\top\bigl(Q_{G'}(\mu)-I\bigr)u
    =\frac nm\left[
        \frac1n\sum_{i=1}^n
        \bigl(\langle u,Z_i\rangle^2-1\bigr)
        -\frac1n\sum_{i\in D}
        \bigl(\langle u,Z_i\rangle^2-1\bigr)
    \right].
\]
Using \Cref{eq:deleted-quadratic-concentration,eq:full-quadratic-concentration}
and $n/m\le2$, we obtain
\[
    \sup_{u\in\mathcal U_\ell}
    \left|u^\top\bigl(Q_{G'}(\mu)-I\bigr)u\right|
    \le
    C\left(
        \sqrt{\frac{a_\ell}{n}}+\frac{a_\ell}{n}
        +\alpha\log(e/\alpha)
        +\sqrt{\frac{\alpha a_\ell}{n}}
    \right).
\]
Since $a_\ell/n\le C\eps^2$ and $\alpha=10\eps$, this is at most
$C\eps\log(1/\eps)$.
\end{proof}

\begin{claim}[Lower $\ell$-sparse empirical covariance]
\label{cl:subgaussian-lower-covariance}
Simultaneously for every $G'\subseteq G$ with
$|G'|\ge(1-\alpha)n$,
\[
    \inf_{u\in\mathcal U_\ell}
    u^\top\operatorname{Cov}[G']u
    \ge1-O(\eps\log(1/\eps)).
\]
\end{claim}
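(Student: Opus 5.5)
The lower bound will follow by combining the two preceding claims through a variance decomposition. Fix $G'\subseteq G$ with $|G'|\ge(1-\alpha)n$ and a unit vector $u\in\mathcal U_\ell$. Centering the empirical second moment at the empirical mean $\mu_{G'}$ instead of at $\mu$ gives the exact identity
\[
    u^\top\operatorname{Cov}[G']u
    = u^\top Q_{G'}(\mu)u-\langle u,\mu_{G'}-\mu\rangle^2 .
\]
This is the standard fact that the second moment about any point $\mu$ equals the covariance plus the squared offset of the mean, applied to the scalars $\langle u,X\rangle$ for $X\in G'$.

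First I will bound the quadratic term from below using \Cref{cl:subgaussian-rank-one-stability}. On its high-probability event, simultaneously for all admissible $G'$ and all $u\in\mathcal U_\ell$, we have
\[
    u^\top Q_{G'}(\mu)u\ge 1-O(\eps\log(1/\eps)).
\]
Second I will bound the offset using \Cref{cl:subgaussian-mean-stability}. On its event,
\[
    |\langle u,\mu_{G'}-\mu\rangle|\le\|\mu_{G'}-\mu\|_{2,\ell}=O(\eps\sqrt{\log(1/\eps)}),
\]
so the squared offset is $O(\eps^2\log(1/\eps))$. Since $\eps\le 0.01$, this is at most $O(\eps\log(1/\eps))$.

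Subtracting the two bounds gives $u^\top\operatorname{Cov}[G']u\ge 1-O(\eps\log(1/\eps))$ for every $u\in\mathcal U_\ell$ and every admissible $G'$. Taking the infimum over $u$ proves the claim.

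No new probabilistic argument is needed. The claim holds on the intersection of the events already used by the two previous claims, whose failure probabilities are already counted in the $\delta/6$ budget.

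The only subtle point is \emph{simultaneity}. Both earlier claims are uniform over all subsets $G'$ and all vectors $u$, because their union bounds ran over every deterministic deletion set. This uniformity is what lets us apply them to a data-dependent $G'$. Given that, no step here is a real obstacle; the work was done in the two earlier claims.
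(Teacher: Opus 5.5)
Your proposal is correct and matches the paper's proof essentially line for line. The paper also writes $\operatorname{Cov}[G']=Q_{G'}(\mu)-bb^\top$ with $b=\mu_{G'}-\mu$, combines the rank-one covariance claim with the sparse mean claim, and absorbs the $O(\eps^2\log(1/\eps))$ offset term into $O(\eps\log(1/\eps))$.
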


\begin{proof}
Let $b=\mu_{G'}-\mu$. Centering at the empirical mean gives the identity
\[
    \operatorname{Cov}[G']
    =Q_{G'}(\mu)-bb^\top.
\]
Therefore, for every $u\in\mathcal U_\ell$, Claims
\ref{cl:subgaussian-mean-stability} and
\ref{cl:subgaussian-rank-one-stability} give
\begin{align*}
    u^\top\operatorname{Cov}[G']u
    &=u^\top Q_{G'}(\mu)u-\langle u,b\rangle^2\\
    &\ge1-C\eps\log(e/\eps)
       -C\eps^2\log(e/\eps)\\
    &\ge1-C'\eps\log(e/\eps).
\end{align*}
This proves the claim.
\end{proof}

\begin{claim}[$\cX_{\sdp,k}$-covariance stability]
\label{cl:subgaussian-Xk-stability}
Simultaneously for every $G'\subseteq G$ with
$|G'|\ge(1-\alpha)n$,
\[
    \sup_{W\in\cX_{\sdp,k}}
    \left|\left\langle Q_{G'}(\mu)-I,W\right\rangle\right|
    = O(\eps\log(1/\eps)).
\]
\end{claim}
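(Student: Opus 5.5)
The plan is to split $\langle Q_{G'}(\mu)-I,W\rangle$ into a full-sample term and a deleted-set term, exactly as in the proofs of Claims~\ref{cl:subgaussian-mean-stability} and~\ref{cl:subgaussian-rank-one-stability}. The full-sample term is handled by entrywise concentration, which is where the $k^2\log d/\eps^2$ sample size enters. The deleted-set term is handled by transferring the known Gaussian result to the subgaussian samples via \Cref{fact:three-gaussian-representation}.

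Fix $G'\subseteq G$ with $m=|G'|\ge(1-\alpha)n$ and set $D=[n]\setminus G'$. For every $W\in\cX_{\sdp,k}$ we have $\operatorname{tr}(W)=1$, so
\[
    \langle Q_{G'}(\mu)-I,W\rangle
    =\frac nm\left[\Bigl\langle \tfrac1n\textstyle\sum_{i=1}^n Z_iZ_i^\top-I,\,W\Bigr\rangle
    -\frac1n\sum_{i\in D}\bigl(\langle W,Z_iZ_i^\top\rangle-1\bigr)\right].
\]

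\textbf{Full-sample term.} Each entry $(Z_i)_a(Z_i)_b$ is subexponential with uniformly bounded $\psi_1$ norm. Bernstein's inequality and a union bound over the $d^2$ entries therefore give, with probability at least $1-\delta/12$,
\[
    \Bigl\|\tfrac1n\textstyle\sum_i Z_iZ_i^\top-I\Bigr\|_\infty\le C\sqrt{\log(d/\delta)/n}.
\]
Hölder's inequality with $\|W\|_1\le k$ bounds the first term by $Ck\sqrt{\log(d/\delta)/n}\le C\eps$, using $n\ge Ck^2(\log d+\log(1/\delta))/\eps^2$. This is the same argument as \Cref{fact:l1-matrix-concentration}.

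\textbf{Deleted-set term, lower bound.} Since $W\succeq0$, each summand satisfies $\langle W,Z_iZ_i^\top\rangle-1\ge-1$. Hence
\[
    \frac1n\sum_{i\in D}\bigl(\langle W,Z_iZ_i^\top\rangle-1\bigr)\ge-|D|/n\ge-\alpha=-10\eps .
\]

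\textbf{Deleted-set term, upper bound.} This is the main step. By \Cref{fact:three-gaussian-representation}, applied independently to each sample, we may couple $c_KZ_i=G_{1,i}+G_{2,i}+G_{3,i}$. For each fixed $j$, the vectors $G_{j,1},\ldots,G_{j,n}$ are i.i.d.\ $N(0,I_d)$, although the three families may be dependent on one another. Convexity of $x\mapsto x^\top Wx$ for $W\succeq0$ gives
\[
    \langle W,Z_iZ_i^\top\rangle\le 3c_K^{-2}\sum_{j=1}^3\langle W,G_{j,i}G_{j,i}^\top\rangle .
\]
It therefore suffices to show, separately for each $j$, that
\[
    \sup_{|D|\le\alpha n}\ \sup_{W\in\cX_{\sdp,k}}\ \frac1n\sum_{i\in D}\langle W,G_{j,i}G_{j,i}^\top\rangle=O(\eps\log(1/\eps)) .
\]
A union bound over $j\in\{1,2,3\}$ then costs only a constant factor in $\delta$, and the multiplicative constant $9c_K^{-2}$ is harmless because the target is itself $O(\eps\log(1/\eps))$.

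This Gaussian statement is where I would invoke the Gaussian $\cX_{\sdp,k}$-stability of \cite{BalDLS17}, and I expect it to be the only delicate point. That result is usually phrased as two-sided stability: for all large subsets $G''$, $|\langle Q_{G''}-I,W\rangle|=O(\eps\log(1/\eps))$, at sample size $k^2\log d/\eps^2$. To convert it into the deleted-set form, I would apply it to the full Gaussian sample and to $[n]\setminus D$ and subtract, using the same complement identity as displayed above. This yields
\[
    \frac1n\sum_{i\in D}\langle W,G_{j,i}G_{j,i}^\top\rangle\le \frac{|D|}{n}+O(\eps\log(1/\eps))=O(\eps\log(1/\eps)).
\]

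\textbf{Combining.} The two bounds control the deleted-set term in absolute value by $O(\eps\log(1/\eps))$, and the full-sample term is $O(\eps)$. Since $n/m\le2$, this gives the claim on an event of probability at least $1-\delta/6$, which is compatible with the union bound used for the other claims.
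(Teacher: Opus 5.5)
Your proposal is correct and is essentially the paper's proof. It uses the same full-minus-deleted decomposition, entrywise Bernstein plus H\"older with $\|W\|_1\le k$ for the full-sample term, and $W\succeq 0$ for the lower bound on the deleted mass; for the upper bound it combines the three-Gaussian coupling of \Cref{fact:three-gaussian-representation} with the Gaussian $\cX_{\sdp,k}$-stability of \cite{BalDLS17}, applied to $[n]$ and to $[n]\setminus D$ and subtracted, differing from the paper only in the order of these last two steps.
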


\begin{proof}
We prove the full-sample and deleted-sample estimates separately. The
full-sample estimate is obtained directly for the original subgaussian
vectors. The deleted-sample estimate is transferred from Gaussian samples
using \Cref{fact:three-gaussian-representation}.

First set $\widehat\Sigma=\frac1n\sum_{i=1}^n Z_iZ_i^\top$.
For every pair $a,b\in[d]$, the centered random variable
$Z_{i,a}Z_{i,b}-\delta_{ab}$ has $\psi_1$ norm bounded by a constant that
depends only on the subgaussian-norm bound. Bernstein's inequality and a
union bound over the $d^2$ pairs give, with probability at least
$1-\delta/6$,
\[
    \|\widehat\Sigma-I\|_\infty
    \le
    C\left(
        \sqrt{\frac{\log(12d^2/\delta)}{n}}
        +\frac{\log(12d^2/\delta)}{n}
    \right).
\]
Since every $W\in\cX_{\sdp,k}$ satisfies $\|W\|_1\le k$, on this event
\begin{equation}
    \sup_{W\in\cX_{\sdp,k}}
    \left|\frac1n\sum_{i=1}^n Z_i^\top WZ_i-1\right|
    \le \eta,
    \qquad
    \eta\eqdef
    C\left(
        k\sqrt{\frac{\log(12d^2/\delta)}{n}}
        +\frac{k\log(12d^2/\delta)}{n}
    \right).
    \label{eq:full-Xk-three-gaussian}
\end{equation}
The sample-size assumption implies $\eta\le C\eps$.

For the deleted-sample estimate, we invoke the Gaussian
$\cX_{\sdp,k}$-stability result in Theorem~G.2 of \cite{BalDLS17}.
If
$g_1,\ldots,g_n$ are i.i.d.\ $N(0,I_d)$, then with probability at least
$1-\delta/18$, simultaneously for every $H\subseteq[n]$ with
$|H|\ge(1-\alpha)n$,
\begin{equation}
    \sup_{W\in\cX_{\sdp,k}}
    \left|
        \left\langle
            \frac1{|H|}\sum_{i\in H}g_ig_i^\top-I,W
        \right\rangle
    \right|
    \le \eta_G,
    \qquad
    \eta_G\le C\alpha\log(e/\alpha).
    \label{eq:li-gaussian-Xk-stability}
\end{equation}
Since \eqref{eq:li-gaussian-Xk-stability} holds simultaneously for every
\(H \subseteq [n]\) with \(|H| \ge (1-\alpha)n\), we may apply it to
\(H=[n]\) and to \(H=[n]\setminus D\), for every deletion set
\(D \subseteq [n]\) with \(|D| \le \alpha n\). Therefore,
\begin{align*}
    \frac1n\sum_{i\in D}g_i^\top Wg_i
    &=\frac1n\sum_{i=1}^n g_i^\top Wg_i
      -(1-|D|/n)\frac1{|H|}\sum_{i\in H}g_i^\top Wg_i\\
    &\le (1+\eta_G)-(1-\alpha)(1-\eta_G)\\
    &\le \alpha+2\eta_G
     \le C\alpha\log(e/\alpha).
\end{align*}

We now return to the original vectors $Z_i$. By
\Cref{fact:three-gaussian-representation}, take independent copies of the
entire three-Gaussian coupling and define
\[
    \widetilde Z_i
    =c_K^{-1}(G_{1,i}+G_{2,i}+G_{3,i}),
    \qquad i\in[n].
\]
Then $\widetilde Z_1,\ldots,\widetilde Z_n$ have exactly the same joint law
as $Z_1,\ldots,Z_n$. We may therefore prove the desired event for this
coupled copy and, to simplify notation, write $Z_i=\widetilde Z_i$ below.
For each fixed $j\in\{1,2,3\}$, the vectors
$G_{j,1},\ldots,G_{j,n}$ are i.i.d.\ $N(0,I_d)$; no independence between
different values of $j$ is needed. 
Applying the preceding
consequence of \eqref{eq:li-gaussian-Xk-stability} to each
$j\in\{1,2,3\}$ and taking a union bound shows, with probability at least
$1-\delta/6$, that all three Gaussian sequences satisfy the deleted-mass
bound simultaneously.

For every $W\in\cX_{\sdp,k}$, positive semidefiniteness gives the
deterministic inequality
\[
\begin{aligned}
    Z_i^\top WZ_i
    &=\frac1{c_K^2}
      \left\|W^{1/2}(G_{1,i}+G_{2,i}+G_{3,i})\right\|_2^2\\
    &\le\frac3{c_K^2}\sum_{j=1}^3G_{j,i}^\top WG_{j,i}.
\end{aligned}
\]
Consequently, on the preceding event, simultaneously for every
$D\subseteq[n]$ with $|D|\le\alpha n$,
\begin{equation}
    \sup_{W\in\cX_{\sdp,k}}
    \frac1n\sum_{i\in D}Z_i^\top WZ_i
    \le
    C_K\alpha\log(e/\alpha)
    \le C_K\eps\log(e/\eps),
    \label{eq:deleted-Xk-three-gaussian}
\end{equation}

Finally, fix an arbitrary $G'\subseteq G$ with $m=|G'|\ge(1-\alpha)n$ and
write $D=G\setminus G'$. For $W\in\cX_{\sdp,k}$, let
\[
    F_W=\frac1n\sum_{i=1}^nZ_i^\top WZ_i,
    \qquad
    R_W=\frac1n\sum_{i\in D}Z_i^\top WZ_i.
\]
By \eqref{eq:full-Xk-three-gaussian}, $|F_W-1|\le\eta$, and by
\eqref{eq:deleted-Xk-three-gaussian},
$0\le R_W\le\rho$, where
$\rho=C_K\alpha\log(e/\alpha)$. Writing
$r_D=|D|/n\le\alpha$, the centered complement identity gives
\[
    \left\langle Q_{G'}(\mu)-I,W\right\rangle
    =\frac{(F_W-1)-(R_W-r_D)}{1-r_D}.
\]
Since $R_W\ge0$, the numerator is at most $\eta+r_D\le\eta+\alpha$.
Since $R_W-r_D\le R_W\le\rho$, the numerator is at least $-\eta-\rho$. Therefore,
using $1-r_D\ge1-\alpha$,
\[
    \left|
        \left\langle Q_{G'}(\mu)-I,W\right\rangle
    \right|
    \le
    \frac{\eta+\alpha+\rho}{1-\alpha}
    \le C_K\eps\log(e/\eps),
\]
where the last inequality uses $\alpha=10\eps\le1/10$.
The bound is uniform over $W\in\cX_{\sdp,k}$ and $G'$, proving the claim.
\end{proof}

The intersection of the concentration events above has probability at least
$1-\delta$. Claims \ref{cl:subgaussian-mean-stability},
\ref{cl:subgaussian-rank-one-stability}, and
\ref{cl:subgaussian-Xk-stability} are precisely the three conditions in the
definition of
$(\eps,\gamma,\Gamma,k,k,\ell)$-stability, where
$\gamma=O\!\left(\eps\sqrt{\log(1/\eps)}\right)$ and
$\Gamma=O\!\left(\eps\log(1/\eps)\right)$, while
Claim \ref{cl:subgaussian-lower-covariance} is the additional
lower-covariance conclusion. This completes the proof.
\end{proof}

\section{A Counterexample to the Restarted Truncated Power Method }
\label{app:rtpm-counterexample}
In this section, we present an example demonstrating that, for a general PSD matrix, the Restarted Truncated Power Method, as defined in \cite{KSTZ26}, cannot approximate the optimal quadratic form within a factor of $2$, even with an infinite number of iterations.

Given a truncation parameter $r$, for each $\alpha\in[d]$ define
\begin{equation}
    x_\alpha^{(0)}=e_\alpha,
    \qquad
    x_\alpha^{(t+1)}
    =\frac{\topp_r(\Sigma x_\alpha^{(t)})}
    {\|\topp_r(\Sigma x_\alpha^{(t)})\|_2}.
    \label{eq:deterministic-rtpm}
\end{equation}
Ties in $\topp_r$ may be resolved arbitrarily. The next lemma 
gives an instance for which, even when $r=Ck^2$, every iterate has quadratic
form less than $1$, whereas a $k$-sparse unit vector has quadratic form at
least $2$.

\paragraph{Construction Intuition.}
For a positive semidefinite matrix, the quadratic form along the truncated
power method is non-decreasing. Indeed, if $x$ is an $r$-sparse unit vector,
then
\[
    \|\topp_r(\Sigma x)\|_2
    =\max_{\substack{\|u\|_2=1\\ \|u\|_0\leq r}}u^\top\Sigma x
    \geq x^\top\Sigma x,
\]
and Cauchy--Schwarz in the seminorm induced by $\Sigma$ shows that the next
normalized iterate has quadratic form at least $x^\top\Sigma x$. Thus, to
construct a counterexample, we must trap every trajectory on a low-value
support without allowing any iterate to attain a quadratic form comparable
to that of the target.

Moreover, \Cref{lem:discover} implies that, when the method starts at a
coordinate $e_i$ in the support of a target vector, a low-value first step
must retain the nontrivially correlated target coordinates. Our construction
therefore lets the first iterate retain the entire target support $[k]$, but
then forces the trajectory to leave that support before it can exploit the
large quadratic form there. We couple each signal coordinate $i\in[k]$ to a
set $F_i$, so that the first iterate is supported on $[k]\cup F_i$. The
set $F_i$ is in turn coupled to a set $G_i$, and the coupling is
chosen so that the next multiplication cancels every target coordinate except
possibly $i$. The trajectory is then trapped on $G_i$ together with a single
coordinate from $\{i\}\cup F_i$; all such supports have low value and do not
allow the full target support to be recovered.

A restart from a coordinate in $G_i$ enters the same low-value trap directly.
A restart from $F_i$, however, could reveal $[k]$.  We therefore
couple every coordinate $a\in F_i$ to a set of coordinates $P_a$. These sets trap all
restarts from $F$ while themselves having low quadratic form.

\begin{theorem}[Counterexample to the restarted truncated power method]
\label{lem:rtpm-counterexample}
For every fixed $C\geq1$, there exists $k_0(C)$ such that the following holds
for every $k\geq k_0(C)$. Let $r=\lceil Ck^2\rceil$. There exist a
dimension $d$, a matrix $\Sigma\in\R^{d\times d}$ with $\Sigma\succeq0$, and
a $k$-sparse unit vector $v\in\R^d$ such that
$
    v^\top\Sigma v=2.
$
Moreover, for every $\alpha\in[d]$ and every $t\geq0$, the iterates
$x_\alpha^{(t)}$ defined in \eqref{eq:deterministic-rtpm} satisfy
\[
    (x_\alpha^{(t)})^\top\Sigma x_\alpha^{(t)}<1.
\]
\end{theorem}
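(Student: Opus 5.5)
I will build $\Sigma$ explicitly as a Gram matrix, $\Sigma=V^\top V$, so positive semidefiniteness is automatic. The coordinates are partitioned into the signal block $[k]$, sets $F_i$ and $G_i$ for each $i\in[k]$, and trap sets $P_a$ for each $a\in F_i$. The signal block gets $\Sigma_{[k],[k]}=\frac{1}{k}J$ scaled so that the uniform vector $v=k^{-1/2}\mathbf 1_{[k]}$ has $v^\top\Sigma v=2$. Concretely, I take $\Sigma_{ii}=c$ small and off-diagonal entries about $2/k$, so $\|v\|_1^2=k$ and the value is $2$. The diagonal of every coordinate is kept below $1$, so every basis vector $e_\alpha$ already has value $<1$.

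Next I specify the couplings. Each $i\in[k]$ is joined to the $r-k$ coordinates of $F_i$ with weights comparable to the signal weights. Then $\topp_r(\Sigma e_i)$ keeps exactly $[k]\cup F_i$; this is consistent with \Cref{lem:discover}, which forces the signal coordinates to be retained. The $F_i$–$G_i$ block is tuned so that, in $\Sigma x^{(1)}_i$, the contributions from $F_i$ cancel the entries on $[k]\setminus\{i\}$ exactly. The $G_i$ entries then dominate, so $x^{(2)}$ lives on $G_i$ plus at most one coordinate of $\{i\}\cup F_i$.

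I then show that the set $\mathcal S_i$ of supports $G_i\cup\{a\}$ is invariant under the iteration. The natural way is to make $\Sigma_{G_i,G_i}$ a scaled all-ones block of top eigenvalue $<1$, and make the $G_i$-to-outside couplings small enough that the top-$r$ selection stays inside $G_i\cup\{a\}$. Invariance is then by induction on $t$. The value of any vector supported on such a set is bounded by $\lambda_{\max}(\Sigma_{G_i\cup\{a\},G_i\cup\{a\}})$, which a Gershgorin/interlacing estimate puts below $1$. I also bound the value of $x^{(1)}$ directly on $[k]\cup F_i$: it is a specific vector, so I compute it, and it should be about $1/2$ because the $F_i$ mass dilutes the signal weight.

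Restarts are handled by case on $\alpha$. For $\alpha\in G_i$ the trajectory lands in the trap immediately. For $\alpha\in P_a$, $\Sigma_{P_a,P_a}$ is chosen low-rank with small norm, and the trajectory stays inside $P_a\cup\{a\}$. For $\alpha=a\in F_i$, I give $a$ couplings to $P_a$ with $|P_a|=r$ and weights larger than its couplings to $[k]$, so $\topp_r(\Sigma e_a)$ selects $P_a$ and the trajectory never sees $[k]$.

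The main obstacle is the cancellation step together with the PSD constraint. The entries must make $\Sigma_{[k]\setminus\{i\},F_i}x_{F_i}$ cancel $\Sigma_{[k],[k]}x_{[k]}$ exactly, while $\Sigma$ remains a Gram matrix and no intermediate support picks up value $\ge1$. I would first realize all of this with explicit vectors in $V$: signal columns share a common direction $w$, and $F_i$ columns carry $-w$ components plus private directions shared with $G_i$. I would then verify the tie-breaking and magnitude orderings for $k\ge k_0(C)$, choosing parameters as powers of $k$ so that all required strict inequalities hold asymptotically.
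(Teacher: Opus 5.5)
Your architecture matches the paper's. You use the same coordinate families $[k]$, $F_i$ (with $|F_i|=r-k$), $G_i$, $P_a$, get positive semidefiniteness from a Gram construction, cancel exactly on $[k]\setminus\{i\}$ at the second multiplication, trap iterates on $G_i\cup\{a\}$ with $a\in\{i\}\cup F_i$, and handle restarts case by case.

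\textbf{The cancellation step.} The substance of the theorem is an explicit PSD matrix that makes this cancellation exact, and the realization you sketch cannot produce it. Write your signal vectors as $s\,w+t\,\xi_j$ with $s^2\approx 2/k$ and $t^2\approx c$. If each $F_i$ Gram vector is $-\mu w$ plus directions orthogonal to all signal vectors, then $\Sigma_{ja}=-\mu s$ for every $j\in[k]$. Two things break. First, $\Sigma e_i$ cannot single out $F_i$. Second, the $F_i$-contribution to $(\Sigma x^{(1)})_j$ is $\sum_{a\in F_i}\Sigma_{ja}\Sigma_{ai}=m\mu^2s^2>0$, which adds to the positive entry $(A^2e_i)_j=(4-\sigma^2)/k$ instead of cancelling it. Cancellation needs opposite signs, $\Sigma_{ia}>0>\Sigma_{ja}$ for $j\neq i$, with $\sum_{a\in F_i}\Sigma_{ja}\Sigma_{ai}=-(A^2e_i)_j$ exactly. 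In your language, the $F_i$ vectors need a positive $\xi_i$-component in addition to $-w$.

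The paper does this as follows. It sets $Bf_i=We_i$ with $W=(h_kI-A^2)/\sigma$, so that $W_{ii}=\sigma>0$ and $W_{ji}=q<0$. It fills the $F$--$F$ block with the Schur complement $B^\top A^{-1}B$, which keeps $\Sigma^{SF}\succeq0$. It then uses $A^2+\sigma W=h_kI$ twice: for the cancellation, and to get the closed form $(h_kWA^{-1}+\sigma d_0I)e_i$ for the $F$-part of $\Sigma x^{(1)}$. That closed form is what shows the $G_i$ entries win at step two. Also, the cancellation is governed by the $[k]$--$F$ block; the $F_i$--$G_i$ block does not touch the $[k]$ entries, contrary to your first description.

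\textbf{Trap invariance.} Invariance of the trap does not come from making the ``$G_i$-to-outside couplings small.'' The $F_i$--$G_i$ weight $b$ faces two constraints. It must be large enough that $b\sigma/\sqrt{r-1}$ beats the $F_i$ entries, which are roughly $(2\sigma^2+\sigma d_0)/\sqrt{r-k}$, at step two. It must also stay below $\lambda=g_i^\top\Sigma g_i$ for the trap to hold.

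More importantly, when the extra coordinate is $i$ with weight $\alpha$, it leaks about $2\alpha/k$ into every coordinate of $[k]\setminus\{i\}$ through the signal block itself. Because $r=\Theta(k^2)$, this is the same order as the $G_i$ entries $\lambda\beta/\sqrt{r-1}$. You therefore have to track the ratio $\alpha/\beta$: it is at most $h_k/(b\sigma)$ after step two and is multiplied by $A_{ii}/\lambda<1$ at each later step. For iterates $\gamma e_a+\beta g_i$ you must likewise keep the scaled ratio $\sqrt{m}\,\gamma/\beta\le1$. These invariants are what tie $\sigma$ to $C$ (the paper takes $\sigma=1/(24\sqrt C)$ with $b=1/4$, $\lambda=3/4$, $\eta=1/12$).

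Without the explicit construction and these ratio bounds, your claims that the second iterate lands on $G_i\cup\{i\}$ and that the family stays invariant with value below $1$ are asserted, not proved.
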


\begin{proof}
We first construct the counterexample matrix  $\Sigma$.

\paragraph{Coordinate sets.}
Set
$
    m=r-k,
    n=r-1.
$
For every $i\in[k]$, introduce disjoint sets $F_i$ and $G_i$ of sizes $m$
and $n$, respectively. For every coordinate
$a\in F:=\bigcup_iF_i$, introduce a set $P_a$ of size $n$. All of
these sets are pairwise disjoint. Write
\[
    f_i=\frac{\Ind_{F_i}}{\sqrt m},
    \qquad
    g_i=\frac{\Ind_{G_i}}{\sqrt n},
    \qquad
    p_a=\frac{\Ind_{P_a}}{\sqrt n}.
\]
Because their underlying coordinate sets are pairwise disjoint, these normalized indicator vectors are mutually orthonormal.

\paragraph{Target and cancellation blocks.}
Consider the following $\R^{k\times k}$ matrix
\[
   A\eqdef\sigma(I-P)+2P, \qquad    \sigma\eqdef\frac{1}{24\sqrt C},
    \qquad
    P\eqdef\frac1k\Ind_{[k]}\Ind_{[k]}^\top,
\]
and define
\[
    h_k:=2\sigma^2+\frac{4-\sigma^2}{k},
    \qquad
    W:=\frac{h_kI-A^2}{\sigma}.
\]
Thus, by construction,
\begin{equation}
    A^2+\sigma W=h_kI.
    \label{eq:rtpm-cancellation-identity}
\end{equation}
Since $A^2=\sigma^2(I-P)+4P$, the entries of $W$ are
\[
    W_{ii}=\sigma,
    \qquad
    W_{ij}=q:=-\frac{4-\sigma^2}{k\sigma}
    \quad(i\neq j).
\]
Define $B:\R^F\to\R^k$ by
\[
    B=\sum_{i=1}^k(We_i)f_i^\top.
\]
In particular, $Bf_i=We_i$. On the coordinates $[k]\cup F$, put
\begin{equation}
    \Sigma^{SF}
    =
    \begin{pmatrix}
        A & B\\
        B^\top & B^\top A^{-1}B
    \end{pmatrix}.
    \label{eq:rtpm-SF-block}
\end{equation}

\paragraph{Additional couplings}
Fix
\[
    \lambda=\frac34,
    \qquad
    b=\frac14,
    \qquad
    \eta=\frac1{12}.
\]
Define
\begin{align}
    \Sigma^{FG}
    &:=\sum_{i=1}^k
    \left(
        \frac{b}{\sqrt\lambda}f_i+\sqrt\lambda g_i
    \right)
    \left(
        \frac{b}{\sqrt\lambda}f_i+\sqrt\lambda g_i
    \right)^\top,
    \label{eq:rtpm-FG-block}\\
    \Sigma^{FP}
    &:=\sum_{a\in F}
    \left(
        \frac{\eta}{\sqrt\lambda}e_a+\sqrt\lambda p_a
    \right)
    \left(
        \frac{\eta}{\sqrt\lambda}e_a+\sqrt\lambda p_a
    \right)^\top.
    \label{eq:rtpm-FP-block}
\end{align}
All matrices are extended by zero outside their displayed coordinate
spaces, and we set $\Sigma\eqdef\Sigma^{SF}+\Sigma^{FG}+\Sigma^{FP}$.
The decoy blocks are sums of rank-one positive semidefinite matrices.
Moreover,
\[
    \Sigma^{SF}
    =
    \begin{pmatrix}
        A^{1/2}\\
        B^\top A^{-1/2}
    \end{pmatrix}
    \begin{pmatrix}
        A^{1/2}\\
        B^\top A^{-1/2}
    \end{pmatrix}^{\!\top},
\]
hence $\Sigma\succeq0$.

\begin{claim}[The sparse witness has value $2$]
\label{cl:rtpm-witness}
The vector
$v=\Ind_{[k]}/\sqrt{k}$
is $k$-sparse and satisfies $v^\top\Sigma v=2$.
\end{claim}

\begin{proof}
We have $Pv=v$, and therefore $Av=2v$. Since $v$ has no $F$ component
and all other blocks vanish on $v$,
 $   v^\top\Sigma v=v^\top Av=2$.
\end{proof}

\begin{claim}[The first target step]
\label{cl:rtpm-first-support}
For every $i\in[k]$,
\[
    e_i^\top\Sigma e_i
    =A_{ii}
    =\sigma+\frac{2-\sigma}{k}<\frac1{10}
\]
for all sufficiently large $k$, and
\[
    \topp_r(\Sigma e_i)=Ae_i+\sigma f_i.
\]
In particular,
\[
    \supp(\topp_r(\Sigma e_i))=[k]\cup F_i.
\]
\end{claim}

\begin{proof}
The formula for $e_i^\top\Sigma e_i$ follows from
$A=\sigma I+(2-\sigma)P$ and $P_{ii}=1/k$. Since
$\sigma\leq1/24$, it is below $1/10$ for all sufficiently large $k$.
By the block form of $\Sigma^{SF}$, we have
\[
\Sigma e_i = Ae_i + B^\top e_i.
\]
Since
\[
B^\top e_i=\sum_{j=1}^k W_{ji}f_j
\]
and \(W_{ii}=\sigma\), \(W_{ji}=q\) for \(j\neq i\), it follows that
\[
\Sigma e_i
=
Ae_i+\sigma f_i+q\sum_{j\neq i}f_j.
\]
The coordinate $i$ has value $\sigma+(2-\sigma)/k$, every other
coordinate in $[k]$ has magnitude $(2-\sigma)/k$, every coordinate in
$F_i$ has magnitude $\sigma/\sqrt m$, and every coordinate in $F_j$,
$j\neq i$, has magnitude $|q|/\sqrt m$. Since $|q|<\sigma$ for all
sufficiently large $k$ and
\[
    \frac{|q|}{\sqrt m}=O_C(k^{-2})
    <\frac{2-\sigma}{k},
\]
every coordinate in $[k]\cup F_i$ is strictly larger in magnitude than
every coordinate outside this set. Since $|[k]\cup F_i|=k+m=r$, the
top-$r$ truncation keeps exactly $[k]\cup F_i$.
\end{proof}

\begin{claim}[The first truncated iterate has small value]
\label{cl:rtpm-first-value}
Let
\[
    y_i^{(1)}:=Ae_i+\sigma f_i.
\]
For all sufficiently large $k$,
\[
    \frac{(y_i^{(1)})^\top\Sigma y_i^{(1)}}
    {\|y_i^{(1)}\|_2^2}<\frac15.
\]
\end{claim}

\begin{proof}
Set $d_0:=\frac{\eta^2+b^2}{\lambda}=\frac5{54}$.
Orthogonality of $Ae_i$ and $f_i$ gives
\[
    \|y_i^{(1)}\|_2^2
    =e_i^\top A^2e_i+\sigma^2
    =2\sigma^2+\frac{4-\sigma^2}{k}
    =h_k.
\]
Let
\[
    D:=
    \begin{pmatrix}
        A^{1/2}\\
        B^\top A^{-1/2}
    \end{pmatrix},
\]
so that $\Sigma^{SF}=DD^\top$. Using $Bf_i=We_i$ and
\eqref{eq:rtpm-cancellation-identity}, we have
\[
    D^\top y_i^{(1)}
    =A^{3/2}e_i+\sigma A^{-1/2}We_i
    =A^{-1/2}(A^2+\sigma W)e_i
    =h_kA^{-1/2}e_i.
\]
The two decoy blocks contribute
\[
    (y_i^{(1)})^\top
    (\Sigma^{FG}+\Sigma^{FP})y_i^{(1)}
    =\sigma^2\frac{b^2+\eta^2}{\lambda}
    =\sigma^2d_0.
\]
Therefore,
\[
    \frac{(y_i^{(1)})^\top\Sigma y_i^{(1)}}
    {\|y_i^{(1)}\|_2^2}
    =h_ke_i^\top A^{-1}e_i+\frac{\sigma^2d_0}{h_k}.
\]
Since
\[
    e_i^\top A^{-1}e_i
    =\left(1-\frac1k\right)\frac1\sigma+\frac1{2k},
\]
the last display converges, as $k\to\infty$, to
$2\sigma+d_0/2$. Finally,
\[
    2\sigma+\frac{d_0}{2}
    \leq\frac1{12}+\frac5{108}
    =\frac7{54}<\frac15.
\]
The strict gap proves the claim for all sufficiently large $k$.
\end{proof}

\begin{claim}[The signal trajectory enters a low-value trap]
\label{cl:rtpm-signal-trap}
For every $i\in[k]$,
\[
    \topp_r(\Sigma y_i^{(1)})
    =h_ke_i+b\sigma g_i.
\]
Its normalized version has quadratic form below $4/5$. Every subsequent
iterate remains supported on a set of the form
\[
    G_i\cup\{a\},
    \qquad
    a\in\{i\}\cup F_i,
\]
and has quadratic form below $4/5$.
\end{claim}

\begin{proof}
By \eqref{eq:rtpm-cancellation-identity}, the restriction of
$\Sigma y_i^{(1)}$ to $[k]$ is
\[
    A^2e_i+\sigma We_i=h_ke_i.
\]
Its $G$ component is exactly $b\sigma g_i$. The $F$-component of $\Sigma y_i^{(1)}$ is
\[
(\Sigma y_i^{(1)})_F
=
\sum_{j=1}^k
\left(
\left[WA+\sigma\left(WA^{-1}W+d_0I\right)\right]e_i
\right)_j f_j.
\]
Since $A$ and $W$ commute, \eqref{eq:rtpm-cancellation-identity} gives
\[
WA+\sigma WA^{-1}W
=
WA^{-1}(A^2+\sigma W)
=
h_kWA^{-1}.
\]
Therefore,
\[
(\Sigma y_i^{(1)})_F
=
\sum_{j=1}^k
\left(
\left[h_kWA^{-1}+\sigma d_0I\right]e_i
\right)_j f_j.
\]
The diagonal entry of this matrix is
\[
    2\sigma^2+\sigma d_0+O_C(k^{-1}),
\]
and each off-diagonal entry is $O_C(k^{-1})$. Hence every $F_i$
coordinate has magnitude
\[
    \frac{2\sigma^2+\sigma d_0+O_C(k^{-1})}{\sqrt m},
\]
all other $F$ coordinates are smaller, and every coordinate generated in a
set $P_a$ has magnitude $O_C(k^{-2})$. Since
\[
    b-(2\sigma+d_0)\geq\frac2{27}>0,
\]
every coordinate of $b\sigma g_i$ is larger than every $F$ or $P$
coordinate for all sufficiently large $k$. Since $|G_i|=r-1$, truncation
keeps exactly $G_i$ and $i$.

There is no direct $[k]$--$G$ block, and $g_i^\top\Sigma g_i=\lambda$.
Thus the Rayleigh quotient of $h_ke_i+b\sigma g_i$ is a convex
combination of $A_{ii}$ and $\lambda=3/4$, and is below $4/5$.

It remains to prove invariance. First consider
\[
    x=\alpha e_i+\beta g_i,
    \qquad
    \alpha,\beta\geq0,
    \qquad
    \frac{\alpha}{\beta}\leq\frac{h_k}{b\sigma}.
\]
Every $G_i$ coordinate of $\Sigma x$ has magnitude
$\lambda\beta/\sqrt n$. Every coordinate in $[k]\setminus\{i\}$ has
magnitude $A_{ij}\alpha$, and every $F_i$ coordinate has magnitude
$(\sigma\alpha+b\beta)/\sqrt m$. The ratios of the latter two quantities
to a $G_i$ coordinate are at most
\[
    \frac{4\sigma\sqrt C}{b\lambda}+O_C(k^{-1})
    =\frac89+O_C(k^{-1})<1
\]
and
\[
    \frac{b+2\sigma^2/b}{\lambda}+O_C(k^{-1})
    \leq\frac{19}{54}+O_C(k^{-1})<1,
\]
respectively. Thus all of $G_i$ is retained and the remaining coordinate
is either $i$ or one coordinate of $F_i$. If $i$ is retained, the ratio
$\alpha/\beta$ is multiplied by $A_{ii}/\lambda<1$. If some
$a\in F_i$ is retained, the new vector has the form
\[
    \gamma e_a+\beta g_i,
    \qquad
    \frac{\sqrt m\,\gamma}{\beta}\leq1.
\]

Now consider
\[
    x=\gamma e_a+\beta g_i,
    \qquad
    a\in F_i,
    \qquad
    c:=\frac{\sqrt m\,\gamma}{\beta}\leq1.
\]
Define
\begin{equation}
    d_F(k):=e_a^\top\Sigma e_a
    =\frac{\eta^2}{\lambda}
      +\frac{(WA^{-1}W)_{ii}+b^2/\lambda}{m}
    =\frac1{108}+O_C(k^{-2}).
    \label{eq:rtpm-dF}
\end{equation}
After multiplication, the aggregate coefficient on $g_i$ is
\[
    \lambda\beta+\frac{b\gamma}{\sqrt m},
\]
and the coefficient on $e_a$ is
\[
    \frac{b\beta}{\sqrt m}+d_F(k)\gamma.
\]
All other coordinates are smaller for sufficiently large $k$. Since
\[
    b+\frac1{108}=\frac7{27}<\lambda,
\]
the top-$r$ truncation retains $G_i$ and $a$. The updated scaled ratio is
\[
    c'
    =\frac{b+d_F(k)c}{\lambda+bc/m}
    \leq
    \frac{b+1/108+O_C(k^{-2})}{\lambda}
    <1.
\]
This family is therefore invariant. On its support the Rayleigh quotient
is at most the largest eigenvalue of
\[
    \begin{pmatrix}
        d_F(k) & b/\sqrt m\\
        b/\sqrt m & \lambda
    \end{pmatrix},
\]
which converges to $\lambda=3/4$ and is below $4/5$ for all sufficiently
large $k$.
\end{proof}

\begin{claim}[Restarts in $F$ and the sets $P_a$]
\label{cl:rtpm-FP-restarts}
For every $a\in F$ and every coordinate $u\in P_a$, the truncated-power
trajectory initialized at $e_a$ or $e_u$ enters the invariant cone
\[
    \mathcal C_a
    :=
    \{\alpha e_a+\beta p_a:\alpha,\beta\geq 0\}
    \subseteq \operatorname{span}\{e_a,p_a\}.
\]
Every vector in these trajectories has quadratic form below $4/5$ for
all sufficiently large $k$.
\end{claim}

\begin{proof}
Fix $a\in F_i$. In $\Sigma e_a$, the coordinate $a$ has value $d_F(k)$ (see \Cref{eq:rtpm-dF}),
and every coordinate in $P_a$ has magnitude $\eta/\sqrt n$. The largest
leakage into $[k]$ has magnitude $\sigma/\sqrt m$, while all other $F$
and $G$ leakages are $O_C(k^{-2})$. Since
\[
    \frac{\eta}{\sqrt n}>\frac{\sigma}{\sqrt m}
\]
for all sufficiently large $k$, the first truncation keeps exactly
$\{a\}\cup P_a$.

More generally, for $x=\alpha e_a+\beta p_a\in\mathcal C_a$,
\[
    \topp_r(\Sigma x)
    =
    \bigl(d_F(k)\alpha+\eta\beta\bigr)e_a
    +
    \bigl(\eta\alpha+\lambda\beta\bigr)p_a,
\]
because these coordinates dominate all leakage outside
$\{a\}\cup P_a$ for sufficiently large $k$. Thus $\mathcal C_a$ is
invariant under the truncated-power map.

The compression of $\Sigma$ to $\operatorname{span}\{e_a,p_a\}$ is
represented by
\[
    K_F(k)
    =
    \begin{pmatrix}
        d_F(k) & \eta\\
        \eta & \lambda
    \end{pmatrix}.
\]
Using \eqref{eq:rtpm-dF},
\[
    \lambda_{\max}(K_F(k))
    =\lambda+\frac{\eta^2}{\lambda}+O_C(k^{-2})
    =\frac{41}{54}+O_C(k^{-2})
    <\frac45
\]
for sufficiently large $k$. If $u\in P_a$, then
\[
    \Sigma e_u
    =\frac{\eta}{\sqrt n}e_a
     +\frac{\lambda}{\sqrt n}p_a,
\]
so this restart immediately enters $\mathcal C_a$. Finally,
\[
    e_a^\top\Sigma e_a=d_F(k)<\frac45,
    \qquad
    e_u^\top\Sigma e_u=\frac{\lambda}{n}<\frac45,
\]
so the initial basis vectors also have quadratic form below $4/5$.
\end{proof}
\begin{claim}[Restarts in $G$ enter the same trap]
\label{cl:rtpm-G-restarts}
For every $z\in G_i$, the truncated-power trajectory initialized at $e_z$
enters the invariant family from \Cref{cl:rtpm-signal-trap} after one
truncated multiplication. Every vector in this trajectory has quadratic
form below $4/5$.
\end{claim}

\begin{proof}
For $z\in G_i$,
\[
    \Sigma e_z
    =\frac{b}{\sqrt n}f_i+\frac{\lambda}{\sqrt n}g_i.
\]
Thus every coordinate in $G_i$ has magnitude $\lambda/n$, whereas every
coordinate in $F_i$ has magnitude $b/\sqrt{mn}$. Since
\[
    \frac{\lambda}{n}>\frac{b}{\sqrt{mn}}
    \quad\Longleftrightarrow\quad
    \lambda\sqrt m>b\sqrt n,
\]
the top-$r$ truncation keeps all $n=r-1$ coordinates of $G_i$ and one
arbitrary coordinate $a\in F_i$ for all sufficiently large $k$.
Therefore,
\[
    \topp_r(\Sigma e_z)
    =\frac{\lambda}{\sqrt n}g_i
     +\frac{b}{\sqrt{mn}}e_a.
\]
Up to normalization, this belongs to the second invariant family in
\Cref{cl:rtpm-signal-trap}, since its scaled ratio is
\[
    \frac{\sqrt m\,(b/\sqrt{mn})}{\lambda/\sqrt n}
    =\frac b\lambda
    =\frac13<1.
\]
Hence every subsequent iterate has quadratic form below $4/5$ by
\Cref{cl:rtpm-signal-trap}. Finally,
\[
    e_z^\top\Sigma e_z=\frac{\lambda}{n}<\frac45,
\]
so the initial vector also has quadratic form below $4/5$.
\end{proof}
The coordinate classes
\[
    [k],
    \qquad
    F,
    \qquad
    G,
    \qquad
    \bigcup_{a\in F}P_a
\]
exhaust the ambient space. By \Cref{cl:rtpm-first-support},
\Cref{cl:rtpm-first-value}, and \Cref{cl:rtpm-signal-trap}, every trajectory
initialized in $[k]$ has quadratic form below $4/5$. By
\Cref{cl:rtpm-FP-restarts}, the same holds for every trajectory initialized
in $F$ or a set $P_a$, and by \Cref{cl:rtpm-G-restarts} it holds
for every trajectory initialized in $G$. Thus, for every $\alpha\in[d]$
and $t\geq0$,
\[
    (x_\alpha^{(t)})^\top\Sigma x_\alpha^{(t)}
    <\frac45<1.
\]
Together with \Cref{cl:rtpm-witness}, this proves the lemma.
\end{proof}

\begin{remark}[Why the branching search succeeds]
\label{rem:rtpm-branching-succeeds}
By \Cref{cl:rtpm-first-support}, the support of the first truncated
multiplication from any signal coordinate contains all of $[k]$. Therefore,
a method that branches on this support and then computes the largest
eigenvalue on the resulting candidate support finds value at least
$v^\top\Sigma v=2$.
\end{remark} 

\end{document}